\newtheorem{example}{Example}
\newtheorem{theorem}{Theorem}
\newtheorem{definition}{Definition}
\newtheorem{lemma}{Lemma}
\newtheorem{proposition}{Proposition}
\newcommand{\PCF}{\mathsf{PCF}}
\newcommand{\finv}[1]{#1^{{-}1}}
\newcommand{\set}[1]{\{#1\}}
\newcommand{\intr}[1]{\llbracket #1 \rrbracket}
\newcommand{\lin}{\multimap}
\let\linto\lin
\newcommand{\Mf}{\mathcal{M}}
\newcommand{\iso}{\cong}
\newcommand{\bij}{\simeq}
\newcommand{\R}{\mathcal{R}}
\newcommand{\rp}{\overline{\mathbb{R}}_+}
\newcommand{\C}{\mathcal{C}}
\newcommand{\display}{\partial}
\newcommand\stdisp[2][]{\display^{#2}\ifempty{#1}{}{_{#1}}}
\newcommand{\wit}{\mathsf{wit}}
\newcommand{\Perm}[1]{\mathcal{S}_{#1}}
\newcommand{\Id}{\mathrm{id}} 
\newcommand{\op}{\mathrm{op}}
\newcommand{\cocat}{{\mathrm{co}}}
\newcommand{\ot}{\leftarrow}
\newcommand{\id}{\mathrm{id}}
\newcommand{\swap}{\mathrm{swap}}
\newcommand{\Fam}{\mathrm{Fam}}
\newcommand{\N}{\mathbb{N}}
\newcommand{\PreStrat}{\mathsf{PreStrat}}
\newcommand{\bS}{\mathbf{S}}
\newcommand{\U}{\mathbf{U}}
\newcommand{\T}{\mathbf{T}}
\newcommand\stsetb[1]{\T_{#1}}
\newcommand{\tensor}{\otimes}
\newcommand{\Unif}{\mathbf{Unif}}
\newcommand{\Thin}{\mathbf{Thin}}
\newcommand{\Thinb}{\mathbf{Thin}_{\oc}}
\newcommand{\PreThin}{\mathbf{PreThin}}
\newcommand{\bder}{\mathbf{der}}
\newcommand{\der}{\mathrm{der}}
\newcommand{\tuple}[1]{\langle #1 \rangle}
\newcommand{\sym}{\cong}
\newcommand{\pair}[1]{\tuple{#1}}
\newcommand\iprod[2]{\lbrack #1,#2 \rbrack}
\newcommand{\tto}{\Rightarrow}
\let\cptto\tto 
\newcommand{\cpl}{{\operatorname{\mathit{\bar l}}}}
\newbox\privcpl
\privcpl\hbox{$\bar l$}
\newcommand{\cplprot}{{\copy\privcpl}}
\newcommand{\cpr}{{\operatorname{\mathit{\bar r}}}}
\newbox\privcpr
\privcpr\hbox{$\bar r$}
\newcommand{\cprprot}{{\copy\privcpr}}
\newcommand\prodfact[2]{\tuple{#1,#2}}
\newcommand\coprodfact[2]{\lbrack #1,#2 \rbrack}
\newcommand\catprodfact[2]{( #1,#2 )}
\newcommand\catprodfactp[1]{( #1 )}
\newcommand\gpdprodfact[2]{( #1,#2 )}
\newcommand\gpdprodfactp[1]{( #1 )}
\DeclareSymbolFont{stixarrows}{LS1}{stixsf}{m}{n}
\DeclareMathSymbol{\nvrightarrow}{\mathrel}{stixarrows}{"F6}
\newcommand\stto{\nvrightarrow}
\DeclareSymbolFont{stiximport}{LS1}{stix}{m}{it}
\DeclareMathAccent{\widecheck}{\mathord}{stiximport}{"9C}
\newcommand\stid[1]{\operatorname{cc}^{#1}}
\newcommand{\evm}{\mathbf{ev}}
\newcommand{\dual}[1]{#1^\star}
\newcommand\pl{\mathop{\mathit l}\nolimits}
\newcommand\pr{\mathop{\mathit r}\nolimits}
\newcommand\stbpl{\mathop{\mathit L_\oc}\nolimits}
\newcommand\stbpr{\mathop{\mathit R_\oc}\nolimits}
\newcommand\sthc{\mathbin{\odot}}
\newcommand\pshccoh[2]{m_{#1,#2}} 
\newcommand\pshccohe{m} 
\let\sthcomp\sthc
\newcommand\sthcb{\mathbin{\odot}_{\oc}}
\newcommand\termcat{\mathrm{1}}
\newcommand\unit[1]{\id_{#1}}
\newcommand\seely{s}
\newcommand\seelyinv{\bar{s}}
\newcommand\seelycoh{\operatorname{See}}
\newcommand\ocd{\oc\oc}
\newcommand\stcurry[1]{\Lambda(#1)}
\newcommand\stuncurry[1]{\bar\Lambda(#1)}
\newcommand\ocass{\gamma}
\newcommand\oclu{\alpha}
\newcommand\ocru{\beta}
\newcommand\psu{\check\eta}%
\newcommand\psm{\check\mu}%
\newcommand{\tbool}{\mathbb{B}}
\newcommand{\ttrue}{\mathbf{t\!t}}
\newcommand{\tfalse}{\mathbf{f\!f}}
\newcommand{\choice}{\mathbf{choice}}
\newcommand{\Rel}{\mathbf{Rel}}
\newcommand{\Set}{\mathbf{Set}}
\newcommand{\Cat}{\mathbf{Cat}}
\newcommand{\Gpd}{\mathbf{Gpd}}
\newcommand{\Span}{\mathbf{Span}}
\newcommand{\Sym}{\mathbf{Sym}}
\newcommand{\BFam}{\mathbf{Fam}}
\renewcommand{\S}{\mathbb{S}}
\newcommand{\SU}{\mathbf{S}}
\newcommand{\Dist}{\mathbf{Dist}}
\newcommand{\Esp}{\mathbf{Esp}}
\newcommand{\Ren}{\mathbf{Ren}}
\newcommand\Bicat{\mathbf{Bicat}}
\mathchardef\mhyphen="2D
\newcommand\pmFunct{\pm\mhyphen\mathbf{Funct}}
\newcommand\stsupp[1]{#1}
\newcommand\stsuppreally[1]{\hbox{\ul{$#1$}}} 
\newcommand\stlu{\operatorname{L}}
\newcommand\stru{\operatorname{R}}
\newcommand\stass{\operatorname{A}}
\newcommand\prtosp[1]{\check{\ifdash{#1}{(-)}{{#1}}}}
\newcommand\prtospw[1]{\widecheck{\ifdash{#1}{(-)}{{#1}}}}
\newcommand\prtospcohe{m}
\newcommand\prtospcoh[2]{m_{#1,#2}}
\newcommand\ibij{\varpi}
\newcommand\isum{{\textstyle\sum}}
\newcommand\imix{\mathbin{\bowtie}}
  \pgfmathsetlength{\pgfutil@tempdima}{.25\pgflinewidth+.25*\pgfinnerlinewidth}%
  \pgfmathsetlength{\pgfutil@tempdimb}{.5\pgflinewidth-.5*\pgfinnerlinewidth}%
  \pgfmathsetlength{\pgfutil@tempdima}{.25\pgflinewidth+.25*\pgfinnerlinewidth}%
  \pgfmathsetlength{\pgfutil@tempdimb}{.5\pgflinewidth-.5*\pgfinnerlinewidth}%
\tikzset{tikztriple/.style={preaction={draw,tikzcd
cap-tripleimplies,shorten
      >=0pt,double distance=3pt},tikzcd cap-round cap,shorten >=3.5pt}}
\def\tikzcd@sep@bo#1#2{%
  \pgfkeysifdefined{/tikz/commutative diagrams/#1 sep/#2}%
  {\pgfkeysgetvalue{/tikz/commutative diagrams/#1 sep/#2}\tikzcd@temp%
    \edef\tikzcd@temp{{\tikzcd@temp,between origins}}%
    \pgfkeysalso{/tikz/#1 sep/.expand once=\tikzcd@temp}}%
  {\pgfkeysalso{/tikz/#1 sep={#2,between origins}}}}
\newcommand*{\tarrow}[1][]{%
  \ar[tikzcd cap-tripleimplies,shorten >=0pt,double
  distance=3pt,postaction={draw,tikzcd cap-round cap,shorten
>=0.2pt},#1]
}
\newcommand{\tar}{\tarrow}
\newcommand\phar[1][]{\ar[phantom,#1]}
\newcommand\cphar[1][]{\ar[phantom,start anchor=center,end anchor=center,#1]}
\newcommand\tikzcdin[2][column sep=small]{
  \begin{tikzcd}[cramped,ampersand replacement=\&,#1]
    #2 
  \end{tikzcd}%
}
\newcommand\drcorner{\rotatebox[origin=c]{180}{$\ulcorner$}}
\newcommand\dlcorner{\rotatebox[origin=c]{180}{$\urcorner$}}
\newcommand{\xto}{\xrightarrow}
\newcommand{\xot}{\xleftarrow}
\newcommand{\To}{\Rightarrow}
\newcommand{\xTo}{\xRightarrow}
\newcommand{\TO}{\Rrightarrow}
\newcommand{\xTO}{\xRrightarrow}
\newcommand{\xRrightarrow}[2][]{\ext@arrow 0359\Rrightarrowfill@{#1}{#2}}
\newcommand{\Rrightarrowfill@}{\arrowfill@\equiv\equiv\Rrightarrow}
\newsavebox\my@boxdcorner
\savebox\my@boxdcorner{\begin{tikzpicture}[rotate=45,x=3.85pt,y=3.85pt]
  \draw[line cap=round,line join=round] (0,1) -- (0,0) -- (1,0);
\end{tikzpicture}}
\newcommand\dcorner{\usebox\my@boxdcorner}
\newsavebox\my@boxucorner
\savebox\my@boxucorner{\begin{tikzpicture}[rotate=-135,x=3.85pt,y=3.85pt]
    \draw[line cap=round,line join=round] (0,1) -- (0,0) -- (1,0);
  \end{tikzpicture}}
\newcommand\ucorner{\usebox\my@boxucorner}
\newbox\my@privdorthon
\newbox\my@privdorthos
\newbox\my@privdorthoss
\newbox\my@privdorthod
\newcommand\storthob{\ensuremath{\mathbin{\mathchoice{\copy\my@privdorthod}{\copy\my@privdorthon}{\copy\my@privdorthos}{\copy\my@privdorthoss}}}}
\newcommand{\pperp}{\storthob}
\newcommand\co\colon
\newcommand\cA{A}
\newcommand\cB{B}
\newcommand\cC{C}
\newcommand\cD{D}
\newcommand\ups{\upsilon}
\newcommand{\ie}{i.e.,\xspace}
\newcommand{\resp}{resp.\xspace}
\newcommand{\wrt}{w.r.t\@.\xspace}
\crefname{enumi}{}{}
\Crefname{enumi}{}{}
\crefname{lemma}{Lemma}{Lemmas}
\Crefname{lemma}{Lemma}{Lemmas}
\newcommand\PCC{\hyperref[prop:pcc]{PCC}\xspace}
\newcommand\nbd{\protect\nobreakdash}
\newcommand\sethyp[2]{%
  \expandafter\def\csname nbd#1\endcsname{\nbd-#2\xspace}%
  \expandafter\ifx\csname #1\endcsname\relax
  \expandafter\def\csname #1\endcsname{\nbd-#2\xspace}%
  \fi
}
\let\natural\undefined
\DeclareRobustCommand\pmfunctor{$\pm$\functor}
\DeclareRobustCommand\pmfunctors{$\pm$\functors}
\DeclareRobustCommand\pmtransformation{$\pm$\transformation}
\DeclareRobustCommand\pmtransformations{$\pm$\transformations}
\DeclareRobustCommand\pmmodification{$\pm$\modification}
\DeclareRobustCommand\pmmodifications{$\pm$\modifications}
\def\ifparam#1#2#3{\csname if#1\endcsname #2\else #3\fi}
\newcommand{\activatecomments}{\let\ifdebugmode\iftrue}
\newcommand{\TODO}[1]{\ifparam{debugmode}{\marginpar{\tiny #1}}{}}
\newcommand{\todo}[1]{\ifparam{debugmode}{\marginpar{\tiny #1}}{}}
\newcommand{\simon}[1]{\ifparam{debugmode}{\TODO{\color{blue} SF: #1}}{}}
\newcommand{\pierre}[1]{\ifparam{debugmode}{\TODO{\color{red} PC: #1}}{}}
\newcommand\defeq{\mathrel{\hat=}}
\newcommand{\qeq}{\quad=\quad}
\newcommand{\qqeq}{\qquad=\qquad}
\newcommand{\qand}{\quad\text{and}\quad}
\newcommand{\qqand}{\qquad\text{and}\qquad}
\newcommand\eq\enquote
\newcommand\zbox[1]{\makebox[0pt][l]{#1}}
\DeclareRobustCommand\ifdash[3]{%
  \def\ifdash@o{-}%
  \def\ifdash@t{#1}%
  \def\ifdash@yes{#2}%
  \def\ifdash@no{#3}%
  \ifx\ifdash@o\ifdash@t
  \ifdash@yes
  \else
  \ifdash@no
  \fi
}
\newcommand\ifempty[3]{%
  \def\ifemptytemp{#1}%
  \ifx\ifemptytemp\empty%
  #2%
  \else%
  #3%
  \fi%
}
\begin{document}
\ifblind
\linenumbers
\else
\fi
%
\title{The Cartesian Closed Bicategory\\ of Thin Spans of Groupoids}

\ifblind
\else
\author{\IEEEauthorblockN{Pierre Clairambault}
\IEEEauthorblockA{Aix Marseille Univ, Université
de Toulon, CNRS, LIS, Marseille\\
Email: Pierre.Clairambault@cnrs.fr}
\and
\IEEEauthorblockN{Simon Forest}
\IEEEauthorblockA{Aix Marseille Univ, CNRS, I2M, Marseille, France\\
Email: Simon.Forest@univ-amu.fr}}
\fi


%


\maketitle

\begin{abstract}
Recently, there has been growing interest in bicategorical models of
programming languages, which are ``proof-relevant'' in the sense that
they keep distinct account of execution traces leading to the same
observable outcomes, while assigning a formal meaning to reduction paths
as isomorphisms.

In this paper we introduce a new model, a bicategory called \emph{thin
spans of groupoids}. Conceptually it is close to Fiore et al.'s
\emph{generalized species of structures} and to Melliès' \emph{homotopy
template games}, but fundamentally differs as to how
replication of resources and the resulting symmetries are treated.
Where those models are \emph{saturated} -- the interpretation is
inflated by the fact that semantic individuals may carry arbitrary
symmetries -- our model is \emph{thin}, drawing inspiration 
from \emph{thin concurrent games}: the interpretation of
terms carries no symmetries, but semantic individuals satisfy a subtle
invariant defined via biorthogonality, which guarantees their
invariance under symmetry. 

We first build the bicategory $\Thin$ of thin spans of groupoids. Its
objects are certain groupoids with additional structure, its morphisms
are spans composed via plain pullback with identities the identity
spans, and its $2$-cells are span morphisms making the induced
triangles commute only up to natural isomorphism. We then equip $\Thin$
with a pseudocomonad $\oc$, and finally show that the Kleisli
bicategory $\Thin_{\oc}$ is cartesian closed. 
\end{abstract}


%
\IEEEpeerreviewmaketitle

\section{Introduction}


The \emph{relational model} \cite{DBLP:journals/tcs/Girard87} is one of
the most basic and elementary denotational models for linear logic or
the $\lambda$-calculus. At its heart, it is simply an interpretation of
formulas / types as \emph{sets} and proofs / programs as \emph{relations},
\emph{i.e.} in the category $\Rel$. Despite its simplicity the
relational model is ubiquitous: it is the basic substrate for the
spectrum of so-called \emph{web-based} models of linear logic, including
coherence or finiteness spaces \cite{DBLP:journals/mscs/Ehrhard05}. It
faithfully predicts reduction time
\cite{DBLP:journals/tcs/CarvalhoPF11}. It supports quantitative
extensions such as in probabilistic coherence spaces
\cite{DBLP:journals/iandc/DanosE11}, the weighted relational model
\cite{DBLP:conf/lics/LairdMMP13}, and even up to quantum computation
\cite{DBLP:conf/popl/PaganiSV14} -- quantitative extensions which enjoy
powerful full abstraction results
\cite{DBLP:journals/jacm/EhrhardPT18,DBLP:journals/pacmpl/ClairambaultV20}.
Presented syntactically, the relational model exactly corresponds to
\emph{non-idempotent intersection types}
\cite{DBLP:journals/corr/abs-0905-4251}, a currently active research
topic in its own right (see \emph{e.g.}
\cite{DBLP:journals/jfp/AccattoliGK20,DBLP:journals/lmcs/BucciarelliKR18})
which enables a syntactic methodology to addressing semantic questions.
Finally, it has a tight connection with \emph{game semantics}
\cite{DBLP:journals/iandc/AbramskyJM00,DBLP:journals/iandc/HylandO00},
of which it appears as a desequentialization (see \emph{e.g.}
\cite{DBLP:conf/csl/BaillotDER97,DBLP:journals/tcs/Mellies06,DBLP:conf/tlca/Boudes09,DBLP:journals/pacmpl/ClairambaultV20}).
In short, it is at the crossroads of multiple topics, past and current,
of the denotational semantics universe. 

Another recent trend in denotational semantics is the adoption of
\emph{bicategorical models} \cite{leinster2004higher} where the
familiar categorical laws hold only up to certain \emph{$2$-cells}
satisfying coherence conditions -- in particular, Fiore and
Saville have recently thoroughly explored \emph{cartesian closed
bicategories} \cite{DBLP:journals/mscs/FioreS21}. In such models, the
denotation is no longer an invariant of reduction: two convertible
terms yield merely \emph{isomorphic} objects, and \emph{reduction
paths} have a genuine interpretation as specific \emph{isomorphisms}
\cite{DBLP:conf/lics/FioreS19} --
thus bringing reduction into the categorical model. There are still
not many concrete bicategorical models, and we are aware of only three
(families of) such models that can deal with non-linear computation, in
chronological order: firstly, Fiore, Gambino, Hyland and Winskel's
cartesian closed bicategory of \emph{generalized species of structure}
\cite{fiore2008cartesian}; secondly, Castellan, Clairambault and
Winskel's \emph{thin concurrent games}
\cite{DBLP:journals/lmcs/CastellanCW19} (as established by Paquet
\cite{paquet2020probabilistic}); thirdly, Melliès'
\emph{homotopy template games} \cite{DBLP:conf/lics/Mellies19}. Of
these three, the first is by far the most studied with various works
including generalizations and application to semantics
\cite{DBLP:conf/lics/TsukadaAO18,DBLP:conf/fscd/Galal20,DBLP:conf/fscd/Galal21},
links with intersection types and Taylor expansion
\cite{DBLP:conf/lics/TsukadaAO17,DBLP:conf/lics/Olimpieri21}, or
applications to the pure $\lambda$-calculus \cite{relevant23}. Beyond
giving a non-degenerated interpretation to reduction paths, those
concrete bicategorical models are ``proof-relevant'', in the sense that
they keep distinct semantic witnesses for the possibly multiple
evaluation traces with the same observable behaviour and thus keep a
clear, branching account of non-determinism.

These models have something else in common: in their construction, the
main subtlety has to do with replication, \emph{i.e.} the
modality $\oc$ of linear logic. In the relational model, $\oc A$ is
the set $\Mf(A)$ of finite multisets of elements of $A$, or
alternatively, the free monoid $A^*$ quotiented by permutations. 
In bicategorical models, this is replaced by a 
\emph{categorification} of $\Mf(A)$: a category (or groupoid) whose
objects keep separate individual resource usages (\emph{e.g.}
$A^*$). Its morphisms are explicit permutations, often called
\emph{symmetries} in this paper. Individuals in the model must refer to
specific resources (\emph{e.g.} $a_i$ in $a_1 \dots a_n \in A^*$), but
the categorical laws expected for models of programming languages
requires that their behaviour should still be invariant under symmetry.
In both generalized species of structure and template games, this is
done by \emph{saturating} the set of witnesses with respect to
symmetries: intuitively, the behaviour of an individual cannot depend
on the specific identity of resources, because those resources are seen
through the ``noise'' of all possible symmetries -- this shall be
reviewed gently in \Cref{sec:longintro}. This saturation
complicates models and their construction, though for good
reasons.  But this contrasts with \emph{thin concurrent
games}, which handles symmetry with a mechanism inspired by
Abramsky-Jagadeesan-Malacaria games
\cite{DBLP:journals/iandc/AbramskyJM00} and Melliès' \emph{orbital
games} \cite{DBLP:journals/tcs/Mellies06}: strategies are not
saturated, but their invariance under ``Opponent's
symmetries'' is ensured by a subtle bisimulation-like structure -- we
call this the \emph{thin} approach.

We believe that the thin approach is helpful at least for applications
to semantics: the absence of symmetries on witnesses allow a more concrete
flavour which may help when ordering individuals allowing continuous
reasoning\footnote{For instance, in \cite{relevant23}, the
generalization from finite to infinite computation is not simply by
continuity as per usual in denotational semantics, because of the
quotient involved in the management of saturation.}, or
simplify quantitative extensions such as
\cite{DBLP:conf/lics/TsukadaAO18}. But more fundamentally, there is a
clear tension between these two worlds that deserves investigation. Are
proof-relevant relational models inherently saturated? Is the thin
approach only possible in games thanks to the presence of time and
causality? These fundamental questions may be of interest beyond denotational
semantics, as the handling of symmetry in such models is
deeply connected to algebraic combinatorics \cite{fiore2008cartesian}
and homotopy theory \cite{DBLP:conf/lics/Mellies19}.

\paragraph{Contributions} We introduce the bicategory $\Thin$ of \emph{thin
spans of groupoids}: its objects are certain groupoids with additional
structure, its morphisms certain spans, and its $2$-cells certain
\emph{weak} span morphisms, \emph{i.e.} making the induced triangles
commute up to chosen natural isos. Identities are
identity spans, and composition of spans is by plain pullback.  

Of course, plain pullbacks are too weak to support the horizontal
composition of weak span morphisms. To address this, we first define
\emph{uniform spans} via a biorthogonality construction, ensuring that
the composition pullbacks also satisfy the \emph{bipullback} universal
property. This allows us to compose $2$-cells horizontally, but that
horizontal composition is still not canonically defined and fails to
give a bicategory. 

For the next step, we import from thin concurrent games and from
Melliès' orbital games a decomposition of symmetries into
\emph{positive} symmetries (due to the program), and \emph{negative}
symmetries (due to the environment). We then define \emph{thin spans}
via a second biorthogonality construction, which ensures that the
horizontal composition of weak span morphisms are canonically defined
as long as we consider \emph{positive} weak span morphisms, where the
chosen iso only involves positive symmetries. We show this results in a
bicategory $\Thin$. Furthermore, we equip $\Thin$ with a pseudocomonad
$\oc$, and show that the Kleisli bicategory $\Thin_{\oc}$ is cartesian
closed.

\paragraph{Outline} In \Cref{sec:longintro} we start with a
gentle introduction to the relational model and its proof-relevant
extensions. In \Cref{sec:bicat_thin} we introduce the bicategory
$\Thin$, deploying first the uniform orthogonality and then the thin
orthogonality. In \Cref{sec:cc_bicat} we introduce the
pseudocomonad $\oc$, and show that the Kleisli bicategory $\Thin_{\oc}$
is cartesian closed.

\section{Relational Models, Spans, Species}
\label{sec:longintro}

\subsection{The Relational Model}

The \emph{relational model} is one of the simplest denotational models
of the $\lambda$-calculus, linear logic, or simple programming
languages such as $\PCF$. It consists in simply interpreting every type
$A$ as a set $\intr{A}$, and a program $\vdash M : A$ as a subset of
$\intr{A}$. This set $\intr{A}$ is often called the \emph{web}
seeing that it is the first component of the so-called web-based models
of linear logic such as coherence spaces and their extensions. One may
think of elements of $\intr{A}$ as completed executions (which is
straightforward enough for ground types such as booleans or natural
numbers but may be more complex for higher-order types), and of
$\intr{M} \subseteq \intr{A}$ as simply the collection of all the
completed executions that $M$ may achieve.

\begin{example}
The ground type for booleans is interpreted as $\intr{\tbool} =
\{\ttrue, \tfalse\}$, and the constant $\vdash \ttrue : \tbool$ as
$\intr{\ttrue} = \{\ttrue\}$. 
\end{example}

The interpretation of a program $M$ is computed compositionally,
following the methodology of denotational semantics, organized
by the categorical structure of sets and relations.

\subsubsection{Basic categorical structure} 
There is a category $\Rel$ with sets as objects, and as morphisms 
the \emph{relations} from $A$ to $B$, \emph{i.e.} subsets $R
\subseteq A \times B$. The identity on $A$ is the diagonal relation
$\{(a, a) \mid a \in A\} \subseteq A \times A$, and the composition of
$R \subseteq A \times B$ and $S \subseteq B \times C$ consists in all
pairs $(a, c) \in A\times B$ such that $(a, b) \in R$ and $(b, c) \in
S$ for some $b \in B$.

Besides, $\Rel$ has a monoidal structure given by the cartesian product
on objects, and for $R_i \in \Rel(A_i, B_i)$, $R_1 \times R_2 \in
\Rel(A_1 \times A_2, B_1 \times B_2)$ set as comprising all $((a_1, a_2),
(b_1, b_2))$ when $(a_i, b_i) \in R_i$ -- the unit $I$ is a fixed
singleton set, say $\{*\}$. Additionally, $\Rel$ is \emph{compact
closed}: each set $A$ has a dual $A^*$ defined simply as $A$ itself,
and there are relations $\eta_A \in \Rel(I, A\times A)$ and $\epsilon_A
\in \Rel(A\times A, I)$, both diagonal relations, satisfying coherence
conditions \cite{kelly1980coherence}. 
In particular, $\Rel$ is \emph{$\star$-autonomous} and as such a model
of multiplicative linear logic, and the linear $\lambda$-calculus: the
linear arrow type is interpreted as $\intr{A \lin B} =
\intr{A} \times \intr{B}$. Finally, $\Rel$ has finite products, with the
binary product of sets $A$ and $B$ given by the disjoint union
$A + B = \{1\} \times A \uplus \{2\} \times B$.

\subsubsection{The exponential modality} The exponential modality of
$\Rel$ is based on \emph{finite multisets}. If $A$ is a set, we write
$\Mf(A)$ for the set of finite multisets on $A$. To denote specific
multisets we use a list-like notation, as in \emph{e.g.} $[0, 1, 1] \in
\Mf(\mathbb{N})$ -- we write $[] \in \Mf(A)$ for the empty multiset. 

For $A$ a set, its \textbf{bang} $\oc A$ is simply the set $\Mf(A)$.
This extends to a comonad on $\Rel$, satisfying the required
conditions to form a so-called \textbf{Seely category} -- in particular,
there is
\[
\Mf(A + B) \iso \Mf(A) \times \Mf(B)
\]
a bijection providing the \emph{Seely isomorphism}. Altogether, this
makes $\Rel$ a model of intuitionistic linear logic; and this makes the
Kleisli category $\Rel_{\oc}$ cartesian closed so that we may interpret
(among others) the simply-typed $\lambda$-calculus.

\begin{example}\label{ex:prelim_term}
Considering the term $\vdash M : \tbool \to \tbool$ of $\PCF$
\[
\begin{array}{l}
\vdash \lambda
x^\tbool.\,\mathtt{if}\,x\,\mathtt{then}\,x\\
\hspace{50pt}\mathtt{else}\,\mathtt{if}\,x\,\mathtt{then}\,\tfalse\,\mathtt{else}\,\ttrue : \tbool
\to \tbool\,,
\end{array}
\]
we have $\intr{M} = \{([\ttrue, \ttrue], \ttrue), ([\ttrue,
\tfalse], \tfalse), ([\tfalse, \tfalse], \ttrue)\}$.

Here we can observe that the model is \emph{quantitative}, in that it
records how many resources each execution consumes: one may observe
output $\ttrue$ either with two evaluations of $x$ to $\ttrue$, or with
two evaluations of $x$ to $\tfalse$. One may observe output $\tfalse$
with two evaluations of $x$, one to $\ttrue$ and one to $\tfalse$.
Recall that in $[\ttrue, \tfalse] = [\tfalse, \ttrue]$, the order is irrelevant.
\end{example}

The relational model also supports the interpretation of 
non-determinism: if $\vdash \choice : \tbool$ is a new
primitive evaluating non-deterministically to $\ttrue$ or $\tfalse$,
then we may simply set
\[
\intr{\choice} = \{\ttrue, \tfalse\}\,.
\]

\subsubsection{Extensions of the relational model}  The relational
model is extremely flexible, and can be extended in multiple different
ways. In one direction one may add to the objects a \emph{coherence
relation} and restrict to compatible morphisms -- we obtain in this way
(multiset-based) \emph{coherence semantics}.

Another extension is the \emph{weighted relational model}
\cite{DBLP:journals/tcs/Lamarche92,DBLP:conf/lics/LairdMMP13} where
a term $\vdash M : A$, instead of denoting a subset of $\intr{A}$ --
\emph{i.e.} a function $\intr{M} : \intr{A} \to \{0, 1\}$ -- denotes a
function 
\[
\intr{M} : \intr{A} \to \R
\]
assigning to each point of the web $a \in \intr{A}$ a \emph{weight}
$\intr{M}_a \in \R$. The weight may be used to record additional
information about executions. One may record the number of distinct
non-deterministic branches leading to a certain result: for instance,
if $\R = \mathbb{N} \cup \{+\infty\}$, then
$\intr{\mathtt{if}\,\choice\,\mathtt{then}\,\ttrue\,\mathtt{else}\,\ttrue}_{\ttrue}
= 2$.
With $\R = \rp = \mathbb{R}_+ \cup \{+\infty\}$, we may track the
\emph{probability} with which a certain result occurs, obtaining a
model fully abstract for probabilistic $\PCF$
\cite{DBLP:journals/jacm/EhrhardPT18}. The paper
\cite{DBLP:conf/lics/LairdMMP13} contains other examples:
resource consumption, must convergence, \emph{etc}.

It is natural to go one step further and make the relational model
``proof-relevant''. This means not merely recording a weight or
counting non-deterministic branches, but keeping track of a set
$\intr{M}_a \in \Set$
of \emph{witnesses}
of the execution of $M$ to $a$, for each $\vdash M : A$ and $a \in
\intr{A}$. There are well-documented ways to do that which we shall
review later on, but for now let us attempt this naively.

\subsection{The Bicategory of Spans} 

A first idea is to simply replace \emph{relations} with \emph{spans}.

\subsubsection{Spans}
Recall that if $\C$ is a category with pullbacks, then we form
$\Span(\C)$ has having as objects those of $\C$, and as morphisms from
$A$ to $B$ triples $(S, \display^S_A, \display^S_B)$ forming a diagram
\[
A 
\quad \stackrel{\display^S_A}{\leftarrow} \quad
S
\quad \stackrel{\display^S_B}{\rightarrow} \quad
B\,,
\]
where intuitively $S$ is a set of \emph{internal witnesses}, projected
to $A$ and $B$ via the maps $\display^S_A$ and $\display^S_B$. For $\C
= \Set$ one obtains a relation by collecting the pairs
$(\display^S_A(s), \display^S_B(s))$ for $s \in S$, but we have more:
for each pair $(a, b) \in A \times B$ we have
\[
\wit^S(a, b) = \{s \in S \mid \display^S_A(s) = a~\&~\display^S_B(s)
= b\}\,,
\]
a set of \textbf{witnesses} that $a$ and $b$ are related -- hence this
indeed provides a notion of a \emph{proof-relevant} relational model.

\begin{example}
Writing $\mathbb{B} = \{\ttrue, \tfalse\}$ and $1 = \{*\}$, we may
represent the program $\vdash
\mathtt{if}\,\choice\,\mathtt{then}\,\ttrue\,\mathtt{else}\,\ttrue$ as
\[
1
\quad \stackrel{\partial_l}{\leftarrow} \quad
\{a, b\}
\quad \stackrel{\partial_r}{\rightarrow} \quad
\tbool
\]
a \emph{span}, where $\partial_l(a) = \partial_l(b) = *$, $\partial_r(a) =
\partial_r(b) = \ttrue$. 

Thus, the evaluation of the program to $\ttrue$ \emph{has two witnesses}.
\end{example}

\subsubsection{A bicategory} \label{subsubsec:bicategory_prelim}
The exact identity of $S$ does not matter
-- the same span above with $S' = \{a', b'\}$ should not be treated
distinctly. A \textbf{morphism} between spans is $f\co S \to S'$ making
\[
  \begin{tikzcd}[cramped,rsbo=1.3em]
    & S	\ar[dl,"\display^S_A"']
    \ar[dr,"\display^S_B"]
    \ar[dd,"f"]
    &
    \\
    A&&B\\
    &S'	\ar[ul,"\display^{S'}_A"]
    \ar[ur,"\display^{S'}_B"']
  \end{tikzcd}
\]
commute; an \textbf{isomorphism} of span is an invertible morphism.

The \textbf{identity span} on $A$ is simply $A \leftarrow A \rightarrow
A$ with two identity maps. 
The \textbf{composition} of $A \leftarrow S \rightarrow B$ and $B \leftarrow T
\rightarrow C$ is obtained by first forming the pullback
\begingroup
\makeatletter
\renewcommand{\maketag@@@}[1]{\hbox to 0.000008pt{\hss\m@th\normalsize\normalfont#1}}%
\makeatother
\begin{equation}
  \begin{tikzcd}[cramped,csbo=large,rsbo=normal]
    &&
    T \odot S
    \ar[dl,"\pl"']
    \ar[dr,"\pr"]
    \phar[dd,"\dcorner",very near start]
    &&
    \\
    &
    S	\ar[dl,"\display^S_A"']
    \ar[dr,"\display^S_B"]
    &&
    T
    \ar[dl,"\display^T_B"']
    \ar[dr,"\display^T_C"]
    \\
    A&&B&&C
  \end{tikzcd}
  \label{eq:pbcomp}
\end{equation}
\endgroup
and setting $\display^{T\odot S}_A = \display^S_A \circ \pl$ and
$\display^{T\odot S}_C = \display^T_C \circ \pr$ -- for
$\Span(\Set)$, this means that $T\odot S$ has elements all pairs $(s,
t)$ such that $\display^S_B(s) = \display^T_B(t)$, projected to $A$ and
$C$ via $\display^{T\odot S}_A((s, t)) = \display^S_A(s)$ and
$\display^{T\odot S}_C((s, t)) = \display^T_C(t)$.

This composition need not be associative on the nose,
but the universal property of pullbacks entails that it is associative
up to canonical isomorphism -- forming a \emph{bicategory}:

\begin{theorem}
If $\C$ has pullbacks, then $\Span(\C)$ defined with
\[
\begin{array}{rl}
\text{\emph{objects:}} & \text{objects of $\C$,}\\
\text{\emph{morphisms:}} & \text{spans $A \leftarrow S \rightarrow B$,}\\
\text{\emph{$2$-cells:}} & \text{morphisms of spans,}
\end{array}
\]
forms a bicategory, denoted $\Span(\C)$.
\end{theorem}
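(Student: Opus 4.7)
The plan is to construct all the data of a bicategory (horizontal and vertical composition, identities, unitors, associators) and then verify the coherence axioms, with the key observation that every coherence cell arises from the universal property of pullbacks and hence satisfies the pentagon and triangle laws essentially by uniqueness.

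First I would fix, once and for all, a choice of pullback square for every cospan in $\C$, which pins down the span composition $T \odot S$ as specified in \eqref{eq:pbcomp}. Vertical composition of $2$-cells is inherited directly from composition in $\C$ and is strictly associative and unital, so the hom-categories $\Thin(A,B)$ are categories with no further work. For horizontal composition of $2$-cells, given span morphisms $f \co S \to S'$ and $g \co T \to T'$ between pairs of parallel spans $A \leftarrow S,S' \rightarrow B$ and $B \leftarrow T,T' \rightarrow C$, the pair $(f \circ \pl, g \circ \pr) \co T\odot S \rightrightarrows S', T'$ satisfies $\display^{S'}_B \circ f \circ \pl = \display^S_B \circ \pl = \display^T_B \circ \pr = \display^{T'}_B \circ g \circ \pr$, so by the universal property of the chosen pullback $T' \odot S'$ there is a unique mediating map $g \odot f \co T \odot S \to T' \odot S'$, which automatically commutes with the boundaries. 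Functoriality of $\odot$ in each argument (and jointly) follows from the uniqueness clause of the universal property applied to identities and composites.

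Next I would build the structural isomorphisms. For the right unitor, the composite $S \odot \Id_A$ is the pullback of $\Id_A \co A \to A$ along $\display^S_A \co S \to A$; since $\Id_A$ is an iso, the pullback is (canonically) $S$ itself, yielding an invertible span morphism $\rho_S \co S \odot \Id_A \to S$, and symmetrically for the left unitor $\lambda_S$. For the associator, given composable spans $S$, $T$, $U$, both $(U \odot T) \odot S$ and $U \odot (T \odot S)$ arise as iterated pullbacks computing the same limit of the diagram $A \leftarrow S \rightarrow B \leftarrow T \rightarrow C \leftarrow U \rightarrow D$; by the universal property there is a unique isomorphism $\alpha_{U,T,S}$ over the boundary. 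Naturality of $\rho$, $\lambda$, $\alpha$ in all arguments is a standard diagram chase: each side of the square, when postcomposed with the legs of the target pullback, equals the same map out of the source, and the uniqueness clause forces them to agree.

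Finally, the coherence axioms reduce to uniqueness statements in the same way. For the pentagon applied to four composable spans $S$, $T$, $U$, $V$, both routes around the pentagon give morphisms $((V \odot U) \odot T) \odot S \to V \odot (U \odot (T \odot S))$ whose composites with the five projections to $S$, $T$, $U$, $V$ agree (they are just the canonical projections from the quintuple iterated pullback); by uniqueness in the universal property of the right-hand pullback, they coincide. The triangle axiom relating $\alpha$, $\lambda$, $\rho$ is verified analogously.

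The main obstacle, and essentially the only nontrivial point, is the careful bookkeeping needed to ensure that the mediating maps produced by the universal property match up across iterated pullbacks: one must check that the relevant projection legs of $(U \odot T) \odot S$ and $U \odot (T \odot S)$ into $S$, $T$, $U$ really are equal as maps in $\C$ (so that the uniqueness clause applies). This is where one uses that in the chosen pullback $T \odot S$ the leg to $B$ factors equally through both $\display^S_B \circ \pl$ and $\display^T_B \circ \pr$. Once this is set up cleanly, both the naturality squares and the pentagon and triangle axioms follow uniformly, and $\Span(\C)$ is a bicategory.
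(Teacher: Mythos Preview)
Your argument is correct and is exactly the standard construction of the span bicategory via the universal property of pullbacks; the paper itself does not prove this theorem but states it as a known background fact in the preliminary section, so there is no alternative approach to compare against. One small slip: you wrote $\Thin(A,B)$ for the hom-categories where you meant $\Span(\C)(A,B)$.
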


In fact, $\Span(\C)$ is a compact closed bicategory
\cite{stay2013compact}, and thus a model of the linear
$\lambda$-calculus. In particular, $\Span(\Set)$ shares much structure
with $\Rel$: it has the same objects and the operation sending a span
$A \leftarrow S \rightarrow B$ to the pairs $(\display^S_A(s),
\display^S_B(s))$ for $s \in S$ is a functor, establishing
$\Span(\Set)$ as a natural candidate for a proof-relevant relational
model.

\subsubsection{The exponential}\label{subsubsec:naive_exp_span}
However, the exponential of $\Rel$ does
not directly transport to $\Span$. The operation $\Mf(-)$ does yield a
functor on $\Set$ obtained by setting, for $f : A \to B$,
\[
\Mf(f)([a_1, \dots, a_n]) = [f(a_1), \dots, f(a_n)] 
\]
defining $\Mf(f) : \Mf(A) \to \Mf(B)$. But $\Mf(f)$ does not lift to
$\Span(\Set)$ as it does not preserve pullbacks. Indeed, the diagram
obtained by image of the composition pullback
\[
  \begin{tikzcd}[cramped,rsbo=normal,csbo=5.5em]
    &
    \Mf(T\odot S)
    \ar[dl,"\Mf(\pl)"']
    \ar[dr,"\Mf(\pr)"]
    &
    \\
    \Mf(S)
    \ar[dr,"\Mf(\display^S_B)"']
    &&
    \Mf(T)
    \ar[dl,"\Mf(\display^T_B)"]
    \\
    &\Mf(B)
  \end{tikzcd}
\]
is no pullback: this would need a bijection of $\Mf(T\odot S)$
with
\[
  \{(\mu, \nu) \in \Mf(S) \times \Mf(T) \mid \Mf(\display^S_B)(\mu) =
  \Mf(\display^T_B)(\nu) \}\,,
\]
which fails in general. If $S = T = \tbool$ and $B = 1$, the pair of
multisets $([\ttrue, \tfalse], [\ttrue, \tfalse])$ does not uniquely
specify who is synchronized with whom: it may correspond to
both multisets $[(\ttrue, \ttrue), (\tfalse, \tfalse)]$ and $[(\ttrue,
\tfalse), (\tfalse, \ttrue)]$ in $\Mf(T\odot S)$. 

This might be expected: a finite multiset only remembers the
multiplicity of elements, but does not track distinct individual
occurrences. This is in tension with the goal of a proof-relevant
relational semantics, for which specific witnesses are naturally
associated with individual resource occurrences.

\subsubsection{Categorifying objects} If the exponential is to track
individual resource occurrences, that means avoiding the quotient of
finite multisets: an element of $\oc A$ may for instance be a
\emph{list}, or a \emph{word} $a_1 \dots a_n \in A^*$ of elements of
$A$. We must of course still account for reorderings, which turn $A^*$
into a \emph{groupoid} -- in fact, it is an instance of the
construction of the \emph{free symmetric monoidal category} $\Sym(A)$
over a category $A$: its objects are finite words $a_1 \dots a_n$ of
objects of $A$, and a morphism from $a_1 \dots a_n$ to $a'_1 \dots
a'_n$ consists of a permutation $\pi \in \Perm{n}$, and a family $(f_i
\in A(a_i, a_{\pi(i)}))_{1\leq i \leq n}$.

Thus, objects are not mere sets but categories, which means that we move
from $\Span(\Set)$ to $\Span(\Cat)$. Indeed, $\Cat$ also has
pullbacks, and so the exact same construction as above yields a
bicategory $\Span(\Cat)$ -- except that now the functor
$\Sym : \Cat \to \Cat$ preserves pullbacks and thus lifts to
\[
  \Sym : \Span(\Cat) \to \Span(\Cat)\,.
\] 

However, in this categorification, the Seely isomorphism $\Mf(A +
B) \iso \Mf(A) \times \Mf(B)$ is lost. Instead, we only get 
\[
  \Sym(A + B) \simeq \Sym(A) \times \Sym(B)
\]
an \emph{equivalence} of categories. In order to lift it to spans, we
observe that given a functor $F : A \to B$ we get a span
\[
  \hat{F} 
  \quad = \quad
  \begin{tikzcd}[cramped,rsbo=normal,csbo=large]
    &
    A
    \ar[dr,"F"]
    \ar[dl,"\Id_A"']
    &
    \\
    A&&B
  \end{tikzcd}
  \quad \in \quad
  \Span(\Cat)(A, B)
\]
so that lifting an equivalence $F : A \simeq B : G$ to spans requires
us to provide a family of $2$-cells, \emph{i.e.} for each category $A$: 
\[
  \begin{tikzcd}[cramped,rsbo=normal]
    &
    A
    \ar[dl,"\Id_A"']
    \ar[dr,"GF"]
    \ar[dd,dotted,"?"{description}]
    &
    \\
    A&&A
    \\
    &
    A
    \ar[ul,"\Id_A"]
    \ar[ur,"\Id_A"']
  \end{tikzcd}
\]
however whatever our choice for the mediating map is, one of the
triangles fails to commute on the nose but only up to isomorphism, 
which the $2$-cells of $\Span(\Cat)$ are too
strict to accommodate. This invites weakening the $2$-cells to:
\begin{definition}\label{def:weakmor}
  A \textbf{weak morphism} from $A \leftarrow S \rightarrow B$ to $A
  \leftarrow S' \rightarrow B$ is a triple $(F, F^A, F^B)$ where
  \[
    \begin{tikzcd}[cramped,rsbo=normal]
      &&
      S	\ar[dll,"\display^S_A"']
      \ar[drr,"\display^S_B"]
      \ar[dd,"F"{description}]
      &&
      \\
      A
      &
      F^A\!\Downarrow\hspace{-15pt} && \hspace{-15pt}\Downarrow \! F^B&B\\
      &&
      S'
      \ar[ull,"\display^{S'}_A"]
      \ar[urr,"\display^{S'}_B"']
    \end{tikzcd}
  \]
  with $F^A : \display^S_A \Rightarrow \display^{S'}_A \circ F$ and $F^B :
  \display^S_B \Rightarrow \display^{S'}_B \circ F$ natural isos. We call this a
  \textbf{strong morphism} if $F^A$ and $F^B$ are identities.
\end{definition}

Adopting \emph{weak morphisms} seems to solve the problem above, but
only to run into a much more subtle one: in $\Span(\Cat)$, the
horizontal composition of $2$-cells $F : S \Rightarrow S'$ and $G : T
\Rightarrow T'$ as required by the bicategorical structure follows from
the universal property of the pullback $T' \odot S'$:
\begin{equation}
  \begin{tikzcd}[cramped,rsbo={1.5em},csbo={4.5em}]
    &&
    T \odot S
    \ar[dl]
    \ar[dr]
    \phar[dd,"\dcorner",very near start]
    \ar[dddd,dotted,bend left=64,"G\odot F"{description}]
    &&
    \\
    &
    S
    \ar[dl]
    \ar[dd,dotted,"F"{description}]
    \ar[dr]
    &&
    T
    \ar[dl]
    \ar[dd,dotted,"G"{description}]
    \ar[dr]
    \\
    A&&B&&C
    \\
    &
    S'
    \ar[ul]
    \ar[ur]
    &&
    T'
    \ar[ul]
    \ar[ur]
    \\
    &&
    T' \odot S'
    \ar[ul]
    \ar[ur]
    \phar[uu,"\ucorner",very near start]
  \end{tikzcd}
\label{eq:horcomp}
\end{equation}
but this universal property is powerless to compose horizontally weak
morphisms. We cannot have the cake and eat it too: if our method to
compose spans ignores the $2$-categorical nature of $\Cat$, then we
cannot hope composition to preserve an equivalence between spans that
relies on it, as required for a model of linear logic. So it seems
that this road to a proof-relevant relational model is doomed -- except
that this is exactly what we shall do in this paper!

Before we delve into that, we review existing solutions.

\subsection{Proof-Relevant Relational Models, and Other Related Work}

As plain pullbacks are ``too $1$-dimensional'', it is natural to
compose spans with a $2$-dimensional version.

\subsubsection{Bipullbacks} There are multiple variants for weakened
versions of pullbacks in a $2$-category. In this paper, a central
notion will be that of a \emph{bipullback:}\footnote{According to the
nlab, its proper name is a \emph{bi-iso-comma-object}.}
\begin{figure}
\begin{minipage}{.45\linewidth}
\[
\adjustbox{scale=.8,center}{%
\begin{tikzcd}
& P	\ar[dl,swap,"\pl"]
	\ar[dr,"\pr"]\\
S	\ar[dr,swap,"u"]
	\phar[rr,"\xTo{\mu}"]&&
T	\ar[dl,"v"]\\
&B 
\end{tikzcd}}
\]
\caption{A bipullback}
\label{fig:diag1}
\end{minipage}
\hfill
\begin{minipage}{.45\linewidth}
\[
\adjustbox{scale=.8,center}{%
\begin{tikzcd}
& X     \ar[dl,swap,"l'"]
        \ar[dr,"r'"]\\
S       \ar[dr,swap,"u"]
        \phar[rr,"\xTo{\nu}"]&&
T       \ar[dl,"v"]\\
&B 
\end{tikzcd}}
\]
\caption{Alternative square}
\label{fig:diag2}
\end{minipage}
\end{figure}

\begin{definition}\label{def:bipullback}
  In a $2$-category $\C$, a \textbf{bipullback} of the cospan $S \xto{u} B
  \xot{v} T$ is a square commuting up to an invertible $2$-cell as in
  \Cref{fig:diag1}, such that for any square as in \Cref{fig:diag2}:
  \begin{enumerate}[(a),left=0pt .. 16pt]
  \item \label{def:bipullback:existence} There is a morphism $h : X \to P$ and
    $2$-cells $\alpha$ and $\beta$ s.t.:
    \[
      \adjustbox{scale=.8}{%
        \begin{tikzcd}
          & X	\ar[ddl,bend right=30, "l'"',myname=lp]
          \ar[ddr,bend left=30, "r'",myname=rp]
          \ar[d,"h"{description}]\\
          & P     \ar[dl,"\pl"']
          \ar[dr,"\pr"]
          \phar[to=rp,"\xTo{\beta}"]
          \phar[from=lp,"\xTo{\alpha}"]\\
          S       \ar[dr,"u"']
          \phar[rr,"\xTo{\mu}"]&&
          T       \ar[dl,"v"]\\
          &B 	
        \end{tikzcd}}
      \qquad=\qquad
      \adjustbox{scale=.8}{%
        \begin{tikzcd}
          & X     \ar[ddl,bend right=30, "l'"',myname=lp]
          \ar[ddr,bend left=30, "r'",myname=rp]\\\\
          S       \ar[dr,"u"']
          \phar[rr,"\xTo{\nu}"]&&
          T       \ar[dl,"v"]\\
          &B      
        \end{tikzcd}}
    \]
  \item \label{def:bipullback:uniqueness} $h, \alpha, \beta$ are unique up to
    unique $2$-cell -- see \Cref{sec:bipullbacks}.
  \end{enumerate}
\end{definition}

The important observation is that this alternative universal property is
sufficient to extend the definition of the horizontal composition in
\eqref{eq:horcomp} to \emph{weak morphisms} -- with the proviso that
this defines horizontal composition only up to iso; as \emph{(b)} does
not guarantee uniqueness of $h$ on the nose.

\subsubsection{Hoffnung's monoidal tricategory} Hoffnung
\cite{hoffnung2011spans} constructs a categorification of $\Span(\Cat)$
following this idea. He exploits that $\Cat$ actually has
\emph{pseudo-pullbacks}\footnote{According to the nlab, its proper name
is an \emph{iso-comma-object}.}, which are a special case of the
definition above where $\alpha$ and $\beta$ are required to be
identities and $h$ is unique on the nose -- making horizontal
composition of weak morphisms of spans a well-defined function once a
choice of pseudo-pullbacks is fixed.

Concretely, a pseudo-pullback of a cospan $S \xto{u} B \xot{v} T$ may be
constructed as a category with objects triples $(s, \theta, t)$ where $\theta
\in B(u(s), v(t))$. So for instance, if $S = T = \Sym(\tbool)$ and $B =
\Sym(1)$, the pseudo-pullback would have two objects synchronizing $[\ttrue,
\tfalse] \in S$ and $[\ttrue, \tfalse] \in T$: $([\ttrue, \tfalse], \id,
[\ttrue, \tfalse])$ and $([\ttrue, \tfalse, \swap, [\ttrue, \tfalse])$. The
issue of \Cref{subsubsec:naive_exp_span} is avoided by adding new witnesses
carrying all possible symmetries. This is a fundamental phenomenon in models of
linear logic, which we refer to as \emph{saturation}.

Because saturation \emph{inflates} the number of witnesses at each
composition, spans composed by pseudo-pullbacks no longer form a
bicategory. In particular, the post-composition of a span $A \ot S \to
B$ with the identity span $B \ot B \to B$ yields an inflated $S'$ much
bigger than $S$. So neutrality of identity no longer holds up to
isomorphism, but only up to \emph{equivalence} factoring in \emph{maps
between maps of spans}. Accordingly, Hoffnung actually constructs a
\emph{monoidal tricategory} of categorical spans with weak morphisms,
\emph{i.e.} a one-object \emph{tetracategory}! 

\subsubsection{Melliès' template games} Recently, Melliès introduced
\emph{template games} \cite{DBLP:journals/pacmpl/Mellies19}, in an
attempt to unify various games models. This is essentially a
model of categorical spans where categories are regarded as games and
structured by a projection to a category called the  \emph{template},
capturing the mechanisms of synchronization and scheduling between
players. 
Though \cite{DBLP:journals/pacmpl/Mellies19} was developped in a purely
linear setting with spans related by strong morphisms, Melliès 
proposed a non-linear extension, forming a model of differential linear
logic \cite{DBLP:conf/lics/Mellies19}. 

Melliès' contribution puts into play notions from \emph{homotopy
theory}: he starts not with $\Cat$, but from any $2$-category $\S$
equipped with a Quillen model structure (with additional conditions).
Spans are composed by mere pullbacks, but a span
\[
A \xot{u} S \xto{v} B
\]
must satisfy a fibration property to the effect that
symmetries in $A$ and $B$ can be lifted uniquely in $S$ -- in
our terminology, $S$ is \emph{saturated}. Saturation ensures that
pullbacks between those spans are actually \emph{homotopy pullbacks},
and thus that they may be used for the horizontal composition of weak
morphisms. The higher dimensional structure seen in Hoffnung
\cite{hoffnung2011spans} is then managed by the homotopy-theoretic
operation of \emph{localization}, formally inverting weak equivalences.
This yields an actual \emph{bicategory} of objects of
$\S$ related by \emph{homotopy spans}.

This elegant construction gives a model of differential linear logic,
showing that the symmetries implicit in linear logic may be naturally
managed via the tools of homotopy theory.

\subsubsection{Generalized Species of Structures}
Last but not least, the most well-studied proof-relevant extension
of $\Rel$ is definitely Fiore, Gambino, Hyland and Winskel's cartesian
closed bicategory of generalized species of structure
\cite{fiore2008cartesian}. Relations from $A$ to $B$ are
replaced with
\emph{distributors} or \emph{profunctors}:
\[
F : A^{\op} \times B \to \Set
\]
for $A$ and $B$ categories. This forms a
(compact closed) bicategory $\Dist$ of (small) categories, distributors and
natural transformations between them. The free symmetric monoidal
construction $\Sym(-)$ yields a pseudocomonad on $\Dist$, whose Kleisli
bicategory $\Esp$ is cartesian closed.

As for the span-based approaches above, the way in which $\Dist$ and
$\Esp$ handle symmetries is saturated. This may first be seen in the
identity distributor which is defined to be
\[
A[-, -] : A^{\op} \times A \to \Set
\]
the Yoneda embedding,
which associates as witnesses over a pair $(a, a)$ the homset
$A[a, a]$, including all symmetries on $a$.

Composition of distributors is via the coend formula
\[
G \odot F = \int^{b \in B} F(-, b) \times G(b, -)
\]
which sets witnesses in $(G\odot F)(a, c)$ to be pairs $(s, t) \in F(a,
b) \times G(b, C)$ quotiented by a relation identifying the action of a
morphism in $B$ on $s$ or on $t$.

Accordingly, when computing the interpretation of a program $\vdash M :
A$ in $\Esp$, for every $a \in \intr{A}$ we get $\intr{M}(a)$ a set of
witnesses carrying around explicit symmetries, quotiented by an
equivalence relation letting symmetries flow around -- this is
described syntactically elegantly by Olimpieri
\cite{DBLP:conf/lics/Olimpieri21}. The treatment of symmetry in
$\Esp$ is, again, saturated.

\subsubsection{Game semantics} To our knowledge, this saturation
phenomenon in models of linear logic first appears in Baillot, Danos,
Ehrhard and Regnier's (BDER) version \cite{DBLP:conf/lics/BaillotDE97}
of Abramsky-Jagadeesan-Malacaria (AJM) games
\cite{DBLP:journals/iandc/AbramskyJM00}. 

In AJM games, the \emph{moves} of a game $\oc A$ are defined as pairs
$(i, a)$ of $i \in \mathbb{N}$ a \textbf{copy index}, and $a \in A$ a
move in $A$ -- a fundamental difficulty in setting up the games model,
is that of \emph{uniformity}: ensuring that the behaviour of
strategies does not depend on the specific choice of copy indices
(which is the game semantics analogue of composition preserving weak
morphisms). In BDER, uniformity is guaranteed by requiring strategies
to be \emph{saturated}: they are morally wrapped by copycat processes
exchanging non-deterministically all copy indices. This ``noise''
prevents strategies from seeing specific copy indices, let alone
depending on them -- this is analogous to the saturation phenomenon above.

But in AJM games there is another choice: in the original AJM setting
\cite{DBLP:journals/iandc/AbramskyJM00}, strategies carry a
deterministic choice of copy indices. Instead of saturation, uniformity
is guaranteed by requiring that strategies satisfy a bisimulation-like
property, which ensures that whenever Opponent swaps their copy
indices, Player can swap theirs accordingly, leaving the behaviour
``up to copy indices'' invariant. In contrast to the ``saturated''
approach to uniformity, we refer to this as the ``thin'' approach. 
Similar ideas are at play in other
game models based on copy indices: in Melliès' orbital games
\cite{DBLP:journals/tcs/Mellies06}, and more recently in \emph{thin
concurrent games}\footnote{The first version of concurrent
games with symmetry was saturated \cite{DBLP:conf/csl/CastellanCW14}.}
\cite{DBLP:journals/lmcs/CastellanCW19}. 

Thin concurrent games are a particularly striking related work, because
just as $\Esp$, they \emph{also} form a cartesian closed bicategory as
proved by Paquet \cite{paquet2020probabilistic}, and also generalize the
relational model \cite{DBLP:journals/corr/abs-2107-03155}. In thin
concurrent games, strategies are composed by pullback. But it is a
theorem that this pullback is also a bipullback, which can be used to
compose horizontally weak morphisms even though strategies are not
saturated. But this bipullback property follows from a subtle
interactive reindexing mechanism between strategies, relying crucially
on the fact that we have access to time -- it seems hard to replicate
it purely statically as in a relational model.

\section{The Bicategory of Thin Spans}
\label{sec:bicat_thin}

This long discussion lets us state the main question in this
paper: can we construct a \emph{thin} version of 
categorical spans? 

\subsection{Pullbacks and Bipullbacks in Groupoids}

For simplicity, we focus on spans of \emph{groupoids} 
rather than categories, which are sufficient for the 
interpretation of types -- we write $\Gpd$ for the small $2$-category
of groupoids.
So we aim to construct a bicategory whose objects are small groupoids,
whose morphisms are spans $A \ot S \to B$ with identity the identity
span $A \ot A \to A$, whose composition is plain
pullback and yet, whose $2$-cells are \emph{weak} morphisms.

\subsubsection{Notations and terminology} A span $A \ot S \to B$ may be
presented as a functor $S \to A \times B$, so it is convenient not to focus on
spans, but on functors $S \to A$ over a groupoid $A$. We refer to those with
terminology inspired from game semantics. A \textbf{prestrategy} on groupoid $A$
is a pair $(S, \display^F)$ where $\display^F : S \to A$ is called the
\textbf{display map}. We often refer to the prestrategy only with $S$, and write
$\PreStrat(A)$ for the set of prestrategies on $A$. A \textbf{prestrategy from
  $A$ to $B$} is a prestrategy on $A \times B$ -- then, we write $\display^F_A :
S \to A$ and $\display^F_B : S \to B$ for the two display maps. If $S$ is a
prestrategy from $A$ to $B$ and $T$ a prestrategy from $B$ to $C$, we write $T
\odot S$ for the prestrategy from $A$ to $C$ obtained as in
\Cref{subsubsec:bicategory_prelim}. We often refer to morphisms in groupoids as
\textbf{symmetries} and write \emph{e.g.} $\varphi : s \sym_S s'$ instead of
$\varphi \in S(s, s')$.

We write $1$ for the groupoid with one object $*$, and only the
identity morphism; and $o$ for the groupoid with one object $\bullet$
and only the identity morphism. If $A, B$ are groupoids, then we
use $A \vdash B$ and $A \lin B$ as synonyms for $A \times B$, 
with objects respectively denoted by $a \vdash b$ and
$a \lin b$ -- likewise, their morphisms have form $\theta_A \vdash
\theta_B \in (A \lin B)(a \lin b, a' \lin b')$ for $\theta_A \in A(a,
a')$ and $\theta_B \in B(b, b')$ and likewise for $\theta_A \lin
\theta_B$.  We find these purely notational distinctions useful to read
examples, since they coincide with familiar type constructors.

\subsubsection{Indexed families} As explained earlier, types of
$\lambda$-calculi may be interpreted as groupoids -- but in a linear
language, these groupoids remain \emph{discrete}: only the exponential
introduces non-trivial morphisms. As those \emph{symmetries} play a
crucial role, we introduce early our version of the
exponential construction.
If $X$ is a set, then we write $\Fam(X)$ the set of families indexed by
\emph{finite sets of natural numbers}, \emph{i.e.} of $(x_i)_{i\in I}$
where $I \subseteq_f \N$ and for all $i \in I$, $x_i \in X$.

\begin{definition}\label{def:fam}
Consider $A$ a (small) groupoid.
The (small) groupoid $\BFam(A)$ has:
\emph{objects}, the set $\Fam(A)$; \emph{morphisms} from $(a_i)_{i\in
I}$ to $(b_j)_{j\in J}$, pairs $(\pi, (f_i)_{i\in I})$ of a bijection
$\pi : I \bij J$ and for each $i \in I$, $f_i \in A(a_i, b_{\pi(i)})$.
\end{definition}

This yields a functor $\BFam : \Gpd \to \Gpd$ in the obvious way.
For $(A_i)_{i\in I} \in \Fam(A)$, we call elements of $I$
\textbf{copy indices}. A family $(a_i)_{i\in I} \in
\Fam(A)$ is more ``intensional'' than $A^*$ (which is more
intensional than $\Mf(A)$): it gives a presentation of a multiset in
$\Mf(A)$ not only providing a sequence, but assigning to each element a
distinct ``address''.

Just as multisets are connected to non-idempotent intersection types,
families are connected with Vial's \emph{sequence types}
\cite{DBLP:conf/lics/Vial17} -- thus we often write families using Vial's
notation, \emph{e.g.}
\[
[2 \cdot a_2, 4 \cdot a_4, 12\cdot a_{12}] \in \Fam(A)
\]
for $(a_i)_{i\in \{2, 4, 12\}}$ -- in the particular case where $A =
o$, we only write $[i_1, \dots, i_n]$ for $[i_1 \cdot \bullet, \dots,
i_n \cdot \bullet]$.

For any groupoid $A$, $\BFam(A)$ and $\Sym(A)$ are equivalent. However,
using $\BFam(A)$ is crucial in our model construction: it allows the
interpretation of programs to use copy indices as \emph{identifiers}
for resource accesses, that are independent of other concurrent
resource accesses. 
We give a few examples:

\begin{example}\label{ex:dereliction}
For a groupoid $A$, the \textbf{dereliction} span $\bder_A$ is
\[
\BFam(A) \xot{\der_A} A \xto{\id_A} A
\]
where $\der_A : A \to \BFam(A)$ sends $a$ to $[0\cdot a]$.
\end{example}

In models of linear logic, the role of dereliction is to extract a
\emph{single} instance of a replicable resource. In our model -- as in
AJM games \cite{DBLP:journals/iandc/AbramskyJM00} and thin concurrent
games \cite{DBLP:journals/lmcs/CastellanCW19} -- dereliction does so by
picking a copy index (here $0$), chosen in advance once and for all.
The specific choice is irrelevant; in fact for any $n$ the span using
$n$ instead of $0$ will be turn out to be isomorphic to $\bder_A$. But,
the span must comprise a choice.

%

\begin{example}
  The interpretation of the term $M$ of \Cref{ex:prelim_term} in thin spans
  shall have head groupoid that with four objects
\[
\begin{array}{ll}
\,[0 \cdot \ttrue, 1 \cdot \ttrue] \lin \ttrue\,,
\qquad
&[0 \cdot \tfalse, 1\cdot \tfalse] \lin \ttrue\,,\\
\,[0 \cdot \ttrue, 1 \cdot \tfalse] \lin \tfalse\,,
\qquad
&[0 \cdot \tfalse, 1 \cdot \ttrue] \lin \ttrue\,,
\end{array}
\]
morphisms reduced to identities, and display map the identity.
\end{example}

The use of
specific copy indices allows one to observe which occurrence of $x$
evaluates to $\ttrue$ or $\tfalse$, hence associating distinct points
to the two evaluations leading to $\tfalse$.

\subsubsection{Bipullbacks of groupoids} If composition-by-pullback is
to allow us to compose horizontally weak morphisms, we must ensure that every
composition pullback is \emph{also} a bipullback. 

It is useful to understand a bit better the shape of bipullbacks in $\Gpd$. A
first useful fact is that condition \Cref{def:bipullback:uniqueness} of
\Cref{def:bipullback} (uniqueness up to iso) automatically holds in
the case of $\Gpd$; furthermore, we can characterise those pullbacks that are
also bipullbacks (see \Cref{sec:pbs-in-gpd}):

\begin{lemma}\label{lem:carac_pb_bipb}
A pullback square in $\Gpd$, of the form
\[
  \begin{tikzcd}[cramped,rsbo=2em,csbo=2.5em]
    &
    P
    \ar[ld,"\pl"']
    \ar[rd,"\pr"]
    \phar[dd,"\dcorner",very near start]
    &
    \\
    S
    \ar[rd,"f"']
    &&
    T
    \ar[ld,"g"]
    \\
    &
    B
  \end{tikzcd}
\]
is a bipullback if and only if it satisfies the following property: for
all $s \in S$, $t \in T$ and $\theta \in B(fs, gt)$, there is $\varphi
\in S(s, s')$ and $\psi \in T(t', t)$ such that $f s' = g t'$ and
$\theta = f\psi \circ g\varphi$. 
\end{lemma}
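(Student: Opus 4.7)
The plan is to verify the two directions separately, using the concrete description of pullbacks in $\Gpd$: the pullback $P$ of $f \co S \to B$ and $g \co T \to B$ has as objects pairs $(s,t)$ with $fs = gt$, as morphisms pairs $(\varphi,\psi) \co (s,t) \to (s',t')$ with $f\varphi = g\psi$, and as projections $\pl,\pr$ the obvious ones. Since condition \Cref{def:bipullback:uniqueness} of \Cref{def:bipullback} is automatic in $\Gpd$ (per the preceding sentence in the text), I only need to verify condition \Cref{def:bipullback:existence}, which the stated lifting property is designed to capture. Throughout I read the factorization $\theta = g\psi \circ f\varphi$, where $f\varphi \co fs \to fs' = gt'$ and $g\psi \co gt' \to gt$ compose correctly in $B$.

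For the forward direction, suppose the square is a bipullback. Given $s \in S$, $t \in T$, and $\theta \in B(fs,gt)$, let $X$ be the terminal groupoid with single object $\star$, set $l'(\star) = s$, $r'(\star) = t$, and $\nu_\star = \theta$; this provides an alternative square. Applying the universal property yields $h \co X \to P$ and $2$-cells $\alpha \co l' \To \pl h$, $\beta \co \pr h \To r'$ whose pasting with the identity $2$-cell of the pullback recovers $\nu$. Reading off components at $\star$ gives $s' = \pl h(\star)$, $t' = \pr h(\star)$ with $fs' = gt'$, together with $\varphi = \alpha_\star$ and $\psi = \beta_\star$ making $\theta = g\psi \circ f\varphi$.

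For the backward direction, suppose the lifting property holds and fix an alternative square $(X, l', r', \nu)$. For each object $x \in X$, the condition applied to $(l'x, r'x, \nu_x)$ yields objects $s'_x, t'_x$ with $fs'_x = gt'_x$ and symmetries $\varphi_x \co l'x \sym s'_x$, $\psi_x \co t'_x \sym r'x$ with $\nu_x = g\psi_x \circ f\varphi_x$. Define $h(x) = (s'_x, t'_x) \in P$, and on a morphism $\xi \co x \sym y$ set $h(\xi)$ to be the pair $(\varphi_y \circ l'\xi \circ \varphi_x^{-1},\, \psi_y^{-1} \circ r'\xi \circ \psi_x)$. One checks that this pair lives in $P$ by applying $f$ and $g$ respectively, substituting the factorizations of $\nu_x, \nu_y$, and invoking \naturality of $\nu$ to see that the two composites in $B$ coincide. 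Functoriality of $h$ then follows from functoriality of $l'$, $r'$ and the cancellation of the fixed symmetries $\varphi_x, \psi_x$ at each endpoint; and $\alpha \co l' \To \pl h$, $\beta \co \pr h \To r'$ with components $\varphi_x$ and $\psi_x$ are natural for the same reason and satisfy the required pasting equation by construction.

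The only nontrivial step is checking that the morphism-part of $h$ lands in $P$ and defines a functor, and this reduces to a direct diagram chase exploiting the invertibility of all symmetries in groupoids; everywhere else the argument is bookkeeping. The main conceptual point is that in $\Gpd$ the bipullback property amounts exactly to being able to ``lift'' arbitrary connecting symmetries $\theta$ in $B$ to a pair of symmetries on each side of the span, modulo adjusting endpoints.
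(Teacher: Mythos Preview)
Your proof is correct and follows essentially the same approach as the paper's: the forward direction via the terminal-groupoid pseudocone, and the backward direction by choosing pointwise liftings and transporting morphisms along the chosen symmetries. The only cosmetic difference is that the paper builds two functors $h' \co X \to S$ and $k' \co X \to T$, verifies they form a (strict) cone, and then invokes the $1$-pullback property to obtain the mediating $X \to P$, whereas you construct the map into $P$ directly; the underlying computation is the same.
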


Let us comment on this. We regard triples of the form
\[
s \in S\,,
\qquad
\theta \in B(f s, g t)\,,
\qquad
t \in T
\]
as pairs of states $(s, t)$ that match \emph{up to symmetry} -- we call
this a \textbf{reindexing problem}. The lemma above says that given a
reindexing problem, we can always find $s'$ symmetric to $s$ and $t'$
symmetric to $t$ matching \emph{on the nose}, in a way compatible with
$\theta$
-- called a \textbf{solution} to the reindexing problem. Thus, the
lemma above may be reformulated to say that a pullback is a bipullback
iff all its reindexing problems have a solution. 

We show a concrete example of this reindexing process:

\begin{example}\label{ex:reindexing_pb}
Take $B = \BFam(o) \lin \BFam(o)$, with objects
\[
[i_1,  \dots, i_n] \lin [j_1, \dots, j_k]\,.
\]

Take $S$ the sub-groupoid of $B$ with objects $[i_1, \dots,
i_n] \lin [i_1, \dots, i_n]$ and morphisms all $\theta \lin \theta$ for
$\theta$ in $\BFam(o)$; and $T$ the full sub-groupoid of $B$
with objects $[j_1, \dots, j_n] \lin [0]$. 

The pullback of $S \to B \ot T$ is a bipullback.
For instance, 
\[
\theta 
\in
B([2] \lin [2],
[1] \lin [0])
\]
is a reindexing problem that may be solved by first applying
\[
\varphi \in S([2] \lin [2], [0] \lin [0])
\]
in $S$. We are reduced to finding morphisms in $S$ and $T$
\emph{w.r.t.}
\[
\theta' \in 
B([0] \lin [0],
[1] \lin [0])
\]

Now, applying $\psi \in T([0] \lin [0], [1] \lin [0])$ in $T$, we have
\[
\varphi \in S([2] \lin [2], [0] \lin [0])\,,
\quad
\psi \in T([0] \lin [0], [1] \lin [0])
\]
a solution to the reindexing problem, as in \Cref{lem:carac_pb_bipb}. 
\end{example}

That the pullback of two prestrategies forms a bipullback is not a
property of either: in this example neither strategy is a fibration as
in \cite{DBLP:conf/lics/Mellies19}, and solving the reindexing problem
requires reindexing in \emph{both} groupoids. So it is a
property emerging from the harmonious interaction between two
prestrategies. In an appropriate game semantics setting
\cite{DBLP:journals/lmcs/CastellanCW19}, one can \emph{prove} that
under reasonable assumptions, such interactive reindexing always
succeeds. However, this is a gradual process progressing over
time -- which we do not have access to here. 

\subsection{Orthogonality and Uniform Groupoids} 

\subsubsection{Definition}
In the literature on models of linear
logic, there is a technique for choreographing models where one only
composes pairs of morphisms satisfying a given interactive property:
\emph{biorthogonality}. 
The first step is to specify the desired interactive property via an
orthogonality relation: 

\begin{definition}
Take $(S, \display^S)$ and $(T, \display^T)$ prestrategies on
$B$.

We say they are \textbf{uniformly orthogonal}, written $S \perp T$, iff
the pullback of the cospan $S \to B \ot T$ is also a bipullback.
\end{definition}

If $\bS\subseteq \PreStrat(B)$, then its \textbf{uniform orthogonal} is
set to:
\[
  \bS^\perp = \{T \in \PreStrat(B) \mid \forall S \in \bS,~ S \perp T\}\zbox.
\]

As usual with orthogonality, this automatically entails a
number of properties: for all $\bS \subseteq \PreStrat(B)$, we have
$\bS \subseteq \bS^{\perp \perp}$, and $\bS^\perp = \bS^{\perp \perp
\perp}$. We are particularly interested in sets of the form
$\bS^\perp$, which are \emph{invariant under biorthogonal}:

\begin{definition}
A \textbf{uniform groupoid} is a pair $(A, \U_A)$ where $A$ is a
groupoid and $\U_A \subseteq \PreStrat(A)$ is s.t.
$\U_A^{\perp\perp} = \U_A$.
\end{definition}

We often refer to a uniform groupoid $(A, \U_A)$ just with $A$ when it
is clear from the context that it is a uniform groupoid.

\subsubsection{Constructions} The uniform groupoid $1$ is the terminal
groupoid equipped with $\U_1 = \PreStrat(1)$. If $A$ and $B$ are
uniform groupoids, their \textbf{tensor} $A \tensor B$ is the groupoid
$A \times B$ equipped with
the set $\U_{A \tensor B} = (\U_A \tensor \U_B)^{\perp \perp}$, writing
\[
\U_A \tensor \U_B = \{(S \times T, \display^S \times \display^T) \mid S \in \U_A,~T \in \U_B\} 
\]
with $\display^S \times \display^T : S \times T \to A \times B$. The
\textbf{dual} $A^\perp$ of $A$ is $(A, \U_{A^\perp})$ with
$\U_{A^\perp} = \U_A^\perp$. The \textbf{par} of $A$ and $B$
has
\[
\U_{A \parr B} = (\U_A^\perp \tensor \U_B^\perp)^\perp
\]
yielding the De Morgan duality $(A \tensor B)^\perp = A^\perp \parr
B^\perp$. 
From this we derive the \textbf{linear arrow} $A \lin B = A^\perp \parr
B$. 

A \textbf{uniform prestrategy} on uniform groupoid $A$ is simply any $S
\in \U_A$. If $A, B$ are uniform groupoids, then a \textbf{uniform
prestrategy from $A$ to $B$} is a uniform prestrategy on $A \lin
B$.

\subsubsection{Uniform composition} We claim that whenever composing $S
\in \U_{A\lin B}$ with $T \in \U_{B \lin C}$, we have the orthogonality
\[
(S, \display^S_B) \perp (T, \display^T_B)
\]
so that the composition pullback is a bipullback. 


If $S$ is a prestrategy on $A$ and $T$ is a prestrategy from $A$ to
$B$, we write $T@S$ from the prestrategy on $B$ obtained by
\[
  \begin{tikzcd}[cramped,rsbo=2em,csbo=3em]
    &
    T\odot S
    \phar[dd,"\dcorner",very near start]
    \ar[dl]
    \ar[dr]
    &
    &
    \\
    S
    \ar[dr]
    &&
    T
    \ar[dl]
    \ar[dr]
    \\
    &A&&B
  \end{tikzcd}
\]
called the \textbf{application} of $T$ to $S$. This lets us state:

\begin{proposition}
  \label{prop:carac_lin}
  Consider $(A, \U_A)$ and $(B, \U_B)$ uniform groupoids, and $T$ a
  prestrategy from $A$ to $B$; consider furthermore a class $\SU \subseteq
  \U_A$ s.t. $(A, \id_A) \in \SU$ and $\U_A = \SU^{\perp \perp}$.

  Then $T \in \U_{A \lin B}$ iff
  the following two conditions hold:
  \begin{enumerate}[(1)]
  \item \label{prop:carac_lin:pres} for all $S \in \SU$, $T@S \in \U_B$,
  \item \label{prop:carac_lin:anti-unif} $(T, \display^T_A) \in \U_A^\perp$.
  \end{enumerate}
\end{proposition}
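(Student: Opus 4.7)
The proof rests on the iterated-pullback identity $T \times_{A \times B} (S \times T'') \cong (T \times_A S) \times_B T''$ together with the standard $2$-categorical fact that the composition of two bipullbacks is a bipullback of the outer cospan.

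For the forward direction, I assume $T \in \U_{A\lin B} = (\U_A \tensor \U_B^\perp)^\perp$. For \ref{prop:carac_lin:pres}, fixing $S \in \SU$ and any $T'' \in \U_B^\perp$, a reindexing problem in $T@S \times_B T''$ lifts canonically to one in the big pullback with $\theta_A = \id$ (since $\display^T_A(t) = \display^S(s)$ already on $T@S$); the hypothesis $T \perp S \tensor T''$ solves the lifted problem, and the solution descends, giving $T@S \perp T''$ and hence $T@S \in \U_B^{\perp\perp} = \U_B$. For \ref{prop:carac_lin:anti-unif}, I first verify that $(B, \id_B) \in \U_B^\perp$ by a direct reindexing check (every prestrategy is trivially orthogonal to the identity span); then for $S \in \U_A$, the hypothesis $T \perp S \tensor (B, \id_B)$ lifts a reindexing of $T \times_A S$ into the big pullback with $\theta_B = \id$, from which $(T, \display^T_A) \perp S$ follows.

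For the backward direction, given \ref{prop:carac_lin:pres} and \ref{prop:carac_lin:anti-unif}, the central auxiliary is $R = T \times_B T''$, viewed as a prestrategy on $A$ via $\display^T_A$ composed with the $T$-projection; this serves as a ``witness'' of the $T''$-test against which $S$ can be probed. To show $R \in \SU^\perp$, I fix $S' \in \SU$: \ref{prop:carac_lin:pres} yields $T@S' \perp T''$, \ref{prop:carac_lin:anti-unif} yields $(T, \display^T_A) \perp S'$, and by bipullback composition these combine into $T \perp S' \tensor T''$. Reindexings of $R \times_A S'$ coincide with those of the big pullback having $\theta_B = \id$ (since $(t,t'') \in R$ already matches on $B$), so solutions transfer and $R \perp S'$ follows. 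As $S \in \U_A = \SU^{\perp\perp}$ and $\SU^\perp = \U_A^\perp$, this gives $S \perp R$.

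To solve a general reindexing $(t, (s, t''), (\theta_A, \theta_B))$ in the big pullback, I first exploit \ref{prop:carac_lin:pres} at $(A, \id_A) \in \SU$---which gives $(T, \display^T_B) = T @ (A, \id_A) \in \U_B$ and hence $(T, \display^T_B) \perp T''$---to align the $B$-side: produce $\varphi_B : t \to t_1$ in $T$ and $\psi_B : t''_1 \to t''$ in $T''$ with $(t_1, t''_1) \in R$ and $\theta_B = \display^{T''}(\psi_B) \circ \display^T_B(\varphi_B)$. Then I apply $S \perp R$ to the shifted $A$-problem $(s, (t_1, t''_1), \display^T_A(\varphi_B) \circ \theta_A^{-1})$ to obtain $\sigma : s \to \tilde s$ in $S$ and $\rho = (\rho_T, \rho_{T''}) : (\tilde t, \tilde t'') \to (t_1, t''_1)$ in $R$; finally, the assembly $\varphi = \rho_T^{-1} \circ \varphi_B$, $\sigma^{-1}$, and $\gamma = \psi_B \circ \rho_{T''}$ solves the initial reindexing by routine equality chasing. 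The main obstacle is identifying the correct auxiliary $R$ and coordinating the three bipullback uses---tracking the direction and composition of symmetries through two successive reindexings is delicate but mechanical.
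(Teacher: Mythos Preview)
Your proof is correct and follows essentially the same strategy as the paper's: both pivot on the auxiliary prestrategy $R = T \times_B T''$ (which the paper writes as $T^\star @ V$) and exploit the hypothesis $\U_A = \SU^{\perp\perp}$ to reduce the quantification over $\U_A$ to one over $\SU$.

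The difference is purely in packaging. The paper isolates a reusable ``three-tiles'' lemma (\Cref{lem:pb-bpb-three-tiles} in the appendix) which, for a configuration of three stacked pullback squares, gives equivalences between the bipullback property of the combined square $T \times_{A\times B} (S \times T'')$ and the bipullback properties of the constituent pieces; the proof of \Cref{prop:carac_lin} then amounts to invoking this lemma three or four times as a black box. You instead perform each reindexing transfer by hand directly from the pointwise characterization of \Cref{lem:carac_pb_bipb}: your final paragraph (aligning first on $B$ via $(T,\display^T_B) \perp T''$, then on $A$ via $S \perp R$, and assembling the two partial solutions) is precisely an unrolled instance of part~(ii) of that three-tiles lemma. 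Both routes are sound; the paper's modular lemma is reused later for the thinness analogue (\Cref{prop:carac_lin_thin}), while your explicit version makes the mechanics of the symmetry-chasing visible.
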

\begin{proof}
  Unfolding the definitions, one encounters a few diagram chasing lemmas on
  pullbacks that are also bipullbacks -- themselves proved via
  \Cref{lem:carac_pb_bipb}. See \Cref{sec:unif-and-thinness}.
\end{proof}

The apparent asymmetry is intriguing: by definition $A^\perp \parr B = A^\perp
\parr B^{\perp \perp}$, so that $T \in \U_{A\lin B}$ iff the span $B \ot T \to
A$ denoted by $T^\star$ obtained by reversing the two legs, is in $\U_{B^\perp
  \lin A^\perp}$. A similar phenomenon appears in the orthogonality used by
Ehrhard for his extensional collapse \cite{DBLP:journals/tcs/Ehrhard12}.


Now, observe that $(A, \id_A) \in \U_A$ always -- not the identity span, but the
identity functor regarded as a prestrategy on $A$. Indeed, if $S \in
\U_A^\perp$, then the pullback of $A \to A \ot S$ is clearly a bipullback, so
$(A, \id_A) \in \U_A^{\perp\perp} = \U_A$. But now this lets us instantiate
\Cref{prop:carac_lin} with $\SU = \U_A$. Then given $S \in \U_{A\lin B}$, the
application $S@(A, \id_A)$ is (up to iso) the right leg $(S, \display^S_B)$,
which must by \emph{(1)} be in $\U_B$. Likewise, if $T \in \U_{B \lin C}$, the
left leg $(T, \display^T_B)$ is in $\U_B^\perp$. Hence,
\[
(S, \display^S_B) \quad \perp \quad (T, \display^T_B)
\]
and thus the composition pullback of $S$ and $T$ is a bipullback.

\Cref{prop:carac_lin} has more consequences, all obtained in the particular case
where $\SU = \U_A$: we saw above that $(A, \id_A) \in \U_A$, but the same
argument goes to show $(A, \id_A) \in \U_A^\perp$ as well -- so the identity
span satisfies condition \emph{(2)}. Since it also satisfies \emph{(1)}, we have
$(A \ot A \to A) \in \U_{A\lin A}$ as expected. Likewise, if $A \ot S \to B$ and
$B \ot T \to C$ are uniform prestrategies, then it follows fairly easily that
the composition $A \ot T\odot S \to C$ is indeed in $\U_{A\lin C}$ (see
\Cref{sec:unif-and-thinness}).

\subsubsection{Horizontal composition of $2$-cells}
\label{subsubsec:horcomp}
We have an identity uniform prestrategy in $\U_{A\lin A}$, and a
well-defined composition of $S \in \U_{A\lin B}$ and $T \in \U_{B \lin
C}$ such that the composition pullback is always a bipullback. So given
weak morphisms 
\[
  \begin{tikzcd}[cramped,rsbo=1.6em,csbo=2.4em]
    &&
    S	\ar[dll,"\display^S_A"']
    \ar[drr,"\display^S_B"]
    \ar[dd,"F"{description}]
    &&
    \\
    A
    &
    F^A\!\!\Downarrow\hspace{-15pt} && \hspace{-15pt}\Downarrow \!\! F^B&B\\
    &&
    S'
    \ar[ull,"\display^{S'}_A"]
    \ar[urr,"\display^{S'}_B"']
  \end{tikzcd}
  \qquad
  \begin{tikzcd}[cramped,rsbo=1.6em,csbo=2.4em]
    &&
    T	\ar[dll,"\display^T_B"']
    \ar[drr,"\display^T_C"]
    \ar[dd,"G"{description}]
    &&
    \\
    B
    &
    G^B\!\!\Downarrow\hspace{-15pt} && \hspace{-15pt}\Downarrow \!\! G^C&C\\
    &&
    T'
    \ar[ull,"\display^{T'}_B"]
    \ar[urr,"\display^{T'}_C"']
  \end{tikzcd}
\]
by the bipullback property of $T' \odot S'$ there are a functor $H$ and
natural isos $\alpha$ and $\beta$ such that we have the equality
\[
  \adjustbox{scale=.8}{%
    \begin{tikzcd}[cramped,row sep=10pt, column sep=20pt]
      & T \odot S
      \ar[dl]
      \ar[dr]
      \cphar[ddd,"=",pos=0.33]
      \\
      S	\ar[d]
      \ar[ddr,bend left=30,myname=lp]&&
      T	\ar[d]
      \ar[ddl,bend right=30,myname=rp]\\
      S'	\ar[dr]
      \phar[to=lp,"\xTo{(F^{\!B})^{-1}\!\!}"]&&
      T'	\ar[dl]
      \phar[from=rp,"\xTo{G^{\!B}\!\!}"]\\
      &B
    \end{tikzcd}}
  \quad=\quad
  \adjustbox{scale=.8}{%
    \begin{tikzcd}[cramped,row sep=10pt, column sep=20pt]
      & T \odot S
      \ar[dl]
      \ar[dr]
      \ar[d,"H"]\\
      S	\ar[d]
      \phar[r,"\xTo{\alpha}"]&
      T'\odot S'
      \ar[dl]
      \ar[dr]
      \phar[r,"\xTo{\beta}"]
      \cphar[dd,"="]
      &
      T	\ar[d]\\
      S'	\ar[dr]&&
      T'	\ar[dl]\\
      &B
    \end{tikzcd}}
\]
altogether yielding a weak morphism as in the diagram:
\[
\adjustbox{scale=.8}{%
  \begin{tikzcd}[cramped,row sep=10pt, column sep=25pt]
    &S	\ar[dl,"\partial^S_A"']
    \ar[dd,myname=ds,"F"]&
    T\odot S\ar[l]
    \ar[dd,myname=dh,"H"]
    \ar[r]&
    T	\ar[dd,myname=dg,"G"']
    \ar[dr,"\partial^T_C"]\\
    A	\phar[to=ds,"\Downarrow\!\scalebox{.8}{$F^A$}"]&
    ~	\phar[to=dh,"\Downarrow\!\scalebox{.8}{$\alpha$}"]&
    ~	\phar[to=dg,"\Downarrow\!\scalebox{.8}{$\beta^{-1}$}"]&&C
    \phar[to=dg,"\Downarrow\!\scalebox{.8}{$G^C$}"]\\
    &S'	\ar[ul,"\partial^{S'}_A"]&
    T'\odot S'
    \ar[l]
    \ar[r]&
    T'	\ar[ur,"\partial^{T'}_C"']
  \end{tikzcd}}\,.
\]

However, $H, \alpha, \beta$ are not unique: though \Cref{lem:carac_pb_bipb}
guarantees the existence of solutions to all reindexing problems, those may not
be unique.
We only know by condition \emph{(b)} of \Cref{def:bipullback} that different
choices of $H, \alpha, \beta$ yield \emph{isomorphic} weak morphisms of uniform
prestrategies, by which we mean isomorphic morphisms of the $2$-category
$\Unif(A)$:

\begin{definition}
Consider $A$ a uniform groupoid. 

The $2$-category $\Unif(A)$ has: \emph{objects} $\U_A$,
\emph{i.e.} uniform prestrategies on $A$; \emph{morphisms} from $S$ to
$T$ the weak morphisms, \emph{i.e.} pairs $(F : S \to T, \phi :
\display^S \Rightarrow \display^T F)$;
$2$-cells from $(F, \phi)$ to $(G, \psi)$ the natural
transformations $\mu : F \Rightarrow G$ such that:
\[
\adjustbox{scale=.8}{
\begin{tikzcd}
S	\ar[rr,bend left=20,"G",myname=G]
	\ar[dr]&&
T	\ar[dl]\\
&A	\phar[from=G,"\xTo{\psi}"]	
\end{tikzcd}}
\quad = \quad
\adjustbox{scale=.8}{
\begin{tikzcd}
S	\ar[rr,bend left=20,"G",myname=G]
	\ar[rr,bend right=20,"F"{description},myname=F]
	\phar[from=F,to=G,"\Uparrow\!\mu"]
	\ar[dr]&&
T	\ar[dl]\\
&A	\phar[from=F,"\xTo{\phi}"]	
\end{tikzcd}}\,.
\]
\end{definition}

Thus, although bipullbacks guarantee the existence of a fitting weak
morphism for horizontal composition, there is a priori no canonical
choice. One could pick a choice of horizontal composition, but there is
no reason why an arbitrary choice would satisfy the coherence
conditions for a bicategory.

\subsection{Thin Spans of Groupoids}

In fact, if formulated in the adequate way, the reindexing problems
that arise from the interpretation of programming languages \emph{do
have} a unique solution -- as in \Cref{ex:reindexing_pb}.
But to prove that, we shall need to add more structure to uniform groupoids,
starting with \emph{polarized sub-groupoids}:

\subsubsection{Polarized sub-groupoids} \label{subsubsec:pol_subgr}
Consider the groupoid 
\[
\BFam(o) \lin \BFam(o) 
\]
of \Cref{ex:reindexing_pb}, interpreting the formula $\oc o \lin \oc o$ of
intuitionistic linear logic. Here, the two occurrences of $\oc$ are intuitively
very different: on the left-hand side, as in \Cref{ex:dereliction} the
\emph{program} performs the copying -- in game semantics the copy index would be
carried by a Player move. In contrast, for the right hand side exponential, the
\emph{environment} does the copying -- in game semantics, the copy index would
be carried by an Opponent move. This assigns a polarity to certain symmetries,
very clear in game semantics: those reindexing copy indices only for
exponentials in covariant position (resp. contravariant position) are
\emph{negative} (resp. \emph{positive}). We enrich the groupoids interpreting
types to keep track of these special symmetries:

\begin{definition}
A \textbf{polarized groupoid} is a groupoid $A$ with two
sub-groupoids $A_-$ and $A_+$, with the same objects as $A$.
\end{definition}

It would be natural to require additional conditions for this structure (in
particular, see conditions \emph{(a)} and \emph{(b)} in
\Cref{lem:properties_thin}). We omit them here, as they shall hold automatically
once we introduce the more complete notion of a \emph{thin groupoid}.

If $\theta \in A_-(a_1, a_2)$, we write $\theta : a_1 \sym_A^- a_2$ and
likewise for positive symmetries.  Usual constructions on groupoids
extend to polarized groupoids componentwise. The \textbf{dual} of $(A,
A_-, A_+)$ is defined as $(A, A_+, A_-)$, exchanging the two
sub-groupoids.  Finally, we set $(\oc A)_- = \BFam(A_-)$ and $(\oc A)_+
= \BFam^{\id}(A_+)$, which has morphisms from $(a_i)_{i\in I}$ to
$(b_j)_{j\in J}$ those $(\pi, (\theta_i)_{i\in I})$ such that $I = J$
and $\pi = \id_I$ -- thus we see indeed that Player cannot reindex copy
indices from the outer $\oc$ in $\oc A$, as it appears in covariant
position.

\subsubsection{Thinness} Solutions to reindexing problems may be computed
interactively as in \Cref{ex:reindexing_pb}. Intuitively, the uniqueness of the
solution relies on the fact that at each stage, there is a unique choice of
reindexing. This is captured by the definition of \emph{thin} below, imported
from thin concurrent games:

\begin{definition}\label{def:thin}
Consider $A$ a polarized groupoid, and $S$ a prestrategy on $A$. 
We say that $S$ is \textbf{thin} iff for all $\varphi : s \sym_S s'$,
if $\display^S \varphi$ is positive then $s = s'$ and $\varphi =
\id_s$.
\end{definition}

Intuitively, this captures that positive copy indices are
selected deterministically from negative ones -- so a
non-trivial symmetry $\varphi : s \sym_S s'$ cannot display to a purely
positive symmetry on $A$. This is in contrast with the saturated case,
where spans must be able to reach \emph{all} positive symmetries.

We show how thinness addresses uniqueness for the resolution of reindexing
problems. Call a solution to a reindexing problem $\varphi, \psi$ as in
\Cref{lem:carac_pb_bipb} \textbf{positive} if writing $\display^S \varphi =
\varphi_A \vdash \varphi_B$ and $\display^T \psi = \psi_B \vdash \psi_C$, we
have $\varphi_A \vdash \psi_C$ positive.

\begin{lemma}\label{lem:unique_reindexing}
Consider $A, B, C$ polarized uniform groupoids, $S \in \U_{A\lin B}$
and $T \in \U_{B \lin C}$ s.t. $T \odot S \in \U_{A\lin C}$ is
thin.

Then, any reindexing problem in the composition pullback of $S$ and $T$
has at most one \emph{positive} solution.
\end{lemma}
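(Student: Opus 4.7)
The plan is to take any two positive solutions of the same reindexing problem, combine them into a single symmetry in $T \odot S$ using the pullback universal property, and then use thinness of $T \odot S$ to force that symmetry to be an identity, which collapses the two solutions into one.

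More concretely, given positive solutions $(\varphi_i : s \sym_S s_i',\ \psi_i : t_i' \sym_T t)$ for $i=1,2$ to the reindexing problem $(s, t, \theta)$, I would first consider the composite symmetries $\varphi_2 \varphi_1^{-1} : s_1' \sym_S s_2'$ in $S$ and $\psi_2^{-1} \psi_1 : t_1' \sym_T t_2'$ in $T$. A short calculation using the defining equation $\theta = \partial^T_B \psi_i \circ \partial^S_B \varphi_i$ for $i=1,2$ shows that
\[
\partial^S_B(\varphi_2 \varphi_1^{-1}) \;=\; \partial^T_B(\psi_2^{-1} \psi_1).
\]
Since $(s_1', t_1')$ and $(s_2', t_2')$ are bona fide objects of the pullback $T \odot S$ (their $B$-displays agree strictly), the $1$-categorical universal property of pullbacks on morphisms yields a unique symmetry $\chi : (s_1', t_1') \sym_{T\odot S} (s_2', t_2')$ satisfying $\pl \chi = \varphi_2 \varphi_1^{-1}$ and $\pr \chi = \psi_2^{-1} \psi_1$.

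The next step is to compute the display of $\chi$ in $A \lin C$, which is the pair
\[
\bigl(\partial^S_A \varphi_2 \circ (\partial^S_A \varphi_1)^{-1}\bigr) \vdash \bigl((\partial^T_C \psi_2)^{-1} \circ \partial^T_C \psi_1\bigr).
\]
Positivity of the two solutions means, after unfolding the polarization $(A \lin C)_+ = A_- \times C_+$ coming from $A \lin C = A^\perp \parr C$, that $\partial^S_A \varphi_i \in A_-$ and $\partial^T_C \psi_i \in C_+$. Since $A_-$ and $C_+$ are sub-groupoids, they are closed under composition and inverses, so the display of $\chi$ belongs to $(A \lin C)_+$. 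By thinness of $T \odot S$, $\chi$ must then be an identity, yielding $(s_1', t_1') = (s_2', t_2')$ together with $\varphi_2 \varphi_1^{-1} = \id$ and $\psi_2^{-1} \psi_1 = \id$, hence $\varphi_1 = \varphi_2$ and $\psi_1 = \psi_2$.

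I expect the main obstacle to be the bookkeeping around the polarized structure on $A \lin C$: one must verify that its positive sub-groupoid really is $A_- \times C_+$ and that the compound display of $\chi$ is positive as a composite of positive-solution data, which uses only that $A_-$ and $C_+$ are sub-groupoids. Once this is in place, the rest is a short diagram chase combined with the definition of thinness; in particular, no use of the bipullback property of the composition pullback is needed, as the existence of $\chi$ follows from the plain pullback universal property.
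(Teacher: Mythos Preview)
Your proposal is correct and follows essentially the same approach as the paper's own proof: take two positive solutions, form the comparison morphism $\chi$ in the pullback $T \odot S$, check that its display in $A \lin C$ is positive (using that $A_-$ and $C_+$ are sub-groupoids), and invoke thinness to conclude $\chi$ is an identity. Your bookkeeping around $(A \lin C)_+ = A_- \times C_+$ and the direction of the $\psi$-component is in fact slightly more careful than the paper's write-up, but the argument is the same.
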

\begin{proof}
Consider a reindexing problem $s \in S, t \in T, \theta :
\display^S_B s \sym_B \display^T_B t$ with solutions $\varphi_1 : s
\sym_S s'_1$ and $\psi_1 : t'_1 \sym_T t$ with $\display^S_B s'_1 =
\display^T_B t'_1$ and $\display^T_B \psi_1 \circ \display^S_B
\varphi_1 = \theta$, and $\varphi_2 : s \sym_S s'_2$ and $\psi_2 : t'_2
\sym_T t$ with $\display^S_B s'_2 = \display^T_B t'_2$ and
$\display^T_B \psi_2 \circ \display^S_B \varphi_1 = \theta$.  

Then, $\display^S(\varphi_2 \circ \varphi_1^{-1}) = \display^T(\psi_2
\circ \psi_1^{-1})$, so that we have
\[
\Omega = (\varphi_2 \circ \varphi_1^{-1}, \psi_2 \circ \psi_1^{-1})
: (s'_1, t'_1) \sym_{T \odot S} (s'_2, t'_2)\,,
\]
whose display to $A \vdash C$ is positive since $\varphi_1, \psi_1$ and
$\varphi_2, \psi_2$ are positive solutions. Hence, by \emph{thin},
$\Omega$ is an identity map which entails $\varphi_1 = \varphi_2$ and
$\psi_1 = \psi_2$ as required.
\end{proof}

Thus, thinness allows us to find canonical solutions to reindexing
problems by insisting on finding \emph{positive} solutions.

However, this relies on thinness not of $S$ and $T$, but of $T \odot
S$. Again in thin concurrent games, this follows by induction on
the causal structure. In the absence of a handle on
causality, we must as for uniformity treat the fact that $T
\odot S$ is thin as an interactive property, again handled by
biorthogonality.

\subsection{Thin Spans}

\subsubsection{The thin orthogonality} We observe that for $A$ a
polarized groupoid, a prestrategy $S$ on $A$ is thin iff 
the pullback
\begin{equation}
  \begin{tikzcd}[cramped,rsbo=1.5em,csbo=2.8em]
    &
    P
    \phar[dd,"\dcorner",very near start]
    \ar[dl]
    \ar[dr]
    &
    \\
    S
    \ar[dr]
    &&
    A_+
    \ar[dl,"\id^+_A"]
    \\
    &
    A
  \end{tikzcd}
  \label{eq:pbplus}
\end{equation}
is \emph{discrete}, \emph{i.e.} all the morphisms in $P$ are
identities. We shall base our orthogonality on this observation, and
set:

\begin{definition}
For $A$ a polarized uniform groupoid, $S \in \U_A$, and $T \in
\U_A^\perp$, we say $S$ and $T$ are \textbf{thinly orthogonal}, written
\[
S \pperp T
\]
iff the pullback $T \odot S$ is discrete.
\end{definition}

Note that this is already assuming that $S$ and $T$ are uniformly
orthogonal. If $\bS \subseteq \U_A$, then its \textbf{thin orthogonal}
is
\[
\bS^{\pperp} = \{T \in \U_A^{\perp} \mid \forall S \in \bS,~S \pperp T\}\,,
\]
and as before we have $\bS \subseteq \bS^{\pperp \pperp}$ (note that
this typechecks only because $\U_A^{\perp \perp} = \U_A$) and
$\bS^{\pperp} = \bS^{\pperp \pperp \pperp}$ for all $\bS \subseteq \U_A$.

\subsubsection{Thin groupoids}
As before, we are interested in sets of uniform prestrategies
\emph{closed under bi-thin-orthogonal}:

\begin{definition}
A \textbf{thin groupoid} is a polarized uniform groupoid with
a set $\T_A \subseteq \U_A$ of \textbf{strategies} s.t. $\T_A^{\pperp \pperp} =
\T_A$, and such that $(A_-, \id_A) \in \T_A$ and $(A_+, \id_A) \in
\T_A^{\pperp}$.
\end{definition}

If $S \in \T_A$ then $S$ is automatically thin in the sense of \Cref{def:thin}:
as $(A_+, \id_A) \in \T_A^{\pperp}$ the pullback \eqref{eq:pbplus} is discrete.

This also entails properties of the polarized symmetries:

\begin{lemma}\label{lem:properties_thin}
Consider $A$ a thin groupoid. Then we have:

\noindent {(a)} 
if $\theta : a \sym^-_A a'$ and $\theta : a \sym_A^+ a'$, then $a = a'$
and $\theta = \id_a$.

\noindent {(b)} 
if $\theta : a \sym_A a'$, then there are unique $a''$ along with $\theta_- :
a \sym_A^- a''$ and $\theta_+ : a'' \sym_A^+ a'$ such that $\theta =
\theta_+ \circ \theta_-$.
\end{lemma}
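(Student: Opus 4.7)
The plan is to derive both statements from the two assumptions built into the definition of a thin groupoid: namely that $(A_-, \id_A) \in \T_A$ and $(A_+, \id_A) \in \T_A^{\pperp}$. The latter of course contains two pieces of information: a uniform orthogonality (giving a bipullback) and a thin orthogonality (giving discreteness).

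First I would analyze the pullback $P$ of the cospan $A_- \hookrightarrow A \hookleftarrow A_+$, where both legs are identity-on-objects. Unfolding the definition of a pullback in $\Gpd$, one sees that $P$ has as objects the objects of $A$, and as morphisms from $a$ to $a'$ exactly those $\theta \in A(a,a')$ such that $\theta$ belongs to both $A_-$ and $A_+$. For \emph{(a)}, since $A_- \in \T_A$ and $A_+ \in \T_A^{\pperp}$, we have $A_- \pperp A_+$, so $P$ is discrete, which means that any $\theta$ that is both in $A_-$ and in $A_+$ between two objects $a, a'$ forces $a = a'$ and $\theta = \id_a$. This yields \emph{(a)} immediately.

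For \emph{(b)}, the uniqueness part follows from \emph{(a)}: given two factorizations $\theta = \theta_+ \circ \theta_- = \theta'_+ \circ \theta'_-$, the composite $(\theta'_+)^{-1} \circ \theta_+$ lies in $A_+$ (since $A_+$ is a sub-groupoid) while the equal composite $\theta'_- \circ \theta_-^{-1}$ lies in $A_-$; by \emph{(a)} both are identities and $\theta_+ = \theta'_+$, $\theta_- = \theta'_-$, so the intermediate objects coincide. For existence, I would invoke that $A_- \in \T_A \subseteq \U_A$ and $A_+ \in \T_A^{\pperp} \subseteq \U_A^{\perp}$, so $A_- \perp A_+$ and the same pullback $P$ is in fact a \emph{bipullback}. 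Then, reading the given $\theta : a \sym_A a'$ as a reindexing problem with $s = a$ taken in $A_-$, $t = a'$ taken in $A_+$, and the specified $\theta \in A(\iota_- s, \iota_+ t)$, the bipullback characterization (\Cref{lem:carac_pb_bipb}) produces $\varphi \in A_-(a, a'')$ and $\psi \in A_+(a'', a')$ with $\theta = \psi \circ \varphi$, which is precisely the desired negative-then-positive decomposition.

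The main obstacle, as far as I can see, is mostly bookkeeping: one must carefully confirm that the pullback of the two sub-groupoid inclusions has the morphism description claimed (morphisms being simultaneously in $A_-$ and $A_+$), and that a general $\theta$ indeed fits into the format of a reindexing problem for this particular cospan. Once that is in place, \emph{(a)} is immediate from discreteness and \emph{(b)} is a direct application of \Cref{lem:carac_pb_bipb} for existence, together with \emph{(a)} for uniqueness; no further nontrivial ingredients appear to be needed.
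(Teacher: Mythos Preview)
Your proposal is correct and follows essentially the same approach as the paper: both derive \emph{(a)} from the discreteness of the pullback of $A_- \hookrightarrow A \hookleftarrow A_+$ (via $A_- \pperp A_+$), and both obtain existence in \emph{(b)} from this pullback being a bipullback (via $A_- \perp A_+$ and \Cref{lem:carac_pb_bipb}), with uniqueness deduced from \emph{(a)}. You supply a bit more detail on the concrete description of the pullback and the uniqueness argument, but the underlying strategy is identical.
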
\simon{lemme utilisé quelque part ?}
\begin{proof}
For \emph{(a)}, this follows from $A_- \pperp A_+$, as then the
pullback of the cospan $A_- \hookrightarrow A \hookleftarrow A_+$ is
discrete. 

For \emph{(b)}, $A_- \in \T_A \subseteq \U_A$ and $A_+ \in
\T_A^{\pperp} \subseteq \U_A^{\perp}$, we also have $A_- \perp A_+$.
Hence, the pullback of the cospan $A_- \hookrightarrow A \hookleftarrow
A_+$ is a bipullback. But then any $\theta : a \sym_A a'$ forms a
reindexing problem, whose solution is exactly the seeked reindexing.
Uniqueness follows immediately from \emph{(a)}.
\end{proof}

Thus, we get from the definition of thin groupoids some of the expected
properties of the polarized sub-groupoids: if a symmetry is both
positive and negative then it must be an identity, and any symmetry can
be obtained by first ``reindexing Opponent moves'', then ''reindexing
Player moves''.

\subsubsection{Further structure}
Constructions on uniform groupoids extend to thin groupoids in the
expected way. The thin groupoid $1$ has $\T_1 =
\PreStrat(1)$. If $A$ and $B$ are thin groupoids, their \textbf{tensor}
is the uniform groupoid $A \tensor B$ extended with $\T_{A\tensor B} =
(\T_A \tensor \T_B)^{\pperp \pperp}$. The \textbf{dual} of $A$ has
$\T_{A^\perp} = \T_A^{\pperp}$. The \textbf{par} of $A$ and $B$ has
$\T_{A\parr B} = (\T_A^{\pperp} \tensor \T_B^{\pperp})^{\pperp}$, and
the \textbf{linear arrow} is $A \lin B = A^\perp \parr B$. 

To establish the compositional properties of strategies, we rely on the
following analogue of \Cref{prop:carac_lin}:

\begin{proposition}
  \label{prop:carac_lin_thin}
  Consider $T \in \U_{A\lin B}$ for $A, B$ thin groupoids, along with a
  class $\SU \subseteq \T_{A}$ such that $\SU^{\pperp \pperp} = \T_A$.

  Then, $T \in \T_{A\lin B}$ iff $T@S \in \T_B$ for all $S \in \SU$.
\end{proposition}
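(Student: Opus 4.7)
The plan is to mirror the orthogonality calculation sketched for \Cref{prop:carac_lin}, but reformulated so that it lives entirely in the thin orthogonality $\pperp$. The two key ingredients are (i) a pullback-pasting identity allowing one to move between ``tests on $A\lin B$'', ``tests on $B$'' and ``tests on $A$'', and (ii) the standard fact that sets of the form $\bS^{\pperp}$ are closed under bi-thin-orthogonal closure.

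First I would unfold definitions. Since $A \lin B = A^\perp \parr B$, using $\T_{C^\perp} = \T_C^{\pperp}$ and $\T_{C \parr D} = (\T_C^{\pperp} \tensor \T_D^{\pperp})^{\pperp}$, one obtains
\[
  \T_{A \lin B} \quad = \quad (\T_A \tensor \T_B^{\pperp})^{\pperp}\zbox,
\]
so that $T \in \T_{A\lin B}$ iff $T \pperp (S \times U)$ for every $S \in \T_A$ and $U \in \T_B^{\pperp}$. For these thin orthogonalities to make sense one needs $S \times U \in \U_{A\lin B}^{\perp}$, which holds since $\U_A \tensor \U_B^{\perp} \subseteq \U_{A\lin B}^{\perp}$ by construction.

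Next I would prove the pullback identity. By pasting pullbacks, the composition pullback of $T$ and $S \times U$ over $A \times B$ has the same underlying groupoid (up to canonical iso) as $U \odot (T @ S)$ over $B$, and also as $S \odot (U \odot T)$ over $A$. Therefore
\[
  T \pperp (S\times U) \iff (T@S) \pperp U \iff S \pperp (U \odot T)\zbox,
\]
where the intermediate uniform witnesses $T@S \in \U_B$ and $U\odot T \in \U_A^{\perp}$ are provided by the uniform compositionality established just after \Cref{prop:carac_lin}, applied to $T \in \U_{A\lin B}$ and to $U \in \U_B^{\perp}$ viewed as a prestrategy from $B$ to $1$. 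With this, the forward direction is immediate: if $T \in \T_{A\lin B}$ and $S \in \SU \subseteq \T_A$, then for every $U \in \T_B^{\pperp}$ one has $(T@S) \pperp U$, hence $T@S \in \T_B^{\pperp\pperp} = \T_B$.

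For the converse, assume $T@S \in \T_B$ for every $S \in \SU$. Fix $U \in \T_B^{\pperp}$; then the hypothesis reads $S \pperp (U\odot T)$ for every $S \in \SU$, i.e., $\SU \subseteq \{U\odot T\}^{\pperp}$. As any orthogonal $\bX^{\pperp}$ is closed under $(-)^{\pperp\pperp}$ (because $\bX^{\pperp} = \bX^{\pperp\pperp\pperp}$), we get
\[
  \T_A \quad = \quad \SU^{\pperp\pperp} \quad \subseteq \quad \{U\odot T\}^{\pperp}\zbox,
\]
that is, $S \pperp (U\odot T)$ for every $S \in \T_A$. Unwinding the identity once more yields $T \pperp (S\times U)$ for all $S \in \T_A$ and all $U \in \T_B^{\pperp}$, so $T \in (\T_A \tensor \T_B^{\pperp})^{\pperp} = \T_{A\lin B}$.

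The main obstacle lies not in the biorthogonality bookkeeping, which mimics that of \Cref{prop:carac_lin}, but in the pullback-pasting identity above: one must check carefully that the canonical iso between the three pullbacks matches the display maps (hence preserves polarities and discreteness) and that at each step the ambient uniform orthogonality holds, so that each $\pperp$ is well-defined. Once these identifications are formalised, the biorthogonality argument goes through verbatim.
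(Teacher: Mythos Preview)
Your proof is correct and follows essentially the same route as the paper: both unfold $\T_{A\lin B} = (\T_A \tensor \T_B^{\pperp})^{\pperp}$, use the three-tile pullback pasting (the paper's \Cref{lem:pb-bpb-three-tiles}) to move between $T \pperp (S\times U)$, $(T@S)\pperp U$ and $S \pperp (\dual T@U)$, and close by the identity $\SU^{\pperp} = \T_A^{\pperp}$. Your $U\odot T$ is the paper's $\dual T@U$, and your closure argument ``$\SU \subseteq \{U\odot T\}^{\pperp}$ hence $\T_A = \SU^{\pperp\pperp} \subseteq \{U\odot T\}^{\pperp}$'' is just the contrapositive formulation of the paper's step.
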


This follows from diagram chasing lemmas on situations where the pullbacks are
discrete, see \Cref{sec:unif-and-thinness}. Interestingly, this is also
equivalent to $T^\star @S \in \T_A^{\pperp}$ for all $S \in \T_B^{\pperp}$.

It is a direct consequence of \Cref{prop:carac_lin_thin} that the identity span
on $A$ is in $\T_{A\lin A}$ for any thin groupoid $A$, and that if $S \in
\T_{A\lin B}$ and $T \in \T_{B \lin C}$ then $T \odot S \in \T_{A\lin C}$.
Together with \Cref{lem:unique_reindexing}, we have thus identified a
compositional situation where the composition pullback of spans is a bipullback,
and where all arising reindexing problems have a \emph{unique} solution if one
insists on this solution being \emph{positive}.

\subsubsection{Positive weak morphisms} Insisting on \emph{positive}
solutions amounts to relating strategies not via arbitrary weak
morphisms, but with \emph{positive} weak morphisms:

\begin{definition}
  Consider $A$ a thin groupoid, $S, T \in \T_A$, and $(F, \phi)$ a weak morphism
  from $S$ to $T$, \emph{i.e.} $F : S \to T$ and $\phi : \display^S
\Rightarrow \display^T \circ F$.
  Then, $(F, \phi)$ is \textbf{positive} if $\phi$ is positive, that
is, if $\forall s \in   S$, $\phi_s : \display^S s \sym_A^+ \display^T
F(s)$ is a positive symmetry.
\end{definition}

Intuitively, comparing strategies with positive morphisms amounts to
relating them only via maps that do not reindex Opponent moves. This
has the effect of making everything stricter, and cutting the higher
dimension. More precisely: 

\begin{proposition}\label{prop:positivization}
Let $A$ be a thin groupoid. Consider $\PreThin(A)$ the sub-$2$-category of
$\Unif(A)$ with objects $\T_A$, and $\Thin(A)$
where additionally morphisms are positive.

Then, $\Thin(A)$ is locally discrete, \emph{i.e.} all $2$-cells are
identities. Moreover, $\PreThin(A)$ and $\Thin(A)$ are biequivalent.
\end{proposition}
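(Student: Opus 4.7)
My plan is to treat the proposition's two claims in turn. For local discreteness of $\Thin(A)$, I would take any $2$-cell $\mu : (F, \phi) \To (G, \psi)$ between positive morphisms and observe that the defining equation forces $\partial^T \mu_s = \psi_s \circ \phi_s^{-1}$; since $A_+$ is a subgroupoid, this display is positive, and thinness of $T \in \T_A$ (a consequence of the definition via $A_+ \in \T_A^{\pperp}$, as observed just after the definition of thin groupoid) immediately gives $\mu_s = \id$. The very same argument would then show that the inclusion $\Thin(A) \hookrightarrow \PreThin(A)$ is locally fully faithful: any $2$-cell in $\PreThin(A)$ between positive morphisms has positive display, and hence is already an identity, matching the $2$-cells available in $\Thin(A)$.

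The substantive part will be local essential surjectivity: given $(F, \phi) \in \PreThin(A)(S, T)$, I plan to construct a positive $(F', \phi')$ together with an invertible $2$-cell $\mu : (F, \phi) \To (F', \phi')$. The construction is pointwise. For each $s \in S$, apply \Cref{lem:properties_thin}(b) to $\phi_s^{-1}$ to factor it as $\phi_s^{-1} = \delta_s^+ \circ \delta_s^-$ with $\delta_s^- : \partial^T F(s) \sym_A^- w_s$ negative and $\delta_s^+ : w_s \sym_A^+ \partial^S s$ positive. Since $T \in \T_A \subseteq \U_A$ and $A_+ \in \T_A^{\pperp} \subseteq \U_A^{\perp}$, we have $T \perp A_+$; hence by \Cref{lem:carac_pb_bipb} the reindexing problem $(F(s), w_s, \delta_s^-)$ admits a solution, yielding $\mu_s : F(s) \sym_T F'(s)$ in $T$ and a positive $\chi_s$ in $A_+$ with $\delta_s^- = \chi_s \circ \partial^T \mu_s$. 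A short algebraic simplification using the factorization then gives $\phi'_s := \partial^T \mu_s \circ \phi_s = \chi_s^{-1} \circ (\delta_s^+)^{-1}$, a composite of positive symmetries and hence positive. To globalise, I would extend $F'$ to morphisms by conjugation, $F'(\eta) := \mu_{s_2} \circ F(\eta) \circ \mu_{s_1}^{-1}$, which automatically makes $F'$ a functor, $\mu$ a natural transformation, and $(F', \phi')$ a weak morphism by naturality of $\phi$; invertibility of $\mu$ is automatic since $T$ is a groupoid.

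I expect the lifting step to be the main obstacle. The bipullback property of $T \perp A_+$ does not produce a direct lift of the negative symmetry $\delta_s^-$ in $T$ -- only a lift up to a positive correction $\chi_s$ depending on the choice. The crucial insight that makes the plan work is that this ambiguity is harmless: the correction $\chi_s$ cancels cleanly against the positive part $\delta_s^+$ of the factorization, so $\phi'_s$ is always positive regardless of which correction the bipullback yields. Once this observation is in place, globalisation is routine, as the functoriality and naturality conditions on $F'$ and $\mu$ follow mechanically from the conjugation definition without further dependence on the choices made.
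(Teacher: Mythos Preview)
Your proof is correct and follows the same core idea as the paper: use the bipullback $T \perp A_+$ to produce, for each $s$, a lift $\mu_s$ in $T$ leaving a positive remainder, then globalise by conjugation. Two small remarks. First, your detour through \Cref{lem:properties_thin}(b) is unnecessary: the bipullback applies directly to $\phi_s^{-1}$ itself (the input symmetry need not be negative), yielding $\mu_s$ in $T$ and a positive $\theta_+$ with $\phi_s^{-1} = \theta_+ \circ \partial^T \mu_s$, so $(\phi_+)_s := \partial^T \mu_s \circ \phi_s = \theta_+^{-1}$ is positive in one step. This is exactly the sub-lemma the paper isolates: for any $t \in T$ and $\theta : \partial^T t \sym_A a$, there exist $\varphi : t \sym_T t'$ and positive $\theta_+$ with $\theta = \theta_+ \circ \partial^T \varphi$. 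Second, the paper also records \emph{uniqueness} of this factorisation (via thinness of $T$), hence uniqueness of the positive representative $(F_+,\phi_+)$ and of the $2$-cell $\mu$; you omit this, and while it is not needed for the biequivalence statement itself, it is what the paper relies on later to make horizontal composition of positive morphisms canonically defined.
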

\begin{proof}
The first is a direct consequence of \emph{thinness}: if $\mu : (F,
\phi) \Rightarrow (G, \psi) : S \to T$ for $\phi$ and $\psi$ positive,
then by definition of $2$-cells of $\Unif(A)$, for all $s \in S$,
$\mu_s \in T(Fs, Gs)$ is such that $\psi_s = \display^T \mu_s \circ
\phi_s$, \emph{i.e.} $\display^T \mu_s = \psi_s \circ \phi_s^{-1}$
positive. Thus, $\mu_s$ is an identity morphism by thinness.

For the biequivalence, the crux is that if $(F, \phi) : S
\to T$ is a weak morphism, then there is a unique $(F_+, \phi_+) : S
\to T$ positive isomorphic to $(F, \phi)$, and a unique
$2$-cell $\mu$ between them. Uniqueness follows from thinness. For
existence, note that if $s \in S$ and $\theta : \display^S s \sym_A a$,
then there exist unique $\varphi : s \sym_S s'$ and $\theta_+ :
\display^S s \sym_A^+ a$ such that $\theta = \theta_+ \circ \display^S
\varphi$ -- this exploits thinness, and the reindexing problem from the
fact that the pullback of the cospan $S \hookrightarrow A
\hookleftarrow A_+$ is a bipullback. We obtain $(F_+, \phi_+)$ by
applying this lemma pointwise.
\end{proof}

This proposition illustrates the situation well: thanks to the thin
biorthogonality, the $2$-category $\PreThin(A)$ is represented up to
biequivalence as a mere category $\Thin(A)$. The higher dimensional
structure simply vanishes.

\subsubsection{The bicategory $\Thin$}
\label{sec:thin-summarized}
With this in place, we may finally define the components of our bicategory
$\Thin$. Its \emph{objects} are thin groupoids. Its \emph{morphisms} from $A$ to
$B$ are strategies from $A$ to $B$, \emph{i.e.} elements of $\T_{A\lin B}$ --
recall that they are $(S, \display^S : S \to A \times B)$, in particular spans
from $A$ to $B$
\[
A \xot{\display^S_A} S \xto{\display^S_B} B\,.
\]

The \emph{identities} are identity spans, and \emph{composition} is via
the pullback \eqref{eq:pbcomp}. If $S$ and $T$ are strategies from $A$
to $B$, the \emph{$2$-cells} from $S$ to $T$ are the positive morphisms
$(F, \phi) : S \to T$. As $\phi : \display^S \Rightarrow
\display^T \circ F$ is a family of positive morphisms on $A^\perp \parr
B$ with underlying plain groupoid $A \times B$, it may be equivalently
presented as pair of $F^A : \display^S_A \Rightarrow \display^T_A
\circ F$ and $F^B : \display^S_B \Rightarrow \display^T_B \circ F$,
as in \Cref{def:weakmor}. 
For horizontal composition of positive morphisms, we first proceed as in
\Cref{subsubsec:horcomp} and obtain a connected groupoid of (non necessarily
positive) horizontal compositions -- which must all have the same image through
the biequivalence of \Cref{prop:positivization}, providing our unique positive
horizontal composition. Altogether, we have:

\begin{theorem}
  \label{thm:thin-bicat}
Those components form $\Thin$, a bicategory.
\end{theorem}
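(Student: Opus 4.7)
My plan is to verify the bicategory axioms systematically, leveraging the results already established. Throughout, the key simplifying fact is that each local hom-category $\Thin(A \lin B)$ is locally discrete by \Cref{prop:positivization}, so 2-cells between positive morphisms are determined uniquely whenever they exist; coherence conditions in $\Thin$ will therefore collapse to equalities between functors, rather than compatibility conditions on natural transformations.

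First I would check that the data is well-typed. Identities: the identity span $A \ot A \to A$ lies in $\T_{A \lin A}$, because for any $S \in \T_A$ we have $\mathrm{id}_A @ S \iso S \in \T_A$, so \Cref{prop:carac_lin_thin} applies (taking $\SU = \T_A$). Composition: if $S \in \T_{A \lin B}$ and $T \in \T_{B \lin C}$, then by \Cref{prop:carac_lin_thin} the span $T \odot S$ lies in $\T_{A \lin C}$, as noted in the text right after that proposition. Vertical composition of positive morphisms is straightforward: the pointwise composite of positive symmetries is positive (using that $A_+$ is a sub-groupoid of $A$), and identities are positive.

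Next, the main technical step: defining horizontal composition of 2-cells. Given positive morphisms $(F, F^A, F^B) : S \to S'$ and $(G, G^B, G^C) : T \to T'$ between strategies, I would follow \Cref{subsubsec:horcomp}. Since $S \perp T$ (composition pullback is a bipullback by \Cref{prop:carac_lin}) and $S' \perp T'$, the bipullback property of $T' \odot S'$ supplies a weak morphism $(H, \alpha, \beta) : T \odot S \to T' \odot S'$ together with the constraint to $A \lin C$, yielding some (not necessarily positive) weak morphism in $\Unif(A \lin C)$. I then apply the positivization biequivalence from \Cref{prop:positivization} to replace it by the unique positive weak morphism $(H_+, \phi_+)$ isomorphic to it in $\PreThin(A \lin C)$; this is my definition of $G \odot F$. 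Well-definedness (independence of the choice of $H, \alpha, \beta$) follows because any two such choices yield isomorphic weak morphisms in $\PreThin$, which thus share the same image in $\Thin$ by uniqueness in the biequivalence.

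It then remains to build the associator and unitors and check coherence. Composition by strict pullback is associative up to a canonical iso of spans $\alpha_{T,U,V} : (V \odot T) \odot S \iso V \odot (T \odot S)$, and the identity spans are units up to canonical isos $\lambda_S, \rho_S$ induced by the projections out of $\id \odot S$ and $S \odot \id$. These canonical isos are isomorphisms of spans in the strict sense, so their underlying display-map triangles commute on the nose; in particular the corresponding $\phi$'s are identities, which are vacuously positive. Thus associator and unitors are positive 2-cells in $\Thin$. The pentagon and triangle axioms, and the bicategorical naturality/interchange laws, are equalities between positive 2-cells in the locally discrete 2-category $\Thin(A \lin D)$: by \Cref{prop:positivization}, any two such positive 2-cells with equal source and target data are equal, so each coherence diagram reduces to checking that both sides have the correct source and target, which follows from the universal property of pullback used to define them.

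The main obstacle I anticipate is horizontal composition of 2-cells, specifically verifying that the interchange law holds and that the construction above is functorial in $F$ and $G$ (i.e.\ respects vertical composition and identities). Here I expect the local discreteness of $\Thin(A \lin C)$ to do most of the work: any two candidate horizontal composites that induce the same underlying map on pullbacks are forced to coincide, so functoriality reduces to verifying equality of positive morphisms, which by \Cref{prop:positivization} is a property rather than extra data.
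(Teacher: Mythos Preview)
Your overall architecture matches the paper's: identities and composition are shown to be thin via \Cref{prop:carac_lin_thin}, horizontal composition of positive morphisms is defined via the bipullback factorization followed by the positivization of \Cref{prop:positivization}, and the associator and unitors are the canonical strong isomorphisms coming from pullbacks (hence trivially positive). For the pentagon and triangle identities your argument is fine, since those involve only strong isomorphisms and reduce to standard pullback manipulations.

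The gap is in your treatment of naturality of the associator and unitors, and of functoriality/interchange for horizontal composition. You invoke local discreteness of $\Thin(A \lin D)$ to conclude that ``any two such positive 2-cells with equal source and target data are equal'', but this is not what \Cref{prop:positivization} says. Local discreteness means that the only $2$-cells of the $2$-category $\Thin(A \lin D)$ are identities; equivalently, if two positive weak morphisms $(F,\phi)$ and $(G,\psi)$ are related by a $2$-cell $\mu$ of $\Unif(A \lin D)$, then $F=G$ and $\phi=\psi$. It does \emph{not} say that two positive morphisms with the same source and target strategy are equal --- in general $\Thin(A \lin D)(S,T)$ has many elements. So for each coherence/naturality equation you still have to \emph{construct} a $2$-cell in $\PreThin$ between the two sides, and only then does positivization collapse it to an equality.

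This is precisely what the paper does via its ``Paved Characterization of Composition'' (\Cref{prop:pcc}): it gives explicit sufficient conditions, in terms of natural isos $H^l,H^r$ over the two pullback legs, for a candidate positive morphism to equal $G \odot F$. The proofs of \Cref{lem:runit-natural}, \Cref{lem:lunit-natural}, \Cref{lem:assoc-natural} and of functoriality of $\odot$ then amount to exhibiting such $H^l,H^r$ by pasting the data already available from the individual horizontal composites. Your proposal would go through if you replaced the appeal to ``same source and target'' by this construction; without it, the naturality and interchange steps are unjustified.
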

\begin{proof}
  See details in \Cref{sec:app-thin-bicat}.
\end{proof}

Next, we develop the further structure of $\Thin$.

\section{Cartesian Closed Structure} 
\label{sec:cc_bicat}

To construct a cartesian closed bicategory, we intend to follow
\cite{fiore2008cartesian}. We first turn the construction $\BFam$ --
thereafter denoted by $\oc$ -- into a pseudocomonad, and then
equip the Kleisli bicategory $\Thin_{\oc}$ with the cartesian closed
structure. 

\subsection{The Pseudocomonad}

We first develop the action of $\oc$ on objects of $\Thin$.

\subsubsection{The bang of thin groupoids} First, $\oc$ is defined
on uniform groupoids via $\U_{\oc A} = (\oc \U_A)^{\perp \perp}$, where
we have used
\[
\oc \U_A = \{(\oc S, \oc \display^S) \mid S \in \U_A\}
\]
using the functorial action $\oc \display^S : \oc S \to \oc A$. For thin
groupoids, the positive and negative symmetries of $\oc A$ were defined in
\Cref{subsubsec:pol_subgr}. The thin structure is set as $\T_{\oc A} = (\oc
\T_A)^{\pperp \pperp}$ -- it is a direct verification that this is a thin
groupoid.

\subsubsection{The bang of strategies} If $S \in \T_{A\lin B}$, 
we have $\display^S = \tuple{\display^S_A, \display^S_B}$
for $\display^S_A : S \to A$ and $\display^S_B : S \to B$ -- its bang
is
\[
\oc A \xot{\oc \display^S_A} \oc S \xto{\oc \display^S_B} \oc B
\]
packaged as $(\oc S, \tuple{\oc \display^S_A, \oc \display^S_B})$.
That this is in $\T_{\oc A \lin \oc B}$ relies on:

\begin{lemma}
Consider $A, B$ thin groupoids, and $T$ a prestrategy from $\oc A$ to
$B$. Then, the following two properties hold:
\[
\begin{array}{rl}
\text{\emph{(1)}} &
\text{$T \in \U_{\oc A \lin B}$ iff $(T, \display^T_{\oc A}) \in
\U_{\oc A}^{\perp}$ and}\\
&\text{for all $S \in \U_A$, $T@\oc S \in \U_B$,}\\
\text{\emph{(2)}} &
\text{$T \in \T_{\oc A \lin B}$ iff for all $S \in \T_A$, $T@\oc S \in
\T_B$.}
\end{array}
\]
\end{lemma}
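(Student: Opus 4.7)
The plan is to deduce both claims from Propositions \ref{prop:carac_lin} and \ref{prop:carac_lin_thin}, specialised to generating classes that factor through $\oc$.

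For (1), I would apply Proposition \ref{prop:carac_lin} to the uniform groupoids $\oc A$ and $B$ with generating class $\SU = \oc \U_A$. This choice satisfies the required hypotheses: $\U_{\oc A} = (\oc \U_A)^{\perp\perp}$ holds by the very definition of the bang on uniform groupoids, and the identity prestrategy $(\oc A, \id_{\oc A})$ coincides with $\oc(A, \id_A)$ and thus lies in $\oc \U_A$, using that $(A, \id_A) \in \U_A$ as noted just after Proposition \ref{prop:carac_lin} together with the fact that $\oc$ preserves identities. Since every element of $\oc \U_A$ has the form $\oc S$ for some $S \in \U_A$, the two conditions produced by Proposition \ref{prop:carac_lin} match verbatim those of (1).

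The forward direction of (2) is the corresponding application of Proposition \ref{prop:carac_lin_thin} with generating class $\SU = \oc \T_A$, which satisfies $\SU^{\pperp \pperp} = \T_{\oc A}$ by definition of the bang on thin groupoids: if $T \in \T_{\oc A \lin B}$ then in particular $T \in \U_{\oc A \lin B}$ since $\T_X \subseteq \U_X$ for any thin groupoid $X$, and Proposition \ref{prop:carac_lin_thin} then yields $T@\oc S \in \T_B$ for every $S \in \T_A$.

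The backward direction of (2) is the delicate step. Starting from the hypothesis that $T@\oc S \in \T_B$ for every $S \in \T_A$, one must first establish $T \in \U_{\oc A \lin B}$ before invoking Proposition \ref{prop:carac_lin_thin}. By (1) this reduces to verifying both $T@\oc S \in \U_B$ for every $S \in \U_A$ and the orthogonality $(T, \display^T_{\oc A}) \in \U_{\oc A}^{\perp}$. The hypothesis only directly gives the former for $S \in \T_A \subseteq \U_A$, so the main obstacle is to extend to arbitrary $S \in \U_A$ and to obtain the orthogonality, via the biorthogonal characterisations $\U_A = \U_A^{\perp\perp}$ and $\U_{\oc A} = (\oc \U_A)^{\perp\perp}$. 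I expect this to proceed by a diagram chase in the spirit of the proof of Proposition \ref{prop:carac_lin}: given a test prestrategy $R \in \U_B^{\perp}$ or a uniform $S \in \U_A$, one forms the relevant composition pullback and uses the bipullback property of pairs in $\U_A \times \U_A^{\perp}$ to reduce each reindexing problem to one involving a thin $S' \in \T_A$, where the hypothesis ensures the composition is uniform. Once $T \in \U_{\oc A \lin B}$ is established, a direct application of Proposition \ref{prop:carac_lin_thin} with $\SU = \oc \T_A$ yields $T \in \T_{\oc A \lin B}$, completing the argument.
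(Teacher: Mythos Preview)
Your treatment of (1) and of the forward direction of (2) is exactly the paper's argument: apply Proposition~\ref{prop:carac_lin} with $\SU = \oc\,\U_A$ and Proposition~\ref{prop:carac_lin_thin} with $\SU = \oc\,\T_A$, the hypotheses on $\SU$ being precisely the definitions $\U_{\oc A} = (\oc\,\U_A)^{\perp\perp}$ and $\T_{\oc A} = (\oc\,\T_A)^{\pperp\pperp}$, together with $(\oc A,\id_{\oc A}) = \oc(A,\id_A) \in \oc\,\U_A$.

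Where you diverge from the paper is the backward direction of (2). The paper treats the whole lemma as an \emph{immediate} application of the two propositions; in particular it reads (2) exactly as an instance of Proposition~\ref{prop:carac_lin_thin}, which carries the standing hypothesis $T \in \U_{\oc A \lin B}$. So the intended reading is that (2) is the thin refinement of (1): once uniformity is secured via (1), part (2) characterises thinness. This is also how the lemma is used immediately afterwards to show $\oc S \in \T_{\oc A \lin \oc B}$: one first checks (1), then (2).

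Your attempt to derive $T \in \U_{\oc A \lin B}$ from the hypothesis of (2) alone is the problematic step. The sketch you give --- reducing a reindexing problem against an arbitrary $S \in \U_A$ to one against some $S' \in \T_A$ via bipullback properties --- does not go through: the orthogonalities $\perp$ and $\pperp$ are genuinely different, $\T_A$ and $\U_A$ are defined by separate biorthogonal closures, and there is no mechanism in the paper for replacing a uniform test object by a thin one. Nothing in the hypothesis ``$T@\oc S \in \T_B$ for all $S \in \T_A$'' constrains, say, $(T,\display^T_{\oc A})$ against an arbitrary element of $\U_{\oc A}$. So this extra argument is not just unwritten but unlikely to exist; the resolution is simply that (2) is stated under the ambient uniformity assumption, as Proposition~\ref{prop:carac_lin_thin} makes explicit.
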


This is an immediate application of \Cref{prop:carac_lin,prop:carac_lin_thin}.
Since $\U_{\oc A} = (\oc \U_A)^{\perp \perp}$ and $\T_{\oc A} = (\oc
\T_A)^{\pperp \pperp}$. From this lemma, it is a rather direct verification that
for any $S \in \T_{A\lin B}$, we have $\oc S \in \T_{\oc A \lin \oc B}$ as
required.

\subsubsection{A pseudofunctor}
Since $\oc$ is a functor, it preserves the identity span on the nose.
Since $\oc$ preserves pullbacks, for $S \in \T_{A\lin B}$ and $T \in
\T_{B \lin C}$, the universal property gives us
\[
m_{S, T} : \oc (T \odot S) \iso \oc T \odot \oc S
\]
a strong invertible $2$-cell in $\Thin$. As expected, this $2$\cell is natural
in $S$ and $T$ (with respect to positive morphisms). Altogether, we obtain a
pseudofunctor $\oc : \Thin \to \Thin$. See \Cref{sec:app-oc-functor} for
details.

\subsubsection{A pseudomonad on groupoids} In fact we first turn $\oc$
into a pseudomonad on $\Gpd$, from which its pseudocomonad structure on
$\Thin$ shall be derived. 
We noted earlier that we have a functor $\BFam : \Gpd \to \Gpd$ -- in
fact, it is extended to a $2$-endofunctor on the $2$-category of small
groupoids, noted 
\[
\oc : \Gpd \to \Gpd\,,
\]
defined on a $2$-cell $\alpha : F \Rightarrow G : A \to B$ as the
natural transformation $\oc\alpha : \oc F \Rightarrow \oc G$ with
components all pairs
\[
(\oc \alpha)_{(A_i)_{i\in I}} = (\id_I, (\alpha_{A_i})_{i\in I}) \in
\oc B((F A_i)_{i\in I}, (G A_i)_{i\in I})\,.
\]

To turn this into a pseudomonad, we must adjoin a multiplication and a
unit. The components of the unit are the functors
\[
\begin{array}{rcrcl}
\eta_A &:& A &\to& \oc A\\
&& a &\mapsto& (a)_{\{0\}} = [0 \cdot a]
\end{array}
\]
with the obvious functorial action. The intuition is that the unit
transports a single resource usage from $A$ to $\oc A$, arriving at a
singleton family. In doing so, it must select a copy index. Any natural
number will do -- the rest of the paper does not depend on this choice
-- but for definiteness and compatibility with the traditional
convention from AJM games, we pick $0$.

For the multiplication $\mu_A : \oc \oc A \to \oc A$, we must flatten a
family of families into a family. For this purpose, we fix an injective
function $\pair{-, -} : \mathbb{N}^2 \to \mathbb{N}$ -- again, the
results of this paper do not depend on that choice. Given
$I \subseteq_f \mathbb{N}$ and a family $(J_i)_{i\in I}$ where $J_i
\subseteq_f \mathbb{N}$ for all $i \in I$, let us write  
\[
\Sigma_{i\in I} J_i = \{\pair{i,j} \mid i \in I,~j \in J_i\}\,,
\]
which is by definition still a finite subset of $\mathbb{N}$. Then we
set
\[
\begin{array}{rcrcl}
\mu_A &:& \oc \oc A & \to & \oc A\\
&&((a_{i, j})_{j \in J_i})_{i \in I} &\mapsto& (a_{i,j})_{\pair{i,j}
\in \Sigma_{i\in I} J_i}
\end{array}
\]
for any groupoid $A$, along with the obvious functorial action. 

Altogether this yields $\eta : \Id_{\Gpd} \Rightarrow \oc$ and $\mu : \oc \oc
\Rightarrow \oc$, two (strict 2-) natural transformations. The monad laws, if
they were to hold on the nose, would mean that $\pair{0, i} = \pair{i, 0} = i$
and $\pair{\pair{i, j}, k} = \pair{i, \pair{j, k}}$ for all $i, j, k \in
\mathbb{N}$; and it is clear that no injection satisfying those laws exists.
Nevertheless, for every groupoid $A$ the coherence laws for a monad hold up to
\emph{natural isomorphisms}: we have $\alpha_A$, $\beta_A$ and $\gamma_A$ as
indicated in \Cref{fig:unitlaw,fig:assoclaw}. For instance, for any $(a_j)_{j
  \in J} \in \oc A$:
\[
(\alpha_A)_{(a_j)_{j\in J}} : (a_j)_{j\in J} \sym_{\oc A}
(a_j)_{\pair{0, j} \in \Sigma_{i \in \{0\}} J}
\]
reindexing along the bijection $J \bij \Sigma_{i \in \{0\}} J$. The
other components act similarly -- note that they are all
\emph{negative} symmetries. The associated families
$(\alpha_A)_{A \in \Gpd}$, $(\beta_A)_{A\in \Gpd}$ and
$(\gamma_A)_{A\in \Gpd}$ satisfy the conditions for
\emph{modifications}, and the additional coherence laws for a
pseudomonad:

\begin{proposition}
The $2$-functor $\oc : \Gpd \to \Gpd$ along with the components above
yield a pseudomonad on $\Gpd$.
\end{proposition}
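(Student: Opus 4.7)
The plan is to verify the three layers of the pseudomonad structure in turn: (i) that $\eta$ and $\mu$ are strict $2$-natural, (ii) that the components $\alpha_A$, $\beta_A$, $\gamma_A$ assemble into invertible modifications $\alpha : \mu \circ \oc\eta \To \id$, $\beta : \mu \circ \eta_{\oc} \To \id$, and $\gamma : \mu \circ \mu_{\oc} \To \mu \circ \oc\mu$, and (iii) that the pentagon and triangle coherence axioms for a pseudomonad hold.

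For (i), naturality of $\eta$ is immediate by inspection: given $F : A \to B$, both $\oc F \circ \eta_A$ and $\eta_B \circ F$ send $a \mapsto [0 \cdot F a]$ with identity functorial action on morphisms, and similarly for $2$-cells $\alpha : F \To G$. For $\mu$, given $F : A \to B$, the two composites $\mu_B \circ \oc\oc F$ and $\oc F \circ \mu_A$ both send a family of families to the same flattened family indexed by $\Sigma_{i \in I} J_i$, and act identically on morphisms because the pairing $\langle -, - \rangle$ is fixed once and for all.

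For (ii), the components $\alpha_A, \beta_A, \gamma_A$ are reindexings along canonical bijections between subsets of $\mathbb{N}$: for $\alpha_A$ the bijection $J \bij \Sigma_{i \in \{0\}} J$ sending $j \mapsto \langle 0, j \rangle$; for $\beta_A$ the bijection $I \bij \Sigma_{i \in I} \{0\}$ sending $i \mapsto \langle i, 0 \rangle$; and for $\gamma_A$ the unique bijection $\Sigma_{\langle i,j \rangle \in \Sigma_{i\in I} J_i} K_{i,j} \bij \Sigma_{i \in I} \Sigma_{j \in J_i} K_{i,j}$ induced by injectivity of the pairing. Since these bijections only permute copy indices while leaving the underlying families of objects of $A$ strictly equal, each component is an isomorphism in $\oc A$ whose image under $\display$ is a negative symmetry. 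Naturality in $A$ (the modification condition) reduces to checking that these bijections commute with the action of $\oc\oc F$ (resp.\ $\oc F$), which is automatic since $F$ only acts on the entries of the family, never on copy indices.

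For (iii), the pentagon for $\gamma$ and the triangle relating $\alpha$, $\beta$ and $\gamma$ reduce, after unfolding the definitions, to purely combinatorial identities between bijections built from the pairing $\langle -, - \rangle$. Concretely, the pentagon asserts that the two canonical rebracketings of a quadruply-nested $\Sigma$ agree as bijections between finite subsets of $\mathbb{N}$, and this holds by uniqueness of the bijection induced by iterated injectivity of $\langle -, - \rangle$; the triangle is analogous for the $0$-insertion. I expect the bulk of the work, and the one genuine obstacle, to lie in bookkeeping these bijections cleanly enough that the pentagon and triangle diagrams commute literally rather than up to further isomorphism — in particular verifying that the different $\alpha_A, \beta_A, \gamma_A$ built from the same pairing function really do satisfy the axioms on the nose, which they do precisely because each axiom compares two bijections with the same source and target that are determined uniquely by the injectivity of $\langle -, - \rangle$.
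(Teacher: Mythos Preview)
The paper states this proposition without proof, treating it as a routine verification. Your outline is exactly the standard verification one would carry out, and the substance is correct: strict $2$-naturality of $\eta$ and $\mu$ is immediate, the modification axioms for $\alpha,\beta,\gamma$ hold because the reindexing bijections do not touch the entries of the families, and the coherence axioms reduce to equalities of bijections between finite subsets of $\mathbb{N}$ that are forced once one tracks each element back to its original multi-index via injectivity of $\langle-,-\rangle$.

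One minor bookkeeping slip: you have the roles of $\alpha$ and $\beta$ crossed relative to the paper (and slightly internally). In the paper $\alpha_A$ fills the triangle with $\eta_{\oc A}$ and has component the reindexing along $j \mapsto \langle 0,j\rangle$, while $\beta_A$ fills the triangle with $\oc\eta_A$ and uses $i \mapsto \langle i,0\rangle$. You state $\alpha : \mu \circ \oc\eta \Rightarrow \id$ but then give it the bijection $j \mapsto \langle 0,j\rangle$, which is the one for $\mu \circ \eta_{\oc}$; so your bijection for $\alpha$ actually matches the paper's $\alpha$, but your stated type does not. Since all these $2$-cells are invertible and the coherence argument is symmetric in the two unit laws, this mislabeling is harmless for the proof itself, but it is worth straightening out before writing it up.
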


\begin{figure}
\[
\begin{tikzcd}[cramped,row sep=20pt, column sep=50pt]
\oc \oc A
	\ar[dr,"\mu_A"',myname=mul]&
\oc A	\ar[l,"\eta_{\oc A}"']
	\ar[r, "\oc \eta_A"] 
	\ar[d, "\id_{\oc A}"{description},myname=id]
	\phar[from=mul,"\stackrel{\alpha_A}{\Rightarrow}"]&
\oc \oc A
	\ar[dl,"\mu_A",myname=mur]\\
&\oc A
	\phar[from=mur,to=1-2,"\stackrel{\beta_A}{\Leftarrow}"]
\end{tikzcd}
\]
\caption{Unit natural isomorphisms}
\label{fig:unitlaw}
\end{figure}
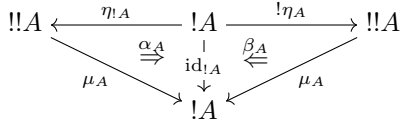
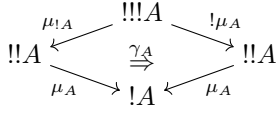
\begin{figure}
\[
\begin{tikzcd}[cramped,row sep=3pt]
& \oc \oc \oc A
	\ar[dl,"\mu_{\oc A}"']
	\ar[dr,"\oc \mu_A"]\\
\oc \oc A
	\ar[dr,"\mu_A"']
	\phar[rr,"\stackrel{\gamma_A}{\Rightarrow}"]&&
\oc \oc A
	\ar[dl,"\mu_A"]\\
&\oc A
\end{tikzcd}
\]
\caption{Associativity natural isomorphism}
\label{fig:assoclaw}
\end{figure}
\subsubsection{Lifting functors to spans} We shall turn $\oc$ into a
pseudocomonad on $\Thin$ by lifting the components above to spans. 
In general, if $F : B \to A$ is a functor, then there is a span $\check
F$
\[
A \xot{F} B \xto{\id_B} B\,,
\]
called the \textbf{lifting} of $F$ -- but we need sufficient conditions on
$F$ for this construction to yield morphisms in $\Thin$. For that
purpose, if $A$ and $B$ are thin groupoids, we say that a functor $F :
A \to B$ is a \textbf{renaming} iff the following conditions hold:
\[
\begin{array}{rl}
\text{\emph{(1)}} &
\text{for all $\theta: a \sym_A a'$, if $\theta$ is positive then so is
$F \theta$,}\\
\text{\emph{(2)}} &
\text{for all $(T,\display^T) \in \U_B^\perp$, $(T, F \circ \display^T)
\in \U_A^\perp$,}\\
\text{\emph{(3)}} &
\text{for all $(T, \display^T) \in \T_B^{\pperp}$, $(T, F \circ
\display^T) \in \T_A^{\pperp}$.}
\end{array}
\]

Clearly, renamings compose -- we consider the $2$-category $\Ren$ whose objects
are thin groupoids, whose morphisms are renamings, and whose $2$-cells are
negative natural transformations. As expected, lifting renamings yields thin
spans (see \Cref{sec:renamings}). Lifting can be extended to $2$-cells: if
$\alpha : F \Rightarrow G : A \to B$ is a negative natural transformation, then
$\check \alpha$ is the positive morphism described by the diagram:
\[
\begin{tikzcd}[cramped,row sep=10pt]
&B	\ar[dl,"F"']
	\ar[dr,"\id_B"]
	\ar[dd,"\id_B"{description},myname=id]\\
A	\phar[to=id,"\stackrel{\alpha}{\Rightarrow}"]&&B\\
&B	\ar[ul,"G"]
	\ar[ur,"\id_B"']
\end{tikzcd}\,,
\]
noting that this is positive as negative $\alpha$ is in contravariant position.
Altogether, we get (see details in \Cref{sec:renamings}):
\begin{proposition}
  \label{prop:check-psfunctor}
  There is a pseudofunctor $\check{-} : \Ren^{\op} \to \Thin$.
\end{proposition}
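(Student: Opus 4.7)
The plan is to check the four axioms of a pseudofunctor in turn. The main verification is (i) that $\check F \in \T_{A\lin B}$ whenever $F : B \to A$ is a renaming; after that, (ii) showing $\check\alpha$ is a positive 2-cell, (iii) preservation of identities and composition up to canonical iso, and (iv) the coherence axioms, all follow from direct computations and the universal property of pullbacks.

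For (i), I would invoke the dual formulation noted right after \Cref{prop:carac_lin_thin}: $T \in \T_{A\lin B}$ iff $T^\star @ S \in \T_A^{\pperp}$ for all $S \in \T_B^{\pperp}$, together with its obvious analogue for \Cref{prop:carac_lin} at the uniform level. Setting $T = \check F$, the dual span is $\check F^\star = B \xleftarrow{\id_B} B \xrightarrow{F} A$. For any prestrategy $(S, \display^S)$ on $B$, the application $\check F^\star @ S$ is a pullback one of whose legs is $\id_B$, which trivially gives back $(S, F \circ \display^S)$. Condition (2) of renaming then delivers $(S, F \circ \display^S) \in \U_A^{\perp}$ whenever $S \in \U_B^{\perp}$, and condition (3) the thin analogue, which altogether establishes $\check F \in \T_{A\lin B}$.

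For (ii), $\check\alpha$ has mediating functor $\id_B$ and associated 2-cells $\alpha : F \Rightarrow G\circ\id_B$ on the $A$-leg and $\id$ on the $B$-leg. Its combined 2-cell, viewed on $A^\perp \parr B$, has components $\alpha_b \vdash \id_b$: positive by condition (1) of renaming applied to $\alpha$ (negative in $A$ means positive in $A^\perp$) and trivially positive on $B$. Identity preservation is immediate since $\check{\id_A}$ is literally the identity span. For composition, given renamings $F : B\to A$ and $G : C\to B$, the composite $\check F \odot \check G$ in $\Thin$ involves the pullback of $B \xrightarrow{\id_B} B \xleftarrow{G} C$, which is canonically $C$ itself with legs $G$ and $\id_C$; hence the composite span is $A \xleftarrow{F\circ G} C \xrightarrow{\id_C} C = \check{F\circ G}$, and the canonical iso between these two (depending on the chosen pullback) provides the compositor $m_{F,G}$, positive by inspection. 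Naturality of $m$ in both $F$ and $G$ under positive 2-cells follows from the same universal property.

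Finally, for (iv), the pseudofunctor coherence diagrams reduce to comparing iterated canonical pullback isos, which commute by the uniqueness clause of the universal property of pullbacks. The place I expect the mildest friction is ensuring that every 2-cell produced along the way remains in the positive fragment, so that \Cref{prop:positivization} collapses any accidental higher structure into genuine equalities in $\Thin$; but this is automatic, since every compositor and unitor we produce comes from a pullback iso whose components reduce either to identities or to mediating maps on trivial legs, both trivially compatible with the polarity structure.
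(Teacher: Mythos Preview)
Your approach matches the paper's: it too proves $\check F$ lies in $\Thin$ via the dual form of \Cref{prop:carac_lin} and \Cref{prop:carac_lin_thin}, noting that $\dual{\check F}@S$ reduces (up to iso) to $(S, F \circ \partial^S)$ so that the renaming conditions apply directly; the compositor is produced from the universal property of the composition pullback, its naturality is established via an explicit characterization of $\check\phi \sthc \check\psi$ (an instance of the \PCC), and the coherence axioms are discharged by the same universal property.

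One correction in your step (ii): the positivity of $\check\alpha$ is \emph{not} due to ``condition (1) of renaming applied to $\alpha$'' --- that condition concerns how the \emph{functor} $F$ acts on positive symmetries and says nothing about $\alpha$. The actual reason is simply that the $2$-cells of $\Ren$ are \emph{negative by definition}, and a negative symmetry on $A$ becomes positive on $A^\perp$. Also, for the uniform level the dual of \Cref{prop:carac_lin} still has two conditions, the second being that the right leg lands in $\U_B$; this is trivial here since that leg is $\id_B$, but your sketch omits it.
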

Here, $\Ren^{\op}$ is $\Ren$ with the morphisms reversed, but the
$2$-cells unchanged. It can be checked that for any thin groupoid $A$,
the functors $\eta_A : A \to \oc A$ and $\mu_A : \oc \oc A \to \oc A$
are renamings, in particular for every thin groupoid $A$ we get
\[
\check{\eta_A} \in \Thin(\oc A, A)
\qquad
\check{\mu_A} \in \Thin(\oc A, \oc \oc A)
\]
the main components to turn $\oc$ into a pseudocomonad. Unlike in
$\Gpd$, the families $\check{\eta}$ and $\check{\mu}$ are not strict
$2$-natural transformations but only pseudonatural transformations,
with $2$-cells
\[
\begin{array}{rcrcl}
\eta_S &:& \check{\eta}_B \odot \oc S &\To& S \odot \check{\eta}_A\\
\mu_S &:& \check{\mu}_B \odot \oc S &\To& \oc \oc S \odot \check{\mu}_A\,,
\end{array}
\]
positive isomorphisms obtained for $S \in \Thin(A, B)$ from the
universal property of pullbacks, via the observation that $\eta :
\Id_{\Gpd} \Rightarrow \oc$ and $\mu : \oc \oc \Rightarrow \oc$ are
\emph{cartesian} natural transormations. It may be checked that
$\eta_S$ and $\mu_S$ are natural in $S$ and satisfy the coherence
conditions of pseudonatural transformations. Finally, the modifications
$\alpha, \beta, \gamma$ involved in the pseudomonad structure of $\oc$
on $\Gpd$ lift to the modifications required for the pseudocomonad
structure of $\oc$ on $\Thin$.

\begin{theorem}
  \label{thm:oc-pseudomonad}
  We have a pseudocomonad $\oc$ on $\Thin$.
\end{theorem}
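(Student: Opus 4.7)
The plan is to transport the pseudomonad structure of $\oc$ on $\Gpd$ to a pseudocomonad structure on $\Thin$, using the contravariant lifting pseudofunctor $\check{-} : \Ren^{\op} \to \Thin$ of \Cref{prop:check-psfunctor}. Since the pseudofunctor $\oc : \Thin \to \Thin$ has already been constructed, what remains is to equip it with a counit $\check\eta$, a comultiplication $\check\mu$, and the three modifications witnessing the pseudocomonad axioms.

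First I would verify that all the pseudomonad data from $\Gpd$ lives in $\Ren$. Concretely, for every thin groupoid $A$, the functors $\eta_A : A \to \oc A$ and $\mu_A : \oc\oc A \to \oc A$ are renamings: preservation of positive symmetries is immediate from the definitions of $(\oc A)_+$ in \Cref{subsubsec:pol_subgr}, as these maps act as the identity on the outer permutation of copy indices; and the two orthogonality preservation conditions follow from the identities $\U_{\oc A} = (\oc \U_A)^{\perp\perp}$ and $\T_{\oc A} = (\oc \T_A)^{\pperp\pperp}$ by standard biorthogonality manipulations together with the fact that pullbacks along $\eta_A$ or $\mu_A$ interact well with the functor $\oc$. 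The components of $\alpha_A$, $\beta_A$, $\gamma_A$ are reindexings along bijections of copy indices, hence negative natural transformations, and thus 2-cells in $\Ren$. Applying $\check{-}$ then yields $\check\eta_A \in \Thin(\oc A, A)$, $\check\mu_A \in \Thin(\oc A, \oc\oc A)$, and positive modifications $\check\alpha$, $\check\beta$, $\check\gamma$.

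Second, $\check\eta$ and $\check\mu$ a priori assemble into pseudonatural transformations only with respect to morphisms in $\Ren$, so I would extend pseudonaturality to the full pseudofunctor $\oc : \Thin \to \Thin$. For $S \in \Thin(A, B)$, the components $\eta_S : \check\eta_B \odot \oc S \To S \odot \check\eta_A$ and $\mu_S : \check\mu_B \odot \oc S \To \oc\oc S \odot \check\mu_A$ are built from the fact that $\eta$ and $\mu$ are \emph{cartesian} natural transformations in $\Gpd$, in the sense that their naturality squares are themselves pullback squares. This yields canonical strict isomorphisms between the composite pullback apices defining the two sides, hence positive 2-cells. Naturality in $S$ with respect to positive morphisms, and the pseudonaturality axioms (compatibility with the pseudofunctor structure of $\oc$ and with horizontal composition), then reduce to iterated applications of the universal property of pullbacks.

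The main obstacle is the verification of the pseudocomonad coherence axioms: the modification conditions for $\check\alpha$, $\check\beta$, $\check\gamma$, together with the coassociativity and counit equations. Each reduces to an equality between two parallel positive 2-cells of composite strategies. The essential tool is \Cref{prop:positivization}: the local category of positive morphisms between any two strategies is discrete, so any two parallel positive 2-cells are automatically equal. Consequently, once both sides of a coherence equation have been constructed from pullback universal properties and from the pseudomonad axioms in $\Gpd$ (which hold on the nose or up to $\alpha$, $\beta$, $\gamma$), equality in $\Thin$ follows without further bookkeeping. This thinness-driven elimination of higher coherence is exactly what makes the transfer from $\Gpd$ to $\Thin$ succeed.
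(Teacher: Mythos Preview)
Your first three paragraphs are broadly in line with the paper's strategy: transport the pseudomonad $(\oc,\eta,\mu,\alpha,\beta,\gamma)$ on $\Gpd$ through the contravariant lifting $\check{(-)}$, check that $\eta_A,\mu_A$ are renamings and $\alpha,\beta,\gamma$ negative, and build the pseudonaturality cells $\eta_S,\mu_S$ from cartesianness of $\eta,\mu$. The paper packages this more abstractly via an auxiliary $3$-category of ``$\pm$-functors'' and a pseudofunctor $\check{(-)}:\pmFunct^{\cocat}\to\Bicat(\Thin,\Thin)$, but the substance is the same.

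Your final paragraph, however, contains a genuine gap. You invoke \Cref{prop:positivization} to conclude that ``any two parallel positive $2$-cells are automatically equal'', and hence that the pseudocomonad coherence equations hold for free. This misreads what \emph{locally discrete} asserts. \Cref{prop:positivization} says that in the $2$-category $\Thin(A)$ the only $2$-cells $\mu:(F,\phi)\Rightarrow(G,\psi)$ between positive weak morphisms are identities; equivalently, there are no non-trivial $3$-cells in the tricategorical picture. It does \emph{not} say that two positive morphisms $S\to T$ with the same source and target must coincide. There can be many distinct positive morphisms between two strategies (already when $S$ and $T$ are discrete), and the coherence axioms of a pseudocomonad are precisely equalities of such $2$-cells of $\Thin$. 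So the ``thinness-driven elimination of higher coherence'' you appeal to removes one dimension too few.

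The paper therefore does real work here: it verifies the two coherence equations of the pseudocomonad (Figure~\ref{fig:oc-pseudocomonad-eqs} in the appendix) by explicit \emph{paving}, decomposing each side into a pasting of cells coming from the pseudofunctoriality of $\check{(-)}$ and the associator/unitors of $\Thin$, until both sides reduce to $\check{(-)}$ applied to the corresponding coherence equation for the pseudomonad $(\oc,\eta,\mu)$ on $\Gpd$, which holds by hypothesis. What \Cref{prop:positivization} (together with the \PCC) does buy you is that the individual paving tiles are well-defined and compose unambiguously; it does not let you skip the paving itself. Your argument would go through if you replaced the last step by this reduction to the $\Gpd$-level equations.
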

\begin{proof}
  See details in \Cref{sec:app-oc-pseudocomonad}.
\end{proof}

We move on to studying the Kleisli bicategory $\Thin_{\oc}$ whose
horizontal composition, denoted $\sthcb$, is defined as expected.

\subsection{Cartesian Closed Structure} 

\subsubsection{Finite products} First, we show that $\Thin_{\oc}$ has
finite products, \emph{i.e.} is a \emph{fp-bicategory} in the sense of
Fiore and Saville \cite{DBLP:journals/mscs/FioreS21} -- unlike them, we
work with binary products. 

\paragraph{Terminal object}
Write $\top$ for the empty groupoid, made a thin groupoid with
$\U_\top = \T_\top = \{\id_\emptyset\}$. For any thin groupoid
$A$, $\Thin_{\oc}(A, \top)$ has exactly one element -- the empty groupoid.
Thus, $\Thin_{\oc}$ has a (strict) terminal object.

\paragraph{Binary product} 
If $A$ and $B$ are thin groupoids, then the \textbf{with} $A \with B$
has underlying groupoid $A + B$ the disjoint union, with $(A + B)_- =
A_- + B_-$ and $(A + B)_+ = A_+ + B_+$. We adjoin $\U_{A\with B} =
(\U_A + \U_B)^{\perp \perp}$ and $\T_{A\with B} = (\T_A + \T_B)^{\pperp
\pperp}$, where as usual, $\U_A + \U_B$ comprises the set of all $(S +
T, \display^S + \display^T)$ for $(S, \display^S) \in \U_A$ and $(T,
\display^T) \in \U_B$, using the functorial action of $+$ 
(and likewise for $\T_A + \T_B$).

\paragraph{Pairing and projections} The \textbf{projections} are simply set as
$\stbpl = \widecheck{(\eta_{A+B}\circ \cplprot)} \in \Thin_{\oc}(A\with B, A)$
and $\stbpr = \widecheck{(\eta_{A+B}\circ \cprprot)} \in \Thin_{\oc}(A \with B,
B)$ for $\cpl\co A \to A+B$ and $\cpr\co B \to A+B$ the obvious
coprojections/renamings. The \textbf{pairing} of $S \in \Thin_{\oc}(\Gamma, A)$
and $T \in \Thin_{\oc}(\Gamma, B)$ is
\[
  (S + T, \display_{\oc \Gamma} : S + T \to \oc \Gamma,
  \display_{A \with B} : S + T \to A + B)
\]
with $\display_{\oc \Gamma}$ the co-pairing
and $\display_{A\with B} = \display^S_A + \display^T_B$. We have:
\begin{proposition}
  \label{prop:thinb-fp}
  For any thin groupoids $\Gamma, A$ and $B$, there is
  \[
    \begin{tikzcd}[cramped,row sep=0pt,csbo=7.5em]
      \Thin_{\oc}(\Gamma,A \with B)
      \ar[rr,bend left=15,"{(\stbpl \sthcb -, \stbpr\sthcb - )}"]
      &
      \bot
      &
      \Thin_{\oc}(\Gamma,A)
      \times
      \Thin_{\oc}(\Gamma,B)
      \ar[ll,bend left=15,"\tuple{-, -}"]
    \end{tikzcd}
  \]
  an adjoint equivalence.
\end{proposition}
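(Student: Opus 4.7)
The plan is to build the adjoint equivalence explicitly, exploiting that $A \with B$ has underlying groupoid $A + B$, a strict coproduct in $\Gpd$. I would organise the proof around four steps.

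\emph{Step 1: Well-definedness of the two functors.} First I verify that $\tuple{-,-}$ and $(\stbpl \sthcb -, \stbpr \sthcb -)$ are well-defined pseudofunctors between the hom-categories. For the pairing, given $S \in \T_{\oc \Gamma \lin A}$ and $T \in \T_{\oc \Gamma \lin B}$, I need $\tuple{S,T} \in \T_{\oc \Gamma \lin (A \with B)}$. Using \Cref{prop:carac_lin_thin} and the definition $\T_{A\with B} = (\T_A + \T_B)^{\pperp \pperp}$, this reduces to showing $\tuple{S,T} @ X \in (\T_A + \T_B)^{\pperp \pperp}$ for every $X \in \T_{\oc \Gamma}$. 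This follows by a pullback-chasing argument: since $A + B$ is a coproduct in $\Gpd$, pullback along $X$ distributes over $S + T$. The functors $\stbpl \sthcb -$ and $\stbpr \sthcb -$ are obtained by horizontal composition with morphisms already known to be strategies by \Cref{prop:check-psfunctor}, so they are automatically well-defined, and $(\stbpl \sthcb -, \stbpr \sthcb -)$ is their obvious pairing.

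\emph{Step 2: The counit.} Next I compute the compositions $\stbpl \sthcb \tuple{S,T}$ and $\stbpr \sthcb \tuple{S,T}$. By definition, $\stbpl \sthcb \tuple{S,T} = \stbpl \odot \oc \tuple{S,T} \odot \check{\mu}_\Gamma$, and $\stbpl$ is (up to iso) the lifting of $\eta_{A+B} \circ \cpl\co A \to \oc(A+B)$. Chasing pullbacks, the composition $\stbpl \odot \oc \tuple{S,T}$ singles out the pre-image of the $A$-component within $\oc(S + T)$; since $A + B$ is a strict coproduct and both $\eta$ and $\mu$ are cartesian, this pullback is canonically isomorphic to $\oc S$, and the further composition with $\check{\mu}_\Gamma$ produces a span canonically isomorphic to $S$. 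The isomorphism is \emph{positive}: the only reindexings involved are those coming from the negative components $\eta_S$ and $\mu_S$ of the pseudocomonad structure, which lie in contravariant position and therefore carry positive symmetries on the cospan. A symmetric argument yields $\stbpr \sthcb \tuple{S,T} \iso T$. Together these give the counit iso.

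\emph{Step 3: The unit.} For $R \in \Thin_{\oc}(\Gamma, A \with B)$, the display map $\display^R_{A+B}\co R \to A + B$ decomposes $R = R_A + R_B$ by taking pre-images of $A$ and $B$ (again using that $A + B$ is a strict coproduct and $R$ has no symmetries crossing the two components, since positive symmetries of $A \with B$ respect the partition). The same pullback computation as in Step 2 gives $\stbpl \sthcb R \iso R_A$ and $\stbpr \sthcb R \iso R_B$, so that $\tuple{\stbpl \sthcb R, \stbpr \sthcb R} \iso R_A + R_B \iso R$ via a canonical positive iso. This is the unit of the equivalence and also shows that $\tuple{-,-}$ is essentially surjective.

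\emph{Step 4: Naturality and triangle identities.} I verify naturality of the unit and counit in positive morphisms (immediate from the universal properties of pullback used in the constructions) and check the two triangle identities for an adjoint equivalence. The \emph{main obstacle} here is bookkeeping: the composition $\stbpl \sthcb \tuple{S,T}$ involves several layers of pseudonatural coherence (from $\check \eta$, $\check \mu$, and the pseudofunctor coherences $m_{-,-}$ of $\oc$), and one must check they compose consistently with the constructed isos. This is dramatically simplified by \Cref{prop:positivization}: between positive morphisms of strategies, every $2$-cell is an identity, so one only has to exhibit the relevant filling positive iso and automatically gets the triangle equality; there is no algebraic coherence check to perform. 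Assembling Steps 1--4, the two functors form an adjoint equivalence as claimed.
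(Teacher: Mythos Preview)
Your approach is broadly correct but takes a different route from the paper. The paper does not work directly in the Kleisli bicategory; instead it first establishes the adjoint equivalence in $\Thin$ itself, using the simpler linear projections $\pscpl = \check{\cpl}$ and $\pscpr = \check{\cpr}$ (equivalently, the restriction functors $(-)_A, (-)_B$ that take the preimage of $A$ or $B$ under the right display map). It then transfers the result to $\Thin_{\oc}$ by observing that $\Thin_{\oc}(\Gamma, X) = \Thin(\oc\Gamma, X)$ and that $(\stbpl \sthcb -, \stbpr \sthcb -) \cong (\pscpl \sthc -, \pscpr \sthc -)$ as functors. This factoring avoids unfolding the Kleisli composition and the comonad data entirely: the unit and counit are \emph{strong} span morphisms built directly from the coproduct structure of $A+B$, and the zigzag equations are checked by elementary computations on underlying functors. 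Your direct Kleisli approach is valid but carries more overhead.

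Two points to correct. In Step~2 you claim the pullback underlying $\stbpl \odot \oc\tuple{S,T}$ is isomorphic to $\oc S$; it is in fact isomorphic to $S$. The left leg of $\stbpl$ is $\eta_{A+B} \circ \cpl$, which only hits singleton families $[0 \cdot a]$, so the pullback against $\oc(S+T) \to \oc(A+B)$ selects exactly the families $[0 \cdot s]$ with $s \in S$. The further composition with $\check{\mu}_\Gamma$ then leaves the vertex unchanged and only modifies the left display via $\mu_\Gamma \circ \eta_{\oc\Gamma}$, which is where the unit law $\alpha_\Gamma$ enters. Your final conclusion (isomorphism with $S$) is still correct, but the intermediate description is off.

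In Step~4 you overstate what \Cref{prop:positivization} buys. Local discreteness says that any $2$\cell between two positive morphisms is an identity, hence two positive morphisms related by a $2$\cell are equal; it does not say that any two positive morphisms between the same strategies are equal. The triangle identities are equalities of specific composites of positive morphisms, and you must still either compute them directly or exhibit a $2$\cell between the two sides. The paper does the former: it verifies both zigzag equations explicitly by tracking the underlying functors (which is straightforward precisely because its unit and counit are strong). In your Kleisli route the verification is still routine but does require honest bookkeeping with the pseudocomonad coherences; it is not automatic.
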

\begin{proof}
  If $S \in \Thin_{\oc}(\Gamma, A)$ and $T \in \Thin_{\oc}(\Gamma,
  B)$ there are
  \[
    \omega^A_{S, T} : \stbpl \sthcb \tuple{S, T} \iso S
    \qquad
    \omega^B_{S, T} : \stbpr \sthcb \tuple{S, T} \iso T
  \]
  positive isos, and for $U \in \Thin_{\oc}(\Gamma, A \with B)$ there is
  \[
    \bar{\omega}_U : U \iso \tuple{\stbpl \sthcb U, \stbpr \sthcb T}
  \]
  a positive iso, defined in the obvious way.
  Those are all natural in $S, T, U$, and satisfy the required triangle
  identities. 
\end{proof}

See \Cref{sec:cartesian-product} for more details. Altogether,
this establishes that $\Thin_{\oc}$ is a \emph{fp-bicategory} in the
sense of \cite{DBLP:journals/mscs/FioreS21}.

\subsubsection{Cartesian closure} If $A$ and $B$ are thin groupoids,
then we set $A\tto B = \oc A \parr B$. Before we describe the
additional components, we must observe the Seely \emph{equivalence}:
\[
\begin{tikzcd}
\oc A \tensor \oc B
	\ar[rr, bend left=10, "s_{A, B}"]&&
\oc (A \with B)
	\ar[ll, bend left=10, "\bar{s}_{A, B}"]
\end{tikzcd}
\]
where $s_{A, B}$ sends $(a_i)_{i\in I}, (b_j)_{j\in J}$ to $(c_k)_{k\in I\imix
  J}$, with $I\imix J = \ibij(I \sqcup J)$ for some chosen bijection $\ibij =
\coprodfact{\ibij_l}{\ibij_r}$ between $\N\sqcup \N$ and $\N$, and where
$c_{\ibij_l(i)} = a_i$ and $c_{\ibij_r(j)} = b_j$; and $\bar{s}_{A, B}$ sends
$(c_k)_{k\in K}$ to $(a_i)_{i\in I}, (b_j)_{j\in J}$ where $I \subseteq K$ is
the subset of those $i\in K$ such that $c_i = a_i \in A$, and likewise for
$b_j$. Both functors are renamings, and the isomorphisms witnessing the
equivalence are negative.

Via the Seely equivalence, we first define the \textbf{evaluation} as
the span with basic groupoid $\oc A \times B$, with left leg the
functor
\[
\scalebox{.89}{$
\oc A \times B \to (\oc A \times B) \times \oc A \to \oc (\oc A \times
B) \times \oc A \xto{s_{A, B}} \oc ((A \tto B) \with A)
$}
\]
and right leg the projection $\oc A \times B \to B$. This yields a thin
span $\evm_{A, B} \in \Thin_{\oc}((A \tto B)\with A, B)$. Now, we need
\[
\Lambda(-) : \Thin_{\oc}(\Gamma \with A, B) \to \Thin_{\oc}(\Gamma, A
\tto B)
\]
the \textbf{currying} functor: given $S \in \Thin_{\oc}(\Gamma \with A,
B)$, its currying $\Lambda(S)$ is simply $S$, with display map post-composed with
\[
\oc (\Gamma + A) \times B \stackrel{\bar{s}_{\Gamma, A}}{\simeq} (\oc
\Gamma \times \oc A) \times B 
\iso \oc \Gamma \times (\oc A \times B)\,.
\]
With this data in place, we may finally prove:
\begin{proposition}
  \label{prop:adj-closure}
  For any groupoids $\Gamma, A, B$, there is
  \[
    \begin{tikzcd}
      \Thin_{\oc}(\Gamma, A\tto B)
      \ar[rr, bend left=15, "\evm_{A, B} \odot_{\oc} (- \with A)"]&\bot&
      \Thin_{\oc}(\Gamma \with A, B)
      \ar[ll, bend left=15, "\Lambda(-)"]
    \end{tikzcd}
  \]
  an adjoint equivalence.
\end{proposition}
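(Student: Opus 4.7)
The plan is to exhibit, for every $S \in \Thin_{\oc}(\Gamma, A \tto B)$ and $T \in \Thin_{\oc}(\Gamma \with A, B)$, positive isomorphisms
\[
\eta_S : S \iso \Lambda(\evm_{A,B} \sthcb (S \with A))
\]
and
\[
\epsilon_T : \evm_{A,B} \sthcb (\Lambda(T) \with A) \iso T
\]
pseudonatural in their arguments and satisfying the two triangle identities of an adjoint equivalence. Since $\Thin(A)$ is locally discrete by \Cref{prop:positivization}, any two positive $2$-cells between the same strategies coincide, so the pseudonaturality squares and the triangle identities will collapse automatically once $\eta$ and $\epsilon$ are defined; the entire content of the proof thus lies in constructing these components and checking they are well-typed positive isos.

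To construct $\epsilon_T$, I would explicitly unfold the Kleisli composition $\evm_{A,B} \sthcb (\Lambda(T) \with A)$. First, $\Lambda(T) \with A$ is the pairing $\tuple{\Lambda(T) \sthcb \stbpl, \stbpr}$ in the fp-bicategory $\Thin_{\oc}$, obtained via \Cref{prop:thinb-fp}, whose carrier is the coproduct of the carriers of $\Lambda(T) \sthcb \stbpl$ and of $\stbpr$. The Kleisli composition with $\evm_{A,B}$ then involves $\oc$ of this pairing, post-composed with $\check\mu$ and pulled back along the display map of $\evm_{A,B}$. Since $\Lambda$ leaves the carrier untouched and merely reassociates the display map via the Seely equivalence $\seely_{\Gamma,A}$, while $\evm_{A,B}$ is itself built from $\seely_{A,B}$, I would track how the two Seely instances cancel, and how the coherence modifications $\oclu, \ocru, \ocass$ of the pseudocomonad $\oc$ mediate the composition with $\check{\mu}_A$ and $\check{\mu}_B$. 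The resulting pullback carrier is canonically isomorphic to the carrier of $T$, with the residual discrepancy between the two display maps built entirely from negative symmetries coming from the pseudocomonad modifications and the Seely unit/counit; this data is precisely what constitutes a positive iso $\epsilon_T$.

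For $\eta_S$, the construction is essentially dual. As $\Lambda$ acts as the identity on carriers, $\Lambda(\evm_{A,B} \sthcb (S \with A))$ has the same carrier as $\evm_{A,B} \sthcb (S \with A)$, computed by the same kind of pullback, so the same bookkeeping yields a canonical positive iso back to $S$. Pseudonaturality in $S$ and $T$ then follows at once: the naturality square is a square of positive $2$-cells in $\Thin$, which by local discreteness reduces to an equality of positive morphisms already witnessed by the universal property of the pullback. The same argument handles the triangle identities, each being an equality of positive $2$-cells between identical source and target strategies.

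The main obstacle will be the pullback computation underlying $\epsilon_T$: verifying that, after unfolding the pseudocomonad structure and the Seely equivalence, the composed carrier really simplifies to the carrier of $T$ up to canonical isomorphism, and that the residual display-map discrepancy is \emph{negative}, as required for the resulting weak morphism to be positive. This is a diagrammatic but somewhat intricate computation; modulo it, the bicategorical coherence required for an adjoint equivalence is automatic thanks to \Cref{prop:positivization}.
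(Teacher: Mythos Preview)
Your invocation of local discreteness is a genuine error. \Cref{prop:positivization} says that the $2$-category $\Thin(A)$ --- whose $0$-cells are strategies, $1$-cells are positive weak morphisms, and $2$-cells are certain natural transformations --- has only identity $2$-cells. This means there is at most one $2$-cell \emph{between two parallel positive morphisms}; it does \emph{not} say there is at most one positive morphism between two given strategies. Your naturality squares and triangle identities are equalities of positive morphisms (that is, of $1$-cells in $\Thin(A)$, equivalently $2$-cells in the bicategory $\Thin$), and local discreteness gives no leverage on those: two distinct positive isomorphisms $S \iso T$ can perfectly well coexist. So the coherence you defer to local discreteness is precisely the part that still needs to be proved.

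The paper sidesteps a direct verification by factoring through an intermediate \emph{uncurrying} functor $\bar\Lambda(-)$, obtained by post-composing the display map with $\seely_{\Gamma,A}$. Since both $\Lambda$ and $\bar\Lambda$ act as the identity on carriers and only twist the display through the Seely equivalence, the unit and counit of the adjoint equivalence between them are built componentwise from the unit and counit of $\seely_{\Gamma,A} \dashv \seelyinv_{\Gamma,A}$, and the zigzag equations reduce by vertical pasting to the Seely zigzags. In a separate step the paper computes the pullback underlying $\evm_{A,B} \sthcb (S \with A)$ explicitly and exhibits a natural isomorphism to $\bar\Lambda(S)$ using the Seely coherence $\seelycoh_{\Gamma,A}$ together with the modifications $\oclu,\ocru$; transporting the adjunction along this isomorphism yields the statement. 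Your direct approach is not doomed, but naturality (handled in the paper via the \PCC) and the triangle identities would require genuine verification --- they are not automatic.
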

\begin{proof}
  One can first show the existence of adjoint equivalence between the currying
  operation $\stcurry-$, and a symmetric uncurrying operation $\stuncurry-$. The
  unit and counit of this adjunction can be derived from the ones of the Seely
  (adjoint) equivalence. One can then prove that $\stuncurry-$ is in fact
  isomorphic to $\evm_{A, B} \odot_{\oc} (- \with A)$ in order to get the wanted
  equivalence.
\end{proof}

See \Cref{sec:app-ev-adj} for details. Altogether, we have:

\begin{theorem}
We have $\Thin_{\oc}$, a cartesian closed bicategory.
\end{theorem}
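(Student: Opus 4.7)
The plan is to assemble the theorem from the machinery already developed in the paper, checking that each ingredient of a cartesian closed bicategory (in the sense of Fiore-Saville \cite{DBLP:journals/mscs/FioreS21}) is in place.

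First, I would observe that $\Thin_{\oc}$ is a bicategory: this follows directly from the fact that $\Thin$ is a bicategory (\Cref{thm:thin-bicat}) together with the fact that $\oc$ is a pseudocomonad on $\Thin$ (\Cref{thm:oc-pseudomonad}), using the standard construction of the Kleisli bicategory of a pseudocomonad. Horizontal composition $\sthcb$ is defined in the expected way via the comultiplication $\check{\mu}$, and the coherence data for $\Thin_{\oc}$ is inherited from the pseudocomonad structure of $\oc$.

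Next, I would verify that $\Thin_{\oc}$ is an \emph{fp-bicategory}. The terminal object is $\top$, for which $\Thin_{\oc}(A, \top)$ is contractible for every thin groupoid $A$ as noted in the discussion preceding \Cref{prop:thinb-fp}. Binary products are given by the \textbf{with} $A \with B$, with projections $\stbpl, \stbpr$ and pairing $\tuple{-,-}$ as defined; \Cref{prop:thinb-fp} gives precisely the required adjoint equivalence between $\Thin_{\oc}(\Gamma, A \with B)$ and $\Thin_{\oc}(\Gamma, A) \times \Thin_{\oc}(\Gamma, B)$, which is exactly the universal property of a binary product in a bicategory.

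Finally, cartesian closure requires exhibiting a right pseudoadjoint to $- \with A : \Thin_{\oc}(\Gamma \with A, B)$ for each $A$. This is provided by $A \tto B = \oc A \parr B$ together with the evaluation $\evm_{A, B}$ and the currying $\Lambda(-)$ described before \Cref{prop:adj-closure}, which gives the adjoint equivalence
\[
\Thin_{\oc}(\Gamma, A \tto B) \simeq \Thin_{\oc}(\Gamma \with A, B)
\]
pseudonatural in $\Gamma$. Pseudonaturality in $\Gamma$ amounts to compatibility of currying with pre-composition, which is a direct calculation using the naturality of the Seely equivalence $\bar{s}_{\Gamma, A}$ and of the associator of $\Thin$ with respect to positive morphisms. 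Assembling these three ingredients yields the cartesian closed bicategory $\Thin_{\oc}$.

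The main conceptual obstacle has already been absorbed by the earlier propositions: ensuring that the orthogonality-based definitions of $\T_{A\with B}$ and $\T_{A \tto B}$ are compatible with composition, and that the ambient pseudocomonad interacts well with the Seely equivalence. What remains here is essentially a bookkeeping exercise checking that the naturality squares for the two adjoint equivalences, which only commute up to positive isomorphism, satisfy the expected coherence laws; by \Cref{prop:positivization} there are no higher $2$-cells to worry about, so these coherences reduce to checking equalities of positive morphisms, which follow from the universal properties of pullbacks used throughout.
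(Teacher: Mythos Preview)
Your proposal is correct and matches the paper's approach exactly: the paper does not give a separate proof of this theorem but treats it as the immediate culmination of \Cref{thm:thin-bicat}, \Cref{thm:oc-pseudomonad}, \Cref{prop:thinb-fp}, and \Cref{prop:adj-closure}, which is precisely the assembly you outline. Your additional remarks on pseudonaturality in $\Gamma$ and the use of \Cref{prop:positivization} to collapse coherence checks are sound elaborations of details the paper leaves implicit.
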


This entails that we can interpret types of the simply-typed
$\lambda$-calculus as thin groupoids, morphisms as thin spans and
rewrites between terms as certain positive isomorphisms \cite{DBLP:conf/lics/FioreS19}.

\section{Conclusion}

This paper focuses on the construction of $\Thin_{\oc}$, leaving for
later its application to semantics of $\lambda$-calculi and programming
languages. We believe this opens multiple perspectives for further
research: firstly, we may explore the obtained interpretation of the
$\lambda$-calculus, which syntactically should correspond to the
sequence typing system of Vial \cite{DBLP:conf/lics/Vial17} and to the
non-uniform $\lambda$-calculus of Melliès
\cite{DBLP:journals/tcs/Mellies06}. We should explore links with other
models of the literature, notably with the weighted relational model
recasting ideas from \cite{DBLP:journals/corr/abs-2107-03155}, and with
generalized species of structures and template games. Another related
direction consists in accommodating another feature of template games,
the mechanism to capture scheduling and synchronization
\cite{DBLP:journals/pacmpl/Mellies19}, into thin spans.

In more semantic directions, we believe that with respect to generalized
species of structures, the fact that operations on thin spans involve
no quotient may be helpful in two ways: \emph{(1)} individuals may be
ordered concretely, and the model should support continuous reasoning
allowing one to deal easily with infinite computation; and
\emph{(2)} adding ``typed'' weights coming from an SMCC as in
\cite{DBLP:conf/lics/TsukadaAO18} should be a lot simpler, since those
weigths no longer have to themselves be saturated.


\ifblind
\else
\section*{Acknowledgment}
Work supported by the ANR projects DyVerSe (ANR-19-CE48-0010-01) and PPS
(ANR-19-CE48-0014); and by the Labex MiLyon (ANR-10-LABX-0070) of Universit\'e
de Lyon, within the program ``Investissements d'Avenir'' (ANR-11-IDEX-0007),
operated by the French National Research Agency (ANR).

\fi



\bibliographystyle{IEEEtran}
\bibliography{main}

\begin{thebibliography}{10}
\providecommand{\url}[1]{#1}
\csname url@samestyle\endcsname
\providecommand{\newblock}{\relax}
\providecommand{\bibinfo}[2]{#2}
\providecommand{\BIBentrySTDinterwordspacing}{\spaceskip=0pt\relax}
\providecommand{\BIBentryALTinterwordstretchfactor}{4}
\providecommand{\BIBentryALTinterwordspacing}{\spaceskip=\fontdimen2\font plus
\BIBentryALTinterwordstretchfactor\fontdimen3\font minus
  \fontdimen4\font\relax}
\providecommand{\BIBforeignlanguage}[2]{{%
\expandafter\ifx\csname l@#1\endcsname\relax
\typeout{** WARNING: IEEEtran.bst: No hyphenation pattern has been}%
\typeout{** loaded for the language `#1'. Using the pattern for}%
\typeout{** the default language instead.}%
\else
\language=\csname l@#1\endcsname
\fi
#2}}
\providecommand{\BIBdecl}{\relax}
\BIBdecl

\bibitem{DBLP:journals/tcs/Girard87}
\BIBentryALTinterwordspacing
J.~Girard, ``Linear logic,'' \emph{Theor. Comput. Sci.}, vol.~50, pp. 1--102,
  1987. [Online]. Available: \url{https://doi.org/10.1016/0304-3975(87)90045-4}
\BIBentrySTDinterwordspacing

\bibitem{DBLP:journals/mscs/Ehrhard05}
\BIBentryALTinterwordspacing
T.~Ehrhard, ``Finiteness spaces,'' \emph{Math. Struct. Comput. Sci.}, vol.~15,
  no.~4, pp. 615--646, 2005. [Online]. Available:
  \url{https://doi.org/10.1017/S0960129504004645}
\BIBentrySTDinterwordspacing

\bibitem{DBLP:journals/tcs/CarvalhoPF11}
\BIBentryALTinterwordspacing
D.~de~Carvalho, M.~Pagani, and L.~T. de~Falco, ``A semantic measure of the
  execution time in linear logic,'' \emph{Theor. Comput. Sci.}, vol. 412,
  no.~20, pp. 1884--1902, 2011. [Online]. Available:
  \url{https://doi.org/10.1016/j.tcs.2010.12.017}
\BIBentrySTDinterwordspacing

\bibitem{DBLP:journals/iandc/DanosE11}
\BIBentryALTinterwordspacing
V.~Danos and T.~Ehrhard, ``Probabilistic coherence spaces as a model of
  higher-order probabilistic computation,'' \emph{Inf. Comput.}, vol. 209,
  no.~6, pp. 966--991, 2011. [Online]. Available:
  \url{https://doi.org/10.1016/j.ic.2011.02.001}
\BIBentrySTDinterwordspacing

\bibitem{DBLP:conf/lics/LairdMMP13}
\BIBentryALTinterwordspacing
J.~Laird, G.~Manzonetto, G.~McCusker, and M.~Pagani, ``Weighted relational
  models of typed lambda-calculi,'' in \emph{28th Annual {ACM/IEEE} Symposium
  on Logic in Computer Science, {LICS} 2013, New Orleans, LA, USA, June 25-28,
  2013}.\hskip 1em plus 0.5em minus 0.4em\relax {IEEE} Computer Society, 2013,
  pp. 301--310. [Online]. Available: \url{https://doi.org/10.1109/LICS.2013.36}
\BIBentrySTDinterwordspacing

\bibitem{DBLP:conf/popl/PaganiSV14}
\BIBentryALTinterwordspacing
M.~Pagani, P.~Selinger, and B.~Valiron, ``Applying quantitative semantics to
  higher-order quantum computing,'' in \emph{The 41st Annual {ACM}
  {SIGPLAN-SIGACT} Symposium on Principles of Programming Languages, {POPL}
  '14, San Diego, CA, USA, January 20-21, 2014}, S.~Jagannathan and P.~Sewell,
  Eds.\hskip 1em plus 0.5em minus 0.4em\relax {ACM}, 2014, pp. 647--658.
  [Online]. Available: \url{https://doi.org/10.1145/2535838.2535879}
\BIBentrySTDinterwordspacing

\bibitem{DBLP:journals/jacm/EhrhardPT18}
\BIBentryALTinterwordspacing
T.~Ehrhard, M.~Pagani, and C.~Tasson, ``Full abstraction for probabilistic
  {PCF},'' \emph{J. {ACM}}, vol.~65, no.~4, pp. 23:1--23:44, 2018. [Online].
  Available: \url{https://doi.org/10.1145/3164540}
\BIBentrySTDinterwordspacing

\bibitem{DBLP:journals/pacmpl/ClairambaultV20}
\BIBentryALTinterwordspacing
P.~Clairambault and M.~de~Visme, ``Full abstraction for the quantum
  lambda-calculus,'' \emph{Proc. {ACM} Program. Lang.}, vol.~4, no. {POPL}, pp.
  63:1--63:28, 2020. [Online]. Available: \url{https://doi.org/10.1145/3371131}
\BIBentrySTDinterwordspacing

\bibitem{DBLP:journals/corr/abs-0905-4251}
\BIBentryALTinterwordspacing
D.~de~Carvalho, ``Execution time of lambda-terms via denotational semantics and
  intersection types,'' \emph{CoRR}, vol. abs/0905.4251, 2009. [Online].
  Available: \url{http://arxiv.org/abs/0905.4251}
\BIBentrySTDinterwordspacing

\bibitem{DBLP:journals/jfp/AccattoliGK20}
\BIBentryALTinterwordspacing
B.~Accattoli, S.~Graham{-}Lengrand, and D.~Kesner, ``Tight typings and split
  bounds, fully developed,'' \emph{J. Funct. Program.}, vol.~30, p. e14, 2020.
  [Online]. Available: \url{https://doi.org/10.1017/S095679682000012X}
\BIBentrySTDinterwordspacing

\bibitem{DBLP:journals/lmcs/BucciarelliKR18}
\BIBentryALTinterwordspacing
A.~Bucciarelli, D.~Kesner, and S.~R.~D. Rocca, ``Inhabitation for
  non-idempotent intersection types,'' \emph{Log. Methods Comput. Sci.},
  vol.~14, no.~3, 2018. [Online]. Available:
  \url{https://doi.org/10.23638/LMCS-14(3:7)2018}
\BIBentrySTDinterwordspacing

\bibitem{DBLP:journals/iandc/AbramskyJM00}
S.~Abramsky, R.~Jagadeesan, and P.~Malacaria, ``Full abstraction for {PCF},''
  \emph{Inf. Comput.}, vol. 163, no.~2, pp. 409--470, 2000.

\bibitem{DBLP:journals/iandc/HylandO00}
\BIBentryALTinterwordspacing
J.~M.~E. Hyland and C.~L. Ong, ``On full abstraction for {PCF:} i, ii, and
  {III},'' \emph{Inf. Comput.}, vol. 163, no.~2, pp. 285--408, 2000. [Online].
  Available: \url{https://doi.org/10.1006/inco.2000.2917}
\BIBentrySTDinterwordspacing

\bibitem{DBLP:conf/csl/BaillotDER97}
\BIBentryALTinterwordspacing
P.~Baillot, V.~Danos, T.~Ehrhard, and L.~Regnier, ``Timeless games,'' in
  \emph{Computer Science Logic, 11th International Workshop, {CSL} '97, Annual
  Conference of the EACSL, Aarhus, Denmark, August 23-29, 1997, Selected
  Papers}, ser. Lecture Notes in Computer Science, M.~Nielsen and W.~Thomas,
  Eds., vol. 1414.\hskip 1em plus 0.5em minus 0.4em\relax Springer, 1997, pp.
  56--77. [Online]. Available: \url{https://doi.org/10.1007/BFb0028007}
\BIBentrySTDinterwordspacing

\bibitem{DBLP:journals/tcs/Mellies06}
P.~Melli{\`{e}}s, ``Asynchronous games 2: The true concurrency of innocence,''
  \emph{Theor. Comput. Sci.}, vol. 358, no. 2-3, pp. 200--228, 2006.

\bibitem{DBLP:conf/tlca/Boudes09}
\BIBentryALTinterwordspacing
P.~Boudes, ``Thick subtrees, games and experiments,'' in \emph{Typed Lambda
  Calculi and Applications, 9th International Conference, {TLCA} 2009,
  Brasilia, Brazil, July 1-3, 2009. Proceedings}, ser. Lecture Notes in
  Computer Science, P.~Curien, Ed., vol. 5608.\hskip 1em plus 0.5em minus
  0.4em\relax Springer, 2009, pp. 65--79. [Online]. Available:
  \url{https://doi.org/10.1007/978-3-642-02273-9\_7}
\BIBentrySTDinterwordspacing

\bibitem{leinster2004higher}
T.~Leinster, \emph{Higher operads, higher categories}.\hskip 1em plus 0.5em
  minus 0.4em\relax Cambridge University Press, 2004, no. 298.

\bibitem{DBLP:journals/mscs/FioreS21}
\BIBentryALTinterwordspacing
M.~Fiore and P.~Saville, ``Coherence for bicategorical cartesian closed
  structure,'' \emph{Math. Struct. Comput. Sci.}, vol.~31, no.~7, pp. 822--849,
  2021. [Online]. Available: \url{https://doi.org/10.1017/S0960129521000281}
\BIBentrySTDinterwordspacing

\bibitem{DBLP:conf/lics/FioreS19}
\BIBentryALTinterwordspacing
------, ``A type theory for cartesian closed bicategories (extended
  abstract),'' in \emph{34th Annual {ACM/IEEE} Symposium on Logic in Computer
  Science, {LICS} 2019, Vancouver, BC, Canada, June 24-27, 2019}.\hskip 1em
  plus 0.5em minus 0.4em\relax {IEEE}, 2019, pp. 1--13. [Online]. Available:
  \url{https://doi.org/10.1109/LICS.2019.8785708}
\BIBentrySTDinterwordspacing

\bibitem{fiore2008cartesian}
M.~Fiore, N.~Gambino, M.~Hyland, and G.~Winskel, ``The cartesian closed
  bicategory of generalised species of structures,'' \emph{Journal of the
  London Mathematical Society}, vol.~77, no.~1, pp. 203--220, 2008.

\bibitem{DBLP:journals/lmcs/CastellanCW19}
S.~Castellan, P.~Clairambault, and G.~Winskel, ``Thin games with symmetry and
  concurrent hyland-ong games,'' \emph{Log. Methods Comput. Sci.}, vol.~15,
  no.~1, 2019.

\bibitem{paquet2020probabilistic}
H.~Paquet, ``Probabilistic concurrent game semantics,'' Ph.D. dissertation,
  2020.

\bibitem{DBLP:conf/lics/Mellies19}
P.~Melli{\`{e}}s, ``Template games and differential linear logic,'' in
  \emph{{LICS}}.\hskip 1em plus 0.5em minus 0.4em\relax {IEEE}, 2019, pp.
  1--13.

\bibitem{DBLP:conf/lics/TsukadaAO18}
T.~Tsukada, K.~Asada, and C.~L. Ong, ``Species, profunctors and taylor
  expansion weighted by {SMCC:} {A} unified framework for modelling
  nondeterministic, probabilistic and quantum programs,'' in
  \emph{{LICS}}.\hskip 1em plus 0.5em minus 0.4em\relax {ACM}, 2018, pp.
  889--898.

\bibitem{DBLP:conf/fscd/Galal20}
\BIBentryALTinterwordspacing
Z.~Galal, ``A profunctorial scott semantics,'' in \emph{5th International
  Conference on Formal Structures for Computation and Deduction, {FSCD} 2020,
  June 29-July 6, 2020, Paris, France (Virtual Conference)}, ser. LIPIcs, Z.~M.
  Ariola, Ed., vol. 167.\hskip 1em plus 0.5em minus 0.4em\relax Schloss
  Dagstuhl - Leibniz-Zentrum f{\"{u}}r Informatik, 2020, pp. 16:1--16:18.
  [Online]. Available: \url{https://doi.org/10.4230/LIPIcs.FSCD.2020.16}
\BIBentrySTDinterwordspacing

\bibitem{DBLP:conf/fscd/Galal21}
\BIBentryALTinterwordspacing
------, ``A bicategorical model for finite nondeterminism,'' in \emph{6th
  International Conference on Formal Structures for Computation and Deduction,
  {FSCD} 2021, July 17-24, 2021, Buenos Aires, Argentina (Virtual Conference)},
  ser. LIPIcs, N.~Kobayashi, Ed., vol. 195.\hskip 1em plus 0.5em minus
  0.4em\relax Schloss Dagstuhl - Leibniz-Zentrum f{\"{u}}r Informatik, 2021,
  pp. 10:1--10:17. [Online]. Available:
  \url{https://doi.org/10.4230/LIPIcs.FSCD.2021.10}
\BIBentrySTDinterwordspacing

\bibitem{DBLP:conf/lics/TsukadaAO17}
\BIBentryALTinterwordspacing
T.~Tsukada, K.~Asada, and C.~L. Ong, ``Generalised species of rigid resource
  terms,'' in \emph{32nd Annual {ACM/IEEE} Symposium on Logic in Computer
  Science, {LICS} 2017, Reykjavik, Iceland, June 20-23, 2017}.\hskip 1em plus
  0.5em minus 0.4em\relax {IEEE} Computer Society, 2017, pp. 1--12. [Online].
  Available: \url{https://doi.org/10.1109/LICS.2017.8005093}
\BIBentrySTDinterwordspacing

\bibitem{DBLP:conf/lics/Olimpieri21}
F.~Olimpieri, ``Intersection type distributors,'' in \emph{{LICS}}.\hskip 1em
  plus 0.5em minus 0.4em\relax {IEEE}, 2021, pp. 1--15.

\bibitem{relevant23}
A.~Kerinec, G.~Manzonetto, and F.~Olimpieri, ``Why are proofs relevant in
  proof-relevant models?'' 2023, accepted in POPL 2023.

\bibitem{kelly1980coherence}
G.~M. Kelly and M.~L. Laplaza, ``Coherence for compact closed categories,''
  \emph{Journal of pure and applied algebra}, vol.~19, pp. 193--213, 1980.

\bibitem{DBLP:journals/tcs/Lamarche92}
\BIBentryALTinterwordspacing
F.~Lamarche, ``Quantitative domains and infinitary algebras,'' \emph{Theor.
  Comput. Sci.}, vol.~94, no.~1, pp. 37--62, 1992. [Online]. Available:
  \url{https://doi.org/10.1016/0304-3975(92)90323-8}
\BIBentrySTDinterwordspacing

\bibitem{stay2013compact}
M.~Stay, ``Compact closed bicategories,'' \emph{arXiv preprint
  arXiv:1301.1053}, 2013.

\bibitem{hoffnung2011spans}
A.~E. Hoffnung, ``Spans in 2-categories: A monoidal tricategory,'' \emph{arXiv
  preprint arXiv:1112.0560}, 2011.

\bibitem{DBLP:journals/pacmpl/Mellies19}
P.~Melli{\`{e}}s, ``Categorical combinatorics of scheduling and synchronization
  in game semantics,'' \emph{Proc. {ACM} Program. Lang.}, vol.~3, no. {POPL},
  pp. 23:1--23:30, 2019.

\bibitem{DBLP:conf/lics/BaillotDE97}
P.~Baillot, V.~Danos, T.~Ehrhard, and L.~Regnier, ``Believe it or not, ajm's
  games model is a model of classical linear logic,'' in \emph{{LICS}}.\hskip
  1em plus 0.5em minus 0.4em\relax {IEEE} Computer Society, 1997, pp. 68--75.

\bibitem{DBLP:conf/csl/CastellanCW14}
S.~Castellan, P.~Clairambault, and G.~Winskel, ``Symmetry in concurrent
  games,'' in \emph{{CSL-LICS}}.\hskip 1em plus 0.5em minus 0.4em\relax {ACM},
  2014, pp. 28:1--28:10.

\bibitem{DBLP:journals/corr/abs-2107-03155}
P.~Clairambault and H.~Paquet, ``The quantitative collapse of concurrent games
  with symmetry,'' \emph{CoRR}, vol. abs/2107.03155, 2021.

\bibitem{DBLP:conf/lics/Vial17}
P.~Vial, ``Infinitary intersection types as sequences: {A} new answer to klop's
  problem,'' in \emph{{LICS}}.\hskip 1em plus 0.5em minus 0.4em\relax {IEEE}
  Computer Society, 2017, pp. 1--12.

\bibitem{DBLP:journals/tcs/Ehrhard12}
T.~Ehrhard, ``The scott model of linear logic is the extensional collapse of
  its relational model,'' \emph{Theor. Comput. Sci.}, vol. 424, pp. 20--45,
  2012.

\bibitem{cheng2003pseudo}
E.~Cheng, M.~Hyland, and J.~Power, ``Pseudo-distributive laws,''
  \emph{Electronic Notes in Theoretical Computer Science}, vol.~83, pp.
  227--245, 2003.

\end{thebibliography}
%

\newpage
\appendix

In the following, we will type spans $A \xot{u} S \xto{v} B$ between groupoids
or thin groupoids $A$ and $B$ as $S \co A \stto B$.

\subsection{Bipullbacks}
\label{sec:bipullbacks}

In order to complete the equational definition of bipullbacks of
\Cref{def:bipullback}, it is useful to consider the intensional definition
first: given a $2$\category $C$ with invertible $2$\cells (\ie a
$(2,1)$\category) and a cospan $S \xto{u} B \xot{v} T$ in $C$, a bipullback is a
pseudocone $(P,l,r,\mu)$ as in \Cref{fig:diag1} such that, for every $X \in C$,
the precomposition of the pseudocone $(P,l,r,\mu)$ by morphisms $X \to P$
induces an equivalence of categories between $C(X,P)$ and the category of
pseudocones over the cospan $S \xto{u} B \xot{v} T$ and of vertex $X$ and
pseudocone morphisms. The essential surjectiveness of this precomposition
corresponds exactly to the condition \Cref{def:bipullback:existence} of
\Cref{def:bipullback}. Its full faithfulness can be expressed as the following
condition:
\begin{enumerate}[label=(b'),ref=(b'),left=0pt .. 17pt]
\item \label{def:bipullback:uniquenessbis} given a $2$\cell equality
  \[
    \begin{tikzcd}
      &
      X
      \ar[d,"h"{description}]
      &
      \\
      &
      P
      \ar[dl,"l"']
      \ar[dr,"r"]
      \\
      S
      \ar[dr,"u"']
      \phar[rr,"\xTo{\mu}"]&&
      T       \ar[dl,"v"]\\
      &B 	
    \end{tikzcd}
    \qeq
    \begin{tikzcd}
      & X	\ar[ddl,bend right=30, "l \circ h"',myname=lp]
      \ar[ddr,bend left=30, "r \circ h",myname=rp]
      \ar[d,"h'"{description}]\\
      & P     \ar[dl,"l"']
      \ar[dr,"r"]
      \phar[to=rp,"\xTo{\beta}"]
      \phar[from=lp,"\xTo{\alpha}"]\\
      S       \ar[dr,"u"']
      \phar[rr,"\xTo{\mu}"]&&
      T       \ar[dl,"v"]\\
      &B 	
    \end{tikzcd}
  \]
  for some $h,h' \co X \to P$ and $2$\cells $\alpha \co l \circ h \To l \circ
  h'$ and $\beta \co r \circ h' \To r \circ h$, there is a unique $\theta \co h
  \To h'$ such that $\alpha = l \theta$ and $\beta = r \finv\theta$.
\end{enumerate}
It is not too difficult to show that the latter property is equivalent to the
one asserting that, given two decompositions of a pseudocone $\nu$
\[
  \adjustbox{scale=.8}{%
    \begin{tikzcd}[csbo=3em]
      & X     \ar[ddl,bend right=30, "l'"',myname=lp]
      \ar[ddr,bend left=30, "r'",myname=rp]
      &
      \\\\
      S       \ar[dr,"u"']
      \phar[rr,"\xTo{\nu}"]&&
      T       \ar[dl,"v"]\\
      &B      
    \end{tikzcd}}
  \qeq
  \adjustbox{scale=.8}{%
    \begin{tikzcd}[csbo=3em]
      & X	\ar[ddl,bend right=30, "l'"',myname=lp]
      \ar[ddr,bend left=30, "r'",myname=rp]
      \ar[d,"h"{description}]
      &
      \\
      & P     \ar[dl,"l"']
      \ar[dr,"r"]
      \phar[to=rp,"\xTo{\beta}"]
      \phar[from=lp,"\xTo{\alpha}"]\\
      S       \ar[dr,"u"']
      \phar[rr,"\xTo{\mu}"]&&
      T       \ar[dl,"v"]\\
      &B 	
    \end{tikzcd}}
  \qeq
  \adjustbox{scale=.8}{%
    \begin{tikzcd}[csbo=3em]
      & X	\ar[ddl,bend right=30, "l'"',myname=lp]
      \ar[ddr,bend left=30, "r'",myname=rp]
      \ar[d,"h'"{description}]
      &
      \\
      & P     \ar[dl,"l"']
      \ar[dr,"r"]
      \phar[to=rp,"\xTo{\beta'}"]
      \phar[from=lp,"\xTo{\alpha'}"]\\
      S       \ar[dr,"u"']
      \phar[rr,"\xTo{\mu}"]&&
      T       \ar[dl,"v"]\\
      &B 	
    \end{tikzcd}}
\]
there exists a unique $\theta \co h \To h'$ such that $l \theta = \alpha' \circ
\finv\alpha$ and $r \theta = \finv{\beta'} \circ \beta$, or equivalently
$\alpha' = (l\theta) \circ \alpha$ and $\beta' = \beta \circ (r \finv\theta)$,
which is the complete form of the condition \Cref{def:bipullback:uniqueness}.

\subsection{Pullbacks in $\Gpd$}
\label{sec:pbs-in-gpd}

It happens that pullbacks in $\Gpd$ are well-behaved w.r.t $2$\cells:
\begin{proposition}
  \label{prop:gpd-1pb-is-2pb}
  A pullback of a cospan $S \xto{u} B \xot{v} T$ in $\Gpd$ is a strict $2$\pullback,
  that is, also admits a universal factorization property w.r.t. morphisms of cones.
\end{proposition}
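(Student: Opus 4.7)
The plan is to unfold the explicit construction of the pullback in $\Gpd$ and verify the $2$-dimensional universal property by hand, pointwise. Write $P = S \times_B T$ for the standard pullback, whose objects are pairs $(s,t)$ with $u(s) = v(t)$ and whose morphisms $(s,t) \to (s',t')$ are pairs $(\varphi,\psi) \in S(s,s') \times T(t,t')$ with $u\varphi = v\psi$; the legs $l \co P \to S$ and $r \co P \to T$ are the projections. The $1$\dimensional universal property is immediate from this description: a commuting cone $(l',r')$ on a groupoid $X$ determines uniquely the functor $h \co X \to P$ given by $h(x) = (l'(x), r'(x))$ on objects and $h(\xi) = (l'(\xi), r'(\xi))$ on morphisms.

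For the $2$\dimensional universal property, consider two factorizations $h_1, h_2 \co X \to P$ coming from cones $(l_1', r_1')$ and $(l_2', r_2')$, together with $2$\cells $\alpha \co l'_1 \To l'_2$ and $\beta \co r'_1 \To r'_2$ satisfying $u\alpha = v\beta$. For each $x \in X$, we have $h_i(x) = (l_i'(x), r_i'(x))$, and $\alpha_x \in S(l_1'x, l_2'x)$, $\beta_x \in T(r_1'x, r_2'x)$ satisfy $u\alpha_x = v\beta_x$ by hypothesis. Hence $\theta_x := (\alpha_x, \beta_x)$ is a well-defined morphism $h_1(x) \to h_2(x)$ in $P$. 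Naturality of $\theta$ in $x$ follows componentwise from naturality of $\alpha$ and $\beta$, and the equalities $l\theta = \alpha$, $r\theta = \beta$ hold by construction of the projections.

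Uniqueness is the easy side: any $\theta' \co h_1 \To h_2$ with $l\theta' = \alpha$ and $r\theta' = \beta$ must, componentwise, be a morphism in $P$ whose two projections are fixed to be $\alpha_x$ and $\beta_x$; since morphisms of $P$ are \emph{pairs}, this forces $\theta'_x = (\alpha_x, \beta_x) = \theta_x$. Combining this with the $1$\dimensional universal property yields the claimed isomorphism of categories $\Gpd(X,P) \iso \Gpd(X,S) \times_{\Gpd(X,B)} \Gpd(X,T)$.

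The argument has no real obstacle: the only point that deserves care is checking that the compatibility $u\alpha = v\beta$ is exactly what makes $(\alpha_x,\beta_x)$ a morphism of $P$ rather than just an element of the cartesian product $S(l_1'x,l_2'x) \times T(r_1'x,r_2'x)$. Everything else is a routine verification relying on the fact that morphisms in the strict pullback of groupoids are defined as strictly commuting pairs, so the $2$\cell data can be handled as if it were $1$\cell data one dimension up.
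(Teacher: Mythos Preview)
Your proof is correct. The paper takes a different, more conceptual route: rather than unfolding the pullback explicitly, it observes that a natural transformation $\alpha \co F \Rightarrow G \co C \to D$ in $\Gpd$ is the same data as a functor $I \times C \to D$, where $I$ is the walking isomorphism, restricting to $F$ and $G$ at the two endpoints. The $2$\dimensional universal property for cones with vertex $X$ then follows by applying the $1$\dimensional universal property with $I \times X$ in place of $X$. Your direct verification has the virtue of being entirely self-contained and making the pairing $\theta_x = (\alpha_x,\beta_x)$ explicit; the paper's argument is shorter and isolates the structural reason --- $\Gpd$ has an interval object representing $2$\cells --- which generalises at once to any $2$\category with that property.
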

\begin{proof}
  Let $I$ be the groupoid consisting of a walking isomorphism $u$ between two
  objects $0$ and $1$. Given two functors $F$ and $G$ between two groupoids $C$
  and $D$, a $2$\cell
  \[
    \alpha \co F \To G \co C \to D \in \Gpd
  \]
  is then exactly the data of a functor $H \co I \times C \to D$ such that
  $H(0,-) = F$ and $H(1,-) = G$. Using this correspondence, the property that a
  pullback $(P,l,r)$ over the cospan is a $2$\pullback easily reduces to the one
  that $(P,l,r)$ is a $1$\pullback.
\end{proof}
Note that a pullback is a cone, which is in particular a pseudocone with
identity as inner $2$\cell. One might then ask when a pullback is a bipullback,
in which case we have the following characterization:
\begin{proposition}
  \label{prop:charact-pb-bpb-ess-surj}
  Let $(P,l,r)$ be a pullback over a cospan $S \xto{u} B \xot{v} T$ in $\Gpd$.
  Then the pseudocone induced by $(P,l,r)$ is a bipullback if and only if it
  satisfies the condition \Cref{def:bipullback:existence} of
  \Cref{def:bipullback}.
\end{proposition}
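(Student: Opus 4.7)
The plan is to recognise one direction as definitional and reduce the other to the strict $2$\pullback property already established in \Cref{prop:gpd-1pb-is-2pb}. The forward direction is immediate: by definition, a bipullback satisfies both conditions of \Cref{def:bipullback}, and in particular \Cref{def:bipullback:existence}.

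For the converse, I assume $(P, l, r)$ is a pullback of the cospan $S \xto{u} B \xot{v} T$ in $\Gpd$ whose induced pseudocone (with identity inner $2$\cell $\mu = \id$) satisfies \Cref{def:bipullback:existence}, and I verify the uniqueness condition \Cref{def:bipullback:uniqueness} in the equivalent form \Cref{def:bipullback:uniquenessbis} spelled out in \Cref{sec:bipullbacks}. So I consider $h, h' \co X \to P$ together with $\alpha \co l \circ h \To l \circ h'$ and $\beta \co r \circ h' \To r \circ h$ satisfying the $2$\cell equation displayed there. Tracing through that equation and using $\mu = \id$, it simplifies to
\[
  (v \cdot \beta) \circ (u \cdot \alpha) = \id_{u \circ l \circ h},
\]
or equivalently $u \cdot \alpha = v \cdot \finv{\beta}$.

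This is exactly the compatibility needed to invoke the strict $2$\pullback property of \Cref{prop:gpd-1pb-is-2pb}: the pair of $2$\cells $(\alpha, \finv{\beta})$ factors as $\alpha = l \theta$ and $\finv{\beta} = r \theta$ for a unique $\theta \co h \To h'$. This $\theta$ satisfies $l\theta = \alpha$ and $r \finv{\theta} = \beta$, which is exactly the data demanded by \Cref{def:bipullback:uniquenessbis}, concluding the proof. No serious obstacle arises once one observes that the inner $2$\cell of a strict pullback is trivial: the bipullback uniqueness condition is then absorbed by the strict $2$\pullback universal property of $\Gpd$, and the only minor bookkeeping is the direction of $\beta$, which must be inverted before the $2$\pullback universal property is applied.
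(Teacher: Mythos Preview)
Your proof is correct and follows the same approach as the paper: the forward direction is trivial, and for the converse you invoke \Cref{prop:gpd-1pb-is-2pb} to obtain condition \Cref{def:bipullback:uniquenessbis}. The paper's proof is terser, simply asserting that the strict $2$\pullback property ``is in fact exactly'' \Cref{def:bipullback:uniquenessbis}; you spell out this identification explicitly (including the simplification $u\cdot\alpha = v\cdot\finv\beta$ coming from $\mu = \id$ and the inversion of $\beta$), which is a useful elaboration but not a different argument.
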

\begin{proof}
  By \Cref{prop:gpd-1pb-is-2pb}, $(P,l,r)$ is a $2$\pullback, so that it
  satisfies a universal property w.r.t. cone morphisms. This condition is in
  fact exactly \Cref{def:bipullback:uniquenessbis} which is equivalent to
  \Cref{def:bipullback:uniqueness}. Thus, only \Cref{def:bipullback:existence}
  is left to check for $(P,l,r)$ to be a bipullback. 
\end{proof}
We can then refine the previous proposition into a \eq{pointwise}
characterization in the form of already stated \Cref{lem:carac_pb_bipb} for
which we now provide a proof.
\begin{proof}[Proof of \Cref{lem:carac_pb_bipb}]
  The implication is immediate, since the data of an isomorphism $\theta \co
  f(s) \to g(t)$ is equivalent to the one of a pseudocone on \tikzcdin{S
    \ar[r,"f"] \& B \& T \ar[l,"g"']} of vertex the terminal groupoid, whose
  factorization in the form of the condition of
  \Cref{prop:charact-pb-bpb-ess-surj} is exactly the wanted property.

  We now show the converse property. So let
  \[
    \begin{tikzcd}
      &
      Z
      \ar[ld,"h"',bend right=20]
      \ar[rd,"k",bend left=20]
      &
      \\[1em]
      S
      \ar[rd,"f"']
      \cphar[rr,"\xTo\theta"{yshift=1em}]
      &&
      T
      \ar[ld,"g"]
      \\
      &
      B
    \end{tikzcd}
  \]
  be a pseudocone. By hypothesis, for every $z\in Z$, there exist $s_z \in S$,
  $\phi_z \co h(z) \to s_z$, $t_z \in T$, $\psi_z \co t_z \to k(z)$ such that
  $f(s_z) = g(t_z)$ and $\theta_z = g(\psi_z) \circ f(\phi_z)$. The
  collection of isomorphisms $(\phi_z)_{z \in Z}$ induces a functor $h'$ defined
  by $h'(z) = s_z$ for $z \in Z$, and $h'(w) = \phi_{z'} \circ h(w) \circ
  \finv \phi_z$ for $w \co z \to z' \in Z$. Similarly, we get a functor $k'$
  defined by $h'(z) = t_z$ for $z \in Z$, and $k'(w) = \finv \psi_{z'} \circ
  k(w) \circ \psi_z$ for $w \co z \to z' \in Z$. Given $w \co z \to z' \in Z$, we
  check that $f \circ h'(w)$ and $g \circ k'(w)$ are equal. Since there
  are only isomorphisms involved, it is enough to check that the equality holds
  when in the context $g(\psi_{z'}) \circ (-) \circ f(\phi_z)$:
  \begin{align*}
    g(\psi_{z'}) \circ f (h'(w))\circ f(\phi_z) 
    &=
      g(\psi_{z'}) \circ f(h'(w) \circ \phi_z) 
    \\
    &=
      g(\psi_{z'}) \circ f(\phi_{z'} \circ h(w)) 
    \\
    &=
      g(\psi_{z'}) \circ f(\phi_{z'}) \circ f(h(w)) 
    \\
    &=
      \theta_{z'} \circ f(h(w)) 
    \\
    &=
      g(k(w)) \circ \theta_{z} 
    \\
    &=
      g(k(w)) \circ  g(\psi_{z}) \circ f(\phi_{z})
    \\
    &=
      g(\psi_{z'}) \circ  g(k'(w)) \circ f(\phi_{z})\zbox.
  \end{align*}
  Thus, $(Z,h',k')$ is a cone on \tikzcdin{S \ar[r,"f"] \& B \& T \ar[l,"g"']},
  so there exists $m \co Z \to P$ which factors $h'$ and $k'$ through $\pl$ and
  $\pr$. The collections $(\phi_z)_{z\in Z}$ and $(\psi_z)_{z\in Z}$ defines
  natural isomorphisms $\phi \co h \To \pl \circ m$ and $\psi \co \pr \circ m
  \To k$ which satisfy $(g \psi) \circ (f \phi) = \theta$. Hence, the condition
  of \Cref{prop:charact-pb-bpb-ess-surj} is satisfied and $(P,\pl,\pr)$ is a
  bipullback.
\end{proof}
We also have the following criterion for rectangles of bipullbacks:
\begin{lemma}
  \label{lem:rect-left-square-bipullback}
  Given a rectangle made of two squares which are pullbacks in~$\Gpd$ as in
  \[
    \begin{tikzcd}
      L
      \ar[rd,phantom,very near start,"\drcorner"]
      \ar[r,"\pi^L_M"]
      \ar[d,"\pi^L_A"']
      &
      M
      \ar[rd,phantom,very near start,"\drcorner"]
      \ar[r,"\pi^M_R"]
      \ar[d,"\pi^M_B"{description}]
      &
      R
      \ar[d,"h"]
      \\
      A
      \ar[r,"f"']
      &
      B
      \ar[r,"g"']
      &
      C
    \end{tikzcd}
    \zbox,
  \]
  the following hold:
  \begin{enumerate}[(i)]
  \item \label{lem:rect-left-square-bipullback:rect-to-square} if the whole
    rectangle is a bipullback, then the left square is too;
    
  \item \label{lem:rect-left-square-bipullback:squares-to-rect} if the left and
    right square are bipullback, then the whole rectangle is a bipullback.
  \end{enumerate}
\end{lemma}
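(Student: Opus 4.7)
The plan is to apply \Cref{lem:carac_pb_bipb}, reducing both statements to the existence of solutions to pointwise reindexing problems. The key is to transfer such problems between the three squares, exploiting that the given squares are strict pullbacks and not merely bipullbacks, which lets one pass freely between a matched pair in the base and a unique lift to the pullback apex.

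For part (ii), I start from a reindexing problem $(a, r, \theta)$ for the rectangle, with $a \in A$, $r \in R$ and $\theta \co gf(a) \to h(r)$. The right square, being a bipullback, applied to $(f(a), r, \theta)$ yields morphisms $\varphi_1 \co f(a) \to b'$ in $B$ and $\psi \co r' \to r$ in $R$ with $g(b') = h(r')$ and $\theta = h(\psi) \circ g(\varphi_1)$. Since $M$ is the strict pullback of the right cospan, the matched pair $(b', r')$ lifts uniquely to an element $m' \in M$. The left square, also a bipullback, applied to $(a, m', \varphi_1)$, yields $\varphi_2 \co a \to a^*$ in $A$ and $\chi \co m^* \to m'$ in $M$ with $f(a^*) = \pi^M_B(m^*)$ and $\varphi_1 = \pi^M_B(\chi) \circ f(\varphi_2)$. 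The pair $(a^*, m^*)$ lifts to some $l^* \in L$, and setting $\psi_R := \psi \circ \pi^M_R(\chi)$ provides the $R$-component of the solution. A short calculation using the commutativity $g \circ \pi^M_B = h \circ \pi^M_R$ of the right square then verifies $\theta = h(\psi_R) \circ gf(\varphi_2)$.

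For part (i), I start from a reindexing problem $(a, m, \beta)$ for the left square, with $\beta \co f(a) \to \pi^M_B(m)$ in $B$. Pushing $\beta$ through $g$ and using the commutativity of the right square yields a rectangle reindexing problem $(a, \pi^M_R(m), g(\beta))$. The rectangle's bipullback property then provides $\varphi \co a \to a^*$ in $A$ and $\psi \co r^* \to \pi^M_R(m)$ in $R$, together with some $l^* \in L$ whose $M$-projection $m^* := \pi^L_M(l^*)$ satisfies $\pi^M_B(m^*) = f(a^*)$ and $\pi^M_R(m^*) = r^*$. To lift $\psi$ back to a morphism in $M$, I set $\beta' := \beta \circ \finv{f(\varphi)} \co \pi^M_B(m^*) \to \pi^M_B(m)$; the equation $g(\beta) = h(\psi) \circ gf(\varphi)$ then rearranges to $g(\beta') = h(\psi)$, so the strict pullback universal property of the right square produces a unique $\chi \co m^* \to m$ in $M$ with $\pi^M_B(\chi) = \beta'$ and $\pi^M_R(\chi) = \psi$. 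The pair $(\varphi, \chi)$ is the desired reindexing for the left square.

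I expect the main obstacle to be the bookkeeping in part (i): one must carefully combine the bipullback factorization obtained from the rectangle with the strict (on-the-nose) universal property of the right square, rather than its bipullback version, in order to produce a single $\chi \in M$ whose $B$- and $R$-components match the rearranged data exactly. This delicate interplay between strict and weak universal properties is what gives the statement its content, whereas part (ii) is essentially diagrammatic pasting.
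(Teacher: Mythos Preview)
Your proof is correct and follows essentially the same approach as the paper: both parts are reduced via \Cref{lem:carac_pb_bipb} to pointwise reindexing problems, and in part~(i) the key step---rearranging the rectangle solution and using the strict pullback property of the right square to lift the pair $(\beta',\psi)$ to a morphism in $M$---is exactly what the paper does. The only cosmetic difference is that you explicitly track the lift $l^*\in L$, whereas the paper leaves this implicit.
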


\begin{proof}
  We first prove \ref{lem:rect-left-square-bipullback:rect-to-square}. For this
  purpose, we use \Cref{prop:charact-pb-bpb-ess-surj}. So let $a \in A$, $m \in
  M$ and $\theta \co f(a) \to \pi^M_B(m)$. We have that $g(\theta)$ is a
  morphism from $g\circ f(a)$ to $g\circ \pi^M_B(m) = h\circ \pi^M_R(m)$. Since
  the outer rectangle is assumed to be a bipullback, there exist $a' \in A$, $r'
  \in R$, $u^A \co a \to a' \in A$, $v^R \co r' \to \pi^M_R(m)$ such that
  $g(\theta) = h(v^R) \circ (g\circ f)(u^A)$. Thus, we have $g(\theta \circ
  \finv{(f(u^A))}) = h(v^R)$. Since the right square is a pullback, there exists
  a unique $w^M$ such that $\pi^M_B(w^M) = \theta \circ \finv{(f(u^A))}$ and
  $\pi^M_R(w^M) = v^R$. Moreover, by the right pullback again, the target of
  $w^M$ is $m$; its source is some $m'$ such that $\pi^M_B(m') = f(a')$. Then,
  we have that $\theta = \pi^M_B(w^M) \circ f(u^A)$. We can conclude with
  \Cref{prop:charact-pb-bpb-ess-surj} that the left pullback is a bipullback.

  We now prove \ref{lem:rect-left-square-bipullback:squares-to-rect} using
  \Cref{prop:charact-pb-bpb-ess-surj} again. So let $a \in A$, $r \in R$ and
  $\theta \co (g\circ f)(a) \to h(r)$. Since the right square is a bipullback,
  there exist $b' \in B$, $r' \in R$, $u^B \co f(a) \to b'$ and $v^R \co r' \to
  r$ such that $\theta = h(v^R) \circ g(u^B)$. Since $b'$ and $r'$ have the same
  projection in $C$ through $g$ and $h$ respectively, there exists $m' \in M$
  such that $\pi^M_B(m') = b'$ and $\pi^M_R(m') = r'$. Thus, we have $u^B \co
  f(a) \to \pi^M_B(m')$. Since the left square is a bipullback, there exist $a''
  \in A$, $m'' \in M$, $\tilde u^A \co a \to a''$, $\tilde v^M \co m'' \to m'$
  such that $u^B = \pi^M_B(\tilde v^M) \circ f(\tilde u^A)$. We thus have
  \begin{align*}
    \theta
    &=
    h(v^R) \circ g(u^B)
    =
    h(v^R) \circ g(\pi^M_B(\tilde v^M) \circ f(\tilde u^A))
      \\
    &=
    h(v^R) \circ g(\pi^M_B(\tilde v^M)) \circ g(f(\tilde u^A))
    \\
    &
    =
    h(v^R) \circ h(\pi^M_R(\tilde v^M)) \circ g(f(\tilde u^A))
    \\
    &
    =
    h(v^R \circ \pi^M_R(\tilde v^M)) \circ (g\circ f)(\tilde u^A)
  \end{align*}
  which is precisely the factorization required by
  \Cref{prop:charact-pb-bpb-ess-surj} to conclude that the whole rectangle is a
  bipullback.
\end{proof}

\subsection{Uniformity and thinness}
\label{sec:unif-and-thinness}

Several arguments concerning uniformity requires some sort of diagram chasing
relative to bipullbacks. An important lemma for this is the following:
\begin{lemma}
  \label{lem:pb-bpb-three-tiles}
  Consider the diagram in $\Gpd$
  \[
    \begin{tikzcd}[rsbo=3em,csbo=4em]
      &&
      S
      \ar[ld,"\pl^{S}_{P}"']
      \ar[rd,"\pr^{S}_{Q}"]
      \phar[dd,"2"]
      \cphar[dd,near start,"\dcorner"]
      &&
      \\
      &
      P
      \ar[ld,"\pl^{P}_{L}"']
      \ar[rd,"\pr^{P}_{M}"{description}]
      \phar[dd,"1"]
      \cphar[dd,near start,"\dcorner"]
      &&
      Q
      \ar[ld,"\pl^{Q}_{M}"{description}]
      \ar[rd,"\pr^{Q}_{R}"]
      \phar[dd,"3"]
      \cphar[dd,near start,"\dcorner"]
      \\
      L
      \ar[rd,"f^L_A"']
      &&
      M
      \ar[ld,"{f^M_A}"]
      \ar[rd,"{f^M_B}"']
      &&
      R
      \ar[ld,"f^R_B"]
      \\
      &
      A
      &&
      B
    \end{tikzcd}
  \]
  where the square $1$, $2$ and $3$ are pullbacks, and derive from it the
  following diagram using the product structure:
  \[
    \begin{tikzcd}[column sep={5em,between origins}]
      &
      S
      \ar[dl,"\pr^{P}_M \circ \pl^S_{P}"']
      \ar[dr,"\gpdprodfactp {\pl^{P}_L \circ \pl^S_{P},\pr^{Q}_R \circ \pr^S_{Q}}"]
      \phar[dd,"4"]
      &
      \\
      M
      \ar[dr,"\gpdprodfactp{f^M_A,f^M_B}"']
      &&
      L \times R
      \ar[dl,"f^L_A\times f^R_B"]
      \\
      &
      A \times B
    \end{tikzcd}
  \]
  Then $4$ is a pullback.
  Moreover, the following hold:
  \begin{enumerate}[(i)]
  \item \label{lem:pb-bpb-three-tiles:right} if $1$ and the rectangle made of
    $2$ and $3$ are bipullbacks, then $4$ is a bipullback;
  \item \label{lem:pb-bpb-three-tiles:left} if $3$ and the rectangle made of
    $1$ and $2$ are bipullbacks, then $4$ is a bipullback;
  \item \label{lem:pb-bpb-three-tiles:mixed} if $4$ is a bipullback, then the
    rectangle made of $1$ and $2$ (\resp $2$ and $3$) is a bipullback.
  \end{enumerate}
\end{lemma}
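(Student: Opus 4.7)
The plan is to treat the four claims using \Cref{lem:carac_pb_bipb} for the
bipullback parts and direct universal-property reasoning for the plain pullback
claim. First, to see that square $4$ is a pullback, I would take a test cone with maps $u \co X \to M$ and $v = \tuple{v_L, v_R} \co X \to L \times R$ matching through $A \times B$: the compatibility condition splits by the product into $f^L_A \circ v_L = f^M_A \circ u$ and $f^R_B \circ v_R = f^M_B \circ u$, so by the universal properties of squares $1$ and $3$ I get unique factorizations $X \to P$ and $X \to Q$ which agree on $M$ by construction, and then square $2$ assembles them into the required $X \to S$, with uniqueness inherited from each step.

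For \ref{lem:pb-bpb-three-tiles:right}, given a reindexing problem for $4$ consisting of $m \in M$, $(l,r) \in L \times R$ and $\theta_A \vdash \theta_B \co (f^M_A m, f^M_B m) \to (f^L_A l, f^R_B r)$, the idea is an interactive reindexing in two stages. First, the bipullback of $1$ applied to $\theta_A$ yields $\varphi_1 \co m \to m_1$ in $M$ and $\psi_1 \co l_1 \to l$ in $L$ matching on the nose at $A$, together with a lift $p_1 \in P$. Next, I transport $\theta_B$ through $f^M_B(\varphi_1)$ to obtain the problem $\theta_B \circ f^M_B(\varphi_1)^{-1} \co f^M_B(\pl^P_M p_1) \to f^R_B(r)$, which is exactly a reindexing problem for the rectangle $(2+3)$ viewed as a bipullback of $P \xto{f^M_B \circ \pl^P_M} B \xot{f^R_B} R$; its solution reindexes $p_1$ to some $p_2$ and $r$ to some $r_2$ matching on the nose at $B$. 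Finally the plain pullbacks $3$ and $2$ successively assemble $(\pl^P_M p_2, r_2)$ into $q_2 \in Q$ and then $(p_2, q_2)$ into $s_2 \in S$, and a direct check using the commutativity of square $1$ to cancel the auxiliary $A$-reindexing shows the composite solves the original problem. Clause \ref{lem:pb-bpb-three-tiles:left} follows by an entirely symmetric argument, starting with the bipullback of $3$ on $\theta_B$ and then invoking the rectangle $(1+2)$.

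For \ref{lem:pb-bpb-three-tiles:mixed}, I would show that rectangle $(1+2)$ is a bipullback (the case $(2+3)$ being symmetric). Given a reindexing problem $l \in L$, $q \in Q$, $\theta \co f^L_A l \to f^M_A \pl^Q_M q$ for $(1+2)$, I decompose $q$ through pullback $3$ into $(m, r)$ with $f^M_B m = f^R_B r$ and form the reindexing problem $(m, (l,r), (\finv\theta, \id))$ for square $4$. The bipullback of $4$ supplies $\varphi \co m \to m'$ and $(\psi_L, \psi_R) \co (l',r') \to (l,r)$ matching on the nose; the $B$-component of the compatibility equation forces $f^R_B \psi_R = f^M_B \finv\varphi$, so $(\finv\varphi, \psi_R)$ lifts through pullback $3$ to a morphism $\psi_Q \co q' \to q$ in $Q$, which together with $\finv{\psi_L} \co l \to l'$ furnishes the desired reindexing solution for $(1+2)$.

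The main obstacle I anticipate is the careful management of morphism orientations and inversions, particularly in \ref{lem:pb-bpb-three-tiles:mixed}, where one must flip the direction of the given $\theta$ to encode it as a reindexing problem for square $4$ and then invert the resulting morphisms to express the answer in the direction required for the rectangle; and in \ref{lem:pb-bpb-three-tiles:right}, where one must exploit the invertibility in $\Gpd$ together with the commutativity of square $1$ to cleanly cancel the intermediate $A$-reindexing during the final compatibility verification.
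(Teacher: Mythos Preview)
Your plan is correct and follows the same approach as the paper: use the universal properties of squares $1$, $2$, $3$ for the plain pullback claim, and use \Cref{lem:carac_pb_bipb} to reduce the bipullback claims to solving reindexing problems. The paper's own proof is in fact only a two-line sketch (``an easy consequence of the fact that $1,2,3$ are pullbacks'' and ``use \Cref{lem:carac_pb_bipb} without too much trouble''), so you have essentially filled in the intended details, and your two-stage reindexing for \ref{lem:pb-bpb-three-tiles:right} together with the cancellation via commutativity of square $1$ is exactly the right mechanism.

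Two small remarks. In your step for \ref{lem:pb-bpb-three-tiles:right}, once the rectangle $(2{+}3)$ has produced $p_2$ and $r_2$ matching on the nose at $B$, you already have $s_2 \in S$ directly since $S$ is the pullback of $P \to B \leftarrow R$; your separate assembly through $3$ then $2$ is not wrong but is a detour. Also, in that same sentence you wrote $\pl^P_M p_2$ where $\pr^P_M p_2$ is meant.
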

\begin{proof}
  The fact that 4 is a pullback is an easy consequence of the fact that 1,2,3
  are pullbacks.

  One can then use \Cref{lem:carac_pb_bipb} without too much trouble on the
  different bipullback hypotheses in order to deduce that the wanted pullbacks
  are bipullbacks.
\end{proof}

With the above tool, we can now prove \Cref{prop:carac_lin}.
\begin{proof}[Proof of \Cref{prop:carac_lin}] We first prove the first
  implication, and start by showing \Cref{prop:carac_lin:pres}. So let
  $(S,\display^S_A) \in \SU$. Given $(U,\display^U_B) \in \U_B^\perp$, we must
  show that $T@S \perp U$. By hypothesis, we have that $T \perp S \times U$, \ie
  the pullback of $\display^T_{A \times B}$ and $\display^S_A \times
  \display^U_B$ is a bipullback. Thus, we conclude by
  \Cref{lem:pb-bpb-three-tiles}\ref{lem:pb-bpb-three-tiles:mixed} that the
  pullback of $\display^{T@S}_B$ and $\display^U_B$ is a bipullback, \ie
  $T@S \perp U$. Hence, $T@S \in \U_B$.

  We now show \Cref{prop:carac_lin:anti-unif}. Since $\id_B$ is an isofibration,
  we have that the pullback of $\display^T_B$ and $\id_B$ is a bipullback. Thus,
  given $U \in \U_A$, by \Cref{lem:pb-bpb-three-tiles}, we have that
  $\display^T_A \perp \display^U_A$ if and only if $\display^T_{A\times B} \perp
  \display^U_A \times \id_B$. But the latter holds, since $T \in \U_{A \lin B}$.
  Hence, $\display^T_A \in \U_A^\perp$.

  We now show the converse implication. So assume that $T$ satisfies
  \Cref{prop:carac_lin:pres,prop:carac_lin:anti-unif}. First note that, since
  $(A,\id_A) \in \U_A$, we have that $\display^T_B \in \U_B^\perp$ by
  \Cref{prop:carac_lin:pres}. Given $V \in \U_B$, we must show that, for every
  $U \in \U_A$, we have $T \perp U \times V$. Since we have $\display^T_B \perp
  \display^V_B$, by \Cref{lem:pb-bpb-three-tiles}, this is equivalent to have $U
  \perp \dual T@V$ for every $U \in \U_A$. By hypothesis, it is equivalent to
  only check the previous condition for $U \in S$. By
  \Cref{lem:pb-bpb-three-tiles} again, it is equivalent to check that $T \perp U
  \times V$ for every $U \in S$. Since $U \perp \display^T_A$, by
  \Cref{lem:pb-bpb-three-tiles} again, it is equivalent to check that $T@U \perp
  V$, but the latter holds by~\ref{prop:carac_lin:pres}. Thus, $T \in \U_{A \lin
    B}$.
\end{proof}
Using \Cref{prop:carac_lin}, we can prove the compatibility of uniformity with
composition:
\begin{proposition}
  \label{prop:compat-thinness-composition}
  Given uniform groupoids $(A,\U_A)$ and $(B,\U_B)$ and prestrategies $S \in
  \U_{A \lin B}$ and $T \in \U_{B \lin C}$, we have $T \sthc S \in \U_{A \lin C}$.
\end{proposition}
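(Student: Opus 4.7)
The plan is to apply Proposition \ref{prop:carac_lin} to $T\sthc S$ with the generating class $\SU = \U_A$, which reduces the goal to verifying the two conditions:
\emph{(i)} for all $U \in \U_A$, $(T \sthc S)@U \in \U_C$, and
\emph{(ii)} $(T \sthc S, \display^{T\sthc S}_A) \in \U_A^\perp$.
Note the display maps satisfy $\display^{T\sthc S}_A = \display^S_A \circ \pl$ and $\display^{T\sthc S}_C = \display^T_C \circ \pr$ by construction of composition.

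For \emph{(i)}, fix $U \in \U_A$. By associativity of pullback (via the universal property), the prestrategy $(T \sthc S)@U$ is canonically isomorphic to $T@(S@U)$, both underlying groupoids being the iterated pullback $U \times_A S \times_B T$ with rightmost display to $C$. Now, applying Proposition \ref{prop:carac_lin} to $S \in \U_{A\lin B}$ with $\SU = \U_A$, condition \emph{(1)} yields $S@U \in \U_B$. Then applying Proposition \ref{prop:carac_lin} to $T \in \U_{B\lin C}$ with $\SU = \U_B$, condition \emph{(1)} again yields $T@(S@U) \in \U_C$. Since $\U_C = \U_C^{\perp\perp}$ is closed under isomorphism, we conclude $(T\sthc S)@U \in \U_C$.

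For \emph{(ii)}, fix $V \in \U_A$; we must show that the pullback of $\display^V$ and $\display^{T\sthc S}_A$ is a bipullback. We instantiate Lemma \ref{lem:pb-bpb-three-tiles} with $L = V$, $M = S$, $R = T$, taking $f^L_A = \display^V$, $f^M_A = \display^S_A$, $f^M_B = \display^S_B$, $f^R_B = \display^T_B$. The desired pullback is then identified with the square labelled $4$ in the lemma, up to the canonical reassociation exchanging $(v,(s,t))$ with $(s, v, t)$. We verify the hypotheses of case \emph{(i)} of Lemma \ref{lem:pb-bpb-three-tiles}: Square $1$ (pullback of $V$ and $S$ over $A$) is a bipullback because $V \in \U_A$ and, by condition \emph{(2)} of Proposition \ref{prop:carac_lin} applied to $S$, we have $(S, \display^S_A) \in \U_A^\perp$. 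The rectangle $2+3$ is the pullback over $B$ of the prestrategy $S@V$ (on $B$) with $T$; here $S@V \in \U_B$ by condition \emph{(1)} of Proposition \ref{prop:carac_lin} applied to $S$, and $(T, \display^T_B) \in \U_B^\perp$ by condition \emph{(2)} applied to $T$, so $S@V \perp T$ and the rectangle is a bipullback. Lemma \ref{lem:pb-bpb-three-tiles}\emph{(i)} then yields that Square $4$ is a bipullback, as required.

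The main obstacle is the bookkeeping in part \emph{(ii)}: one must correctly set up the three-tile configuration and identify the pullback of $V$ with $T\sthc S$ over $A$ with Square $4$ in the lemma. Once this is done, everything reduces to two direct instances of each of the two conditions of Proposition \ref{prop:carac_lin}, one for $S$ and one for $T$.
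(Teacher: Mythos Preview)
Your overall strategy matches the paper's: reduce to the two conditions of Proposition~\ref{prop:carac_lin} with $\SU = \U_A$. Part~\emph{(i)} is exactly what the paper does.

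For part~\emph{(ii)} there is a small but genuine gap. You claim that the pullback of $V$ and $T\sthc S$ over $A$ ``is identified with'' Square~4 of Lemma~\ref{lem:pb-bpb-three-tiles}. This is not right: the two squares share the same apex (the triple pullback $V\times_A S\times_B T$), but Square~4 lives over the base $A\times B$, whereas the square you need is over $A$ alone. The pullback you actually want is the rectangle formed by Squares~1 and~2 (with your instantiation, that rectangle has $L=V$ on one side and $Q = T\sthc S$ on the other, meeting over $A$). So after invoking case~\emph{(i)} of the lemma to get that Square~4 is a bipullback, you must then apply case~\emph{(iii)} to conclude that the rectangle $1{+}2$ is a bipullback. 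With that extra step your argument is complete.

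For comparison, the paper handles \emph{(ii)} more abstractly and avoids the three-tile chase entirely: it observes that $(T\sthc S,\display^{T\sthc S}_A)$ is precisely $S^\star @ (T,\display^T_B)$, then uses that $(T,\display^T_B)\in\U_B^\perp$ (condition~\emph{(2)} for $T$) together with the dual of Proposition~\ref{prop:carac_lin} applied to $S^\star \in \U_{B^\perp \lin A^\perp}$, which says $S^\star$ sends $\U_B^\perp$ into $\U_A^\perp$. Your route is a valid unfolding of this, just one lemma-invocation short.
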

\begin{proof}
  Recall that the composition of the two spans $S$ and $T$ is formed as in
  \Cref{eq:pbcomp}. We show the uniformity of $T \sthc S$ using
  \Cref{prop:carac_lin} with $\U_A$ taken as generating class of $\U_A$. The
  fact that \Cref{prop:carac_lin:pres} is satisfied for $T \sthc S$ is immediate
  from its validity for both $S$ and $T$. The fact that
  \Cref{prop:carac_lin:anti-unif} holds, that is, that $\display^S_A \circ \pl
  \in \U_A^\perp$, is a consequence of the fact that $\display^T_B \in
  \U_B^\perp$ by \Cref{prop:carac_lin:anti-unif} on $T$, and the dual of
  \Cref{prop:carac_lin} for $S$, asserting in particular that $\dual S$ maps
  elements of $\U_B^\perp$ to $\U_A^\perp$.
\end{proof}
We handle thinness similarly, and start by proving \Cref{prop:carac_lin_thin}:

\begin{proof}[Proof of \Cref{prop:carac_lin_thin}]
  Assume that $T \in \T_{A \lin B}$ and let $S \in \SU$. By
  \Cref{prop:carac_lin}, we already have $T@S \in \U_{A \lin B}$. Next, we use
  the fact that $\T_B = \T_B^{\pperp\pperp}$ to show that $T@S \in \T_B$. So let
  $U \in \T_B^{\pperp}$. By \Cref{lem:pb-bpb-three-tiles}, we have $T@S \pperp
  U$ iff $T \pperp S \times U$. But the latter holds since $T\in \T_{A \lin B}$
  and $S \in \SU \subseteq \T_A$. Thus, $T@S \in \T_B$.

  Conversely, assume that $T@S \in \T_B$ for every $S \in \SU$. First observe
  that $\T_{A \lin B} = (\T_A \otimes \T_B^{\pperp})^{\pperp}$. So we must show
  that, for every $S \in \T_A$ and $U \in \T_B^{\pperp}$, $T \pperp S\times U$. By
  \Cref{lem:pb-bpb-three-tiles}, for a given $U$, the latter is equivalent to $S
  \pperp \dual T@U$ for every $S \in \T_A$, \ie $\dual T@U \in \T_A^{\pperp}$. But
  since $\T_A^{\pperp} = \SU^{\pperp}$, for a given $U$, it is equivalent to $S
  \pperp \dual T@U$ for every $S \in \SU$, itself equivalent to $T \pperp
  S\times U$ for every $S \in \SU$, and finally equivalent to $T@S \pperp U$ for
  every $S\in \SU$, which amounts to our initial assumption.
\end{proof}
Using \Cref{prop:carac_lin_thin}, we can then prove the compatibility of thinness
with composition:
\begin{proposition}
  Given thin groupoids $A$ and $B$ and strategies $S \in \U_{A \lin B}$ and
  $T \in \U_{B \lin C}$, we have $T \sthc S \in \T_{A \lin C}$.
\end{proposition}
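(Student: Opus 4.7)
The plan is to mirror the proof of \Cref{prop:compat-thinness-composition}, substituting the thin biorthogonality and \Cref{prop:carac_lin_thin} for their uniform counterparts. Concretely, I take the generating class $\SU = \T_A$ itself, for which $\SU^{\pperp\pperp} = \T_A$ holds trivially, and aim to verify the single condition of \Cref{prop:carac_lin_thin} for $T \sthc S$ viewed as a prestrategy from $A$ to $C$. Membership $T \sthc S \in \U_{A \lin C}$ is handed to us by \Cref{prop:compat-thinness-composition}, so all that remains is to show that for every $U \in \T_A$, the application $(T \sthc S) @ U$ lies in $\T_C$.

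The key auxiliary step is an \emph{associativity of application}: for any prestrategy $U$ on $A$, there is a canonical positive isomorphism of spans on $C$
\[
  (T \sthc S) @ U \quad \iso \quad T @ (S @ U).
\]
This should follow from a pasting-of-pullbacks argument in $\Gpd$, entirely analogous to the rectangular manipulations of \Cref{lem:pb-bpb-three-tiles}: both sides are limits of essentially the same diagram, obtained by gluing the defining pullback of $T \sthc S$ to the pullback defining application against $U$, versus gluing the defining pullback of $S @ U$ on $B$ to the pullback defining application against $T$. The universal property of pullbacks supplies the iso, and polarity is automatic as no new symmetries are introduced.

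Granting this iso, the conclusion is immediate. Fix $U \in \T_A$; reading the hypotheses at their thin strength, \Cref{prop:carac_lin_thin} applied to $S \in \T_{A \lin B}$ yields $S @ U \in \T_B$, and then applied to $T \in \T_{B \lin C}$ yields $T @ (S @ U) \in \T_C$, whence $(T \sthc S) @ U \in \T_C$. A final reverse application of \Cref{prop:carac_lin_thin} to $T \sthc S$ concludes $T \sthc S \in \T_{A \lin C}$. The main obstacle is just the diagrammatic bookkeeping in the associativity lemma for $@$, with the right choice of projection maps; once that is in hand, everything else is a literal transport of the proof of \Cref{prop:compat-thinness-composition} from $(\U, \perp)$ to $(\T, \pperp)$.
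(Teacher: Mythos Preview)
Your proposal is correct and follows the same approach as the paper, which just says the proof is ``similar to (in fact simpler than)'' the uniform case via \Cref{prop:carac_lin_thin}; you have merely unpacked what ``immediate from its validity for both $S$ and $T$'' means in the uniform argument, namely the pullback-pasting associativity $(T \sthc S)@U \cong T@(S@U)$. One minor remark: the associativity iso you produce is a \emph{strong} iso of prestrategies on $C$ (it commutes strictly with the display maps, by the universal property of pullbacks), so no polarity considerations are needed to transfer membership in $\T_C$.
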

\begin{proof}
  The proof is similar to (in fact simpler than) the one of
  \Cref{prop:compat-thinness-composition} and follows from the criterion given
  by \Cref{prop:carac_lin_thin}.
\end{proof}

\subsection{Details about the bicategory $\Thin$}
\label{sec:app-thin-bicat}

We have the following convenient characterization of $0$\composition of
$2$\cells of $\Thin$:
\begingroup
\renewcommand\rho{R}%
\renewcommand\sigma{S}%
\renewcommand\tau{T}%
\begin{proposition}[Paved Characterization of Composition (PCC)]
  \label{prop:pcc}
  Given thin groupoids $A,B,C$, strategies $\rho,\rho' \co A \stto B$,
  $\sigma,\sigma' \co B \stto C$ and weak morphisms $F \co \rho \To \rho'$ and
  $G \co \sigma \To \sigma'$ of $\Thin$, if there exist a functor $\stsupp H \co
  \stsupp{\sigma\sthcomp\rho} \to \stsupp{\sigma'\sthcomp\rho'}$ and two natural
  transformations $H^l$ and $H^r$ as in
  \[
    \begin{tikzcd}
      \stsupp{\sigma\sthcomp\rho}
      \ar[d,"\pl"']
      \ar[r,"\stsupp H",myname=F]
      \cphar[rd,"\xTo{H^l}"]
      &
      \stsupp{\sigma'\sthcomp\rho'}
      \ar[d,"\pl"]
      \\
      \rho
      \ar[r,"\stsupp F"']
      &
      \rho'
    \end{tikzcd}
    \qqand
    \begin{tikzcd}
      \stsupp{\sigma\sthcomp\rho}
      \ar[d,"\pr"']
      \ar[r,"\stsupp H",myname=F]
      \cphar[rd,"\xTo{H^r}"]
      &
      \stsupp{\sigma'\sthcomp\rho'}
      \ar[d,"\pr"]
      \\
      \sigma
      \ar[r,"\stsupp G"']
      &
      \sigma'
    \end{tikzcd}
  \]
  such that
  \[
    \begin{tikzcd}
      \stsupp{\sigma\sthcomp\rho}
      \ar[d,"\pl"']
      \ar[r,"\stsupp H",myname=F]
      \cphar[rd,"\xTo{H^l}"]
      &
      \stsupp{\sigma'\sthcomp\rho'}
      \ar[d,"\pl"]
      \\
      \rho
      \ar[r,"\stsupp F"{description}]
      \ar[d,"\stdisp\rho_B"']
      \cphar[rd,"\xTo{F^B}"]
      &
      \rho'
      \ar[d,"\stdisp{\rho'}_B"]
      \\
      B
      \ar[r,equals]
      &
      B
    \end{tikzcd}
    \qeq
    \begin{tikzcd}
      \stsupp{\sigma\sthcomp\rho}
      \ar[d,"\pr"']
      \ar[r,"\stsupp H",myname=F]
      \cphar[rd,"\xTo{H^r}"]
      &
      \stsupp{\sigma'\sthcomp\rho'}
      \ar[d,"\pr"]
      \\
      \sigma
      \ar[r,"\stsupp G"{description}]
      \ar[d,"\stdisp\sigma_B"']
      \cphar[rd,"\xTo{G^B}"]
      &
      \sigma'
      \ar[d,"\stdisp{\sigma'}_B"]
      \\
      B
      \ar[r,equals]
      &
      B
    \end{tikzcd}
  \]
  and such that the natural transformations
  \[
    H^A
    \defeq
    \begin{tikzcd}
      \stsupp{\sigma\sthcomp\rho}
      \ar[d,"\pl"']
      \ar[r,"\stsupp H",myname=F]
      \cphar[rd,"\xTo{H^l}"]
      &
      \stsupp{\sigma'\sthcomp\rho'}
      \ar[d,"\pl"]
      \\
      \rho
      \ar[r,"\stsupp F"{description}]
      \ar[d,"\stdisp\rho_A"']
      \cphar[rd,"\xTo{F^A}"]
      &
      \rho'
      \ar[d,"\stdisp{\rho'}_A"]
      \\
      A
      \ar[r,equals]
      &
      A
    \end{tikzcd}
    H^C
    \defeq
    \begin{tikzcd}
      \stsupp{\sigma\sthcomp\rho}
      \ar[d,"\pr"']
      \ar[r,"\stsupp H",myname=F]
      \cphar[rd,"\xTo{H^r}"]
      &
      \stsupp{\sigma'\sthcomp\rho'}
      \ar[d,"\pr"]
      \\
      \sigma
      \ar[r,"\stsupp G"{description}]
      \ar[d,"\stdisp\sigma_C"']
      \cphar[rd,"\xTo{G^C}"]
      &
      \sigma'
      \ar[d,"\stdisp{\sigma'}_C"]
      \\
      C
      \ar[r,equals]
      &
      C
    \end{tikzcd}
  \]
  are positive over $A^\perp$ and $C$ respectively, we have that $H \defeq
  (\stsupp H,H^A,H^C)$ is a positive morphism $\sigma\sthcomp\rho \To
  \sigma'\sthcomp\rho'$ of $\Thin$ and that $H = G \sthc F$.
\end{proposition}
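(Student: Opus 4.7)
My plan splits naturally into two parts, matching the two assertions of the proposition: first I would verify that $H \defeq (\stsupp H, H^A, H^C)$ constitutes a well-formed positive $2$-cell in $\Thin$; then I would identify it with the horizontal composition $G \sthc F$ via the bipullback factorization defining the latter.

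For the first part, the displayed pasting diagrams manufacture natural transformations $H^A : \stdisp{\sigma\sthcomp\rho}_A \To \stdisp{\sigma'\sthcomp\rho'}_A \circ \stsupp H$ and $H^C : \stdisp{\sigma\sthcomp\rho}_C \To \stdisp{\sigma'\sthcomp\rho'}_C \circ \stsupp H$ by composing $H^l$ with $F^A$ along $\pl$ and $H^r$ with $G^C$ along $\pr$; by hypothesis both are positive, so by \Cref{def:weakmor} and the description of $\Thin$ in \Cref{sec:thin-summarized}, the triple $(\stsupp H, H^A, H^C)$ is indeed a $2$-cell of $\Thin$ from $\sigma \sthcomp \rho$ to $\sigma' \sthcomp \rho'$.

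For the second part, recall from \Cref{subsubsec:horcomp} that the horizontal composition in $\Thin$ is computed by pasting $F$ and $G$ along the pullback square for $\sigma \sthcomp \rho$ to obtain a pseudocone over the cospan $\rho' \to B \ot \sigma'$, then applying the bipullback property of $\sigma' \sthcomp \rho'$ to obtain a lift $(\stsupp{G \sthc F}, \alpha, \beta)$; the unique positive representative of this lift, provided by \Cref{prop:positivization}, is by definition $G \sthc F$. The assumed paving equation is precisely the equational condition that $(\stsupp H, H^l, H^r)$ is such a bipullback lift: the two pastings being equated are the two presentations of the same pseudocone, once the strict commutativity of the pullback $\sigma \sthcomp \rho$ is invoked to eliminate the inner edge.

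To conclude $H = G \sthc F$, I would invoke the local discreteness of $\Thin$ from \Cref{prop:positivization}: since two positive bipullback lifts of the same pseudocone differ by a $2$-cell in $\PreThin$ which must be an identity by positivity and thinness, any positive factorization is equal to the canonical one defining $G \sthc F$. As $(\stsupp H, H^A, H^C)$ is such a positive factorization, the identification follows. The main technical subtlety I anticipate is the careful bookkeeping of pasting orientations and natural-transformation directions, ensuring that the hypothesis equation matches precisely the pseudocone used to define horizontal composition; once these conventions are aligned, the rest reduces to the established bipullback machinery of \Cref{lem:carac_pb_bipb} and the uniqueness of positive representatives from \Cref{prop:positivization}.
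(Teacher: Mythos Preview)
Your proposal is correct and follows essentially the same approach as the paper's proof: first observe that the polarity hypothesis makes $(\stsupp H, H^A, H^C)$ a positive $2$-cell, then recognize the assumed paving equation as exhibiting $(\stsupp H, H^l, H^r)$ as a bipullback factorization of the pseudocone defining horizontal composition, and finally invoke \Cref{prop:positivization} to conclude that this already-positive factorization must coincide with $G \sthc F$. The paper phrases the last step as ``the biequivalence does nothing on this object,'' which is your uniqueness-of-positive-representatives argument in different words.
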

\begin{proof}
  The fact that $H$ is a $2$\cell $\sigma\sthcomp\rho \To \sigma'\sthcomp\rho'$
  of $\Thin$ is immediate by the polarity assumption. The equality of $2$\cells
  given by the hypothesis can be rewritten as
  \[
    \begin{tikzcd}[rsbo=large,csbo=large]
      &
      \stsupp{\sigma\sthc\rho}
      \ar[dd,"\stsupp H"{description}]
      \ar[ld,myname=leftcone]
      \ar[rd,myname=rightcone]
      &
      \\
      \stsupp\rho
      \ar[dd,"\stsupp F"']
      \ar[rd,phantom,"{\xTo{H^l}}"]
      &
      &
      \stsupp\sigma
      \ar[dd,"\stsupp G"]
      \ar[ld,phantom,"{\xTo{\finv{(H^r)}}}"]
      \\
      &
      |[alias=midcell]|\stsupp{\sigma'\sthc\rho'}
      \ar[dl]
      \ar[dr]
      &
      \\
      \stsupp {\rho'}
      \ar[dr,"\stdisp{\rho'}_B"']
      &=&
      \stsupp{\sigma'}
      \ar[dl,"\stdisp{\sigma'}_B"]
      \\
      &
      B
    \end{tikzcd}
    \qeq
    \begin{tikzcd}[rsbo=huge,csbo=4.4em]
      &
      \stsupp{\sigma\sthc\rho}
      \ar[ld]
      \ar[rd]
      &
      \\
      \stsupp\rho
      \ar[d,"\stsupp F"']
      \ar[rdd,bend left,myname=midleft,"\stdisp\rho_{B}"{inner sep=0pt},pos=0.4]
      &
      =
      &
      \stsupp\sigma
      \ar[d,"\stsupp G"]
      \ar[ldd,bend right,myname=midright,"\stdisp\sigma_{B}"'{inner sep=0pt},pos=0.4]
      \\
      |[alias=stsupprhop]|\stsupp{\rho'}
      \ar[rd,"\stdisp{\rho'}_B"']
      &
      &
      |[alias=stsuppsigmap]|\stsupp{\sigma'}
      \ar[ld,"\stdisp{\sigma'}_B"]
      \\
      &
      B
      &
      \ar[from=stsupprhop,to=midleft,phantom,"{\xTo{\finv{(F^B)}}}"]
      \ar[from=midright,to=stsuppsigmap,phantom,"{\xTo{G^B}}"]
    \end{tikzcd}
  \]
  so that $H^l$ and $H^r$ provide a factorization of the pseudocone on the
  right, and define an object of the groupoid of compositions mentioned in
  \Cref{sec:thin-summarized}. The actual horizontal composition in $\Thin$ is
  then obtained by applying the biequivalence of \Cref{prop:positivization}. But
  since $\pair{H^A,H^C}$ is already a positive natural transformation on $\cA
  \lin \cC$, this biequivalence does nothing on this object and $H = G \sthc F$.
\end{proof}
\endgroup
\begingroup
\renewcommand\rho{R}%
\renewcommand\sigma{S}%
\renewcommand\tau{T}%
\begin{lemma}
  Given thin groupoids $\cA,\cB,\cC$, we have a functor 
  \[
    (-)\sthcomp (-) \co \Thin(\cB,\cC) \times \Thin(\cA,\cB) \to \Thin(\cA,\cC)\zbox.
  \]
\end{lemma}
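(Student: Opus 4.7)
The plan is to apply the Paved Characterization of Composition (\PCC) to reduce every functoriality axiom to elementary diagram pasting. The lemma really has three verifications: well-definedness of $\sthcomp$ on $2$-cells, preservation of identities, and compatibility with vertical composition. Well-definedness on objects is already covered by the earlier propositions ensuring that $\T_{\cA \lin \cC}$ is closed under composition of uniform strategies.

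For well-definedness on $2$-cells, recall that $G \sthcomp F$ was defined in \Cref{sec:thin-summarized} as the image under the biequivalence of \Cref{prop:positivization} of any element of the connected groupoid of candidate horizontal compositions obtained through the bipullback universal property. The \PCC then tells us that any triple $(\stsupp H, H^l, H^r)$ satisfying its hypotheses yields precisely $G \sthcomp F$.

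For preservation of identities, given strategies $\rho\co\cA\stto\cB$ and $\sigma\co\cB\stto\cC$, I would instantiate \PCC with $\stsupp H \defeq \id_{\stsupp{\sigma\sthcomp\rho}}$ and $H^l,H^r$ both identity natural transformations. The compatibility equation of \PCC then reduces to $\id = \id$, and the induced $H^A$ and $H^C$ are identity natural transformations (trivially positive). \PCC yields $\id_\sigma \sthcomp \id_\rho = \id_{\sigma\sthcomp\rho}$.

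For vertical functoriality, suppose we have $F\co\rho\To\rho'$, $F'\co\rho'\To\rho''$ and $G\co\sigma\To\sigma'$, $G'\co\sigma'\To\sigma''$, and let $(\stsupp H, H^l, H^r)$ witness $G \sthcomp F$ and $(\stsupp{H'}, H'^l, H'^r)$ witness $G' \sthcomp F'$ via \PCC. I would then set $\stsupp{\tilde H} \defeq \stsupp{H'} \circ \stsupp H$ and paste the transformations: $\tilde H^l \defeq (H'^l \ast \stsupp H) \circ (\stsupp{F'} \ast H^l)$ and symmetrically $\tilde H^r \defeq (H'^r \ast \stsupp H) \circ (\stsupp{G'} \ast H^r)$. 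Middle-four interchange reduces the compatibility equation for $\tilde H$ to the conjunction of the compatibility equations already satisfied by $H$ and $H'$. The induced $\tilde H^A$ (\resp $\tilde H^C$) then turns out to be the vertical composite of $H^A$ followed by a whiskering of $H'^A$ (\resp for $C$), hence positive because vertical composites and whiskerings of positive natural transformations remain positive. \PCC then yields $(G' G) \sthcomp (F' F) = (G' \sthcomp F') \circ (G \sthcomp F)$.

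The main obstacle is the bookkeeping in the last step: one has to keep track of several whiskered $2$-cells and check that pasting them along the composition pullback correctly assembles into $\tilde H^l$ and $\tilde H^r$ satisfying the \PCC equation. However, the actual content is routine since each individual triple already satisfies it, and the uniqueness clause of \PCC (via the biequivalence of \Cref{prop:positivization}) ensures that the positive representative we construct must coincide with $(G' G) \sthcomp (F' F)$.
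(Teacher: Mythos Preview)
Your proposal is correct and follows essentially the same approach as the paper: both use the \PCC with the identity triple $(\id_{\stsupp{\sigma\sthcomp\rho}},\id_{\pl},\id_{\pr})$ for preservation of identities, and both paste the \PCC witnesses $(\stsupp H,H^l,H^r)$ and $(\stsupp{H'},H'^l,H'^r)$ horizontally to obtain a witness for $(G'\circ G)\sthcomp(F'\circ F)$ that is recognised as $(G'\sthcomp F')\circ(G\sthcomp F)$. The only minor imprecision is your phrase ``positive because vertical composites \dots\ remain positive'': remember that positivity here is over $\cA^\perp\parr\cC$, so the $A$-component is \emph{negative} on $\cA$, but this is exactly what the paper records and the argument goes through unchanged.
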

\begin{proof}
  By the definition we took for the composition of the $2$\cells of $\Thin$, we
  already have that $(-)\sthcomp(-)$ respects the sources and targets of weak
  morphisms, so that we are left to verify functoriality.

  Given $\rho \in \Thin(\cA,\cB)$ and $\sigma \in\Thin(\cB,\cC)$, a solution
  in $\stsupp H$, $H^l$ and $H^r$ for the equation
  \[
    \begin{tikzcd}
      \stsupp{\sigma\sthc\rho}
      \ar[r,"\stsupp H"]
      \ar[d,"\pl"']
      \cphar[rd,"\xTo{H^l}"]
      &
      \stsupp{\sigma\sthc\rho}
      \ar[d,"\pl"]
      \\
      \stsupp{\rho}
      \ar[r,myname=midarr,"\id_{\stsupp\rho}"{description}]
      \ar[d,myname=rhodisp,"\stdisp\rho_B"']
      \cphar[rd,"="]
      &
      \stsupp{\rho}
      \ar[d,myname=rhopdisp,"\stdisp\rho_B"]
      \\
      B
      \ar[r,equals]
      &
      B
    \end{tikzcd}
    \qeq
    \begin{tikzcd}
      \stsupp{\sigma\sthc\rho}
      \ar[r,"\stsupp H"]
      \ar[d,"\pr"']
      \cphar[rd,"\xTo{H^r}"]
      &
      \stsupp{\sigma\sthc\rho}
      \ar[d,"\pr"]
      \\
      \stsupp{\sigma}
      \ar[r,myname=midarr,"\id_{\stsupp\sigma}"{description}]
      \ar[d,myname=rhodisp,"\stdisp\sigma_B"']
      \cphar[rd,"="]
      &
      \stsupp{\sigma}
      \ar[d,myname=rhopdisp,"\stdisp\sigma_B"]
      \\
      B
      \ar[r,equals]
      &
      B
    \end{tikzcd}
  \]
  is given by $\stsupp H = \id_{\rho\sthcomp\sigma}$, $H^l = \id_{\pl}$ and $H^r
  = \id_{\pr}$. Thus, since identities are member of $A^-$ and $C^+$, the
  polarity condition of \Cref{prop:pcc} is satisfied so that
  \[
    \id_\sigma \sthc \id_\rho
    \qeq
    (\id_{\stsupp{\sigma\sthcomp\rho}},\id_{\stdisp\rho_A \circ \pl},\id_{\stdisp\sigma_C\circ\pr})
    \qeq
    \id_{\sigma\sthcomp\rho}
    \zbox.
  \]
  
  Now, given four positive morphisms organized as
\[
\begin{array}{l}
    \rho \xto{F} \rho' \xto{F'} \rho'' \in \Thin(\cA,\cB)\,,\\
    \sigma \xto{G} \sigma' \xto{G'} \sigma'' \in \Thin(\cB,\cC)\,,
\end{array}
\]
  the procedure to compute $F \sthcomp G$ gives us $\stsupp H,H^l$
  and~$H^r$ \emph{s.t.}
  \[
    \begin{tikzcd}
      \stsupp{\sigma\sthcomp\rho}
      \ar[d,"\pl"']
      \ar[r,"\stsupp H",myname=F]
      \cphar[rd,"\xTo{H^l}"]
      &
      \stsupp{\sigma'\sthcomp\rho'}
      \ar[d,"\pl"]
      \\
      \rho
      \ar[r,"\stsupp F"{description}]
      \ar[d,"\stdisp\rho_B"']
      \cphar[rd,"\xTo{F^B}"]
      &
      \rho'
      \ar[d,"\stdisp{\rho'}_B"]
      \\
      B
      \ar[r,equals]
      &
      B
    \end{tikzcd}
    \qeq
    \begin{tikzcd}
      \stsupp{\sigma\sthcomp\rho}
      \ar[d,"\pr"']
      \ar[r,"\stsupp H",myname=F]
      \cphar[rd,"\xTo{H^r}"]
      &
      \stsupp{\sigma'\sthcomp\rho'}
      \ar[d,"\pr"]
      \\
      \sigma
      \ar[r,"\stsupp G"{description}]
      \ar[d,"\stdisp\sigma_B"']
      \cphar[rd,"\xTo{G^B}"]
      &
      \sigma'
      \ar[d,"\stdisp{\sigma'}_B"]
      \\
      B
      \ar[r,equals]
      &
      B
    \end{tikzcd}
  \]
  and with $\stdisp{\rho'}_A H^l$ and $\stdisp{\sigma'}_C H^r$ respectively
  negative on $\cA$ and positive on $\cC$; similarly, the procedure to compute
  $F' \sthcomp G'$ gives us $\stsupp{H'},H'^l$ and~$H'^r$ such that
  \[
    \begin{tikzcd}
      \stsupp{\sigma'\sthcomp\rho'}
      \ar[d,"\pl"']
      \ar[r,"\stsupp {H'}",myname=F]
      \cphar[rd,"\xTo{H'^l}"]
      &
      \stsupp{\sigma''\sthcomp\rho''}
      \ar[d,"\pl"]
      \\
      \rho'
      \ar[d,"\stdisp{\rho'}_B"']
      \ar[r,"\stsupp {F'}"{description}]
      \cphar[rd,"\xTo{F'^B}"]
      &
      \rho''
      \ar[d,"\stdisp{\rho''}_B"]
      \\
      B
      \ar[r,equals]
      &
      B
    \end{tikzcd}
    \qeq
    \begin{tikzcd}
      \stsupp{\sigma'\sthcomp\rho'}
      \ar[d,"\pr"']
      \ar[r,"\stsupp {H'}",myname=F]
      \cphar[rd,"\xTo{H'^r}"]
      &
      \stsupp{\sigma''\sthcomp\rho''}
      \ar[d,"\pr"]
      \\
      \sigma'
      \ar[d,"\stdisp{\sigma'}_B"']
      \ar[r,"\stsupp {G'}"{description}]
      \cphar[rd,"\xTo{G'^B}"]
      &
      \sigma''
      \ar[d,"\stdisp{\sigma''}_B"]
      \\
      B
      \ar[r,equals]
      &
      B
    \end{tikzcd}
  \]
  and with $\stdisp{\rho''}_A H'^l$ and $\stdisp{\sigma''}_C H'^r$ respectively
  negative on $\cA$ and positive on $\cC$. On the one hand, we thus have that
  $(G' \sthc F') \circ (G \sthc F)$ is the span morphism $K = (\stsupp
  K,K^l,K^r)$ with $\stsupp K = \stsupp {H'} \stsupp H$ and
  \begin{gather*}
    K^l
    =
    \begin{tikzcd}[ampersand replacement=\&]
      \stsupp{\sigma\sthcomp\rho}
      \ar[d,"\pl"']
      \ar[r,"\stsupp H",myname=F]
      \cphar[rd,"\xTo{H^l}"]
      \&
      \stsupp{\sigma'\sthcomp\rho'}
      \ar[d,"\pl"{description}]
      \ar[r,"\stsupp {H'}",myname=F]
      \cphar[rd,"\xTo{H'^l}"]
      \&
      \stsupp{\sigma''\sthcomp\rho''}
      \ar[d,"\pl"]
      \\
      \rho
      \ar[r,"\stsupp F"{description}]
      \ar[d,"\stdisp\rho_A"']
      \cphar[rd,"\xTo{F^A}"]
      \&
      \rho'
      \ar[d,"\stdisp{\rho'}_A"{description}]
      \ar[r,"\stsupp {F'}"{description}]
      \cphar[rd,"\xTo{F'^A}"]
      \&
      \rho''
      \ar[d,"\stdisp{\rho''}_A"]
      \\
      A
      \ar[r,equals]
      \&
      A
      \ar[r,equals]
      \&
      A
    \end{tikzcd}
    \\
    K^r
    =
    \begin{tikzcd}[ampersand replacement=\&]
      \stsupp{\sigma\sthcomp\rho}
      \ar[d,"\pr"']
      \ar[r,"\stsupp H",myname=F]
      \cphar[rd,"\xTo{H^r}"]
      \&
      \stsupp{\sigma'\sthcomp\rho'}
      \ar[d,"\pr"{description}]
      \ar[r,"\stsupp {H'}",myname=F]
      \cphar[rd,"\xTo{H'^r}"]
      \&
      \stsupp{\sigma''\sthcomp\rho''}
      \ar[d,"\pr"]
      \\
      \sigma
      \ar[r,"\stsupp G"{description}]
      \ar[d,"\stdisp\sigma_C"']
      \cphar[rd,"\xTo{G^C}"]
      \&
      \sigma'
      \ar[d,"\stdisp{\sigma'}_C"{description}]
      \ar[r,"\stsupp {G'}"{description}]
      \cphar[rd,"\xTo{G'^C}"]
      \&
      \sigma''
      \ar[d,"\stdisp{\sigma''}_C"]
      \\
      C
      \ar[r,equals]
      \&
      C
      \ar[r,equals]
      \&
      C
    \end{tikzcd}
    \zbox.
  \end{gather*}
  On the other hand, we have
  \[
    \begin{array}{rc}
      &
        \begin{tikzcd}
        \stsupp{\sigma\sthcomp\rho}
        \ar[d,"\pl"']
        \ar[r,"\stsupp H",myname=F]
        \cphar[rd,"\xTo{H^l}"]
        &
        \stsupp{\sigma'\sthcomp\rho'}
        \ar[d,"\pl"{description}]
        \ar[r,"\stsupp {H'}",myname=F]
        \cphar[rd,"\xTo{H'^l}"]
        &
        \stsupp{\sigma''\sthcomp\rho''}
        \ar[d,"\pl"]
        \\
        \rho
        \ar[r,"\stsupp F"{description}]
        \ar[d,"\stdisp\rho_B"']
        \cphar[rd,"\xTo{F^B}"]
        &
        \rho'
        \ar[d,"\stdisp{\rho'}_B"{description}]
        \ar[r,"\stsupp {F'}"{description}]
        \cphar[rd,"\xTo{F'^B}"]
        &
        \rho''
        \ar[d,"\stdisp{\rho''}_B"]
        \\
        B
        \ar[r,equals]
        &
        B
        \ar[r,equals]
        &
        B
      \end{tikzcd}
          \\
      =
      &
      \begin{tikzcd}
        \stsupp{\sigma\sthcomp\rho}
        \ar[d,"\pr"']
        \ar[r,"\stsupp H",myname=F]
        \cphar[rd,"\xTo{H^r}"]
        &
        \stsupp{\sigma'\sthcomp\rho'}
        \ar[d,"\pr"{description}]
        \ar[r,"\stsupp {H'}",myname=F]
        \cphar[rd,"\xTo{H'^r}"]
        &
        \stsupp{\sigma''\sthcomp\rho''}
        \ar[d,"\pr"]
        \\
        \sigma
        \ar[r,"\stsupp G"{description}]
        \ar[d,"\stdisp\sigma_B"']
        \cphar[rd,"\xTo{G^B}"]
        &
        \sigma'
        \ar[d,"\stdisp{\sigma'}_B"{description}]
        \ar[r,"\stsupp {G'}"{description}]
        \cphar[rd,"\xTo{G'^B}"]
        &
        \sigma''
        \ar[d,"\stdisp{\sigma''}_B"]
        \\
        B
        \ar[r,equals]
        &
        B
        \ar[r,equals]
        &
        B
      \end{tikzcd}
          \zbox.
    \end{array}
  \]

  Moreover, the two natural transformations obtained as the horizontal pasting
  of $H^l$ and $H'^l$ along $\pl$ and the horizontal pasting of $H^r$ and $H'^r$
  along $\pr$ satisfy the polarity condition of the \PCC.

  Hence, by considering again the diagrammatic definition of $K$, the \PCC tells
  us that $K$ is also $(G' \circ G) \sthc (F' \circ F)$, which concludes functoriality.
\end{proof}

We now show the unitality of the horizontal composition. Given a thin
groupoid~$A$, we write $\stid A$ for the identity span $A \xot{\id_A} A
\xto{\id_A}$. We have
\begin{lemma}
  \label{lem:stid-thin}
  Given a thin groupoid $\cA$, we have $\stid\cA \in \T_{\cA \lin \cA}$.
\end{lemma}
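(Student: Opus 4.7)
The plan is to establish $\stid{A} \in \T_{A \lin A}$ in two steps, by successively applying the compositional characterizations of uniform and thin prestrategies already proved in the paper, namely \Cref{prop:carac_lin} and \Cref{prop:carac_lin_thin}. The key computational observation underlying both steps is that $\stid{A} @ S$ is canonically isomorphic to $S$, since pulling back any $S \to A$ along $\id_A : A \to A$ returns $S$ up to the universal property of pullbacks, with right leg $\id_A \circ \pi_S = \display^S$.

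First I would verify that $\stid{A} \in \U_{A\lin A}$, by instantiating \Cref{prop:carac_lin} with the generating class $\SU = \U_A$ (which trivially satisfies $\SU^{\perp\perp} = \U_A$). Condition~(2) requires $(A,\id_A) \in \U_A^\perp$, which is exactly the observation made in the discussion following that proposition: for any $S \in \U_A$, the pullback of $A \xot{\id_A} A \to A \xot{\display^S} S$ is (up to iso) $S$ itself, and one checks directly via \Cref{lem:carac_pb_bipb} that it is a bipullback. Condition~(1) asks $\stid{A}@S \in \U_A$ for each $S \in \U_A$, which follows at once from $\stid{A}@S \iso S$.

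Second, with uniformity in hand, I would apply \Cref{prop:carac_lin_thin} using the generating class $\SU = \T_A$, which trivially satisfies $\SU^{\pperp\pperp} = \T_A$. The only remaining condition is $\stid{A}@S \in \T_A$ for every $S \in \T_A$; this is again immediate from $\stid{A}@S \iso S$, using that the set $\T_A$ is invariant under isomorphism of prestrategies (as a consequence of the biorthogonal closure $\T_A = \T_A^{\pperp\pperp}$). Combining the two steps yields $\stid{A} \in \T_{A\lin A}$.

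I do not anticipate any real obstacle: the whole argument reduces to the trivial computation that pullback along an identity leg returns the original prestrategy, together with routine invocations of the two characterization propositions. The only point that requires minimal care is justifying that $\T_A$ is closed under isomorphism, which follows directly from its definition via the $\pperp\pperp$ closure.
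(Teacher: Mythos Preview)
Your proposal is correct and follows exactly the approach the paper uses: the paper's proof is the one-line remark that this is ``an easy consequence of \Cref{prop:carac_lin} and \Cref{prop:carac_lin_thin}'', and your expansion unfolds precisely that argument, including the observation $\stid{A}@S \iso S$ and the use of $(A,\id_A)\in\U_A^\perp$ already noted in the main text.
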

\begin{proof}
  This is an easy consequence of \Cref{prop:carac_lin} and
  \Cref{prop:carac_lin_thin}.
\end{proof}
We also write $\stid\cA$ for the corresponding functor $\termcat \to
\Thin(\cA,\cA)$. Given an additional thin groupoid $\cB$, there is a
transformation~$\stru$ between the functors
\begin{gather*}
  \Thin(\cA,\cB)
  \xto{\sim}
  \Thin(\cA,\cB) \times \termcat
  \to \cdots
  \\
  \xto{\unit{} \times \stid\cA}
  \Thin(\cA,\cB) \times \Thin(\cA,\cA)
  \xto{(-)\sthcomp(-)}
  \Thin(\cA,\cB)
\end{gather*}
and
\[
  \Thin(\cA,\cB)
  \xto{\unit{}}
  \Thin(\cA,\cB)
\]
whose component at $\sigma \in \Thin(\cA,\cB)$ is defined as
follows. Recall that the span $\sigma \sthcomp \stid\cA$ is defined by the
pullback
\[
  \begin{tikzcd}[cs={5em,between origins}]
    &&
    \stsupp {\sigma \sthcomp \stid\cA}
    \ar[dl,"\pl"']
    \ar[dr,"\pr"]
    \ar[dd,phantom,very near start,"\dcorner"]
    &&
    \\
    &
    A
    \ar[dr,"\unit A"{description}]
    \ar[dl,"\unit A"{description}]
    &&
    \stsupp \sigma
    \ar[dl,"{\stdisp[A]\sigma}"'{description}]
    \ar[dr,"{\stdisp[B]\sigma}"{description}]
    \\
    A
    &&
    A
    &&
    B
  \end{tikzcd}
  \zbox.
\]
Then, $\stsupp{\stru_\sigma} \defeq \pr$ is an isomorphism (as the pullback of
an isomorphism) which moreover induces a strong isomorphism of strategies
$\stru_\sigma \co \sigma \sthcomp \stid\cA \To \sigma \in \Thin$.

\begin{lemma}
  \label{lem:runit-natural}
  $\stru \defeq (\stru_\sigma)_{\sigma \in \Thin}$ is a natural isomorphism.
\end{lemma}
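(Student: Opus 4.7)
I would prove naturality of $\stru$ by computing both sides of the naturality square $F \circ \stru_\sigma = \stru_{\sigma'} \circ (F \sthc \id_{\stid_\cA})$ directly for an arbitrary $2$-cell $F = (\stsupp F, F^A, F^B) \co \sigma \To \sigma'$ in $\Thin(\cA, \cB)$.

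First, the component $\stru_\sigma$ is a \emph{strict} iso of spans: since the pullback defining $\sigma \sthcomp \stid_\cA$ has $\id_\cA$ as a leg, we have $\stdisp[\cA]{\sigma\sthcomp\stid_\cA} = \pl = \stdisp[\cA]\sigma \circ \pr$ on the nose (this is exactly the commuting pullback square), and $\stdisp[\cB]{\sigma\sthcomp\stid_\cA} = \stdisp[\cB]\sigma \circ \pr$ by construction. Hence $\stru_\sigma = (\pr, \id, \id)$, and it is positive since identities are both positive and negative.

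Next, I would apply the \PCC to describe $F \sthc \id_{\stid_\cA}$, the composition here being over the base $\cA$. I would define $\stsupp{\tilde H} \co \stsupp{\sigma \sthcomp \stid_\cA} \to \stsupp{\sigma' \sthcomp \stid_\cA}$ as the factorization, through the bipullback $\sigma' \sthcomp \stid_\cA$ (which is a bipullback by \Cref{lem:stid-thin} and \Cref{prop:carac_lin_thin}), of the pseudocone with legs $\pl$ and $\stsupp F \circ \pr$ and inner $2$-cell $F^A \ast \pr \co \pl = \stdisp[\cA]\sigma \circ \pr \To \stdisp[\cA]{\sigma'} \circ \stsupp F \circ \pr$. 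This provides natural isos $\tilde H^l \co \pl \circ \stsupp{\tilde H} \To \pl$ and $\tilde H^r \co \pr \circ \stsupp{\tilde H} \To \stsupp F \circ \pr$ whose pasting along $\stdisp[\cA]{\sigma'}$ equals $F^A \ast \pr$; verifying the \PCC equation and the positivity of the induced $\cA$- and $\cB$-components is then immediate, since $\id_{\stid_\cA}$ is strict and $F^A, F^B$ are positive.

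Finally, postcomposing with $\stru_{\sigma'} = (\pr, \id, \id)$ yields a $2$-cell with underlying functor $\pr \circ \stsupp{\tilde H}$, identified with $\stsupp F \circ \pr$ via $\tilde H^r$; after tracking this identification, the $\cA$- and $\cB$-components collapse to $F^A \ast \pr$ and $F^B \ast \pr$ respectively. The other composite $F \circ \stru_\sigma$, by strictness of $\stru_\sigma$, is already of exactly this form. Hence the two sides coincide as $2$-cells of $\Thin$, proving naturality. The main obstacle is the careful diagram chase required to verify the \PCC equation and to check that the residual isomorphisms $\tilde H^l, \tilde H^r$ cancel correctly when one postcomposes with the strict $\stru_{\sigma'}$; the underlying intuition is simply that $\stru$ identifies $\sigma \sthcomp \stid_\cA$ with $\sigma$, under which identification horizontal composition with $\id_{\stid_\cA}$ reduces to $F$ itself.
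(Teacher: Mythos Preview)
Your overall strategy matches the paper's: use the \PCC to compute $F \sthc \id_{\stid_\cA}$ explicitly, then compare both sides of the naturality square. However, there is a precision gap in your execution. You invoke ``the factorization through the bipullback'' as if it were canonical, but bipullback factorizations are only determined up to a $2$\cell, so your $\tilde H^l$ and $\tilde H^r$ are not pinned down. This matters in two places. First, the $\cA$-component of the resulting $2$\cell is (after pasting with the strict $\id_{\stid_\cA}$) essentially $\tilde H^l$ itself, and for the \PCC to apply you need this to be negative in $\cA$; that does \emph{not} follow from $F^A$ being negative unless you know what $\tilde H^l$ actually is. Second, your final step ``identified with $\stsupp F \circ \pr$ via $\tilde H^r$'' does not give equality of $2$\cells in $\Thin$: you need the underlying functors to agree on the nose, not merely up to $\tilde H^r$.

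The paper avoids both issues by exploiting that $\pr^{\tau'}$ is an isomorphism (it is the pullback of $\id_A$), and taking the explicit factorization
\[
\stsupp{\tilde F} \;=\; \finv{(\pr^{\tau'})} \circ \stsupp F \circ \pr^\tau\,.
\]
With this choice one has $\pr^{\tau'} \circ \stsupp{\tilde F} = \stsupp F \circ \pr^\tau$ strictly, so the ``$H^r$'' square is an identity, and the ``$H^l$'' square reduces to $F^A \ast \pr^\tau$ by the two pullback equalities $\pl^\tau = \stdisp\sigma_A \circ \pr^\tau$ and $\pl^{\tau'} \circ \finv{(\pr^{\tau'})} = \stdisp{\sigma'}_A$. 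Positivity is then literally that of $F^A$ and $F^B$, and after postcomposing with the strict $\stru_{\sigma'} = (\pr^{\tau'},\id,\id)$ the two sides of the naturality square coincide componentwise. Your sketch becomes correct once you make this specific choice of factorization; as written, the appeal to an unspecified bipullback factorization does not suffice.
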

\begin{proof}
  Let $\cA,\cB$, $\sigma,\sigma'\co \cA \stto \cB$ and $F \co \sigma \to
  \sigma'$ in $\Thin(\cA,\cB)$. We first picture the two compositions $\tau \defeq \stid\cA
  \sthcomp \sigma$ and $\tau' \defeq \stid\cA \sthcomp \sigma'$ on the diagram
  \[
    \begin{tikzcd}[rsbo=large,csbo=large]
      &&
      \stsupp\tau
      \ar[dd,phantom,very near start,"\dcorner"]
      \ar[ld,dashed,"\pl^\tau"']
      \ar[rd,dashed,"\pr^\tau"]
      &&
      \\
      &
      A
      \ar[ld,"\unit A"{description}]
      \ar[rd,"\unit A"{description}]
      &&
      \stsupp\sigma
      \ar[ld,"\stdisp{\sigma}_A"{description}]
      \ar[rd,"\stdisp{\sigma}_B"{description}]
      \\
      A
      &&
      A
      &&
      B
      \\
      &
      A
      \ar[lu,"\unit A"{description}]
      \ar[ru,"\unit A"{description}]
      &&
      \stsupp{\sigma'}
      \ar[lu,"\stdisp{\sigma'}_A"{description}]
      \ar[ru,"\stdisp{\sigma'}_B"{description}]
      \\
      &&
      \stsupp{\tau'}
      \ar[uu,phantom,very near start,"\ucorner"]
      \ar[lu,dashed,"\pl^{\tau'}"]
      \ar[ru,dashed,"\pr^{\tau'}"']
    \end{tikzcd}
    \zbox.
  \]
  We now compute the composition $F \sthcomp \unit{\stid\cA}$.
  By the \PCC, it is the morphism $\tilde F \co \sigma \sthc \stid\cA \To
  \sigma' \sthc \stid\cA$ defined by
  \[
    \stsupp{\tilde F} 
    \qeq
    \stsupp{\tau} 
    \xto{\pr^\tau}
    \stsupp\sigma
    \xto{\stsupp F}
    \stsupp{\sigma'}
    \xto{\finv{(\pr^{\tau'})}}
    \stsupp{\tau'} 
  \]
  and
  \begin{gather*}
    \tilde F^A
    =
    \begin{tikzcd}[ampersand replacement=\&]
      \stsupp{\tau} 
      \ar[d,"\pl^\tau"']
      \ar[r,"\pr^\tau"]
      \cphar[rd,"="]
      \&
      \stsupp\sigma
      \ar[d,"\stdisp\sigma_A"{description}]
      \ar[r,"\stsupp F"]
      \cphar[rd,"\xTo{F^A}"]
      \&
      \stsupp{\sigma'}
      \ar[d,"\stdisp{\sigma'}_A"{description}]
      \ar[r,"\finv{(\pr^{\tau'})}"]
      \cphar[rd,"="]
      \&
      \stsupp{\tau'} 
      \ar[d,"\pl^{\tau'}"]
      \\
      A
      \ar[d,"\unit A"']
      \ar[r,equals]
      \cphar[rd,"="]
      \&
      A
      \ar[d,"\unit A"{description}]
      \ar[r,equals]
      \cphar[rd,"="]
      \&
      A
      \ar[d,"\unit A"{description}]
      \ar[r,equals]
      \cphar[rd,"="]
      \&
      A
      \ar[d,"\unit A"]
      \\
      A
      \ar[r,equals]
      \&
      A
      \ar[r,equals]
      \&
      A
      \ar[r,equals]
      \&
      A
    \end{tikzcd}
    \\
    \tilde F^B
    =
    \begin{tikzcd}[ampersand replacement=\&]
      \stsupp{\tau} 
      \ar[d,"\pr^\tau"']
      \ar[r,"\pr^\tau"]
      \cphar[rrrd,"="]
      \&
      \stsupp\sigma
      \ar[r,"\stsupp F"]
      \&
      \stsupp{\sigma'}
      \ar[r,"\finv{(\pr^{\tau'})}"]
      \&
      \stsupp{\tau'} 
      \ar[d,"\pr^{\tau'}"]
      \\
      \stsupp\sigma
      \ar[d,"\stdisp\sigma_B"']
      \ar[rrr,"\stsupp F"{description}]
      \cphar[rrrd,"\xTo{F^B}"]
      \&
      \&
      \&
      \stsupp{\sigma'}
      \ar[d,"\stdisp{\sigma'}_B"]
      \\
      B
      \ar[rrr,equals]
      \&
      \&
      \&
      B
    \end{tikzcd}
    \zbox.
  \end{gather*}
  The naturality of $\stru$ is then expressed by the equation
  \[
    \stru_{\sigma'} \circ (F \sthc \unit{\stid\cA}) = F \circ \stru_\sigma
  \]
  that we now check. We first have
  \[
    \stsupp{\stru_{\sigma'} \circ (F \sthc \unit{\stid\cA})}
    \qeq
    \stsupp{\tau} 
    \xto{\pr^\tau}
    \stsupp\sigma
    \xto{\stsupp F}
    \stsupp{\sigma'}
    \qeq
    \stsupp{F \circ \stru_\sigma}
    \zbox.
  \]
  Moreover,
  \[
    \begin{array}{ccc}
      (\stru_{\sigma'} \circ (F \sthc \unit{\stid\cA}))^A
      &
        =
      &
        \begin{tikzcd}
          \stsupp{\tau} 
          \ar[d,"\pl^\tau"']
          \ar[r,"\pr^\tau"]
          \cphar[rd,"="]
          &
          \stsupp\sigma
          \ar[d,"\stdisp\sigma_A"{description}]
          \ar[r,"\stsupp F"]
          \cphar[rd,"\xTo{F^A}"]
          &
          \stsupp{\sigma'}
          \ar[d,"\stdisp{\sigma'}_A"]
          \\
          A
          \ar[d,"\unit A"']
          \ar[r,equals]
          \cphar[rd,"="]
          &
          A
          \ar[d,"\unit A"{description}]
          \ar[r,equals]
          \cphar[rd,"="]
          &
          A
          \ar[d,"\unit A"]
          \\
          A
          \ar[r,equals]
          &
          A
          \ar[r,equals]
          &
          A
        \end{tikzcd}
            \\[5em]
      &=&
         (F \circ \stru_\sigma)^A
    \end{array}
  \]
  and
  \[
    \begin{array}{ccc}
      (\stru_{\sigma'} \circ (F \sthc \unit{\stid\cA}))^B
      &=&
          \begin{tikzcd}
            \stsupp{\tau} 
            \ar[d,"\pr^\tau"']
            \ar[r,"\pr^\tau"]
            \cphar[rrd,"="]
            &
            \stsupp\sigma
            \ar[r,"\stsupp F"]
            &
            \stsupp{\sigma'}
            \ar[d,equals]
            \\
            \stsupp\sigma
            \ar[d,"\stdisp\sigma_B"']
            \ar[rr,"\stsupp F"{description}]
            \cphar[rrd,"\xTo{F^B}"]
            &
            &
            \stsupp{\sigma'}
            \ar[d,"\stdisp{\sigma'}_B"]
            \\
            B
            \ar[rr,equals]
            &
            &
            B
          \end{tikzcd}
              \\[5em]
      &=&
          (F \circ \stru_\sigma)^B
          \zbox.
    \end{array}
  \]
  Which concludes the proof that $\stru$ defines a natural iso.
\end{proof}
Similarly, there is a transformation~$\stlu$ between
\begin{gather*}
  \Thin(\cA,\cB)
  \xto{\sim}
  \termcat \times \Thin(\cA,\cB)
  \to\cdots
  \\
  \xto{\stid\cB \times \unit{}}
  \Thin(\cB,\cB) \times \Thin(\cA,\cB)
  \xto{(-)\sthcomp(-)}
  \Thin(\cA,\cB)
\end{gather*}
and
\[
  \Thin(\cA,\cB)
  \xto{\unit{}}
  \Thin(\cA,\cB)
\]
whose component at a strategy $\sigma \in \Thin(\cA,\cB)$ is defined as
follows. Recall that the span $\stid\cB \sthcomp \sigma$ is defined by the
pullback
\[
  \begin{tikzcd}[cs={5em,between origins}]
    &&
    \stsupp {\stid\cB \sthcomp \sigma}
    \ar[dl,"\pl"']
    \ar[dr,"\pr"]
    \ar[dd,phantom,very near start,"\dcorner"]
    &&
    \\
    &
    \stsupp\sigma
    \ar[dr,"\stdisp\sigma_B"{description}]
    \ar[dl,"\stdisp\sigma_A"{description}]
    &&
    B
    \ar[dl,"{\unit B}"'{description}]
    \ar[dr,"{\unit B}"{description}]
    \\
    A
    &&
    B
    &&
    B
  \end{tikzcd}
  \zbox.
\]
Then, $\stsupp{\stlu_\sigma} \defeq \pl$ is an isomorphism (as the pullback of
an isomorphism) which moreover induces a strong isomorphism of thin spans
$\stlu_\sigma \co \stid\cB \sthcomp \sigma \To \sigma \in \Thin$. As before, we have
\begin{lemma}
  \label{lem:lunit-natural}
  $\stlu \defeq (\stlu_\sigma)_{\sigma \in \Thin}$ is a natural isomorphism.
\end{lemma}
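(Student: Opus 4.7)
The plan is to mirror the proof of \Cref{lem:runit-natural}, swapping the roles of the two legs so that the identity span is glued on the left rather than the right. Given $F \co \sigma \to \sigma'$ in $\Thin(\cA,\cB)$, I first draw the two pullback squares defining $\tau \defeq \stid\cB \sthc \sigma$ and $\tau' \defeq \stid\cB \sthc \sigma'$, noting that $\stsupp{\stlu_\sigma} = \pl^\tau$ and $\stsupp{\stlu_{\sigma'}} = \pl^{\tau'}$ are both isomorphisms.

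Next, I compute the horizontal composition $\unit{\stid\cB} \sthc F$ via the \PCC. The candidate support is
\[
  \stsupp{\tau}
  \xto{\pl^\tau} \stsupp\sigma
  \xto{\stsupp F} \stsupp{\sigma'}
  \xto{\finv{(\pl^{\tau'})}} \stsupp{\tau'}
\]
paired with the obvious natural transformations $K^l = F^B$ (sandwiched between the identity isos coming from $\pl^\tau$ and $\finv{(\pl^{\tau'})}$) and $K^r = \id$ (since the right legs factor through $B \xto{\unit B} B$ with no data from $F$ involved). The polarity condition of the \PCC is satisfied because the resulting $K^A = F^A$ is positive on $\cA^\perp$ and $K^B = \id$ is positive on $\cB$; so $K = \unit{\stid\cB} \sthc F$.

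Then I verify the naturality square
\[
  \stlu_{\sigma'} \circ (\unit{\stid\cB} \sthc F) \qeq F \circ \stlu_\sigma
\]
componentwise. On supports, both sides reduce to $\stsupp F \circ \pl^\tau$. For the $\cA$-component, the left-hand pasting collapses via the identity isos from $\pl^\tau$, $\finv{(\pl^{\tau'})}$ and the trivial triangles over $\unit A$, leaving exactly $F^A \circ \id_{\pl^\tau}$, which matches $(F \circ \stlu_\sigma)^A$. For the $\cB$-component, the left-hand side is the whiskering of $F^B$ against $\stdisp{\sigma'}_B$ precomposed with $\pl^\tau$, which by the same chain of identities is $(F \circ \stlu_\sigma)^B$.

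The only subtle step is making sure the \PCC applies when computing $\unit{\stid\cB} \sthc F$: one needs to check that the two pasting diagrams appearing in the \PCC hypothesis agree on the nose (they do, thanks to the strict equality $\pl^{\tau'} \circ \finv{(\pl^{\tau'})} = \id$ and the fact that $\unit{\stid\cB}$ has identity data). Everything else is a mechanical mirror image of the proof of \Cref{lem:runit-natural}, so no new obstruction arises and the natural iso property of $\stlu$ follows.
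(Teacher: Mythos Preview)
Your overall approach is exactly the paper's: this lemma is the mirror image of \Cref{lem:runit-natural}, proved by the same PCC argument with left and right swapped. The paper in fact gives no separate proof and just says ``as before''.

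However, your bookkeeping of the auxiliary $2$-cells is swapped. In $\stid\cB \sthc \sigma$ the left projection $\pl^\tau$ lands in $\stsupp\sigma$ while $\pr^\tau$ lands in $B$. With your candidate support $H = \finv{(\pl^{\tau'})}\circ \stsupp F \circ \pl^\tau$ one has $\pl^{\tau'}\circ H = \stsupp F \circ \pl^\tau$ on the nose, so $K^l = \id$, not $F^B$. Conversely, $\pr^\tau = \stdisp\sigma_B\circ\pl^\tau$ and $\pr^{\tau'}\circ H = \stdisp{\sigma'}_B\circ\stsupp F\circ\pl^\tau$, so $K^r$ is forced to be $F^B\pl^\tau$, not the identity --- the right legs \emph{do} see $F$, through the display maps $\stdisp\sigma_B$ and $\stdisp{\sigma'}_B$. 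Accordingly the outer component $K^B$ (the pasting of $K^r$ with the trivial data of $\unit{\stid\cB}$) is $F^B\pl^\tau$, not $\id$. The polarity condition of the \PCC\ still holds, since $F^B$ is positive on $\cB$; and indeed your later naturality check on the $\cB$-side correctly invokes $F^B$. So the argument goes through, but the description of $K^l$, $K^r$ and $K^B$ should be corrected to match the actual mirror of the right-unitor computation.
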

\endgroup
\begingroup
\renewcommand\rho{R}%
\renewcommand\sigma{S}%
\renewcommand\tau{T}%
\renewcommand\upsilon{U}%
\noindent Given thin groupoids $\cA,\cB,\cC,\cD$, there is a transformation
\begin{gather*}
  \stass \co ((-)\sthcomp(-))\sthcomp(-) \To (-)\sthcomp((-)\sthcomp(-))
  \\
  \co \Thin(\cC,\cD) \times \Thin(\cB,\cC) \times \Thin(\cA,\cB)
  \\
  \to \Thin(\cA,\cD)
\end{gather*}
whose component at $\sigma \in \Thin(\cA,\cB)$, $\tau\in \Thin(\cB,\cC)$ and
$\upsilon\in \Thin(\cC,\cD)$ is given by a strong morphism
\[
  \stass_{\sigma , \tau , \ups} \co (\ups \sthcomp \tau) \sthcomp \sigma \to \upsilon \sthcomp (\tau
  \sthcomp \sigma)
\]
defined as expected between the two compositions using the different pullbacks
involved, as in
\[
  \begin{tikzcd}[csbo=3em,rsbo=2em]
   &&&
   \stsupp{(\ups\sthcomp\tau)\sthcomp\sigma}
   \phar[dd,very near start,"\dcorner"]
   \ar[rd,"\pr^{(\ups\sthcomp\tau)\sthcomp\sigma}"]
   \ar[lldd,"\pl^{(\ups\sthcomp\tau)\sthcomp\sigma}"']
   \ar[dddddd,"\stsupp{\stass_{\sigma,\tau,\ups}}"'{description},bend left=80,looseness=2.5]
   &&&
   \\
   &&&&
   \stsupp{\ups\sthcomp\tau}
   \phar[dd,very near start,"\dcorner"]
   \ar[ld,"\pl^{\ups\sthcomp\tau}"']
   \ar[rd,"\pr^{\ups\sthcomp\tau}"]
   \\
   &
   \stsupp\sigma
   \ar[ld]
   \ar[rd]
   &&
   \stsupp\tau
   \ar[ld]
   \ar[rd]
   &&
   \stsupp\ups
   \ar[ld]
   \ar[rd]
   \\
   A
   &&
   B
   &&
   C
   &&
   D
   \\
   &
   \stsupp\sigma
   \ar[lu]
   \ar[ru]
   &&
   \stsupp\tau
   \ar[lu]
   \ar[ru]
   &&
   \stsupp\ups
   \ar[lu]
   \ar[ru]
   \\
   &&
   \stsupp{\tau\sthcomp\sigma}
   \phar[uu,very near start,"\ucorner"]
   \ar[lu,"\pl^{\tau\sthcomp\sigma}"]
   \ar[ru,"\pr^{\tau\sthcomp\sigma}"']
   \\
   &&&
   \stsupp{\ups\sthcomp(\tau\sthcomp\sigma)}
   \phar[uu,very near start,"\ucorner"]
   \ar[lu,"\pl^{\ups\sthcomp(\tau\sthcomp\sigma)}"]
   \ar[rruu,"\pr^{\ups\sthcomp(\tau\sthcomp\sigma)}"']
   &&&
  \end{tikzcd}
  \zbox.
\]
An inverse for $\stass_{\sigma,\tau,\ups}$ is defined symmetrically, so that $\stass$ is
an isomorphic transformation.

\begin{lemma}
  \label{lem:assoc-natural}
  The transformation $\stass$ is a natural isomorphism.
\end{lemma}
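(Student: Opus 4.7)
The plan is to establish the naturality equation
\[
\stass_{\sigma',\tau',\upsilon'} \circ ((H \sthc G) \sthc F) \;=\; (H \sthc (G \sthc F)) \circ \stass_{\sigma,\tau,\upsilon}
\]
in $\Thin(\cA,\cD)$ for every triple of positive morphisms $F \co \sigma \To \sigma'$, $G \co \tau \To \tau'$ and $H \co \upsilon \To \upsilon'$. The strategy is to make both sides explicit as weak span morphisms via the \PCC and compare them, exploiting the crucial simplification that $\stass_{-,-,-}$ is strong (identity leg 2-cells), so composing with it preserves the 2-cell data on the $\cA$ and $\cD$ legs.

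First, I would iterate the \PCC twice to unfold $(H \sthc G) \sthc F$ and $H \sthc (G \sthc F)$. In each case, the underlying functor is the map into the outer composition pullback induced by universal property from $\stsupp F, \stsupp G, \stsupp H$ together with the matching middle 2-cells $F^B, G^B, G^C, H^C$; the leg 2-cells simplify to $F^A$ on the $\cA$ side and $H^D$ on the $\cD$ side for both parenthesizations. The polarity hypothesis of the \PCC is then immediate because $F^A$ is negative on $\cA$ and $H^D$ is positive on $\cD$ by assumption.

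Second, since $\stass_{\sigma,\tau,\upsilon}$ and $\stass_{\sigma',\tau',\upsilon'}$ are strong, composing with them only reroutes the underlying functor through the canonical pullback isomorphism but leaves the leg 2-cells equal to $F^A$ and $H^D$. Hence the naturality square reduces to a purely $1$-categorical equality between two functors $\stsupp{(\upsilon \sthcomp \tau) \sthcomp \sigma} \to \stsupp{\upsilon' \sthcomp (\tau' \sthcomp \sigma')}$ in $\Gpd$. Both composites are mediating maps into the same iterated pullback and, when postcomposed with the projections onto $\stsupp{\sigma'}, \stsupp{\tau'}, \stsupp{\upsilon'}$, yield the same three functors, built from $\stsupp F, \stsupp G, \stsupp H$ and the iterated projections of the source. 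They therefore agree by uniqueness in the universal property of pullbacks.

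The main obstacle is bookkeeping: the iterated \PCC applications generate voluminous diagrammatic data with several pasted 2-cells, and one must check that these feed through the associator coherently in both parenthesizations. This is routine but lengthy, and I would organize the diagram chase by drawing both three-step iterated pullback cones side by side together with the associator isomorphism, so that the final equality reduces to a single uniqueness argument for a mediating map into an iterated pullback.
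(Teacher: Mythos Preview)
Your overall strategy---use the \PCC to expose both sides and exploit that $\stass$ is strong---is the paper's strategy too, but your key simplification is unjustified. When you apply the \PCC to compute $G \sthc F$, the resulting leg $2$-cell on the $\cA$ side is \emph{not} $F^A$ (suitably precomposed): by the very statement of the \PCC it is the pasting of $F^A$ with the correction $2$-cell $H^l$ coming from the bipullback factorisation, and $\partial^{\sigma'}_A \cdot H^l$ has no reason to be an identity. Iterating, the $\cA$-leg of $(H \sthc G)\sthc F$ involves $F^A$ together with one correction cell, while the $\cA$-leg of $H \sthc (G\sthc F)$ involves $F^A$ together with \emph{two} correction cells (and dually on the $\cD$ side). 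So your reduction in the second paragraph to a ``purely $1$-categorical equality between two functors'' does not go through: even after conjugating by the strong associator, you would still owe an argument that these pasted $2$-cells coincide, and strict pullback uniqueness says nothing about that.

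The paper circumvents this comparison entirely. Rather than computing both sides independently, it constructs a \emph{single} weak morphism $M$ by factoring the pseudocone built from $\tilde K^C$ and $\tilde L^B$ through the bipullback of $\stsupp{\tau'\sthc\sigma'}$ and $\stsupp{\ups'\sthc\tau'}$ over $\stsupp{\tau'}$ (this is a new bipullback, obtained from \Cref{lem:rect-left-square-bipullback}, not one of the composition pullbacks). It then invokes the biequivalence of \Cref{prop:positivization} to adjust the resulting $\tilde M^A,\tilde M^D$ so that the assembled leg $2$-cells have the required polarities, and finally applies the \PCC twice: once to recognise $M$ as $(H\sthc G)\sthc F$, and once to recognise $\bar M = \stass \circ M \circ \stass^{-1}$ as $H\sthc(G\sthc F)$. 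Because both recognitions are of the \emph{same} morphism (transported by the associator), no comparison of independently computed leg $2$-cells is ever needed. Your sketch is missing both the intermediate bipullback construction and the positivization step.
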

\begin{proof}
  Let $F \co \sigma \To \sigma' \co \cA \stto \cB$, $G \co \tau \To \tau' \co
  \cB \stto \cC$ and $H \co \ups \To \ups' \co \cC \stto \cD$ be weak morphisms
  in $\Thin$. We compute $(\ups \sthcomp \tau) \sthcomp \sigma$ as usual but
  moreover factor the projection $\pl^{(\ups \sthcomp \tau) \sthcomp \sigma}$
  canonically through $\stsupp{\tau \sthcomp \sigma}$ by a unique morphism
  $\tilde\pl^{(\ups \sthcomp \tau) \sthcomp \sigma}$ so that we get a diagram
  \[
    \begin{tikzcd}[csbo=3.5em,rsbo=2em]
      &&&
      \stsupp{(\ups\sthcomp\tau)\sthcomp\sigma}
      \phar[dd,very near start,"\dcorner"]
      \ar[ld,"\tilde\pl^{(\ups \sthcomp \tau) \sthcomp \sigma}"']
      \ar[rd]
      &&&
      \\
      &&
      \stsupp{\tau\sthcomp\sigma}
      \phar[dd,very near start,"\dcorner"]
      \ar[ld]
      \ar[rd]
      &&
      \stsupp{\ups\sthcomp\tau}
      \phar[dd,very near start,"\dcorner"]
      \ar[ld]
      \ar[rd]
      \\
      &
      \stsupp\sigma
      \ar[ld]
      \ar[rd]
      &&
      \stsupp\tau
      \ar[ld]
      \ar[rd]
      &&
      \stsupp\ups
      \ar[ld]
      \ar[rd]
      \\
      A
      &&
      B
      &&
      C
      &&
      D
    \end{tikzcd}
  \]
  and we get a similar diagram for $\stsupp{(\ups\sthcomp\tau)\sthcomp\sigma}$.
  Symmetrically, the projection $\pr^{\ups \sthcomp (\tau \sthcomp \sigma)}$ can
  be factored canonically through $\stsupp{\ups\sthc\tau}$ by a morphism $\tilde
  \pr^{\ups \sthcomp (\tau \sthcomp \sigma)}$, and the projection $\pr^{\ups'
    \sthcomp (\tau' \sthcomp \sigma')}$ through $\stsupp{\ups'\sthc\tau'}$ by a
  morphism $\tilde \pr^{\ups' \sthcomp (\tau' \sthcomp \sigma')}$. Note that
  $\pl^{\ups\sthc(\tau\sthc\sigma)} \circ \stsupp{\stass_{\sigma,\tau,\ups}} =
  \tilde\pl^{(\ups\sthc\tau)\sthc\sigma}$ and other similar equalities hold. By
  computing $K \defeq G \sthcomp F$, we get natural transformations $\tilde K^A$
  and $\tilde K^C$ such that
  \begin{gather*}
    K^A
    =
    \begin{tikzcd}[column sep=small,ampersand replacement=\&]
      \stsupp{\tau \sthcomp \sigma}
      \ar[r,"\stsupp K",myname=Garr]
      \ar[d,"\pl^{\tau\sthcomp\sigma}"']
      \cphar[rd,"\xTo{\tilde K^A}"]
      \&
      \stsupp{\tau' \sthcomp \sigma'}
      \ar[d,"\pl^{\tau'\sthcomp\sigma'}"]
      \\
      \stsupp\sigma
      \ar[d,"\stdisp\sigma_A"{description}]
      \ar[r,"\stsupp F"{description},myname=mididA]
      \cphar[rd,"\xTo{F^A}"]
      \&
      \stsupp{\sigma'}
      \ar[d,"\stdisp{\sigma'}_A"{description}]
      \\
      |[alias=botA]|A
      \ar[r,equals]
      \&
      A
    \end{tikzcd}
    \shortintertext{and}
    K^C
    =
    \begin{tikzcd}[column sep=small,ampersand replacement=\&]
      \stsupp{\tau \sthcomp \sigma}
      \ar[r,"\stsupp K",myname=Garr]
      \ar[d,"\pr^{\tau\sthcomp\sigma}"']
      \cphar[rd,"\xTo{\tilde K^C}"]
      \&
      \stsupp{\tau' \sthcomp \sigma'}
      \ar[d,"\pr^{\tau'\sthcomp\sigma'}"]
      \\
      \stsupp\tau
      \ar[d,"\stdisp\tau_C"{description}]
      \ar[r,"\stsupp G"{description},myname=mididA]
      \cphar[rd,"\xTo{G^C}"]
      \&
      \stsupp{\tau'}
      \ar[d,"\stdisp{\tau'}_C"{description}]
      \\
      |[alias=botA]|C
      \ar[r,equals]
      \&
      C
    \end{tikzcd}
  \end{gather*}
  \[
  \]
  which satisfy moreover that
  \[
    \begin{tikzcd}[column sep=small]
      \stsupp{\tau \sthcomp \sigma}
      \ar[r,"\stsupp K",myname=Garr]
      \ar[d,"\pl^{\tau\sthcomp\sigma}"']
      \cphar[rd,"\xTo{\tilde K^A}"]
      &
      \stsupp{\tau' \sthcomp \sigma'}
      \ar[d,"\pl^{\tau'\sthcomp\sigma'}"]
      \\
      \stsupp\sigma
      \ar[d,"\stdisp\sigma_B"{description}]
      \ar[r,"\stsupp F"{description},myname=mididA]
      \cphar[rd,"\xTo{F^B}"]
      &
      \stsupp{\sigma'}
      \ar[d,"\stdisp{\sigma'}_B"{description}]
      \\
      |[alias=botA]|B
      \ar[r,equals]
      &
      B
    \end{tikzcd}
    =
    \begin{tikzcd}[column sep=small]
      \stsupp{\tau \sthcomp \sigma}
      \ar[r,"\stsupp K",myname=Garr]
      \ar[d,"\pr^{\tau\sthcomp\sigma}"']
      \cphar[rd,"\xTo{\tilde K^C}"]
      &
      \stsupp{\tau' \sthcomp \sigma'}
      \ar[d,"\pr^{\tau'\sthcomp\sigma'}"]
      \\
      \stsupp\tau
      \ar[d,"\stdisp\tau_B"{description}]
      \ar[r,"\stsupp G"{description},myname=mididA]
      \cphar[rd,"\xTo{G^B}"]
      &
      \stsupp{\tau'}
      \ar[d,"\stdisp{\tau'}_B"{description}]
      \\
      |[alias=botA]|B
      \ar[r,equals]
      &
      B
    \end{tikzcd}
    \zbox.
  \]
  Similarly, by computing $L \defeq H \sthcomp G$, we get natural
  transformations $\tilde L^B$ and $\tilde L^D$ such that
  \begin{gather*}
    L^B
    =
    \begin{tikzcd}[column sep=small,ampersand replacement=\&]
      \stsupp{\ups \sthcomp \tau}
      \ar[r,"\stsupp L",myname=Garr]
      \ar[d,"\pl^{\ups\sthcomp\tau}"']
      \cphar[rd,"\xTo{\tilde L^B}"]
      \&
      \stsupp{\ups' \sthcomp \tau'}
      \ar[d,"\pl^{\ups'\sthcomp\tau'}"]
      \\
      \stsupp\tau
      \ar[d,"\stdisp\tau_B"{description}]
      \ar[r,"\stsupp G"{description},myname=mididB]
      \cphar[rd,"\xTo{G^B}"]
      \&
      \stsupp{\tau'}
      \ar[d,"\stdisp{\tau'}_B"{description}]
      \\
      |[alias=botB]|B
      \ar[r,equals]
      \&
      B
    \end{tikzcd}
    \shortintertext{and}
    L^D
    =
    \begin{tikzcd}[column sep=small,ampersand replacement=\&]
      \stsupp{\ups \sthcomp \tau}
      \ar[r,"\stsupp L",myname=Garr]
      \ar[d,"\pr^{\ups\sthcomp\tau}"']
      \cphar[rd,"\xTo{\tilde L^D}"]
      \&
      \stsupp{\ups' \sthcomp \tau'}
      \ar[d,"\pr^{\ups'\sthcomp\tau'}"]
      \\
      \stsupp\ups
      \ar[d,"\stdisp\ups_D"{description}]
      \ar[r,"H"{description},myname=mididB]
      \cphar[rd,"\xTo{H^D}"]
      \&
      \stsupp{\ups'}
      \ar[d,"\stdisp{\ups'}_D"{description}]
      \\
      |[alias=botB]|D
      \ar[r,equals]
      \&
      D
    \end{tikzcd}
  \end{gather*}
  which satisfy moreover that
  \[
    \begin{tikzcd}[column sep=small]
      \stsupp{\ups \sthcomp \tau}
      \ar[r,"\stsupp L",myname=Garr]
      \ar[d,"\pl^{\ups\sthcomp\tau}"']
      \cphar[rd,"\xTo{\tilde L^B}"]
      &
      \stsupp{\ups' \sthcomp \tau'}
      \ar[d,"\pl^{\ups'\sthcomp\tau'}"]
      \\
      \stsupp\tau
      \ar[d,"\stdisp\tau_C"{description}]
      \ar[r,"\stsupp G"{description},myname=mididB]
      \cphar[rd,"\xTo{G^C}"]
      &
      \stsupp{\tau'}
      \ar[d,"\stdisp{\tau'}_C"{description}]
      \\
      |[alias=botB]|C
      \ar[r,equals]
      &
      C
    \end{tikzcd}
    =
    \begin{tikzcd}[column sep=small]
      \stsupp{\ups \sthcomp \tau}
      \ar[r,"\stsupp L",myname=Garr]
      \ar[d,"\pr^{\ups\sthcomp\tau}"']
      \cphar[rd,"\xTo{\tilde L^D}"]
      &
      \stsupp{\ups' \sthcomp \tau'}
      \ar[d,"\pr^{\ups'\sthcomp\tau'}"]
      \\
      \stsupp\ups
      \ar[d,"\stdisp\ups_C"{description}]
      \ar[r,"\stsupp H"{description},myname=mididB]
      \cphar[rd,"\xTo{H^C}"]
      &
      \stsupp{\ups'}
      \ar[d,"\stdisp{\ups'}_C"{description}]
      \\
      |[alias=botB]|C
      \ar[r,equals]
      &
      C
    \end{tikzcd}
    \zbox.
  \]
  Since
  \[
    \begin{tikzcd}[column sep={4em,between origins},row sep={3em,between origins}]
      &
      \stsupp{(\ups'\sthcomp\tau')\sthcomp\sigma'}
      \phar[dd,very near start,"\dcorner"]
      \ar[rd,"\pr^{(\ups' \sthcomp \tau') \sthcomp \sigma'}"]
      \ar[ld,"\tilde\pl^{(\ups' \sthcomp \tau') \sthcomp \sigma'}"']
      &
      \\
      \stsupp{\tau'\sthcomp\sigma'}
      \ar[rd,"\pr^{\tau'\sthcomp\sigma'}"']
      &&
      \stsupp{\ups'\sthc\tau'}
      \ar[ld,"\pl^{\ups'\sthc\tau'}"]
      \\
      &
      \stsupp{\tau'}
    \end{tikzcd}
  \]
  is a bipullback by \Cref{lem:rect-left-square-bipullback}, and that the
  natural transformations $\tilde K^C$ and $\tilde L^B$ define a pseudocone of
  vertex $\stsupp{(\ups\sthcomp\tau)\sthcomp\sigma}$ on the associated cospan,
  we get $\stsupp M$, $\tilde M^A$ and $\tilde M^D$ such that
  \[
    \begin{array}{cc}
      &
      \begin{tikzcd}[column sep=small]
        \stsupp{(\ups\sthcomp\tau)\sthcomp\sigma}
        \ar[r,"\stsupp M"]
        \ar[d,"\tilde\pl^{(\ups \sthcomp \tau) \sthcomp \sigma}"']
        \cphar[rd,"\xTo{\tilde M^A}"]
        &
        \stsupp{(\ups'\sthcomp\tau')\sthcomp\sigma'}
        \ar[d,"\tilde\pl^{(\ups' \sthcomp \tau') \sthcomp \sigma'}"]
        \\
        \stsupp{\tau \sthcomp \sigma}
        \ar[r,"\stsupp K"{description},myname=Garr]
        \ar[d,"\pr^{\tau\sthcomp\sigma}"']
        \cphar[rd,"\xTo{\tilde K^C}"]
        &
        \stsupp{\tau' \sthcomp \sigma'}
        \ar[d,"\pr^{\tau'\sthcomp\sigma'}"]
        \\
        \stsupp\tau
        \ar[r,"\stsupp G"{description},myname=mididA]
        &
        \stsupp{\tau'}
      \end{tikzcd}
          \\
      =
      &
        \begin{tikzcd}[column sep=small]
          \stsupp{(\ups\sthcomp\tau)\sthcomp\sigma}
          \ar[r,"\stsupp M"]
          \ar[d,"\pr^{(\ups \sthcomp \tau) \sthcomp \sigma}"']
          \cphar[rd,"\xTo{\tilde M^D}"]
          &
          \stsupp{(\ups'\sthcomp\tau')\sthcomp\sigma'}
          \ar[d,"\pr^{(\ups' \sthcomp \tau') \sthcomp \sigma'}"]
          \\
          \stsupp{\ups\sthcomp\tau}
          \ar[r,"\stsupp L"{description},myname=Garr]
          \ar[d,"\pr^{\tau\sthcomp\sigma}"']
          \cphar[rd,"\xTo{\tilde L^B}"]
          &
          \stsupp{\ups'\sthcomp\tau'}
          \ar[d,"\pr^{\tau'\sthcomp\sigma'}"]
          \\
          \stsupp\tau
          \ar[r,"\stsupp G"{description},myname=mididA]
          &
          \stsupp{\tau'}
        \end{tikzcd}
            \zbox.
    \end{array}
  \]
  We thus get a weak morphism $M = (\stsupp M,M^A,M^D)$ between
  $(\ups\sthcomp\tau)\sthcomp\sigma$ and $(\ups'\sthcomp\tau')\sthcomp\sigma'$
  in $\Span$, where
  \begin{gather*}
    M^A
    =
    \begin{tikzcd}[column sep=small,ampersand replacement=\&]
      \stsupp{(\ups\sthcomp\tau)\sthcomp\sigma}
      \ar[r,"\stsupp M"]
      \ar[d,"\tilde\pl^{(\ups \sthcomp \tau) \sthcomp \sigma}"']
      \cphar[rd,"\xTo{\tilde M^A}"]
      \&
      \stsupp{(\ups'\sthcomp\tau')\sthcomp\sigma'}
      \ar[d,"\tilde\pl^{(\ups' \sthcomp \tau') \sthcomp \sigma'}"]
      \\
      \stsupp{\tau \sthcomp \sigma}
      \ar[r,"\stsupp K"{description},myname=Garr]
      \ar[d,"\pl^{\tau\sthcomp\sigma}"']
      \cphar[rd,"\xTo{\tilde K^A}"]
      \&
      \stsupp{\tau' \sthcomp \sigma'}
      \ar[d,"\pl^{\tau'\sthcomp\sigma'}"]
      \\
      \stsupp\sigma
      \ar[d,"\stdisp\sigma_A"{description}]
      \ar[r,"\stsupp F"{description},myname=mididA]
      \cphar[rd,"\xTo{F^A}"]
      \&
      \stsupp{\sigma'}
      \ar[d,"\stdisp{\sigma'}_A"{description}]
      \\
      |[alias=botA]|A
      \ar[r,equals]
      \&
      A
    \end{tikzcd}
    \shortintertext{and}
    M^D
    =
    \begin{tikzcd}[column sep=small,ampersand replacement=\&]
      \stsupp{(\ups\sthcomp\tau)\sthcomp\sigma}
      \ar[r,"\stsupp M"]
      \ar[d,"\pr^{(\ups \sthcomp \tau) \sthcomp \sigma}"']
      \cphar[rd,"\xTo{\tilde M^D}"]
      \&
      \stsupp{(\ups'\sthcomp\tau')\sthcomp\sigma'}
      \ar[d,"\pr^{(\ups' \sthcomp \tau') \sthcomp \sigma'}"]
      \\
      \stsupp{\ups \sthcomp \tau}
      \ar[r,"\stsupp L"{description},myname=Garr]
      \ar[d,"\pr^{\ups\sthcomp\tau}"']
      \cphar[rd,"\xTo{\tilde L^D}"]
      \&
      \stsupp{\ups' \sthcomp \tau'}
      \ar[d,"\pr^{\ups'\sthcomp\tau'}"]
      \\
      \stsupp\ups
      \ar[d,"\stdisp\ups_D"{description}]
      \ar[r,"H"{description},myname=mididB]
      \cphar[rd,"\xTo{H^D}"]
      \&
      \stsupp{\ups'}
      \ar[d,"\stdisp{\ups'}_D"{description}]
      \\
      |[alias=botB]|D
      \ar[r,equals]
      \&
      D
    \end{tikzcd}
    \zbox.
  \end{gather*}
  Using the biequivalence of \Cref{prop:positivization}, we can suppose that
  $\tilde M^A$ and $\tilde M^B$ where chosen so that $M^A$ and $M^D$ are
  respectively negative on $\cA$ and positive on $\cD$. By the \PCC, we can then
  verify directly that $M = (H\sthcomp G)\sthcomp F$.

  Now consider the positive weak morphism $\bar M = \stass_{\sigma',\tau',\ups'}
  \circ M \circ \finv{\stass_{\sigma,\tau,\ups}}$: we have that $\bar M^A$
  and $\bar M^D$ are as on \Cref{fig:nat-trans-barMA} and
  \Cref{fig:nat-trans-barMD}.
  \begin{figure*}
    \centering
    \[
      \bar M^A
      =
      \begin{tikzcd}[column sep=small]
        \stsupp{\ups\sthcomp(\tau\sthcomp\sigma)}
        \ar[d,"\pl^{\ups \sthcomp (\tau \sthcomp \sigma)}"']
        \ar[r,"\finv {\stsupp{\stass_{\sigma,\tau,\ups}}}"]
        \cphar[rd,"="]
        &
        \stsupp{(\ups\sthcomp\tau)\sthcomp\sigma}
        \ar[r,"\stsupp M"]
        \ar[d,"\tilde\pl^{(\ups \sthcomp \tau) \sthcomp \sigma}"{description}]
        \cphar[rd,"\xTo{\tilde M^A}"]
        &
        \stsupp{(\ups'\sthcomp\tau')\sthcomp\sigma'}
        \ar[d,"\tilde\pl^{(\ups' \sthcomp \tau') \sthcomp \sigma'}"{description}]
        \ar[r,"\stsupp{\stass_{\sigma',\tau',\ups'}}"]
        \cphar[rd,"="]
        &
        \stsupp{\ups'\sthcomp(\tau'\sthcomp\sigma')}
        \ar[d,"\pl^{\ups' \sthcomp (\tau' \sthcomp \sigma')}"]
        \\
        \stsupp{\tau \sthcomp \sigma}
        \ar[d,"\pl^{\tau\sthcomp\sigma}"']
        \ar[r,equals]
        \cphar[rd,"="]
        &
        \stsupp{\tau \sthcomp \sigma}
        \ar[r,"\stsupp K"{description},myname=Garr]
        \ar[d,"\pl^{\tau\sthcomp\sigma}"{description}]
        \cphar[rd,"\xTo{\tilde K^A}"]
        &
        \stsupp{\tau' \sthcomp \sigma'}
        \ar[d,"\pl^{\tau'\sthcomp\sigma'}"{description}]
        \ar[r,equals]
        \cphar[rd,"="]
        &
        \stsupp{\tau' \sthcomp \sigma'}
        \ar[d,"\pl^{\tau'\sthcomp\sigma'}"]
        \\
        \stsupp\sigma
        \ar[d,"\stdisp\sigma_A"']
        \ar[r,equals]
        \cphar[rd,"="]
        &
        \stsupp\sigma
        \ar[d,"\stdisp\sigma_A"{description}]
        \ar[r,"\stsupp F"{description},myname=mididA]
        \cphar[rd,"\xTo{F^A}"]
        &
        \stsupp{\sigma'}
        \ar[d,"\stdisp{\sigma'}_A"{description}]
        \ar[r,equals]
        \cphar[rd,"="]
        &
        \stsupp{\sigma'}
        \ar[d,"\stdisp{\sigma'}_A"]
        \\
        A
        \ar[r,equals]
        &
        |[alias=botA]|A
        \ar[r,equals]
        &
        A
        \ar[r,equals]
        &
        A
      \end{tikzcd}
    \]
    \caption{The natural transformation $\bar M^A$}
    \label{fig:nat-trans-barMA}
    \[
      \bar M^D
      =
      \begin{tikzcd}[column sep=small]
        \stsupp{\ups\sthcomp(\tau\sthcomp\sigma)}
        \ar[d,"\tilde \pr^{\ups \sthcomp (\tau \sthcomp \sigma)}"']
        \ar[r,"\finv {\stsupp{\stass_{\sigma,\tau,\ups}}}"]
        \cphar[rd,"="]
        &
        \stsupp{(\ups\sthcomp\tau)\sthcomp\sigma}
        \ar[r,"\stsupp M"]
        \ar[d,"\pr^{(\ups \sthcomp \tau) \sthcomp \sigma}"{description}]
        \cphar[rd,"\xTo{\tilde M^D}"]
        &
        \stsupp{(\ups'\sthcomp\tau')\sthcomp\sigma'}
        \ar[d,"\pr^{(\ups' \sthcomp \tau') \sthcomp \sigma'}"{description}]
        \ar[r,"\stsupp{\stass_{\sigma',\tau',\ups'}}"]
        \cphar[rd,"="]
        &
        \stsupp{\ups'\sthcomp(\tau'\sthcomp\sigma')}
        \ar[d,"\tilde\pr^{\ups' \sthcomp (\tau' \sthcomp \sigma')}"]
        \\
        \stsupp{\ups \sthcomp \tau}
        \ar[r,equals]
        \ar[d,"\pr^{\ups\sthcomp\tau}"']
        \cphar[rd,"="]
        &
        \stsupp{\ups \sthcomp \tau}
        \ar[r,"\stsupp L"{description},myname=Garr]
        \ar[d,"\pr^{\ups\sthcomp\tau}"{description}]
        \cphar[rd,"\xTo{\tilde L^D}"]
        &
        \stsupp{\ups' \sthcomp \tau'}
        \ar[d,"\pr^{\ups'\sthcomp\tau'}"{description}]
        \ar[r,equals]
        \cphar[rd,"="]
        &
        \stsupp{\ups' \sthcomp \tau'}
        \ar[d,"\pr^{\ups'\sthcomp\tau'}"]
        \\
        \stsupp\ups
        \ar[d,"\stdisp\ups_D"']
        \ar[r,equals]
        \cphar[rd,"="]
        &
        \stsupp\ups
        \ar[d,"\stdisp\ups_D"{description}]
        \ar[r,"H"{description},myname=mididB]
        \cphar[rd,"\xTo{H^D}"]
        &
        \stsupp{\ups'}
        \ar[d,"\stdisp{\ups'}_D"{description}]
        \ar[r,equals]
        \cphar[rd,"="]
        &
        \stsupp{\ups'}
        \ar[d,"\stdisp{\ups'}_D"]
        \\
        D
        \ar[r,equals]
        &
        D
        \ar[r,equals]
        &
        D
        \ar[r,equals]
        &
        D
      \end{tikzcd}
    \]
    \caption{The natural transformation $\bar M^D$}
    \label{fig:nat-trans-barMD}
  \end{figure*}
  By using the \PCC to characterize the composition of $G\sthc F$ with $H$, we
  have that $H\sthc (G\sthc F) = \bar M$, so that
  \[
    \stass_{\sigma',\tau',\ups'}\circ ((H\sthc G)\sthc F)
    =
    (H\sthc (G\sthc F))\circ \stass_{\sigma,\tau,\ups}
  \]
  which was the wanted naturality.
\end{proof}
\endgroup
We can now prove \Cref{thm:thin-bicat}:
\begin{proof}[Proof of \Cref{thm:thin-bicat}]
  By \Cref{lem:lunit-natural,lem:lunit-natural,lem:assoc-natural}, the
  $0$\composition is naturally left unital, right unital and associative.
  Moreover, the coherence conditions on the natural isomorphisms, required by
  the definition of bicategories, directly follow from their pullback
  definitions.
\end{proof}

\subsection{Renamings}
\label{sec:renamings}

\begin{proposition}
  Given $F : A \to B \in \Ren$, $\check F \in \Thin$.
\end{proposition}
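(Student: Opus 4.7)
The plan is to verify in turn that $\check F$ is a uniform prestrategy and then that it is a strategy, by invoking the characterization results \Cref{prop:carac_lin} and \Cref{prop:carac_lin_thin}.

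For uniformity I would apply \Cref{prop:carac_lin} with generating class $\SU$ taken to be the whole class $\U$. Condition~(2) of that proposition reduces to the statement that the display leg of $\check F$ given by $F$ lies in the appropriate $\U^\perp$: this follows immediately from the second renaming condition applied to the identity prestrategy, which lies in $\U^\perp$ as observed just after \Cref{prop:carac_lin}. Condition~(1) requires that $\check F @ S$ be uniform for every uniform $S$. Fix any $U$ in the opposing $\U^\perp$; transporting its display along $F$ via the second renaming condition yields a prestrategy lying in $\U^\perp$ on the other side, and uniform orthogonality of $S$ against this transported prestrategy gives a bipullback. This bipullback is the outer rectangle of a diagram of two adjacent pullback squares: the first is the defining pullback of $\check F @ S$ (the span composition), and the second is the pullback whose bipullback property is exactly $\check F @ S \perp U$. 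I would then conclude by \Cref{lem:rect-left-square-bipullback}\ref{lem:rect-left-square-bipullback:rect-to-square}, which extracts the latter square as a bipullback.

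For thinness I would apply \Cref{prop:carac_lin_thin} with generating class $\SU = \T$, where the only condition to check is that $\check F @ S$ be a strategy for every strategy $S$. The argument would proceed exactly in parallel with the uniform case, substituting the third renaming condition for the second and replacing the bipullback property by plain discreteness of the pullback object. Here the shortcut is even more immediate: the whole rectangle pullback and the upper square pullback have the same underlying groupoid, so discreteness of the former entails discreteness of the latter, yielding $\check F @ S \pperp U$ for arbitrary $U \in \T^\pperp$.

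The only obstacle is bookkeeping: carefully tracking the orientation of $F$ with respect to the conventions fixed by the renaming conditions and the lifting $\check{-}$, and laying out the rectangle decomposition correctly. No creative input is required beyond the characterization propositions and \Cref{lem:rect-left-square-bipullback}.
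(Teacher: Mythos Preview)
Your proposal is correct. However, the paper proceeds more directly by invoking the \emph{dual} form of \Cref{prop:carac_lin} rather than the direct one. With the dual orientation, the two conditions to check for $\check F \in \U_{B\lin A}$ become: (2') the identity leg $(A,\id_A)$ lies in $\U_A$, which is immediate; and (1') for every $S$ in $\U_A^\perp$, the application $\dual{\check F}@S$ lies in $\U_B^\perp$. Since $\dual{\check F}$ has $\id_A$ as its left leg, this application is, up to isomorphism of domain, simply $(S, F\circ\partial^S)$ --- so condition (1') is \emph{literally} the second renaming axiom, and no rectangle decomposition is needed. The same happens for thinness via \Cref{prop:carac_lin_thin} and the third renaming axiom. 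Your route is sound but costs you the rectangle lemma for condition (1) and a separate instance of the renaming condition (on the identity prestrategy) for condition (2); the paper's choice of orientation makes the renaming axioms line up verbatim with the characterization, which is arguably why they were stated that way.
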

\begin{proof}
  We first prove that $\check F \in \U_{B \lin A}$ and we use the dual version
  \Cref{prop:carac_lin} for this purpose. We already have that $\id_A \in \U_A$
  since it is an isomorphism. We are left to show that $\dual{\check F}@S \in
  \U_{B^\perp}$ for every $S \in \U_A$. Up to isomorphism of domain,
  $\dual{\check F}@S$ is the composition $F \circ \display^S$ and, by hypothesis
  on $F$, the latter is in $\U_{B^\perp}$. So $\check F\in \U_{B \lin A}$ by
  \Cref{prop:carac_lin}.

  We are left to show that $\check F \in \T_{B \lin A}$. But it follows from
  \Cref{prop:carac_lin_thin} by the same arguments as for uniformity.
\end{proof}

Given $F \co A \to B$ and $G \co B \to C$ in $\Ren$, there is 
\[
  \prtospcoh F G \co \check{(GF)} \To \check F \sthc \check G
\]
a strong morphism
of $\Thin$, defined by the universal property of the pullback as
\[
  \stsupp{\prtospcoh F G} = \pair{F,\id_A} \co A \to \stsupp{\check F \sthc
    \check G} \zbox.
\]

\begin{lemma}
  \label{lem:charact-sthcomp-prtosp-2cells}
  Let $A,B,C$ be thin groupoids, and $\phi \co F \To F' \co A \to B$ and $\psi
  \co G \To G'\co B \to C$ be two $2$\cells of $\Ren$. The composition $\check
  \phi \sthc \check \psi \co \check{F} \sthc \check{G} \To\check{F'} \sthc
  \check{G'}$ is given by $(H,\chi^C,\chi^A)$ where $\chi^C$ and $\chi^A$ are
  respectively
  \[
    \begin{tikzcd}
      \stsupp{\check{F} \sthc \check{G}}
      \ar[r,"H"]
      \ar[d,"\pl"']
      \cphar[rd,"\xTo{\bar \phi}"]
      &
      \stsupp{\check{F'} \sthc \check{G'}}
      \ar[d,"\pl"]
      \\
      B
      \ar[r,equals]
      \ar[d,"G"']
      \cphar[rd,"\xTo\psi"]
      &
      B
      \ar[d,"G'"]
      \\
      C
      \ar[r,equals]
      &
      C
    \end{tikzcd}
    \qand
    \begin{tikzcd}
      \stsupp{\check{F} \sthc \check{G}}
      \ar[r,"H"]
      \ar[d,"\pr"']
      \cphar[rd,"="]
      &
      \stsupp{\check{F'} \sthc \check{G'}}
      \ar[d,"\pr"]
      \\
      A
      \ar[r,equals]
      \ar[d,"\id_A"']
      \cphar[rd,"="]
      &
      A
      \ar[d,"\id_A"]
      \\
      A
      \ar[r,equals]
      &
      A
    \end{tikzcd}
  \]
  with
  \[
    \bar \phi
    =
    \begin{tikzcd}
      \stsupp{\check{F} \sthc \check{G}}
      \ar[r,"\pr"]
      \ar[d,"\pl"']
      \cphar[rd,"="]
      &
      A
      \ar[r,equals]
      \ar[d,"F"{description}]
      \cphar[rd,"\xTo{\phi}"]
      &
      A
      \ar[r,"\finv\pr"]
      \ar[d,"F'"{description}]
      \cphar[rd,"="]
      &
      \stsupp{\check{F'} \sthc \check{G'}}
      \ar[d,"\pl"]
      \\
      B
      \ar[equals,r]
      &
      B
      \ar[equals,r]
      &
      B
      \ar[equals,r]
      &
      B
    \end{tikzcd}
  \]
  and $H$ as on the top of $\bar \phi$.
\end{lemma}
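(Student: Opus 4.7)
The plan is to invoke the Paved Characterization of Composition (\Cref{prop:pcc}) with carefully chosen $\stsupp H$, $H^l$ and $H^r$, and then verify the two polarity side-conditions. First I would make explicit the shape of the two composition pullbacks: since the right leg of $\check G$ (resp.\ $\check G'$) is $\id_B$, the pullback defining $\stsupp{\check F \sthc \check G}$ (resp.\ $\stsupp{\check F' \sthc \check G'}$) is (up to the chosen pullback) essentially a copy of $A$, with projection $\pl$ to $B$ given by $F$ (resp.\ $F'$) and projection $\pr$ to $A$ given by $\id_A$. The cospan forming the pullback of the primed version is then in a position to receive, as cone, the pair $(F' \circ \pr, \pr)$ together with a commutation 2-cell coming from $\phi$.

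Next, I would define $\stsupp H$ as the functor provided by the universal property of that primed pullback, applied to precisely this cone—this is exactly the $H$ sitting on top of $\bar\phi$ in the statement. I would then take $H^l$ to be the 2-cell $\bar\phi$ itself (encoding $\phi$ in the middle, sandwiched between identities) and $H^r$ to be the identity on $\pr$ (an equality on the nose, since by construction of $H$ through the universal property, $\pr' \circ H = \pr$).

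The main verification is then the paved equation required by \Cref{prop:pcc}: the left-hand side pastes $H^l$ with $\psi$ whiskered by $\pl$, while the right-hand side pastes $H^r$ (an identity) with $\phi$ whiskered by $\pr$. Both sides reduce, after cancelling the identities produced by $\id_B$ and $\id_A$, to the same whiskered composite involving $\phi$—this is where the construction of $H$ via the universal property pays off, since it is exactly what ensures the $B$\nbd-components match. Finally, the polarity conditions are immediate: $\chi^C$ reduces (after the identities collapse) to $\psi$ whiskered with $\pl$, which is negative on $C$ because $\psi$ is a 2-cell of $\Ren$; and $\chi^A$ reduces to the identity on $\pr$, which is trivially positive. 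Hence \Cref{prop:pcc} applies and delivers exactly the description of $\check\phi \sthc \check\psi = (\stsupp H, \chi^C, \chi^A)$ claimed in the statement.

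The main obstacle is more bookkeeping than conceptual: one must carefully track how the universal property of the primed pullback is instantiated using $\phi$, and check that the resulting $H^l$ really coincides with the $\bar\phi$ formula given in the statement (and not merely with some isomorphic variant). Once $H$ is fixed by the universal construction and the two projections $\pl, \pr$ are unfolded, both sides of the PCC equation become a diagram chase where every square either commutes on the nose or is filled by $\phi$ in the prescribed place.
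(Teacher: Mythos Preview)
Your approach is the paper's: invoke the \PCC with $(H, H^l, H^r) = (H, \bar\phi, \id)$, exactly as the lemma's statement packages them. The paper's own proof is a single sentence (``a consequence of the \PCC''), so in spirit you match it. Two points of detail, however, need correction.

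First, in your description of the \PCC middle equation you have swapped which factor carries $\psi$. The inner $B$-component of $\check\psi$ is the identity (the \emph{right} leg of $\check G$ is $\id_B$), while the inner $B$-component of $\check\phi$ is $\phi$ (the \emph{left} leg of $\check F$ is $F$). Hence the equation to check is $\bar\phi$ pasted with identity on one side versus identity pasted with $\phi$ on the other; both unwind to $\phi$ whiskered by $\pr$, and this holds by construction of $H$ and $\bar\phi$. Your conclusion is right, but the labelling is off.

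Second, and more substantively, your polarity argument for $\chi^C$ does not go through as written. You claim that $\chi^C$ ``reduces to $\psi$ whiskered with $\pl$'' after identities collapse, but $\chi^C$ is the vertical pasting of $\bar\phi$ (top) with $\psi$ (bottom): a typical component is $G'(\phi_{\pr(x)}) \circ \psi_{\pl(x)}$. The $\bar\phi$ contribution does \emph{not} vanish. Negativity of $\psi$ handles the second factor, but for the first you must argue that the renaming $G'$ sends the negative morphism $\phi_{\pr(x)} \in B_-$ into $C_-$. That is not a consequence of ``$\psi$ is a $2$-cell of $\Ren$'', and it is precisely the step your ``identities collapse'' hides. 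The paper's one-line proof leaves this implicit too, but it is a genuine ingredient you have glossed over rather than verified.
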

\begin{proof}
  This is a consequence of the \PCC.
\end{proof}

\begin{proposition}
  Given thin groupoids $A,B,C$, the $2$\cells $\prtospcoh F G$ for $F \co A \to
  B$ and $G \co B \to C$ in $\Thin$ define a natural iso $\prtospcohe$
  of type
  \[
    \begin{array}{c}
      \prtosp{((-)_{(2)} \circ (-)_{(1)})}
      \To
      \prtosp-_{(1)} \sthc \prtosp-_{(2)}
      \\
      \co \Ren(A,B)\times \Ren(B,C) \to \Thin(C,A)
    \end{array}
    \zbox.
  \]
\end{proposition}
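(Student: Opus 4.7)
First I would observe that each $\prtospcohe_{F,G}$ is a strong iso: its underlying functor $\stsupp{\prtospcohe_{F,G}} = \pair{F, \id_A}$ is the mediating arrow for the pullback of $B \xto{\id_B} B \xot{F} A$, and since $\id_B$ is an isomorphism and pullbacks preserve isomorphisms, $\pair{F,\id_A}$ is itself an iso. Being strong, $\prtospcohe_{F,G}$ is a strong iso in $\Thin$.

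For naturality, fix $\phi : F \To F'$ in $\Ren(A,B)$ and $\psi : G \To G'$ in $\Ren(B,C)$; the goal is to verify
\[
  \prtospcohe_{F',G'} \circ \prtosp{(\psi\phi)}
  \qeq
  (\prtosp\phi \sthc \prtosp\psi) \circ \prtospcohe_{F,G}
\]
as positive $2$-cells $\prtosp{GF} \To \prtosp{F'} \sthc \prtosp{G'}$ in $\Thin(C,A)$. I would apply \Cref{lem:charact-sthcomp-prtosp-2cells} to obtain an explicit pasting description of $\prtosp\phi \sthc \prtosp\psi$, with underlying functor $H$ and side pastings $\chi^C$ and $\chi^A$. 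On the other side, by the definition of $\prtosp-$ on $2$-cells, $\prtosp{(\psi\phi)}$ is the positive morphism with identity underlying functor, side $\psi\phi$ on $C$, and identity on $A$. Since both $\prtospcohe_{F,G}$ and $\prtospcohe_{F',G'}$ are strong, they contribute only on the underlying-functor level.

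I would then check the equation component by component. On the underlying functor, both sides give an arrow $A \to \stsupp{\check{F'} \sthc \check{G'}}$ whose composites with $\pl$ and $\pr$ reduce to $F'$ and $\id_A$, hence agree with $\pair{F',\id_A}$ by the universal property of the target pullback. On the side over $A$, both pastings reduce to the identity natural transformation. On the side over $C$, unwinding the two pastings reduces both to the natural transformation with component $\psi_{F'(a)} \circ G(\phi_a) = G'(\phi_a) \circ \psi_{F(a)} = (\psi\phi)_a$ at $a \in A$, by the interchange law in $\Ren$.

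The main obstacle is purely bookkeeping: extracting the pasting data from \Cref{lem:charact-sthcomp-prtosp-2cells} and matching it against the elementary description of $\prtosp{(\psi\phi)}$. Once the two sides are written out and aligned, the equality follows from the universal property of pullbacks together with interchange, with no deeper difficulty anticipated.
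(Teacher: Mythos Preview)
Your proof is correct and follows essentially the same approach as the paper: both invoke \Cref{lem:charact-sthcomp-prtosp-2cells} to describe $\prtosp\phi \sthc \prtosp\psi$ explicitly and then match it against $\prtospcohe_{F',G'} \circ \prtosp{(\psi\phi)} \circ \finv{\prtospcohe_{F,G}}$. The paper simply reads this conjugation formula off directly from the lemma's statement (noting that the underlying functor $H$ there is literally $\stsupp{\prtospcohe_{F',G'}} \circ \id_A \circ \finv{\stsupp{\prtospcohe_{F,G}}}$), whereas you unpack the same equality component by component; the content is identical.
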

\begin{proof}
  Let $\phi \co F \To F' \co A \to B$ and $\psi \co G \To G'\co B \to C$
  be two $2$\cells of $\Ren$. 

We must show that
  \[
    \begin{tikzcd}
      \prtosp{(GF)}
      \ar[r,"\prtosp{(\psi\phi)}"]
      \ar[d,"\prtospcoh{F}{G}"']
      \cphar[rd,"="]
      &
      \prtosp{(G'F')}
      \ar[d,"\prtospcoh{F'}{G'}"]
      \\
      \prtosp{F} \sthcomp \prtosp{G}
      \ar[r,"\prtosp{\phi} \sthcomp \prtosp{\psi}"']
      &
      \prtosp{F'} \sthcomp \prtosp{G'}
    \end{tikzcd}
  \]
  in $\Thin(C,A)$. But this equation can easily be deduced from
  \Cref{lem:charact-sthcomp-prtosp-2cells}, whose statement implies that
  \[
    \prtosp{\phi} \sthcomp \prtosp{\psi}
    =
    \prtospcoh{F'}{G'}
    \circ
    \prtosp{(\psi\phi)}
    \circ
    \finv{\prtospcoh{F}{G}}
    \qedhere
  \]
\end{proof}
\noindent We can now finish the proof of \Cref{prop:check-psfunctor}:
\begin{proof}
  For every $A$ and $B$, $\prtosp-$ can easily be seen to define a functor
  $\prtosp-_{A,B} \co \Ren(A,B) \to \Thin(B,A)$. We are just left to show
  that the usual coherence conditions for pseudofunctors are satisfied by
  $\prtospcohe$. But the required coherence conditions follow directly from the
  universal property of the pullback.
\end{proof}

\subsection{The $\oc$ functor and its structure}
\label{sec:app-oc-functor}
\begingroup \renewcommand\rho{R}%
\renewcommand\sigma{S}%
\renewcommand\tau{T}%
We now finish the definition of the pseudofunctor $\oc \co \Thin \to \Thin$.
First, while we described weak morphisms between two spans $S,S' \co A \stto B$
as triples $(F,F^A,F^B)$, often identifying $F$ with the whole triple, we will
in the following often refer to the first element of the triple $F$ by
$\stsuppreally F$, for disambiguation. We now start by proving the naturality of
the coherence $\pshccohe$.
\begin{lemma}
  \label{lem:charact-sthcomp-oc}
  Let $F=(\stsuppreally F,F^A,F^B) \co \sigma \To \sigma' \co \cA \stto \cB$ and
  $G=(\stsuppreally G,G^B,G^C) \co \tau \To \tau' \co \cB \stto \cC$. Let
  $\chi^\sigma$ and $\chi^\tau$ be two $2$\cells given by the definition of
  horizontal composition so that $G\sthcomp F$ is given by the two $2$\cells
  \[
    \begin{tikzcd}
      \stsupp{\tau \sthcomp \sigma}
      \ar[r,"\stsuppreally{G\sthcomp F}"]
      \ar[d,"\pl"']
      \cphar[rd,"\xTo{\chi^\sigma}"]
      &
      \stsupp{\tau' \sthcomp \sigma'}
      \ar[d,"\pl"]
      \\
      \stsupp\sigma
      \ar[r,"\stsuppreally F"{description}]
      \ar[d,"\stdisp\sigma_A"']
      \cphar[rd,"\xTo{F^A}"]
      &
      \stsupp{\sigma'}
      \ar[d,"\stdisp{\sigma'}_A"]
      \\
      A
      \ar[r,"\unit {A}"']
      &
      A
    \end{tikzcd}
    \qand
    \begin{tikzcd}
      \stsupp{\tau \sthcomp \sigma}
      \ar[r,"\stsuppreally{G\sthcomp F}"]
      \ar[d,"\pr"']
      \cphar[rd,"\xTo{\chi^\tau}"]
      &
      \stsupp{\tau' \sthcomp \sigma'}
      \ar[d,"\pr"]
      \\
      \stsupp\tau
      \ar[r,"\stsuppreally G"{description}]
      \ar[d,"\stdisp\tau_C"']
      \cphar[rd,"\xTo{G^C}"]
      &
      \stsupp{\tau'}
      \ar[d,"\stdisp{\tau'}_C"]
      \\
      C
      \ar[r,"\unit {C}"']
      &
      C
    \end{tikzcd}
    \zbox.
  \]
  The composition $\oc G \sthcomp \oc F$ is then given by the two $2$\cells
  \[
    \begin{tikzcd}
      \stsupp{\oc\tau\sthcomp\oc\sigma}
      \ar[r,"\finv{\pshccoh{\sigma}{\tau}}"]
      \ar[d,"\pl"']
      \cphar[rd,"="]
      &[-0.3em]
      \oc(\stsupp{\tau \sthcomp \sigma})
      \ar[r,"\oc(\stsuppreally{G\sthcomp F})"]
      \ar[d,"\oc\pl"{description}]
      \cphar[rd,"\xTo{\oc\chi^\sigma}"]
      &[-0.3em]
      \oc(\stsupp{\tau' \sthcomp \sigma'})
      \ar[r,"\pshccoh{\sigma'}{\tau'}"]
      \ar[d,"\oc\pl"{description}]
      \cphar[rd,"="]
      &[-0.3em]
      \stsupp{\oc\tau'\sthcomp\oc\sigma'}
      \ar[d,"\pl"]
      \\
      \oc\stsupp\sigma
      \ar[r,"\unit{\oc\stsupp\sigma}"{description}]
      \ar[d,"\oc\stdisp\sigma_A"{description}]
      \cphar[rd,"="]
      &
      \oc\stsupp\sigma
      \ar[r,"\oc \stsuppreally F"{description}]
      \ar[d,"\oc\stdisp\sigma_A"{description}]
      \cphar[rd,"\xTo{\oc F^A}"]
      &
      \oc\stsupp{\sigma'}
      \ar[r,"\unit{\oc\stsupp{\sigma'}}"{description}]
      \ar[d,"\oc\stdisp{\sigma'}_A"{description}]
      \cphar[rd,"="]
      &
      \oc\stsupp\sigma'
      \ar[d,"\oc\stdisp{\sigma'}_A"]
      \\
      \oc A
      \ar[r,"\oc\unit {A}"']
      &
      \oc A
      \ar[r,"\oc\unit {A}"']
      &
      \oc A
      \ar[r,"\oc\unit {A}"']
      &
      \oc A
    \end{tikzcd}
  \]
  and
  \[
    \begin{tikzcd}
      \stsupp{\oc\tau\sthcomp\oc\sigma}
      \ar[r,"\finv{\pshccoh{\sigma}{\tau}}"]
      \ar[d,"\pr"']
      \cphar[rd,"="]
      &[-0.3em]
      \oc(\stsupp{\tau \sthcomp \sigma})
      \ar[r,"\oc(\stsuppreally{G\sthcomp F})"]
      \ar[d,"\oc\pr"{description}]
      \cphar[rd,"\xTo{\oc\chi^\tau}"]
      &[-0.3em]
      \oc(\stsupp{\tau' \sthcomp \sigma'})
      \ar[r,"\pshccoh{\sigma'}{\tau'}"]
      \ar[d,"\oc\pr"{description}]
      \cphar[rd,"="]
      &[-0.3em]
      \stsupp{\oc\tau'\sthcomp\oc\sigma'}
      \ar[d,"\pr"]
      \\
      \oc\stsupp\tau
      \ar[r,"\unit{\oc\stsupp\tau}"{description}]
      \ar[d,"\oc\stdisp\tau_C"{description}]
      \cphar[rd,"="]
      &
      \oc\stsupp\tau
      \ar[r,"\oc \stsuppreally G"{description}]
      \ar[d,"\oc\stdisp\tau_C"{description}]
      \cphar[rd,"\xTo{\oc G^C}"]
      &
      \oc\stsupp{\tau'}
      \ar[r,"\unit{\oc\stsupp{\tau'}}"{description}]
      \ar[d,"\oc\stdisp{\tau'}_C"{description}]
      \cphar[rd,"="]
      &
      \oc\stsupp\tau'
      \ar[d,"\oc\stdisp{\tau'}_C"]
      \\
      \oc C
      \ar[r,"\oc\unit {C}"']
      &
      \oc C
      \ar[r,"\oc\unit {C}"']
      &
      \oc C
      \ar[r,"\oc\unit {C}"']
      &
      \oc C
    \end{tikzcd}
    \zbox.
  \]
\end{lemma}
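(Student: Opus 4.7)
The plan is to prove this by a direct application of the Paved Characterization of Composition (\Cref{prop:pcc}) to the functor $\oc F$ and $\oc G$, using the isomorphism $\pshccoh{\sigma}{\tau} : \oc(\stsupp{\tau \sthcomp \sigma}) \iso \stsupp{\oc\tau \sthcomp \oc\sigma}$ as a bridge. Concretely, I would take as candidate composition
\[
  \stsupp H \,=\, \pshccoh{\sigma'}{\tau'} \circ \oc(\stsuppreally{G \sthcomp F}) \circ \finv{\pshccoh{\sigma}{\tau}}
\]
and define the natural transformations $H^l, H^r$ (in the sense of the PCC, not yet the ones involving display maps) so that they paste to the two $2$-cells displayed in the statement, respectively witnessing the left- and right-leg condition.

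Two inputs are needed. First, I would use the fact that $G \sthcomp F$ is characterized, via its own invocation of the PCC, by natural transformations $\chi^\sigma$ and $\chi^\tau$ fitting the left- and right-leg equalities for $F$ and $G$. Applying the pseudofunctor $\oc : \Gpd \to \Gpd$ to those equalities yields analogous $2$-cells $\oc\chi^\sigma$ and $\oc\chi^\tau$ that compare $\oc\pl \circ \oc(\stsuppreally{G\sthcomp F})$ with $\oc \stsuppreally F \circ \oc \pl$, and symmetrically on the right. Second, I need the naturality property of the coherence $\pshccohe$ with respect to the pullback projections: by construction $\pshccoh{\sigma}{\tau}$ is the comparison map induced by the universal property of $\oc \tau \sthc \oc \sigma$, so the triangles $\pl \circ \pshccoh{\sigma}{\tau} = \oc \pl$ and $\pr \circ \pshccoh{\sigma}{\tau} = \oc \pr$ commute on the nose (similarly for $\sigma', \tau'$). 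Pasting these commuting triangles with $\oc \chi^\sigma$ and $\oc \chi^\tau$ produces the left and right $2$-cells of the statement, column by column; checking the required PCC equation (equality of the two pasted $2$-cells down to the base $\oc B$) reduces to applying $\oc$ to the analogous equation satisfied by $\chi^\sigma$ and $\chi^\tau$ for $G \sthcomp F$.

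For the polarity condition of the PCC, I note that $\oc$ preserves negative and positive symmetries (by \Cref{subsubsec:pol_subgr}, the polarized sub-groupoids of $\oc A$ are precisely images of $A_-$ and $A_+$) and that $\pshccohe$ is a strong $2$-cell of $\Thin$, so its application via the bottom leg is by identities. Therefore the composed natural transformations inherit negativity on $\oc A$ and positivity on $\oc C$ from the corresponding polarities of $F^A$ and $G^C$. Hence all hypotheses of the PCC are met and $\oc G \sthcomp \oc F$ coincides with the triple made from $\stsupp H$ and the two displayed $2$-cells.

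The main obstacle is not conceptual but notational: one must carefully track the three columns of the diagrams (bridging $\pshccoh{\sigma}{\tau}$, the $\oc$-image of the equality witnessing $G\sthcomp F$, and $\pshccoh{\sigma'}{\tau'}$) and verify that the outer columns reduce to identities on the display legs, so that only the middle column contributes a non-trivial $2$-cell. This is purely a diagram-chase and reduces, in each column, either to the functoriality of $\oc$ on an identity or to the universal property of the pullback $\oc \tau \sthc \oc \sigma$ defining $\pshccoh{\sigma}{\tau}$.
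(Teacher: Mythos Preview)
Your proposal is correct and follows essentially the same route as the paper: apply the \PCC with the candidate functor $\pshccoh{\sigma'}{\tau'} \circ \oc(\stsuppreally{G\sthcomp F}) \circ \finv{\pshccoh{\sigma}{\tau}}$, derive the required equality over $\oc B$ by applying the $2$-functor $\oc$ to the equality satisfied by $\chi^\sigma,\chi^\tau$, and obtain the polarity condition from the fact that $\oc$ sends negative (\resp positive) $2$-cells to negative (\resp positive) ones. The paper's proof is just a two-sentence sketch of exactly this argument.
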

\begin{proof}
  We use the \PCC. The respective positivity and negativity of the proposed
  $2$\cells follow from the positivity of the vertical composition of
  $\chi^\sigma$ and $F^A$ and the negativity of the vertical composition of
  $\chi^\tau$ and $G^C$. We are left to show the top row of the two proposed
  $2$\cells satisfy the equality required by the \PCC, but it follows
  from that satisfied by $\chi^\sigma$ and $\chi^\tau$ by the functoriality of
  $\oc \co \Gpd \to \Gpd$ on $2$\cells.
\end{proof}

We can now conclude naturality:

\begin{lemma}
  \label{lem:pshccohe-nat-isom}
  The morphisms $\pshccohe^{\cA,\cB,\cC}_{\sigma,\tau}$ define a natural iso
  \begin{gather*}
    m^{\cA,\cB,\cC} \co \oc ((-) \sthcomp (-)) \To \oc (-) \sthcomp \oc (-)\\
    \co \Thin(\cA,\cB) \times \Thin(\cB,\cC) \to \Thin(\cA,\cC)
    \zbox.
  \end{gather*}
\end{lemma}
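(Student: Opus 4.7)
The plan is to reduce the naturality claim directly to the preceding Lemma~\ref{lem:charact-sthcomp-oc} (\textbf{charact-sthcomp-oc}), which has already done almost all of the work via the \PCC. I will proceed in three short steps.

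First, I will argue that each $\pshccoh{\sigma}{\tau}$ is invertible. This is a structural fact: the comparison map is defined by the universal property of the pullback $\oc \tau \sthc \oc \sigma$ applied to the square obtained by applying $\oc \co \Gpd \to \Gpd$ to the pullback defining $\tau \sthc \sigma$; since $\oc$ on groupoids preserves pullbacks, the resulting square is itself a pullback, and so the induced map is an iso. Its image under the embedding of functors into spans (the lifting $\prtosp{-}$ of \Cref{prop:check-psfunctor}) gives a strong, hence positive, invertible $2$-cell in $\Thin$.

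Next, I need to verify, for positive weak morphisms $F \co \sigma \To \sigma' \in \Thin(\cA,\cB)$ and $G \co \tau \To \tau' \in \Thin(\cB,\cC)$, the equality of $2$-cells in $\Thin(\oc\cA,\oc\cC)$
\[
(\oc G \sthc \oc F) \circ \pshccoh{\sigma}{\tau}
\qeq
\pshccoh{\sigma'}{\tau'} \circ \oc(G \sthc F)\zbox.
\]
Both sides are positive weak morphisms, so by the \PCC (\Cref{prop:pcc}) it suffices to exhibit a common paved representative. Here Lemma~\ref{lem:charact-sthcomp-oc} is decisive: its explicit computation of $\oc G \sthc \oc F$ already presents this composite with outer factors $\finv{\pshccoh{\sigma}{\tau}}$ on the left and $\pshccoh{\sigma'}{\tau'}$ on the right, sandwiching the middle row $\oc(\stsuppreally{G\sthc F})$ together with the $\oc$-images $\oc\chi^\sigma, \oc\chi^\tau$ and $\oc F^A, \oc G^C$ of the pavings for $G \sthc F$. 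Rearranging the outer iso $\finv{\pshccoh{\sigma}{\tau}}$ to the right and recognising the middle paving as exactly that defining $\oc(G \sthc F)$ (which, by definition of $\oc$ on $2$-cells of $\Thin$, is obtained by applying $\oc \co \Gpd \to \Gpd$ pointwise to $\stsuppreally{G \sthc F}$ and its pavings) yields the desired equality.

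The only mildly subtle point, and the one I expect to be the main sticking point, is the bookkeeping needed to confirm that the $\oc$-image in $\Thin$ of the paving of $G \sthc F$ produced by the \PCC coincides strictly (not merely up to a further iso) with the middle block appearing in Lemma~\ref{lem:charact-sthcomp-oc}. This amounts to checking that the strict $2$-functoriality of $\oc \co \Gpd \to \Gpd$ on $2$-cells commutes with the vertical pasting of $\chi^\sigma$ with $F^A$ and of $\chi^\tau$ with $G^C$. This is immediate since $\oc$ on $\Gpd$ is a strict $2$-functor, so vertical and whiskered pastings are preserved on the nose. Naturality then follows, and the fact that $\pshccohe$ is an iso from step one promotes this natural transformation to a natural isomorphism.
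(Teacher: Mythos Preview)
Your approach is essentially the same as the paper's: both reduce naturality directly to Lemma~\ref{lem:charact-sthcomp-oc}, reading off the equation $\pshccoh{\sigma'}{\tau'} \circ \oc(G \sthc F) \circ \finv{\pshccoh{\sigma}{\tau}} = \oc G \sthc \oc F$ and rearranging. One small slip: in your first step, the comparison $\pshccoh{\sigma}{\tau}$ is already a strong morphism of spans (hence a positive $2$-cell) and is not obtained via the lifting $\prtosp{-}$ of \Cref{prop:check-psfunctor}, which turns renamings into \emph{$1$-cells}, not $2$-cells; otherwise the argument is correct.
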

\begin{proof}
  Let $F \co \sigma \To \sigma' \co \cA \stto \cB$ and $G \co \tau
  \To \tau' \co \cB \stto \cC$. 

We must show that
  \[
    m^{\cA,\cB,\cC}_{\sigma',\tau'} \circ \oc(G \sthcomp F) =
    (\oc G \sthcomp \oc F) \circ m^{\cA,\cB,\cC}_{\sigma,\tau}
    \zbox.
  \]

  But it directly follows from \Cref{lem:charact-sthcomp-oc}, whose conclusion
  states in particular that
\[
     m^{\cA,\cB,\cC}_{\sigma',\tau'} \circ \oc(G \sthcomp F)
     \circ \finv{(m^{\cA,\cB,\cC}_{\sigma,\tau})}
    =
       \oc G \sthcomp \oc F 
       \zbox.
       \tag*{\qedhere}
\]
\end{proof}

We can thus conclude that $\oc$ is a pseudofunctor:
\begin{proposition}
  \label{prop:oc-pseudofunctor}
  The functor $\oc \co \Gpd \to \Gpd$ induces a pseudofunctor $\oc \co \Thin \to
  \Thin$.
\end{proposition}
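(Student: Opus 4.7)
The plan is to assemble the pieces already established in this section into a verification of the pseudofunctor axioms. Concretely, a pseudofunctor $\oc \co \Thin \to \Thin$ requires: (i) local functors $\oc_{\cA,\cB} \co \Thin(\cA,\cB) \to \Thin(\oc\cA,\oc\cB)$, (ii) invertible modifications $\psu_\cA \co \stid{\oc\cA} \To \oc\stid\cA$ and $\pshccohe_{\sigma,\tau}$ natural in $\sigma,\tau$, and (iii) the pentagon coherence for $\pshccohe$ and $\stass$, together with the two unit triangles relating $\pshccohe$, $\psu$, $\stlu$, $\stru$.

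First I would verify local functoriality: the assignment $\sigma \mapsto \oc \sigma$ on strategies has already been shown to land in $\T_{\oc\cA \lin \oc\cB}$, and on a positive weak morphism $(\stsuppreally F, F^A, F^B)$ one sets $\oc F \defeq (\oc \stsuppreally F, \oc F^A, \oc F^B)$. Since $\oc \co \Gpd \to \Gpd$ is a strict $2$\nbd-endofunctor whose action on $2$\nbd-cells is pointwise on families, preservation of identities and of vertical composition is immediate, and positivity is preserved because $(\oc \cA)_+ = \BFam^{\id}(\cA_+)$ is closed under componentwise application of positive symmetries. Next, for the identity coherence, since $\oc$ preserves identity functors strictly, $\oc \stid\cA = \stid{\oc\cA}$ \emph{on the nose}, so we may take $\psu_\cA \defeq \id_{\stid{\oc\cA}}$; the two unit triangles then collapse to equalities. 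The composition coherence $\pshccohe$ has been defined via the universal property of the pullback and shown to form a natural iso in \Cref{lem:pshccohe-nat-isom}.

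The remaining task, and the main bookkeeping exercise, is the pseudofunctor pentagon: for any three composable strategies $\rho \co \cA \stto \cB$, $\sigma \co \cB \stto \cC$, $\tau \co \cC \stto \cD$, one must check that
\[
  (\id_{\oc\tau} \sthc \pshccoh{\rho}{\sigma}) \circ \pshccoh{\sigma\sthcomp\rho}{\tau} \circ \oc \stass_{\rho,\sigma,\tau}
\]
and
\[
  \stass_{\oc\rho,\oc\sigma,\oc\tau} \circ (\pshccoh{\sigma}{\tau} \sthc \id_{\oc\rho}) \circ \pshccoh{\rho}{\tau\sthcomp\sigma}
\]
agree as positive morphisms from $\oc((\tau\sthcomp\sigma)\sthcomp\rho)$ to $\oc\tau \sthcomp(\oc\sigma\sthcomp\oc\rho)$. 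Both sides are mediating maps into a triple iterated pullback in $\Gpd$, and the triple pullback of the image diagram $\oc\stsupp\rho \to \oc\cB \ot \oc\stsupp\sigma \to \oc\cC \ot \oc\stsupp\tau$ is characterised by its three projections. Since $\oc$ preserves pullbacks (which was the very construction of $\pshccohe$), each projection of both composites reduces, after unfolding the definition of $\stass$ and $\pshccohe$, to the same iterated projection of $\oc \stsupp{(\tau\sthcomp\sigma)\sthcomp\rho}$, yielding the equality by the uniqueness part of the universal property.

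The main obstacle is not conceptual but notational: the pentagon and unit diagrams unfold into a dense web of pullback projections, and making the verification actually rigorous requires carefully choosing the canonical factorisations through the intermediate pullbacks $\stsupp{\tau\sthcomp\sigma}$, $\stsupp{\oc\tau\sthcomp\oc\sigma}$, etc., in a way compatible with $\pshccohe$ on both sides. Once these choices are fixed, strict $2$\nbd-functoriality of $\oc \co \Gpd \to \Gpd$ together with uniqueness of pullback mediators dispatches each coherence. The triangle axioms involving $\psu$ are similarly (and more easily) discharged, since $\psu$ is an identity.
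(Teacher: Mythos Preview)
Your proof is correct and follows essentially the same approach as the paper's: both dispatch the pseudofunctor coherences via the universal property of pullbacks, with your version spelling out considerably more detail than the paper's two-sentence sketch. One minor point: you reuse the symbol $\psu$ for the pseudofunctor's unit coherence, whereas the paper reserves $\psu \defeq \check\eta$ for the pseudocomonad counit---but your intended meaning is clear and the argument is sound.
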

\begin{proof}
  By \Cref{lem:pshccohe-nat-isom}, we have an adequate natural isomorphism
  expressing the functoriality of $\oc$ on $\Thin$. The coherence laws for
  pseudofunctors can be directly verified by the universal properties of the
  pullbacks involved in the horizontal compositions appearing in these laws.
\end{proof}
\endgroup

\subsection{The $\oc$ pseudocomonad}
\label{sec:app-oc-pseudocomonad}
\newcommand\defpmfletter{{\mathbf{H}}}%
\newcommand\otherpmfletter{{\mathbf{K}}}%

We are going to derive the $\oc$ pseudocomonad on $\Thin$ from the $\oc$
pseudomonad on $\Gpd$ through functoriality. Before using this functoriality
argument, we need to describe what are the (higher) categories we are going to
apply it to. The domain (bi)category\simon{non! ici c'est une 2-cat stricte}
will be the one of endofunctors on $\Gpd$ with properties similar to the ones of
$\oc \co \Gpd \to \Gpd$, while the codomain (bi)category will be the one of
endopseudofunctors on $\Thin$. We shall first discuss how to relate some
functors on $\Gpd$ to pseudofunctors on $\Thin$.

Given a functor $\defpmfletter \co \Gpd \to \Gpd$, there is a canonical uniform groupoid
$\defpmfletter\cA = (\defpmfletter A,\U_{\defpmfletter\cA})$ associated to any uniform groupoid $\cA$, where
$\U_{\defpmfletter\cA} = \set{\defpmfletter S \mid  S \in
\U_\cA}^{\perp\perp}$.\pierre{Bonne idée d'appeler le foncteur $S$? Et
les stratégies $\sigma$?}
\begin{proposition}
  If $\defpmfletter$ preserves pullbacks, and pullbacks which are bipullbacks, then given
  uniform groupoids $\cA$ and $\cB$, and $ S \in \U_{\cA \linto \cB}$, we
  have $\defpmfletter S \in \U_{\defpmfletter\cA \linto \defpmfletter\cB}$.
\end{proposition}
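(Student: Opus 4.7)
The plan is to invoke \Cref{prop:carac_lin} with the generating class
\[
  \SU \;=\; \{\defpmfletter U \mid U \in \U_\cA\} \;\subseteq\; \U_{\defpmfletter\cA}\,,
\]
which by the very definition of $\U_{\defpmfletter\cA}$ satisfies $\SU^{\perp\perp} = \U_{\defpmfletter\cA}$. The side condition $(\defpmfletter\cA,\id_{\defpmfletter\cA}) \in \SU$ is free, since $(\cA,\id_\cA) \in \U_\cA$ (as noted in the paper after \Cref{prop:carac_lin}) and functors preserve identities, so $\defpmfletter(\cA,\id_\cA) = (\defpmfletter\cA,\id_{\defpmfletter\cA})$.

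For condition \emph{(1)} of \Cref{prop:carac_lin}, fix $U \in \U_\cA$; we must show $\defpmfletter S @ \defpmfletter U \in \U_{\defpmfletter\cB}$. The composition $S @ U$ is defined via the pullback of $U \to \cA \leftarrow S$ (using the left leg $\display^S_\cA$), and since $\defpmfletter$ preserves pullbacks, applying $\defpmfletter$ to this composition diagram yields a pullback square isomorphic to the one defining $\defpmfletter S @ \defpmfletter U$. Hence $\defpmfletter S @ \defpmfletter U \cong \defpmfletter(S @ U)$ as prestrategies on $\defpmfletter\cB$. But $S \in \U_{\cA \linto \cB}$ and $U \in \U_\cA$, so by \Cref{prop:carac_lin}\emph{(1)} applied to $S$, we have $S @ U \in \U_\cB$. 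Then $\defpmfletter(S@U) \in \SU_\cB := \{\defpmfletter V \mid V \in \U_\cB\} \subseteq \SU_\cB^{\perp\perp} = \U_{\defpmfletter\cB}$, as desired.

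For condition \emph{(2)}, we must show $(\defpmfletter S, \defpmfletter \display^S_\cA) \in \U_{\defpmfletter\cA}^\perp$. Since $\U_{\defpmfletter\cA}^\perp = (\SU^{\perp\perp})^\perp = \SU^\perp$, it suffices to test against elements of $\SU$. So fix $U \in \U_\cA$; we must check that the pullback of $\defpmfletter U \to \defpmfletter\cA \leftarrow \defpmfletter S$ is a bipullback. By \Cref{prop:carac_lin}\emph{(2)} applied to $S$, we have $\display^S_\cA \in \U_\cA^\perp$, so the pullback of $U \to \cA \leftarrow S$ is a bipullback. Applying $\defpmfletter$, preservation of pullbacks identifies the image square with the pullback of $\defpmfletter U \to \defpmfletter\cA \leftarrow \defpmfletter S$, and the second preservation hypothesis on $\defpmfletter$ ensures that this image pullback is still a bipullback. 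This establishes \emph{(2)}, and concludes $\defpmfletter S \in \U_{\defpmfletter\cA \linto \defpmfletter\cB}$ by \Cref{prop:carac_lin}.

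The main conceptual step — and where one must be slightly careful — is the canonical identification of $\defpmfletter S @ \defpmfletter U$ with $\defpmfletter(S @ U)$: one needs to check that under $\defpmfletter$, not only is the pullback sent to a pullback, but the resulting projections to $\defpmfletter \cB$ agree on the nose (or up to canonical iso) with those defining the application. This is purely functorial bookkeeping, but it is the sole nontrivial step; everything else is a mechanical relativization of the argument for $S$ via \Cref{prop:carac_lin}.
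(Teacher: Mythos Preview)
Your proof is correct and follows exactly the approach the paper intends: the paper's own proof is the single sentence ``This is a direct consequence of \Cref{prop:carac_lin},'' and you have spelled out precisely that consequence, choosing the generating class $\SU = \{\defpmfletter U \mid U \in \U_\cA\}$ and verifying both conditions via the preservation hypotheses on $\defpmfletter$.
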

\begin{proof}
  This is a direct consequence of \Cref{prop:carac_lin}.
\end{proof}
We call \emph{bicartesian functors} the functors $\defpmfletter$ which satisfy the
hypothesis of the above property.

A \emph{\pmfunctor} is a tuple $(\defpmfletter,\defpmfletter^+,\iota)$ with
$\defpmfletter,\defpmfletter^+$ being functors $\Gpd \to \Gpd$ where
$\defpmfletter$ and $\defpmfletter^+$ are bicartesian and preserve functors
(between groupoids) that are bijective on objects (of the groupoids)\pierre{Ça
  veut dire quoi?}\simon{détails ok?}, and such that $\defpmfletter^+$ preserves
discrete groupoids, and $\iota \co \defpmfletter^+ \To \defpmfletter$ being a
natural transformation which is pointwise (that is, such that each $\iota_X$ is)
monomorphic and surjective on objects (of the groupoids)\pierre{Not sure what it
  means}\simon{détails ok?}, satisfying moreover that it is \emph{bicartesian},
meaning that its naturality squares are both pullbacks and bipullbacks.
Intuitively, the definition of \pmfunctor is an abstraction of the case of $\oc
\co \Gpd \to \Gpd$, from which we derive a functor $\Thin \to \Thin$. In the
case of $\oc$, given a thin groupoid $A$, a positive sub-groupoid $(\oc A)_+$ is
defined from a construction which is not derivable from the definition of $\oc$
and the data of $A$ and $A_+$, so that we have to take it into account in our
definition of \pmfunctor, in the form of a functor $\defpmfletter^+$ and a
natural transformation $\iota \co \defpmfletter^+ \To \defpmfletter$. We should
\emph{a priori} also require similar data for the negative side, but it so
happens that, in the case of $\oc$, $(\oc A)_- = \oc (A_-)$, so that it is in
fact not necessary. In order to show that $\oc$ induces a pseudocomonad on
$\Thin$, the \pmfunctors we will consider will only be iterated compositions of
$\oc$.

Given a \pmfunctor $(\defpmfletter,\defpmfletter^+,\iota)$ and a thin groupoid
$\cA$, there is a canonical thin groupoid $\defpmfletter \cA$ whose underlying
uniform groupoid is defined as earlier, whose class of thin prestrategies is
$\T_{\defpmfletter\cA} = \set{\defpmfletter S \mid S \in
  \T_\cA}^{\pperp\pperp}$, and whose negative and positive sub-groupoids are
$(\defpmfletter \cA)_- = \defpmfletter A_-$ and $(\defpmfletter\cA)_+ =
\defpmfletter^+A_+$ with embeddings given by the compositions
\[
  \defpmfletter A_-
  \xto{\defpmfletter(\id^-_A)}
  \defpmfletter A
  \qand
  \defpmfletter^+A_+
  \xto{\defpmfletter^+(\id^+_A)}
  \defpmfletter^+A
  \xto{\iota_A}
  \defpmfletter A
  \zbox.
\]

By the conditions of \pmfunctors\pierre{What is a thin functor?}\simon{pardon,
  je voulais dire \pmfunctor}, they are elements of $\T_{\defpmfletter\cA}$ and
$\T_{\defpmfletter\cA}^{\pperp}$ as required (exercise to the reader).
\begin{proposition}
  \label{prop:pmfunctor-span-img}
  Given a \pmfunctor $(\defpmfletter,\defpmfletter^+,\iota)$ and thin groupoids $\cA$ and $\cB$, and
  $ S \in \T_{\cA \linto \cB}$, $\defpmfletter S \in \T_{\defpmfletter\cA \linto \defpmfletter\cB}$.
\end{proposition}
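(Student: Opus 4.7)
The plan is to apply Proposition \ref{prop:carac_lin_thin} twice, bridging the thin structures on $\cA, \cB$ and on $\defpmfletter\cA, \defpmfletter\cB$ via the preservation properties built into the notion of \pmfunctor.

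First, since $\defpmfletter$ is bicartesian, the preceding proposition ensures $\defpmfletter S \in \U_{\defpmfletter\cA \lin \defpmfletter\cB}$, so the uniformity prerequisite of Proposition \ref{prop:carac_lin_thin} is met. For thinness, I would choose the generating class $\SU = \{\defpmfletter R \mid R \in \T_\cA\}$. By the very construction of $\T_{\defpmfletter\cA}$, we have $\SU^{\pperp\pperp} = \T_{\defpmfletter\cA}$, so $\SU$ is a legitimate generating class. Proposition \ref{prop:carac_lin_thin} then reduces the goal to showing that $\defpmfletter S @ \defpmfletter R \in \T_{\defpmfletter\cB}$ for every $R \in \T_\cA$.

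Because $\defpmfletter$ preserves pullbacks, the span $\defpmfletter S @ \defpmfletter R$ is canonically (strongly) isomorphic to $\defpmfletter(S @ R)$. From $S \in \T_{\cA \lin \cB}$ and $R \in \T_\cA$, a second application of Proposition \ref{prop:carac_lin_thin} (in the original thin groupoids) yields $S @ R \in \T_\cB$, so $\defpmfletter(S @ R)$ lies in the generating class $\{\defpmfletter T \mid T \in \T_\cB\}$ and a fortiori in $\T_{\defpmfletter\cB}$. The argument is thus essentially formal once the biorthogonal characterization is available; the only point deserving care is tracking the canonical isomorphism $\defpmfletter S @ \defpmfletter R \iso \defpmfletter(S @ R)$ and observing that membership in $\T_{\defpmfletter\cB}$ is invariant under positive isomorphism, a routine consequence of closure under bi-thin-orthogonal. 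I do not anticipate a substantial obstacle: the heavy lifting has already been done in setting up Proposition \ref{prop:carac_lin_thin} and in verifying that bicartesianness suffices to transport uniformity.
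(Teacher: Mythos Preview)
Your proposal is correct and follows essentially the same route as the paper, which simply records that the result is an easy consequence of the hypotheses on a \pmfunctor together with \Cref{prop:carac_lin,prop:carac_lin_thin}. You have spelled out precisely the intended argument: reduce via the generating class $\{\defpmfletter R \mid R \in \T_\cA\}$, use pullback preservation to identify $\defpmfletter S @ \defpmfletter R$ with $\defpmfletter(S@R)$ (a strong, hence trivially positive, isomorphism), and conclude.
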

\begin{proof}
  This is an easy consequence of the hypotheses on a \pmfunctor and
  \Cref{prop:carac_lin,prop:carac_lin_thin}.
\end{proof}
\begin{proposition}
  \label{prop:pmfunctor-pres-polarity}
  Given a \pmfunctor $(\defpmfletter,\defpmfletter^+,\iota)$ and a thin groupoid $\cA$, $\defpmfletter$ preserves
  negative (\resp positive) $2$\cells.
\end{proposition}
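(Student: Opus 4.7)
The strategy is to unfold the definitions of the polarized sub-groupoids of $\defpmfletter\cA$ and exploit the $2$-functoriality of $\defpmfletter$ and $\defpmfletter^+$ (treating them as $2$-functors on $\Gpd$, in line with the paper's treatment of $\oc$) along with the naturality of $\iota$. Recall that a $2$-cell in $\defpmfletter\cA$ is negative iff its components lie in the image of the embedding $\defpmfletter(\id^-_A)\co \defpmfletter A_- \to \defpmfletter A$, and positive iff its components lie in the image of $\iota_A \circ \defpmfletter^+(\id^+_A)\co \defpmfletter^+ A_+ \to \defpmfletter A$.

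For the negative case, a negative $2$-cell $\alpha\co f \To g$ of $\cA$ factors as $\id^-_A \cdot \tilde\alpha$ for a $2$-cell $\tilde\alpha$ valued in $A_-$: the source and target functors land on objects of $A_-$ (which shares objects with $A$), and the components of $\alpha$ are in $A_-$ by hypothesis. Applying $\defpmfletter$ yields $\defpmfletter\alpha = \defpmfletter(\id^-_A) \cdot \defpmfletter\tilde\alpha$, whose components lie by construction in the image of the defining embedding of $(\defpmfletter\cA)_-$; hence $\defpmfletter\alpha$ is negative.

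For the positive case, the symmetric factorization $\alpha = \id^+_A \cdot \tilde\alpha$ with $\tilde\alpha$ valued in $A_+$ lets us apply $\defpmfletter^+$ to obtain a $2$-cell $\defpmfletter^+\tilde\alpha$ valued in $\defpmfletter^+ A_+$; whiskering by $\iota_A \circ \defpmfletter^+(\id^+_A)$ then produces a $2$-cell whose components lie by construction in the image of the defining embedding of $(\defpmfletter\cA)_+$. Invoking the naturality of $\iota$ (as a $2$-natural transformation) together with the $2$-functoriality of $\defpmfletter$, one computes
\[
\big(\iota_A \circ \defpmfletter^+(\id^+_A)\big) \cdot \defpmfletter^+\tilde\alpha
\qeq
\defpmfletter\alpha \cdot \iota_X,
\]
where $X$ is the source groupoid of $\alpha$. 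Evaluating at $v \in \defpmfletter^+ X$, the right-hand side is $(\defpmfletter\alpha)_{\iota_X(v)}$, which is therefore positive. Since $\iota_X$ is surjective on objects by the \pmfunctor assumption, every object of $\defpmfletter X$ is of the form $\iota_X(v)$, so all components of $\defpmfletter\alpha$ are positive.

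The main obstacle is the positive case, where one must simultaneously exploit the naturality of $\iota$ and its pointwise surjectivity on objects to transfer positivity from $2$-cells on $\defpmfletter^+ X$ back to $2$-cells on $\defpmfletter X$; the negative case reduces immediately to $2$-functoriality of $\defpmfletter$.
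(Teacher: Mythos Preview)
Your factorization step is the gap. You claim that a negative $2$-cell $\alpha\co f\To g\co X\to A$ factors as $\id^-_A\cdot\tilde\alpha$ for some $\tilde\alpha\co\tilde f\To\tilde g\co X\to A_-$; but for this to make sense, $f$ and $g$ themselves must factor through $A_-$ as \emph{functors}, not merely on objects. The hypothesis only says that each component $\alpha_x$ lies in $A_-$; it says nothing about where $f$ and $g$ send \emph{morphisms} of $X$, and in general these will not lie in $A_-$. The same problem occurs verbatim in your positive case.

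This is exactly why the paper passes to the discrete groupoid $X_{(0)}$ on the objects of $X$: restricted along $e\co X_{(0)}\hookrightarrow X$, the functors $f\circ e$ and $g\circ e$ do factor through $A_-$ (resp.\ $A_+$), since there are no nontrivial morphisms to worry about, and the components of $\alpha$ still give a $2$-cell $\phi^-$ (resp.\ $\phi^+$) there. One then needs the hypothesis you never used---that $\defpmfletter$ and $\defpmfletter^+$ preserve functors bijective on objects---to ensure $\defpmfletter(e)$ (resp.\ $\iota_X\circ\defpmfletter^+(e)$) is bijective on objects, so that hitting every component of $\defpmfletter\phi^-$ (resp.\ $\defpmfletter^+\phi^+$) is enough to conclude about every component of $\defpmfletter\alpha$. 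Your argument for the positive case is otherwise on the right track (the use of $\iota$'s naturality and surjectivity on objects is correct), but it rests on the same faulty factorization and therefore does not go through as written.
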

\begin{proof}
  Given a negative $2$\cell $\phi \co F \To F' \co X \to A$, writing $X_{(0)}$
  for the discrete groupoid with the same object as $X$, we have a commutative
  diagram
  \[
    \begin{tikzcd}
      X_{(0)}
      \ar[r,hookrightarrow,"e"]
      \phar[d,"\xTo{\phi^-}"]
      \ar[d,"\bar F"'{pos=0.48},bend right]
      \ar[d,"\bar F'"{pos=0.48},bend left]
      &
      X
      \phar[d,"\xTo{\phi}"]
      \ar[d,"F"'{pos=0.51},bend right]
      \ar[d,"F'"{pos=0.51},bend left]
      \\
      A_-
      \ar[r,"\id^-_A"']
      &
      A
    \end{tikzcd}
  \]
  for some $2$\cell $\phi^- \co \bar F \To \bar F'$, where the top arrow $e$
  is the canonical embedding. By hypothesis on $\defpmfletter$, the image of $e$ by $\defpmfletter$
  is bijective on objects of $X$. Moreover, by functoriality, we have
  $\defpmfletter(\id^-_A) \circ \defpmfletter(\phi^-) = \defpmfletter(\phi) \circ \defpmfletter(e)$. Thus, all the
  components of the natural transformation $\defpmfletter(\phi)$ are in the image of
  $\defpmfletter(\id^-_A)$. Thus, it is negative.

  Now, given a positive $2$\cell $\phi \co F \To F' \co X \to A$, we have a
  similar commutative diagram
  \[
    \begin{tikzcd}
      X_{(0)}
      \ar[r,hookrightarrow,"e"]
      \phar[d,"\xTo{\phi^+}"]
      \ar[d,"\bar F"'{pos=0.48},bend right]
      \ar[d,"\bar F'"{pos=0.48},bend left]
      &
      X
      \phar[d,"\xTo{\phi}"]
      \ar[d,"F"'{pos=0.51},bend right]
      \ar[d,"F'"{pos=0.51},bend left]
      \\
      A_+
      \ar[r,"\id^+_A"']
      &
      A
    \end{tikzcd}
  \]
  for some $2$\cell $\phi^+ \co \bar F \To \bar F'$. We then have the
  commutative diagram
  \[
    \begin{tikzcd}[csbo=8em]
      \defpmfletter^+X_{(0)}
      \ar[r,"\defpmfletter^+(e)"]
      \phar[d,"\xTo{\defpmfletter^+(\phi^+)}"]
      \ar[d,"\defpmfletter^+(\bar F)"'{pos=0.48},bend right=60]
      \ar[d,"\defpmfletter^+(\bar F')"{pos=0.48},bend left=60]
      &
      \defpmfletter^+X
      \ar[r,"\iota_X"]
      \phar[d,"\xTo{\defpmfletter^+(\phi)}"]
      \ar[d,"\defpmfletter^+(F)"'{pos=0.48},bend right=60]
      \ar[d,"\defpmfletter^+(F')"{pos=0.48},bend left=60]
      &
      \defpmfletter X
      \phar[d,"\xTo{\defpmfletter(\phi)}"]
      \ar[d,"\defpmfletter(F)"'{pos=0.52},bend right=60]
      \ar[d,"\defpmfletter(F')"{pos=0.52},bend left=60]
      \\
      \defpmfletter^+A_+
      \ar[r,"\defpmfletter^+(\id^+_A)"']
      &
      \defpmfletter^+A
      \ar[r,"\iota_A"']
      &
      \defpmfletter A
    \end{tikzcd}
  \]
  where the top arrow is bijective on objects by assumptions. Thus, all the
  components of $\defpmfletter(\phi)$ are in the image of $\iota_A \circ \defpmfletter^+(\id^+_A)$, so
  that $\defpmfletter(\phi)$ is positive.
\end{proof}
Given two \pmfunctors $\defpmfletter = (\defpmfletter,\defpmfletter^+,\iota)$ and $\otherpmfletter = (\otherpmfletter,\otherpmfletter^+,\kappa)$, a
\emph{\pmtransformation} is a pair $(\alpha,\alpha^+)$ of bicartesian natural
transformations where $\alpha \co \defpmfletter \To \otherpmfletter$ and $\alpha^+ \co \defpmfletter^+ \To \otherpmfletter^+$ are
such that
\[
  \kappa \circ \alpha^+ = \alpha \circ \iota \zbox.
\]
Now, a \emph{\pmmodification} between two such \pmtransformations
$(\alpha,\alpha^+)$ and $(\beta,\beta^+)$ is the data of a modification $m \co
\alpha \TO \beta$ in the $3$\category of $2$\categories.
\begin{proposition}
  Given a thin modification $m \co \alpha \TO \beta \co \defpmfletter \To \otherpmfletter$ and a thin
  groupoid $\cA$, the $2$\cell
  \[
    m_A \co \alpha_A \To \beta_A \co \defpmfletter A \to \otherpmfletter  A
  \]
  is negative as a $2$\cell on $\otherpmfletter \cA$.
\end{proposition}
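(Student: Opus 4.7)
The plan is to derive the negativity of $m_A$ by pointwise unrolling the modification axiom at the inclusion $\id^-_A \co A_- \hookrightarrow A$, and then exploiting the hypothesis, built into the definition of a \pmfunctor, that $\defpmfletter$ preserves functors which are bijective on objects.

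First, I would recall that $m$, being a modification between the $2$\nbd-natural transformations $\alpha$ and $\beta$, provides a component $m_{A_-} \co \alpha_{A_-} \To \beta_{A_-}$ in $\otherpmfletter A_-$ and satisfies the modification axiom: for every functor $F \co X \to Y$ in $\Gpd$, the whiskering equality $\otherpmfletter F \cdot m_X = m_Y \cdot \defpmfletter F$ holds. Instantiating this axiom at $F = \id^-_A \co A_- \to A$ yields, pointwise, the equation
\[
(m_A)_{\defpmfletter(\id^-_A)(x)} \;=\; \otherpmfletter(\id^-_A)\bigl((m_{A_-})_x\bigr)
\]
for every $x \in \defpmfletter A_-$, showing that the components of $m_A$ at objects in the image of $\defpmfletter(\id^-_A)$ lie in the image of $\otherpmfletter(\id^-_A)$.

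Next, I would invoke the hypothesis on \pmfunctors that $\defpmfletter$ preserves functors which are bijective on objects. Since $A_-$ has the same objects as $A$ by the definition of a polarized groupoid, the inclusion $\id^-_A$ is bijective on objects, hence so is $\defpmfletter(\id^-_A) \co \defpmfletter A_- \to \defpmfletter A$. Consequently, every object of $\defpmfletter A$ is (uniquely) of the form $\defpmfletter(\id^-_A)(x)$, and the equation above yields that \emph{every} component of $m_A$ lies in the image of the embedding $\otherpmfletter(\id^-_A) \co \otherpmfletter A_- \to \otherpmfletter A$. By the very definition of $(\otherpmfletter\cA)_-$ together with the embedding $\otherpmfletter(\id^-_A)$, this is precisely the statement that $m_A$ is negative as a $2$\cell on $\otherpmfletter\cA$.

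No real obstacle is expected here: the argument is just a direct application of the modification axiom at the negative inclusion $\id^-_A$. It is in fact structurally simpler than the proof of \Cref{prop:pmfunctor-pres-polarity}, because modifications are already pointwise $2$\cell data, so we do not need the auxiliary factorisation through a discrete groupoid $X_{(0)}$ that was used there; the bijection-on-objects hypothesis on $\defpmfletter$ does all the required work.
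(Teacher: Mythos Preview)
Your proof is correct and follows essentially the same approach as the paper: instantiate the modification axiom at the inclusion $\id^-_A \co A_- \to A$ to get $m_A \circ \defpmfletter(\id^-_A) = \otherpmfletter(\id^-_A) \circ m_{A_-}$, then use that $\defpmfletter(\id^-_A)$ is bijective on objects (since $\defpmfletter$ preserves such functors and $\id^-_A$ is one) to conclude that every component of $m_A$ lies in the image of $\otherpmfletter(\id^-_A)$, hence is negative. Your observation that this is simpler than \Cref{prop:pmfunctor-pres-polarity} because no passage through a discrete groupoid is needed is also apt.
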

\begin{proof}
  Since $m$ is a modification, we have that $m_A \circ \defpmfletter(\id^-_A) = \otherpmfletter (\id^-_A)
  \circ m_{A_-}$. Since $\defpmfletter(\id^-_A)$ is bijective on objects by hypotheses on
  $\defpmfletter$ and $\id^-_A$, we have that all the components of $m_A$ are in the image
  of $\otherpmfletter (\id^-_A)$, so that they are negative.
\end{proof}
\pmfunctors, \pmtransformations and \pmmodifications can be equipped with the
evident operations in order to form a strict $3$\category $\pmFunct$ with one
object (which, morally, is $\Gpd$, the domain and codomain of each \pmfunctor).
The important point to note here is that the data of $\oc$, $\eta$, $\mu$,
$\oclu$, $\ocru$ and $\ocass$ induces the expected way a pseudomonad in
$\pmFunct$.

We shall now describe an operation $\check{(-)}$ relating $\pmFunct$ and
$\Thin$. First, a \pmfunctor $(\defpmfletter,\defpmfletter^+,\iota)$ induces an endofunctor $\check \defpmfletter
\co \Thin \to \Thin$ mapping a thin groupoid $\cA$ to the thin groupoid $\defpmfletter\cA$
defined as earlier, and mapping spans and their weak morphisms to their images
by $\defpmfletter$, which is well-defined by
\Cref{prop:pmfunctor-span-img,prop:pmfunctor-pres-polarity}.

Now, given a \pmtransformation $(\alpha,\alpha^+) \co
(\defpmfletter,\defpmfletter^+,\iota) \To (\otherpmfletter ,\otherpmfletter
^+,\kappa)$, we define a pseudonatural transformation $\check\alpha \co \check
\otherpmfletter \To \check \defpmfletter$ whose component at a thin groupoid
$\cA$ is $\check{(\alpha_A)}$, that is, the image of $\alpha_A \co \defpmfletter
A \to \otherpmfletter A$ by the pseudofunctor $\check{(-)} : \Ren^{\op} \to
\Thin$.

We can then extend the pseudofunctor $\check{-} : \Ren^{\op} \to \Thin$ to some
sort of $3$\dimensional functor
\[
  \check{(-)} : \pmFunct^\cocat \to \Bicat
\]
sending the unique $0$\cell to $\Thin$ and the higher cells in the
hom-bicategory $\Bicat(\Thin,\Thin)$ (here, $\pmFunct^\cocat$ denotes $\pmFunct$
with $2$\cells reversed).

While this is probably a trifunctor, it would be very tiresome to prove.
Instead, we will only rely on the simpler proposition stating that
\begin{proposition}
  \label{prop:pmfunct-bifunctor}
  Considering $\pmFunct$ as a strict $2$\category by forgetting the dimension
  $0$, $\check{(-)}$ induces a pseudofunctor
  \[
    \check{(-)} : \pmFunct^\cocat \to \Bicat(\Thin,\Thin)
  \]
  between bicategories.
\end{proposition}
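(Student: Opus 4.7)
The plan is to verify the four layers of data required for a pseudofunctor between bicategories, leveraging as much as possible the pseudofunctor $\check{(-)} \co \Ren^{\op} \to \Thin$ from \Cref{prop:check-psfunctor} and the template laid out in \Cref{prop:oc-pseudofunctor} for showing that $\oc$ is a pseudofunctor on $\Thin$. More precisely, I will proceed level by level, from $0$-cells upwards.

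First, for a \pmfunctor $\defpmfletter = (\defpmfletter, \defpmfletter^+, \iota)$, I must show $\check\defpmfletter \co \Thin \to \Thin$ is a pseudofunctor. On objects, the thin groupoid $\defpmfletter\cA$ is already described; the action on morphisms and positive weak morphisms is by pointwise application of $\defpmfletter$, and is well-defined by \Cref{prop:pmfunctor-span-img,prop:pmfunctor-pres-polarity}. Functoriality on $2$-cells of $\Thin(\cA,\cB)$ follows from the functoriality of $\defpmfletter$ on $\Gpd$. As for the coherence isomorphism $m_{\sigma,\tau}^\defpmfletter \co \defpmfletter(\tau \sthcomp \sigma) \iso \defpmfletter\tau \sthcomp \defpmfletter\sigma$, this is produced using that $\defpmfletter$ preserves pullbacks, exactly as done for $\oc$ in \Cref{sec:app-oc-functor}; naturality is a diagram chase via the universal property of the pullback defining $\defpmfletter\tau \sthcomp \defpmfletter\sigma$, and the coherence laws of a pseudofunctor hold strictly thanks to that same universal property.

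Next, for a \pmtransformation $(\alpha,\alpha^+) \co \defpmfletter \To \otherpmfletter$, I build a pseudonatural transformation $\check\alpha \co \check\otherpmfletter \To \check\defpmfletter$. Its component at a thin groupoid $\cA$ is $\check{(\alpha_\cA)} \in \Thin(\check\otherpmfletter\cA, \check\defpmfletter\cA)$, which makes sense provided $\alpha_\cA \co \defpmfletter A \to \otherpmfletter A$ is a renaming. The positivity condition on $\alpha_\cA$ follows from $\kappa \circ \alpha^+ = \alpha \circ \iota$ and the description of the positive sub-groupoid $(\defpmfletter\cA)_+ = \defpmfletter^+A_+$; the two stability conditions on $\U^\perp$ and $\T^{\pperp}$ follow directly from the fact that $\alpha$ is bicartesian and the definitions $\U_{\defpmfletter\cA} = \{\defpmfletter S \mid S \in \U_\cA\}^{\perp\perp}$ and $\T_{\defpmfletter\cA} = \{\defpmfletter S \mid S \in \T_\cA\}^{\pperp\pperp}$. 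For each $\sigma \co \cA \stto \cB$ in $\Thin$, the pseudonaturality $2$-cell relating $\check\defpmfletter\sigma \sthcomp \check{(\alpha_\cA)}$ and $\check{(\alpha_\cB)} \sthcomp \check\otherpmfletter\sigma$ is built out of the naturality square of $\alpha$ at $\stdisp\sigma_\cA$ and $\stdisp\sigma_\cB$, using the universal property of the pullback defining the composition on the right-hand side; that this $2$-cell is invertible in $\Thin$ is exactly where the bicartesianness of $\alpha$ is used, since this entails that the naturality squares are bipullbacks. Naturality of these $2$-cells in $\sigma$, and the pseudonatural transformation axioms relating them with the coherences $m^\defpmfletter_{\sigma,\tau}$ and $m^\otherpmfletter_{\sigma,\tau}$, all reduce to equalities of induced maps into pullbacks and follow from the pullback universal property.

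Then, for a \pmmodification $m \co \alpha \TO \beta$, define $\check m \co \check\alpha \TO \check\beta$ component-wise by $\check{m}_\cA \defeq \check{(m_\cA)}$, the image of $m_\cA \co \alpha_\cA \To \beta_\cA$ under the $2$-functorial part of \Cref{prop:check-psfunctor}. The required equation between $\check m_\cA$ and the pseudonaturality squares holds because the two sides, after unfolding, differ only by the pasting of the naturality square of $m$ against pullback projections, which collapses by the universal property. Finally, for the pseudofunctoriality of $\check{(-)}$ itself on $\pmFunct^\cocat \to \Bicat(\Thin,\Thin)$, the coherent invertible $2$-cell for composition of \pmtransformations $\alpha \co \defpmfletter \to \otherpmfletter$ and $\beta \co \otherpmfletter \to \mathbf{L}$, and for identities, are produced by the natural iso $\prtospcohe$ of \Cref{prop:check-psfunctor} applied componentwise, and the associator/unitor coherences reduce pointwise to those already established for $\check{(-)} \co \Ren^{\op} \to \Thin$. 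The main obstacle lies in the second step: constructing the pseudonaturality $2$-cells and checking that they are \emph{positive} and that they satisfy the coherence laws — here one must blend the universal property of plain pullbacks (which provides the mediating map) with the bipullback property provided by bicartesianness (which ensures invertibility in $\Thin$ despite it being composed via plain pullbacks), and all positivity conditions must be tracked carefully using \Cref{prop:pmfunctor-pres-polarity} and the polarity conditions in the definition of \pmtransformation.
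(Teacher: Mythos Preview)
Your proposal is correct and takes the same approach as the paper, which simply reads ``By checking the axioms of pseudofunctors.'' Your level-by-level expansion (pseudofunctoriality of each $\check\defpmfletter$, pseudonaturality of each $\check\alpha$ via the bicartesianness of $\alpha$, modification axioms for $\check m$, and the global coherence via $\prtospcohe$) is exactly what that one-line proof unpacks to; the paper does not record any of these details.
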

\begin{proof}
  By checking the axioms of pseudofunctors.
\end{proof}
\noindent We can now briefly describe a proof of \Cref{thm:oc-pseudomonad}:
\begin{proof}[Proof of \Cref{thm:oc-pseudomonad}]
  The most satisfying proof of this statement would rely on the fact that
  \[
    \check{(-)} : \pmFunct^\cocat \to \Bicat(\Thin,\Thin)
  \]
  is a trifunctor and that a trifunctor sends any pseudocomonad to a
  pseudocomonad, but we do not know a proof for the latter fact (though it is
  probably true) and deem a full proof of the former tedious.

  Instead, we can rely on the weaker \Cref{prop:pmfunct-bifunctor} to prove the
  required coherences. Following~\cite{cheng2003pseudo}, we are required to
  prove the equations of modifications of \Cref{fig:oc-pseudocomonad-eqs} are
  verified.
  \begin{figure*}
    \centering
    \begin{gather*}
      \ocd\psm \sthcomp (\oc \psm \sthc \psm)
      \TO
      (\ocd\psm \sthcomp \oc \psm) \sthc \psm
      \TO
      \oc(\oc\psm \sthcomp \psm) \sthc \psm
      \TO
      \oc(\psm\oc\sthcomp \psm) \sthc \psm
      \TO
      (\oc\psm\oc\sthcomp \oc\psm) \sthc \psm
      \TO
      \oc\psm\oc\sthcomp (\oc\psm \sthc \psm)
      \\
      \TO
      \oc\psm\oc\sthcomp (\psm \oc \sthc \psm)
      \TO
      (\oc\psm\oc\sthcomp \psm \oc) \sthc \psm
      =
      (\oc\psm\sthcomp \psm )\oc \sthc \psm
      \TO
      (\psm\oc\sthcomp \psm )\oc \sthc \psm
      =
      (\psm\ocd\sthcomp \psm\oc ) \sthc \psm
      \\
      =
      \\
      \ocd\psm \sthcomp (\oc \psm \sthc \psm)
      \TO
      \ocd\psm \sthcomp (\psm \oc \sthc \psm)
      \TO
      (\ocd\psm \sthcomp \psm \oc) \sthc \psm
      \xTO{\text{\rm ex}}
      (\psm \ocd \sthcomp \oc\psm) \sthc \psm
      \TO
      \psm \ocd \sthcomp (\oc\psm \sthc \psm)
      \TO
      \psm \ocd \sthcomp (\psm \oc\sthc \psm)
      \TO
      (\psm \ocd \sthcomp \psm \oc)\sthc \psm
      \\
      \text{and}
      \\
      \oc \psu \oc \sthc (\oc \psm \sthc \psm)
      \TO
      \oc \psu \oc \sthc (\psm\oc  \sthc \psm)
      \TO
      (\oc \psu \oc \sthc \psm\oc)  \sthc \psm
      =
      (\oc \psu  \sthc \psm)\oc  \sthc \psm
      \TO
      \stid\oc\oc \sthc \psm
      =
      \stid\ocd \sthc \psm
      \TO
      \psm
      \\
      =
      \\
      \oc \psu \oc \sthc (\oc \psm \sthc \psm)
      \TO
      (\oc \psu \oc \sthc \oc \psm) \sthc \psm
      \TO
      \oc(\psu \oc \sthc \psm) \sthc \psm
      \TO
      \oc\stid\oc \sthc \psm
      =
      \stid\ocd \sthc \psm
      \TO
      \psm
    \end{gather*}
    \caption{The two required equations for $\oc$ to be a pseudocomonad}
    \label{fig:oc-pseudocomonad-eqs}
  \end{figure*}
  The idea is to relate each of these equations to the equations satisfied by
  the pseudomonad $\oc \co \Gpd \to \Gpd$, and this is done through paving. For
  example, we use the following pavings for the two first modifications of the
  left hand-side of the first equation of \Cref{fig:oc-pseudocomonad-eqs}:
  \[
    \begin{tikzcd}
      \ocd\psm \sthcomp (\oc \psm \sthc \psm)
      \ar[r,Rrightarrow]
      \cphar[rd,"="]
      &
      (\ocd\psm \sthcomp \oc \psm) \sthc \psm
      \\
      \prtospw{(\ocd\mu)} \sthcomp (\prtospw{(\oc \mu)} \sthc \psm)
      \ar[u,equals]
      \tar[r]
      \cphar[rdd,"="]
      &
      (\prtospw{(\ocd\mu)} \sthcomp \prtospw{(\oc \mu)}) \sthc \psm
      \ar[u,equals]
      \\
      \prtospw{(\ocd\mu)} \sthcomp \prtospw{(\mu \circ \oc \mu)}
      \tar[u]
      &
      \prtospw{(\oc \mu \circ \ocd\mu)} \sthc \psm
      \tar[u]
      \\
      \prtospw{((\mu \circ \oc \mu) \circ (\ocd\mu))}
      \tar[r,equals]
      \tar[u]
      &
      \prtospw{(\mu \circ (\oc \mu \circ (\ocd\mu)))}
      \tar[u]
    \end{tikzcd}
  \]
  and
  \[
    \begin{tikzcd}[csbo=8em]
      (\ocd\psm \sthcomp \oc \psm) \sthc \psm
      \tar[rr]
      \cphar[rrd,"="]
      &&
      \oc(\oc\psm \sthcomp \psm) \sthc \psm
      \\
      (\oc\prtospw{(\oc\mu)} \sthcomp \oc \psm) \sthc \psm
      \ar[u,equals]
      \tar[rr]
      \cphar[rrdd,"="]
      &&
      \oc(\prtospw{(\oc\mu)} \sthcomp \psm) \sthc \psm
      \ar[u,equals]
      \\
      (\prtospw{(\ocd\mu)} \sthcomp \prtospw{(\oc \mu)}) \sthc \psm
      \ar[u,equals]
      \\
      \prtospw{(\oc \mu \circ \ocd\mu)} \sthc \psm
      \tar[u]
      \ar[r,equals]
      \cphar[rrddd,"="]
      &
      \prtospw{(\oc(\mu \circ \oc\mu))} \sthc \psm
      \ar[r,equals]
      &
      \oc(\prtospw{\mu \circ \oc\mu}) \sthc \psm
      \tar[uu]
      \\
      &&
      \prtospw{(\oc(\mu \circ \oc\mu))} \sthc \psm
      \ar[u,equals]
      \\
      &&
      \prtospw{(\oc\mu \circ \ocd\mu)} \sthc \psm
      \ar[u,equals]
      \\
      \prtospw{(\mu \circ (\oc \mu \circ (\ocd\mu)))}
      \tar[uuu]
      \ar[rr,equals]
      &&
      \prtospw{(\mu \circ (\oc\mu \circ \ocd\mu))}
      \tar[u]
    \end{tikzcd}
    \zbox.
  \]
  The other elementary modifications of \Cref{fig:oc-pseudocomonad-eqs} are
  paved similarly, so that the first equation of \Cref{fig:oc-pseudocomonad-eqs}
  is reduced to the equation
  \begin{gather*}
    \prtospw{(\mu \circ \oc\mu \circ \ocd\mu)}
    \TO
    \prtospw{(\mu \circ \oc\mu \circ \oc\mu\oc)}
    \\
    \TO
    \prtospw{(\mu \circ \mu\oc \circ \oc\mu\oc)}
    \TO
    \prtospw{(\mu\circ\mu\oc\circ \mu\ocd)}
    \\
    =
    \\
    \prtospw{(\mu \circ \oc \mu \circ \ocd\mu)}
    \TO
    \prtospw{(\mu \circ \mu\oc  \circ \ocd\mu)}
    \\
    =
    \prtospw{(\mu\circ \oc\mu \circ\mu\ocd)}
    \TO
    \prtospw {(\mu\circ\mu\oc\circ\mu\ocd)}
  \end{gather*}
  which is the image by $\check{(-)}$ of an equation satisfied by the
  pseudomonad $(\oc,\eta,\mu)$ on $\Gpd$, and the second equation of
  \Cref{fig:oc-pseudocomonad-eqs} to the equation
  \begin{gather*}
    \prtospw{((\mu\circ\oc\mu)\circ \oc\eta\oc)}
    \TO
    \prtospw{((\mu\circ\mu\oc)\circ \oc\eta\oc)}
    \\
    =
    \prtospw{(\mu\circ(\mu\oc\circ \oc\eta\oc))}
    \TO
    \prtospw{(\mu \circ \unit\ocd)}
    =
    \psm
    \\
    =
    \\
    \prtospw{((\mu\circ\oc\mu)\circ \oc\eta\oc)}
    =
    \prtospw{(\mu\circ(\oc\mu\circ \oc\eta\oc))}
    \TO
    \prtospw{(\mu\circ\unit\ocd)}
    =
    \psm
  \end{gather*}
  also an image of an equation satisfied by the pseudomonad $(\oc,\eta,\mu)$ on
  $\Gpd$. So that $(\oc,\psu,\psm)$ is indeed a pseudocomonad on $\Thin$.
\end{proof}

\subsection{The cartesian product}
\label{sec:cartesian-product}
\begingroup
\newcommand\pscpl{\prtosp{\cplprot}}%
\newcommand\pscpr{\prtosp{\cprprot}}%
\newcommand\iotabone{\cpl}%
\newcommand\iotabtwo{\cpr}%
\renewcommand\rho{R}%
\renewcommand\sigma{S}%
\renewcommand\tau{T}%
\newcommand\cGamma{\Gamma}%
Given thin groupoids $\cA,\cB$, we write $\cpl^{A,B}$ and $\cpr^{A,B}$, simply
denoted $\cpl$ and $\cpr$ as earlier when $A,B$ can be deduced from the context,
for the coprojections
\begin{gather*}
  \cpl \co A \hookrightarrow A + B
  \qqand
  \cpr \co B \hookrightarrow A + B
  \zbox.
\end{gather*}
By applying the functor $\check{(-)}$, we get thin spans $\pscpl \co \cA \with
\cB \stto \cA$ and $\pscpr \co \cA \with \cB \stto \cB$. Given thin groupoids
$\cGamma,\cA,\cB$, we define a functor
\[
  \prodfact{-}{-}_{\cGamma,\cA,\cB}
  \co
  \Thin(\cGamma,\cA)
  \times
  \Thin(\cGamma,\cB)
  \to
  \Thin(\cGamma,\cA \with \cB)
  \zbox,
\]
often abbreviated $\prodfact{-}{-}$, as follows. Given
$\sigma\in\Thin(\cGamma,\cA)$ and $\tau\in\Thin(\cGamma,\cB)$, we define
$\prodfact{\sigma}{\tau}$ as the span
\[
  \prodfact{\sigma}{\tau}
  \qqeq
  \begin{tikzcd}[column sep={between origins,5em}]
    &
    \stsupp\sigma + \stsupp\tau
    \ar[ld,"\coprodfact{\stdisp\sigma_\Gamma}{\stdisp\tau_\Gamma}"']
    \ar[rd,"\stdisp\sigma_A + \stdisp\tau_B"]
    &
    \\
    \Gamma
    &&
    A +  B
  \end{tikzcd}
  \zbox.
\]
\begin{proposition}
  Given $\sigma\in\Thin(\cGamma,\cA)$ and $\tau\in\Thin(\cGamma,\cB)$, we have
  $\prodfact{\sigma}{\tau} \in \Thin(\cGamma,\cA \with \cB)$.
\end{proposition}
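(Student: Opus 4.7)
The plan is to establish first uniformity of $\langle \sigma, \tau \rangle$ on $\Gamma \lin (A \with B)$ using \Cref{prop:carac_lin}, then thinness via \Cref{prop:carac_lin_thin}. Both applications will be instantiated with the full generating class, i.e.\ $\SU = \U_\Gamma$ for uniformity and $\SU = \T_\Gamma$ for thinness. The key structural fact driving the whole argument is that pullbacks in $\Gpd$ distribute over coproducts: for any prestrategy $(U, \partial^U) \in \PreStrat(\Gamma)$, the pullback
\[
\begin{tikzcd}[cramped,rsbo=1.5em,csbo=2em]
& \langle \sigma, \tau\rangle @ U \ar[dl] \ar[dr] \phar[dd,"\dcorner",very near start] & \\
\stsupp\sigma + \stsupp\tau \ar[dr,"{[\partial^\sigma_\Gamma, \partial^\tau_\Gamma]}"'] & & U \ar[dl,"\partial^U"] \\
& \Gamma &
\end{tikzcd}
\]
is naturally isomorphic to $(\sigma @ U) + (\tau @ U)$, the coproduct in $\Gpd$ of the respective pullbacks, with display map $\partial^\sigma_A + \partial^\tau_B$ composed on the coproduct side. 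Moreover, this pullback is a bipullback iff both summand pullbacks are, by inspection of \Cref{lem:carac_pb_bipb} applied componentwise.

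For uniformity, \Cref{prop:carac_lin} requires two conditions. For \emph{(1)}, given $U \in \U_\Gamma$, we compute $\langle \sigma, \tau \rangle @ U \cong (\sigma @ U) + (\tau @ U)$ as above; since $\sigma \in \U_{\Gamma \lin A}$ and $\tau \in \U_{\Gamma \lin B}$, \Cref{prop:carac_lin} yields $\sigma @ U \in \U_A$ and $\tau @ U \in \U_B$, so their coproduct lies in $\U_A + \U_B \subseteq (\U_A + \U_B)^{\perp\perp} = \U_{A \with B}$. For \emph{(2)}, we must show $(\stsupp\sigma + \stsupp\tau, [\partial^\sigma_\Gamma, \partial^\tau_\Gamma]) \in \U_\Gamma^\perp$: this amounts to verifying that for each $U \in \U_\Gamma$ the pullback above is a bipullback, and by the distributivity observation this reduces to the bipullback property of the two component pullbacks, both of which hold since $\partial^\sigma_\Gamma \in \U_\Gamma^\perp$ and $\partial^\tau_\Gamma \in \U_\Gamma^\perp$ by \Cref{prop:carac_lin} applied to $\sigma$ and $\tau$.

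For thinness, we apply \Cref{prop:carac_lin_thin} with $\SU = \T_\Gamma$ (noting $\T_\Gamma^{\pperp\pperp} = \T_\Gamma$). Given $U \in \T_\Gamma$, we again have $\langle \sigma, \tau \rangle @ U \cong (\sigma @ U) + (\tau @ U)$, where $\sigma @ U \in \T_A$ and $\tau @ U \in \T_B$ by \Cref{prop:carac_lin_thin} applied respectively to $\sigma \in \T_{\Gamma \lin A}$ and $\tau \in \T_{\Gamma \lin B}$. Thus $(\sigma @ U) + (\tau @ U) \in \T_A + \T_B \subseteq (\T_A + \T_B)^{\pperp\pperp} = \T_{A \with B}$, yielding $\langle \sigma, \tau \rangle \in \T_{\Gamma \lin (A \with B)}$ as required.

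The main subtlety is the distributivity of (bi)pullbacks over coproducts of groupoids: once this is cleanly justified (via \Cref{lem:carac_pb_bipb}, by noting that any reindexing problem in the coproduct pullback lives entirely in one summand, since the cospan factors through an injection of the coproduct), the rest is a routine application of the characterisation propositions. The generating-class argument also relies on the fact that $\U_A + \U_B$ and $\T_A + \T_B$ embed into the ``final'' sets $\U_{A \with B}$ and $\T_{A \with B}$ without further closure, which is immediate from their definitions as biorthogonals.
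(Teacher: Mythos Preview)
Your proof is correct and follows the same approach as the paper, which simply reads ``By an adequate use of \Cref{prop:carac_lin,prop:carac_lin_thin}.'' You have filled in exactly the details the paper elides: the distributivity of (bi)pullbacks over coproducts in $\Gpd$, and the verification that the resulting sum $(\sigma@U)+(\tau@U)$ lands in the generating class $\U_A+\U_B$ (resp.\ $\T_A+\T_B$) before biorthogonal closure.
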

\begin{proof}
  By an adequate use of \Cref{prop:carac_lin,prop:carac_lin_thin}.
\end{proof}

Given morphisms $F\co \sigma \to \sigma'\in\Thin(\cGamma,\cA)$ and
$G\co\tau\to\tau'\in\Thin(\cGamma,\cB)$, $\prodfact{F}{G}$ is defined as the
morphism $H$ with $\stsuppreally H =\stsuppreally F + \stsuppreally G$ and
\begin{gather*}
  H^\Gamma
  =
  \begin{tikzcd}[ampersand replacement=\&]
    \stsupp\sigma+\stsupp\tau
    \ar[r,"\stsuppreally F+\stsuppreally G"]
    \ar[d,"\coprodfact{\stdisp\sigma_\Gamma}{\stdisp\tau_\Gamma}"']
    \cphar[rd,"\xTo{\coprodfact{F^\Gamma}{G^\Gamma}}"]
    \&
    \stsupp{\sigma'}+\stsupp{\tau'}
    \ar[d,"\coprodfact{\stdisp{\sigma'}_\Gamma}{\stdisp{\tau'}_\Gamma}"]
    \\
    \Gamma
    \ar[r,equals]
    \&
    \Gamma
  \end{tikzcd}
  \shortintertext{and}
  H^{A+ B}
  =
  \begin{tikzcd}[ampersand replacement=\&]
    \stsupp\sigma+\stsupp\tau
    \ar[r,"\stsuppreally F+\stsuppreally G"]
    \ar[d,"\stdisp\sigma_A+\stdisp\tau_B"']
    \cphar[rd,"\xTo{F^A+G^B}"]
    \&
    \stsupp{\sigma'}+\stsupp{\tau'}
    \ar[d,"\stdisp{\sigma'}_A+\stdisp{\tau'}_B"]
    \\
    A+ B
    \ar[r,equals]
    \&
    A+ B
  \end{tikzcd}
  \zbox.
\end{gather*}
One immediately verifies that these two $2$\cells have the adequate polarities,
so that $\prodfact{F}{G} \in \Thin(\cGamma,\cA\with\cB)$. Moreover,
the functoriality of $\prodfact{-}{-}_{\cGamma,\cA,\cB}$ is immediately verified.

Given $\sigma\in\Thin(\cGamma,\cA \with \cB)$, we write $\sigma_A$ for the span
\[
  \sigma_A
  =
  \begin{tikzcd}
    &&
    \stsupp{\sigma}_A
    \ar[dl,hook']
    \ar[ddrr,"\stdisp{\sigma_A}_A"]
    &&
    \\
    &
    \stsupp{\sigma}
    \ar[dl,"\stdisp\sigma_\Gamma"']
    &&
    &
    \\
    \Gamma
    &&
    &&
    A
  \end{tikzcd}
\]
where $\stsupp{\sigma}_A$ is the submonoid of $\stsupp{\sigma}$ whose image by
$\stdisp\sigma_{A+ B}$ is in $A$, and where $\stdisp{\sigma_A}_A$ is the
induced map $\stsupp\sigma_A\to A$ from $\stdisp\sigma_{A+ B}$. We define a span
$\sigma_B$ similarly.
\begin{proposition}
  \label{prop:sigmaa-sigmab-thin}
  Given $\sigma\in\Thin(\cGamma,\cA \with \cB)$, we have
  $\sigma_A\in\Thin(\cGamma,\cA)$ and $\sigma_B\in\Thin(\cGamma,\cB)$.
\end{proposition}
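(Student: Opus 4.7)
The plan is to realize $\sigma_A$ as a horizontal composition in $\Thin$, then invoke closure of $\Thin$ under composition. Specifically, I would show that $\sigma_A \iso \prtosp{\cpl} \sthc \sigma$, where $\prtosp{\cpl}$ is the lifting of the coprojection $\cpl \co A \to A+B$, and analogously $\sigma_B \iso \prtosp{\cpr} \sthc \sigma$.

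First, I would check that $\cpl$ (and symmetrically $\cpr$) is a renaming in the sense of Section~\ref{sec:renamings}. The polarity condition is immediate from $(A\with B)_\pm = A_\pm + B_\pm$. For the uniform condition, given $(T,\stdisp T) \in \U_{A\with B}^\perp$ and $(S,\stdisp S) \in \U_A$, observe that $(S, \cpl\circ \stdisp S)$ arises as $(S + \emptyset, \stdisp S + \emptyset) \in \U_A + \U_B \subseteq \U_{A\with B}$, since the empty prestrategy lies in $\U_B$ (pullbacks into $\emptyset$ are trivially bipullbacks). Hence $T \perp (S, \cpl\circ \stdisp S)$, and as $\cpl$ is a monomorphism this pullback coincides with the pullback of $\cpl\circ \stdisp T$ against $\stdisp S$, giving $(T, \cpl\circ \stdisp T) \perp S$. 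The thin condition is established by the same argument applied to the definition of $\T_{A\with B}$, using that the empty prestrategy also lies in $\T_B^{\pperp}$. By \Cref{prop:check-psfunctor}, this yields $\prtosp{\cpl} \in \Thin(A\with B, A)$.

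Next, I would compute $\prtosp{\cpl} \sthc \sigma$ via its defining pullback over the cospan $\stsupp\sigma \xto{\stdisp\sigma_{A+B}} A+B \xot{\cpl} A$. Since $\cpl$ is the canonical inclusion of a coproduct summand, the pullback is canonically isomorphic to the full sub-groupoid of $\stsupp\sigma$ on objects whose image under $\stdisp\sigma_{A+B}$ lies in $A$, and the induced display maps agree with those in the definition of $\sigma_A$. Thus $\prtosp{\cpl} \sthc \sigma \iso \sigma_A$ as spans, and closure under horizontal composition in $\Thin$ (\Cref{thm:thin-bicat}) yields $\sigma_A \in \Thin(\Gamma, A)$. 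The case of $\sigma_B$ is strictly symmetric, using $\prtosp{\cpr}$.

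The main obstacle I anticipate is the verification that $\cpl$ and $\cpr$ are renamings, specifically the two biorthogonality-stability conditions. The appeal to the empty prestrategy is convenient but requires checking that it lies in $\U_B \cap \U_B^\perp$ and in $\T_B^\pperp$. If this is not immediate in context, one may instead perform a direct diagram chase using \Cref{lem:carac_pb_bipb} and \Cref{lem:pb-bpb-three-tiles}, in the style of Appendix~\ref{sec:unif-and-thinness}, exploiting that the pullback of a cospan against a coproduct inclusion splits along the summands.
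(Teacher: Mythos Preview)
Your approach is essentially the paper's: both identify $\sigma_A$ with the composite $\prtosp{\cpl} \sthc \sigma$ by computing the defining pullback over the cospan $\stsupp\sigma \xto{\stdisp\sigma_{A+B}} A+B \xot{\cpl} A$, then invoke closure of $\Thin$ under horizontal composition. The paper simply takes $\prtosp{\cpl},\prtosp{\cpr}\in\Thin$ as already established (from the pseudofunctor $\check{(-)}:\Ren^{\op}\to\Thin$) and spells out only the pullback isomorphism, exactly as in your second paragraph.

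One genuine slip in your renaming sketch: the expression $\cpl\circ\stdisp T$ for $(T,\stdisp T)\in\U_{A\with B}^\perp$ does not typecheck, since $\stdisp T:T\to A+B$ and $\cpl:A\to A+B$ share codomain. The verification runs in the other direction: one must push prestrategies on $A$ forward along $\cpl$ and land in the appropriate orthogonal class over $A\with B$, not pull prestrategies on $A\with B$ back to $A$. Your empty-prestrategy trick (so that $(S,\cpl\circ\stdisp S)\cong(S+\emptyset,\stdisp S+\emptyset)$) is exactly the right ingredient once the direction is straightened out, and the fact that $\emptyset$ lies in every (bi)orthogonal class makes it work for both the uniform and the thin conditions.
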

\begin{proof}
  As the result of the composition of two thin spans, we know that $\pscpl \sthc
  \sigma$ is in $\Thin(\cGamma,\cA)$. It is the span
  \[
    \begin{tikzcd}[column sep={5em,between origins}]
      &&
      \stsupp{\pscpl \sthc \sigma}
      \cphar[dd,near start,"\dcorner"]
      \ar[ld,"\pl"',dashed]
      \ar[rd,"\pr",dashed]
      &&
      \\
      &
      \stsupp\sigma
      \ar[ld,"\stdisp\sigma_\Gamma"']
      \ar[rd,"\stdisp\sigma_{A+ B}"{description}]
      &&
      A
      \ar[ld,"\cpl"{description}]
      \ar[rd,"\unit A"]
      &
      \\
      \Gamma
      &&
      A+ B
      &&
      A
    \end{tikzcd}
  \]
  which, by an isomorphism of pullbacks, is isomorphic to
  \[
    \begin{tikzcd}[column sep={5em,between origins}]
      &&
      \stsupp{\sigma}_A
      \cphar[dd,near start,"\dcorner"]
      \ar[ld,hook']
      \ar[rd,"\stdisp{\sigma_A}_A"]
      &&
      \\
      &
      \stsupp\sigma
      \ar[ld,"\stdisp\sigma_\Gamma"']
      \ar[rd,"\stdisp\sigma_{A+ B}"{description}]
      &&
      A
      \ar[ld,"\cpl"{description}]
      \ar[rd,"\unit A"]
      &
      \\
      \Gamma
      &&
      A+ B
      &&
      A
    \end{tikzcd}
  \]
  which is exactly $\sigma_A$. Thus, the latter is in $\Thin(\cGamma,\cA)$. A
  similar argument holds for $\sigma_B$.
\end{proof}
The mapping $\sigma \mapsto \sigma_A$ extends to a functor 
\[
(-)_A\co
\Thin(\cGamma,\cA\with\cB)\to \Thin(\cGamma,\cA)
\]
the expected way. Similarly, we
obtain a functor $(-)_B\co \Thin(\cGamma,\cA\with\cB)\to \Thin(\cGamma,\cB)$.
\begin{proposition}
  \label{prop:pscpl-pscpr-other-form}
  The functors $(-)_A$ and $(-)_B$ are isomorphic to the functors $\pscpl \sthc
  (-)$ and $\pscpr \sthc (-)$ respectively.
\end{proposition}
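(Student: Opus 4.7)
The plan is to exploit the explicit pullback computation already carried out in the proof of \Cref{prop:sigmaa-sigmab-thin}. There, it was shown that the composition pullback defining $\pscpl \sthc \sigma$ admits $\stsupp\sigma_A$ as apex, with the inclusion $\stsupp\sigma_A \hookrightarrow \stsupp\sigma$ and the map $\stdisp{\sigma_A}_A$ as its two legs, since the square
\[
\begin{tikzcd}[column sep=small]
\stsupp\sigma_A \ar[r,"\stdisp{\sigma_A}_A"] \ar[d,hook'] & A \ar[d,"\cpl"]\\
\stsupp\sigma \ar[r,"\stdisp\sigma_{A+B}"'] & A+B
\end{tikzcd}
\]
is a pullback in $\Gpd$ (the inclusion $\cpl$ being a full subgroupoid embedding). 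The uniqueness up to canonical iso of pullbacks thus yields a strong iso $\omega^A_\sigma : \pscpl \sthc \sigma \iso \sigma_A$ in $\Thin(\Gamma, A)$. A symmetric construction yields $\omega^B_\sigma : \pscpr \sthc \sigma \iso \sigma_B$.

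The main work is checking naturality of $\omega^A$ in $\sigma$. Given a positive weak morphism $F : \sigma \to \sigma'$ in $\Thin(\Gamma, A \with B)$, I first observe that $F$ canonically restricts to a positive weak morphism $F_A : \sigma_A \to \sigma'_A$. Indeed, $F^{A+B}$ is a positive natural isomorphism, and by the definition $(A\with B)_+ = A_+ + B_+$ positive symmetries in $A \with B$ cannot cross between the summands; hence any $s \in \stsupp\sigma_A$ satisfies $\stdisp{\sigma'}_{A+B}(\stsuppreally F(s)) \in A$, so $\stsuppreally F$ restricts to a functor $\stsupp\sigma_A \to \stsupp{\sigma'}_A$, and the $2$-cells $F^\Gamma$ and $F^{A+B}$ restrict accordingly.

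I then verify
\[
\omega^A_{\sigma'} \circ (\pscpl \sthc F) \qeq F_A \circ \omega^A_\sigma \qquad \text{in } \Thin(\Gamma, A)
\]
via the Paved Characterization of Composition (\Cref{prop:pcc}). Concretely, taking $\stsupp H$ to be the restricted functor above together with identity $2$-cells $H^l = \id$ and $H^r = \id$ provides a legitimate representative of $\pscpl \sthc F$ (the polarity hypothesis of \PCC is trivially satisfied), and both sides of the naturality equation then unfold to the same positive weak morphism $F_A$ through the strict isomorphisms $\omega^A_\sigma, \omega^A_{\sigma'}$. The same argument, with $\cpl$ replaced by $\cpr$ and $A$ by $B$, handles $\omega^B$.

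The main obstacle is purely bookkeeping: one must carefully track the polarized $2$-cell data of $F$ through the pullback isomorphism to apply \PCC correctly. No real mathematical subtlety arises, because the pullback $\pscpl \sthc \sigma$ matches $\sigma_A$ on the nose rather than up to a genuine bipullback equivalence.
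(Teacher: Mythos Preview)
Your approach is exactly the paper's: use the pullback isomorphism from the proof of \Cref{prop:sigmaa-sigmab-thin} to obtain $\omega^A_\sigma$, and establish naturality via the \PCC. The restriction argument (that $\stsuppreally F$ sends $\stsupp\sigma_A$ into $\stsupp{\sigma'}_A$) is correct, though note that \emph{any} morphism in $A+B$ stays within a summand, so you do not need positivity of $F^{A+B}$ for that step.

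There is however one genuine slip in your application of the \PCC. With the identification $\stsupp{\pscpl\sthc\sigma}\cong\stsupp\sigma_A$, the right projection $\pr$ becomes $\stdisp{\sigma_A}_A$, and the square
\[
\begin{tikzcd}
\stsupp\sigma_A \ar[r,"H"] \ar[d,"\stdisp{\sigma_A}_A"'] & \stsupp{\sigma'}_A \ar[d,"\stdisp{\sigma'_A}_A"] \\
A \ar[r,"\id_A"'] & A
\end{tikzcd}
\]
does \emph{not} commute strictly when $F^{A+B}$ is non-trivial: one has $\stdisp{\sigma'_A}_A \circ H$ related to $\stdisp{\sigma_A}_A$ only via the restriction of $F^{A+B}$ to $\stsupp\sigma_A$. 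Hence $H^r$ must be that restriction (landing in $A$), not the identity. With this correction the \PCC equation---which demands that the pasting of $H^l$ with $F^{A+B}$ equal the pasting of $H^r$ with $\cpl$---holds on the nose, and the polarity condition on $H^C = H^r$ follows from $(A\with B)_+ = A_+ + B_+$. The rest of your argument then goes through unchanged.
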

\begin{proof}
  Using the \PCC, one can show the naturality of the family of
  isomorphisms of thin spans $\pscpl\sthc\sigma \cong \sigma_A$ described in the
  proof of \Cref{prop:sigmaa-sigmab-thin}, obtaining an isomorphism of
  functors. A similar argument holds for $(-)_B$ and $\pscpl\sthc(-)$.
\end{proof}

\begin{proposition}
  \label{prop:stcart-first-adjeq}
  Given thin groupoids $\cGamma,\cA,\cB$, there is an adjoint equivalence
  \[
    \begin{tikzcd}[csbo=7em]
      \Thin(\cGamma,\cA \with \cB)
      \ar[rr,bend left,"{((-)_A,(-)_B)}"]
      &
      \bot
      &
      \Thin(\cGamma,\cA)
      \times
      \Thin(\cGamma,\cB)
      \ar[ll,bend left,"\prodfact{-}{-}"]
    \end{tikzcd}
    \zbox.
  \]
\end{proposition}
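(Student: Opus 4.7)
The plan is to exhibit the unit and counit of the adjoint equivalence as strict positive isomorphisms, leveraging that $\cA\with\cB$ has underlying groupoid $A+B$ which is a disjoint coproduct in $\Gpd$; this will make the adjoint equivalence essentially a strict equality up to the canonical coproduct decomposition, sparing us from any nontrivial coherence diagrams.

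For the counit, given $\sigma\in\Thin(\cGamma,\cA)$ and $\tau\in\Thin(\cGamma,\cB)$, the support of $\prodfact{\sigma}{\tau}$ is $\stsupp\sigma+\stsupp\tau$ with display to $A+B$ given by $\stdisp\sigma_A+\stdisp\tau_B$. By disjointness of the coproduct in $\Gpd$, the preimage of $A\subseteq A+B$ under this display is exactly the left summand $\stsupp\sigma$, hence $\prodfact{\sigma}{\tau}_A=\sigma$ strictly; symmetrically $\prodfact{\sigma}{\tau}_B=\tau$. I would take $\omega^A_{\sigma,\tau}$ and $\omega^B_{\sigma,\tau}$ to be these identities, which are trivially positive (the witnessing $2$-cells are identities) and natural in $(\sigma,\tau)$.

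For the unit, given $U\in\Thin(\cGamma,\cA\with\cB)$, I would use that $A+B$ is a disjoint coproduct to observe that every object of $\stsupp U$ lies over exactly one of $A$ or $B$, inducing a canonical isomorphism $\stsupp U\iso \stsupp{U_A}+\stsupp{U_B}$ via the universal property of $+$ in $\Gpd$. Under this isomorphism, both displays of $U$ restrict to those of $U_A$ and $U_B$, yielding a canonical strict positive iso $\bar\omega_U\co U\iso \prodfact{U_A}{U_B}$, whose witnessing $2$-cells are identities. It is natural in $U$ with respect to positive weak morphisms: indeed, if $(F,\phi)\co U\to U'$ is such that $\phi$ is positive on $A^\perp\parr(A\with B)$, then $\phi$ factors through the identity on $A+B$ and so $F$ automatically respects the $A$/$B$ partition of supports.

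The triangle identities then hold on the nose: the roundtrip $\prodfact{(-)_A}{(-)_B}\circ\prodfact{-}{-}$ recovers $(\sigma,\tau)$ strictly via $(\omega^A_{\sigma,\tau},\omega^B_{\sigma,\tau})$, and the other roundtrip recovers $U$ via $\bar\omega_U$ followed by two canonical identities. The main work, rather than a conceptual obstacle, is bookkeeping to confirm that $\bar\omega_U$ really transports $U$ to $\prodfact{U_A}{U_B}$ as a thin span---but this is automatic since $(A\with B)_\pm$ is defined as $A_\pm+B_\pm$ on underlying groupoids, and the biorthogonal closures $\U_{\cA\with\cB}$ and $\T_{\cA\with\cB}$ are respected by strict isomorphisms of spans.
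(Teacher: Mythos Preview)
Your approach is essentially the paper's: build the unit and counit from the canonical coproduct decomposition in $\Gpd$ and check the triangle identities directly. Two small corrections are worth making. First, with the paper's tagged implementation $A+B=\{1\}\times A\uplus\{2\}\times B$, the subgroupoid $\prodfact{\sigma}{\tau}_A\subseteq\stsupp\sigma+\stsupp\tau$ is the image of $\cpl$, not literally $\stsupp\sigma$, so the counit is a canonical strong iso (the paper's $\delta$) rather than a strict identity; this changes nothing substantive. Second, your naturality argument for $\bar\omega$ is garbled: the phrase ``$\phi$ factors through the identity on $A+B$'' does not parse, and positivity is irrelevant---what ensures that $F$ respects the $A$/$B$ partition is simply that $A+B$ has no morphisms between components, so the $(A{+}B)$-component of $\phi$ at any $u$ lies entirely in $A$ or entirely in $B$.
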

\begin{proof}
  Given $\sigma \in\Thin(\cGamma,\cA \with \cB)$, write $\iota^\sigma_A \co
  \stsupp\sigma_A\hookrightarrow \stsupp\sigma$ and $\iota^\sigma_B \co
  \stsupp\sigma_B\hookrightarrow \stsupp\sigma$ for the canonical inclusions of
  $\Gpd$. There is a canonical $\stsupp\gamma_\sigma \co \stsupp\sigma
  \to \stsupp{\prodfact{\sigma_A}{\sigma_B}}$ defined as the inverse of
  \[
    \stsupp\sigma_A + \stsupp\sigma_B
    \xto{\coprodfact{\iota^\sigma_A}{\iota^\sigma_B}} 
    \stsupp\sigma
  \]
  and it induces a strong morphism of thin span $\gamma_\sigma \co \sigma \To
  \prodfact{\sigma_A}{\sigma_B} \in \Thin(\cGamma,\cA \with \cB)$. The naturality of
  $\gamma$ can be shown using the \PCC, so that we obtain a natural
  isomorphism
  \[
    \gamma \co
    \unit{\Thin(\cGamma,\cA \with \cB)}
    \To
    \prodfact{(-)_A}{(-)_B}
    \zbox.
  \]
  Given $(\rho,\tau) \in \Thin(\cGamma,\cA) \times \Thin(\cGamma,\cB) $, we have a
  canonical isomorphism
  $\stsupp{\delta}^A_{\rho,\tau}\co\stsupp{\prodfact{\rho}{\tau}}_A \to
  \stsupp\rho$ defined as the pullback isomorphism between
  \[
    \begin{tikzcd}
      \stsupp{\prodfact{\rho}{\tau}}_A
      \ar[r,"\iota^{\prodfact{\rho}{\tau}}_A"]
      \ar[d,dashed]
      \cphar[rd,near start,"\drcorner"]
      &
      \stsupp{\prodfact{\rho}{\tau}}
      \ar[d,"\stdisp{\prodfact{\rho}{\tau}}_{A+ B}"]
      \\
      A
      \ar[r,"\cpl"']
      &
      A+ B
    \end{tikzcd}
    \qand
    \begin{tikzcd}
      \stsupp{\rho}
      \ar[r,"\cpl"]
      \ar[d,"\stdisp\rho_A"']
      \cphar[rd,near start,"\drcorner"]
      &
      \stsupp{\prodfact{\rho}{\tau}}
      \ar[d,"\stdisp{\prodfact{\rho}{\tau}}_{A+ B}"]
      \\
      A
      \ar[r,"\cpl"']
      &
      A+ B
    \end{tikzcd}
    \zbox.
  \]
  It extends to a morphism $\delta^A_{\rho,\tau}\co\prodfact{\rho}{\tau}_A \to
  \rho\in \Thin(\cGamma,\cA)$. We define similarly a morphism
  $\delta^B_{\rho,\tau}\co\prodfact{\rho}{\tau}_B \to \tau\in \Thin(\cGamma,\cB)$.
  Writing $\delta_{\rho,\tau} =
  \prodfact{\delta^A_{\rho,\tau}}{\delta^B_{\rho,\tau}}$, the naturality of
  $\delta_{\rho,\tau}$ with respect to $\rho$ and $\tau$ can be checked using
  the \PCC, so that we obtain a natural isomorphism
  \begin{gather*}
    \delta
    \co 
    \catprodfact{\prodfact{(-)_{(1)}}{(-)_{(2)}}_A}{\prodfact{(-)_{(1)}}{(-)_{(2)}}_B}
    \\
    \To
    \\
    \unit{\Thin(\cGamma,\cA) \times \Thin(\cGamma,\cB) }
    \zbox.
  \end{gather*}

  We thus have an equivalence, and we verify that it is adjoint. We check the
  first zigzag equation, namely
  \begin{equation}
    \label{eq:stprod-adjeq:zigzag-one}
    (\delta\catprodfact{(-)_A}{(-)_B}) \circ (\catprodfact{(-)_A}{(-)_B}\gamma) = \unit{\catprodfact{(-)_A}{(-)_B}}
    \zbox.
  \end{equation}

  In order to verify the above equality, by symmetry, we just need to check its
  projection on $\Thin(\cGamma,\cA)$. So let $\sigma\in \Thin(\cGamma,\cA \with \cB)$.
  The component of the left-hand side of \eqref{eq:stprod-adjeq:zigzag-one} at
  $\sigma$ is then
  \[
    \sigma_A
    \xto{(\gamma_\sigma)_A}
    (\prodfact{\sigma_A}{\sigma_B})_A
    \xto{\delta^A_{\prodfact{\sigma_A}{\sigma_B}}}
    \sigma_A
    \zbox.
  \]
  By unfolding the definition of $\gamma$ and $\delta$, we compute that
  \[
    \stsupp\sigma_A
    \xto{\finv{(\stsupp\delta^A_{\prodfact{\sigma_A}{\sigma_B}})}}
    (\stsupp\sigma_A+ \stsupp\sigma_B)_A
    \xto{\finv{((\stsupp\gamma_\sigma)_A)}}
    \stsupp\sigma_A
    \xto{\iota^\sigma_A}
    \stsupp\sigma
  \]
  is precisely $\iota^\sigma_A$, which happens to be a monomorphism, so that
  $\finv{((\stsupp\gamma_\sigma)_A)} \circ
  \finv{(\stsupp\delta^A_{\prodfact{\sigma_A}{\sigma_B}})} =
  \unit{\stsupp\sigma_A}$, which is, up to inverses, what we wanted to show.
  Thus, the first zigzag equation holds.

  We now verify the second zigzag equation, namely
  \begin{equation}
    \label{eq:stprod-adjeq:zigzag-two}
    (\prodfact{-}{-}\delta) \circ (\gamma\prodfact{-}{-}) = \unit{\prodfact{-}{-}}
    \zbox.
  \end{equation}
  So let $(\rho,\tau) \in \Thin(\cGamma,\cA) \times \Thin(\cGamma,\cB) $.
  The component of the left-hand side of \eqref{eq:stprod-adjeq:zigzag-two} at
  $(\rho,\tau)$ is
  \[
    \prodfact{\rho}{\tau}
    \xto{\gamma_{\prodfact{\rho}{\tau}}}
    \prodfact {\prodfact{\rho}{\tau}_A} {\prodfact{\rho}{\tau}_B}
    \xto{\prodfact{\delta^A_{\rho,\tau}}{\delta^B_{\rho,\tau}}}
    \prodfact{\rho}{\tau}
  \]
  By unfolding the definition of $\gamma$ and $\delta$, we compute that
  \[
    \stsupp{\rho}
    \xto{\cpl}
    \stsupp{\prodfact{\rho}{\tau}}
    \xto{\finv{\stsupp{\prodfact{\delta^A_{\rho,\tau}}{\delta^B_{\rho,\tau}}}}}
    \stsupp{\prodfact {\prodfact{\rho}{\tau}_A} {\prodfact{\rho}{\tau}_B}}
    \xto{\finv{\stsupp{\gamma}_{\prodfact{\rho}{\tau}}}}
    \stsupp{\prodfact{\rho}{\tau}}
  \]
  reduces to
    $\stsupp{\rho}
    \xto{\cpl}
    \stsupp{\prodfact{\rho}{\tau}}$
  and similarly, 
  \[
    \stsupp{\tau}
    \xto{\cpr}
    \stsupp{\prodfact{\rho}{\tau}}
    \xto{\finv{\stsupp{\prodfact{\delta^A_{\rho,\tau}}{\delta^B_{\rho,\tau}}}}}
    \stsupp{\prodfact {\prodfact{\rho}{\tau}_A} {\prodfact{\rho}{\tau}_B}}
    \xto{\finv{\stsupp{\gamma}_{\prodfact{\rho}{\tau}}}}
    \stsupp{\prodfact{\rho}{\tau}}
  \]
  reduces to
    $\stsupp{\tau}
    \xto{\cpr}
    \stsupp{\prodfact{\rho}{\tau}}$
  so that, since $\cpl$ and $\cpr$ are jointly surjective,
  $\finv{\stsupp{\gamma}_{\prodfact{\rho}{\tau}}} \circ
  \finv{\stsupp{\prodfact{\delta^A_{\rho,\tau}}{\delta^B_{\rho,\tau}}}} =
  \unit{\stsupp{\prodfact{\rho}{\tau}}}$, which is, up to inverses, what we
  wanted. So the second zigzag holds.
\end{proof}

\begin{proposition}
  \label{prop:stcart-second-adjeq}
  Given thin groupoids $\cGamma,\cA,\cB$, there is an adjoint equivalence
  \[
    \begin{tikzcd}[csbo=7em]
      \Thin(\cGamma,\cA \with \cB)
      \ar[rr,bend left,"{(\pscpl\sthc (-),\pscpr\sthc (-))}"]
      &
      \bot
      &
      \Thin(\cGamma,\cA)
      \times
      \Thin(\cGamma,\cB)
      \ar[ll,bend left,"\prodfact{-}{-}"]
    \end{tikzcd}
    \zbox.
  \]
\end{proposition}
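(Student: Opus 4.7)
The plan is to deduce this adjoint equivalence from \Cref{prop:stcart-first-adjeq} by transporting along the natural isomorphisms supplied by \Cref{prop:pscpl-pscpr-other-form}. Indeed, \Cref{prop:stcart-first-adjeq} already gives an adjoint equivalence between $\Thin(\cGamma, \cA \with \cB)$ and $\Thin(\cGamma, \cA) \times \Thin(\cGamma, \cB)$, but with left adjoint $((-)_A, (-)_B)$ rather than $(\pscpl \sthc (-), \pscpr \sthc (-))$; and \Cref{prop:pscpl-pscpr-other-form} provides natural isomorphisms $\alpha\co (-)_A \To \pscpl \sthc (-)$ and $\beta \co (-)_B \To \pscpr \sthc (-)$.

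First I would assemble these into a single natural isomorphism
\[
  \zeta \defeq (\alpha, \beta) \co ((-)_A, (-)_B) \To (\pscpl \sthc (-), \pscpr \sthc (-))
\]
between functors $\Thin(\cGamma, \cA \with \cB) \to \Thin(\cGamma, \cA) \times \Thin(\cGamma, \cB)$. Then, writing $\gamma$ and $\delta$ for the unit and counit of the adjoint equivalence from \Cref{prop:stcart-first-adjeq}, I would define a new unit $\gamma'$ and new counit $\delta'$ by:
\[
  \gamma'_\sigma \defeq \gamma_\sigma \co \sigma \To \prodfact{\sigma_A}{\sigma_B}
\]
(using that $\prodfact{-}{-}$ is unchanged) together with, for $(\rho, \tau) \in \Thin(\cGamma, \cA) \times \Thin(\cGamma, \cB)$,
\[
  \delta'_{\rho,\tau} \defeq \delta_{\rho,\tau} \circ \zeta_{\prodfact{\rho}{\tau}}^{-1}
\]
composed pointwise from the isomorphism $\zeta_{\prodfact{\rho}{\tau}}\co (\prodfact{\rho}{\tau}_A, \prodfact{\rho}{\tau}_B) \To (\pscpl \sthc \prodfact{\rho}{\tau}, \pscpr \sthc \prodfact{\rho}{\tau})$ and the original counit. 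This is a routine instance of the fact that adjoint equivalences are stable under natural isomorphism of one of the adjoints.

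The remaining step is to check the two triangle identities for $(\gamma', \delta')$; both reduce, by naturality of $\alpha$ and $\beta$ (equivalently, of $\zeta$) applied to $\gamma_\sigma$, to the corresponding triangle identities already established for $(\gamma, \delta)$ in the proof of \Cref{prop:stcart-first-adjeq}. No new obstacle arises: the main routine point is the bookkeeping showing that $\zeta$ intertwines the old and new units and counits in the expected way, and this is immediate from the fact that $\zeta$ is a natural isomorphism between the two choices of left adjoint. Since both adjoint equivalences have the same right adjoint $\prodfact{-}{-}$, no further adjustment is required on that side.
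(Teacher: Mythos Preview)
Your approach is exactly the paper's: its proof is the single sentence ``This is a consequence of \Cref{prop:stcart-first-adjeq,prop:pscpl-pscpr-other-form}'', i.e., transport the adjoint equivalence of \Cref{prop:stcart-first-adjeq} along the natural isomorphisms of \Cref{prop:pscpl-pscpr-other-form}, precisely as you do.

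There is one small slip in your explicit construction. Your new unit $\gamma'_\sigma \defeq \gamma_\sigma$ has the wrong codomain: for $F' \dashv G$ with $F' = (\pscpl \sthc (-), \pscpr \sthc (-))$ and $G = \prodfact{-}{-}$, the unit must land in $GF'\sigma = \prodfact{\pscpl \sthc \sigma}{\pscpr \sthc \sigma}$, not in $GF\sigma = \prodfact{\sigma_A}{\sigma_B}$. The correct transported unit is
\[
  \gamma'_\sigma \;=\; (G\zeta_\sigma) \circ \gamma_\sigma \;=\; \prodfact{\alpha_\sigma}{\beta_\sigma} \circ \gamma_\sigma\,,
\]
mirroring what you (correctly) did on the counit side. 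With this fix the triangle identities do reduce to those already proved in \Cref{prop:stcart-first-adjeq} via naturality of $\zeta$, exactly as you claim.
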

\begin{proof}
  This is a consequence of \Cref{prop:stcart-first-adjeq,prop:pscpl-pscpr-other-form}.
\end{proof}
Note that, given $(\rho,\tau) \in \Thin(\cGamma,\cA) \times \Thin(\cGamma,\cB)$, the
component at $(\rho,\tau)$ of the counit associated to the adjoint equivalence
of \Cref{prop:stcart-second-adjeq} is the composite
\[
  (\pscpl\sthc\prodfact\rho\tau,\pscpr\sthc\prodfact\rho\tau)
  \xto{\cong}
  (\prodfact\rho\tau_A,\prodfact\rho\tau_B)
  \xto{\delta}
  (\rho,\tau)
  \zbox.
\]
Now, we can conclude the proof of \Cref{prop:thinb-fp}:
\begin{proof}[Proof of \Cref{prop:thinb-fp}] We have the equalities
  $\Thinb(\Gamma,A \with B) = \Thin(\oc \Gamma,A\with B)$ and $ \Thinb(\Gamma,A)
  \times \Thinb(\Gamma,B) = \Thin(\oc\Gamma,A) \times \Thin(\oc\Gamma,B) $.
  Moreover, it is quite immediate that, up to these identifications, there is an
  isomorphism of functors $(\stbpl\sthcb (-),\stbpr\sthcb (-)) \cong
  (\pscpl\sthc (-),\pscpr\sthc (-))$. Thus, the unit/counit pair of the adjoint
  equivalence of \Cref{prop:stcart-second-adjeq} can be adjusted to get a
  unit/counit pair witnessing that we have an adjoint equivalence as in the
  statement.
\end{proof}
\endgroup

\subsection{The evaluation adjunction}
\label{sec:app-ev-adj}
\newcommand\cGamma{\Gamma}%
We give here some additional details for the proof of \Cref{prop:adj-closure},
stating the existence of an adjoint equivalence between the currying operation
and the evaluation one. This adjoint equivalence will be derived from the Seely
equivalence already introduced.

\subsubsection{Properties of the Seely equivalence}
\begin{proposition}
  \label{prop:seely-2nat}
  The family of functors $\seely_{A,B}$ for groupoids $A,B$ form a $2$\natural
  transformation
  \[
    \seely \co \oc (-) \times \oc (-) \To \oc((-) +  (-)) \co \Gpd \times \Gpd
    \to \Gpd
    \zbox.
  \]
\end{proposition}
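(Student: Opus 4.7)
My plan is to verify the two levels of the $2$\naturality condition separately, by direct pointwise computation, exploiting the fact that the construction of $\seely_{A,B}$ only rearranges copy indices via the chosen bijection $\ibij = \coprodfact{\ibij_l}{\ibij_r}$ and otherwise acts coordinatewise via $\cpl$ and $\cpr$. Since neither $\oc$ nor the coproduct $+$ does anything to copy indices (the bijection component of a morphism in $\oc A$ is transported unchanged by $\oc F$), these interactions will be clean.

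First, I would check $1$\naturality: given functors $F \co A \to A'$ and $G \co B \to B'$, show that
\[
  \oc(F+G) \circ \seely_{A,B} \qeq \seely_{A',B'} \circ (\oc F \times \oc G)
\]
on the nose. On an object $((a_i)_{i\in I}, (b_j)_{j\in J})$ both sides produce the family on index set $I \imix J$ whose value at $\ibij_l(i)$ is $\cpl(F a_i)$ and at $\ibij_r(j)$ is $\cpr(G b_j)$, using naturality $\cpl\circ F = (F+G)\circ\cpl$ and $\cpr \circ G = (F+G)\circ\cpr$. On a morphism $((\pi_A,(f_i)_i), (\pi_B,(g_j)_j))$ both sides yield the morphism with bijection $\pi_A \imix \pi_B$ (the bijection $I \imix J \bij I' \imix J'$ induced by $\pi_A$ on the left part and $\pi_B$ on the right part, via $\ibij$) and families obtained from the $f_i$'s and $g_j$'s transported by $\cpl$ and $\cpr$ respectively.

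Second, I would check $2$\naturality: given a pair of natural transformations $\alpha \co F \To F' \co A \to A'$ and $\beta \co G \To G' \co B \to B'$, show the whiskerings agree as $2$\cells:
\[
  \seely_{A',B'} * (\oc\alpha \times \oc\beta) \qeq \oc(\alpha + \beta) * \seely_{A,B} \zbox.
\]
At a component $((a_i)_{i\in I}, (b_j)_{j\in J})$, both sides yield the morphism in $\oc(A'+B')$ whose underlying bijection is $\id_{I \imix J}$ and whose fiberwise component at $k \in I \imix J$ is $\cpl(\alpha_{a_i})$ if $k = \ibij_l(i)$ and $\cpr(\beta_{b_j})$ if $k = \ibij_r(j)$. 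This uses on the LHS the functoriality of $\seely_{A',B'}$ applied to the morphisms $(\oc\alpha)_{(a_i)} = (\id_I,(\alpha_{a_i})_i)$ and $(\oc\beta)_{(b_j)} = (\id_J,(\beta_{b_j})_j)$, and on the RHS the definition of $\oc(\alpha+\beta)$ together with the equalities $(\alpha+\beta)_{\cpl(a_i)} = \cpl(\alpha_{a_i})$ and $(\alpha+\beta)_{\cpr(b_j)} = \cpr(\beta_{b_j})$.

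No step is really a genuine obstacle: everything is a matter of unfolding definitions and matching indices through $\ibij$. The only place one has to be careful is to keep the indexing bookkeeping straight -- in particular, to observe that the \eq{reindexing} part of any morphism in $\oc A$ is left untouched by $\oc$ applied to a functor or natural transformation, so that on the $\seely$-side one only permutes along $\pi_A \imix \pi_B$ and on the $\oc(\alpha+\beta)$-side one stays with $\id_{I\imix J}$. With these observations the equalities follow on the nose, giving a \emph{strict} $2$\natural transformation as claimed.
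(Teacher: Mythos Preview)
Your proposal is correct and takes essentially the same approach as the paper, which simply states that ``the naturality w.r.t.\ $2$\cells is checked by direct point-wise computation.'' You have just spelled out the details of that computation explicitly, which the paper omits.
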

\begin{proof}
  The naturality w.r.t $2$\cells is checked by direct point-wise computation.
\end{proof}

\begin{proposition}
  \label{prop:seely-bicart}
  The natural transformation $\seely$ is bicartesian.
\end{proposition}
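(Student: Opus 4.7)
The plan is to verify bicartesianness directly for an arbitrary morphism $(F,G) \co (A,B) \to (A',B')$ in $\Gpd \times \Gpd$, using \Cref{lem:carac_pb_bipb} for the bipullback criterion. The guiding intuition is that $\seely$ and $\oc(F+G)$ both preserve the left/right decomposition of elements of a coproduct, so the combinatorics of the naturality square reduces neatly to splitting families along that partition.

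First I would establish that the naturality square
\[
\begin{tikzcd}[cramped,rsbo=2em,csbo=2em]
\oc A \times \oc B \ar[r,"\seely_{A,B}"] \ar[d,"\oc F \times \oc G"'] & \oc(A+B) \ar[d,"\oc(F+G)"] \\
\oc A' \times \oc B' \ar[r,"\seely_{A',B'}"'] & \oc(A'+B')
\end{tikzcd}
\]
is a strict pullback in $\Gpd$. Given $y = ((a'_i)_{i\in I}, (b'_j)_{j\in J})$ and $x = (c_k)_{k \in K}$ satisfying $\seely_{A',B'}(y) = \oc(F+G)(x)$, equality of families forces $K = I \imix J$, and for $k = \ibij_l(i)$ (resp.\ $\ibij_r(j)$) the element $c_k$ must lie on the $A$-side (resp.\ $B$-side) since $F{+}G$ preserves the side decomposition; hence $x$ reassembles uniquely into a preimage in $\oc A \times \oc B$ that agrees with $y$ on the nose. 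The same accounting works verbatim on morphisms.

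Next I would check the bipullback property via \Cref{lem:carac_pb_bipb}. A reindexing problem is $y \in \oc A' \times \oc B'$, $x = (c_k)_{k\in K} \in \oc(A+B)$, and an iso $\theta \co \seely_{A',B'}(y) \sym_{\oc(A'+B')} \oc(F+G)(x)$; writing $\theta = (\pi, (\theta_k)_k)$ with $\pi \co I \imix J \bij K$, the morphisms $\theta_k \co d_k \to (F+G)(c_{\pi(k)})$ live in $A'+B'$ and thus preserve sides, so $\pi$ splits into bijections $\ibij_l(I) \bij K_A$ and $\ibij_r(J) \bij K_B$ where $K_A,K_B$ partition $K$. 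I then construct the solution explicitly: take $\psi \co x' \sym_{\oc(A+B)} x$ with $x' = (c_{\pi(k)})_{k \in I \imix J}$, $\psi = (\pi, (\id)_k)$; and $\varphi \co y \sym_{\oc A' \times \oc B'} y'$ with identity bijections on $I$ and $J$ and components $\theta_{\ibij_l(i)}$, $\theta_{\ibij_r(j)}$. A direct computation gives $\seely_{A',B'}(y') = \oc(F+G)(x')$ and $\theta = \oc(F+G)(\psi) \circ \seely_{A',B'}(\varphi)$, so $(\varphi,\psi)$ solves the reindexing problem and the square is a bipullback.

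There is no serious obstacle; everything is a concrete combinatorial check. The only point requiring care is the bookkeeping of the bijection $\ibij$ and the fact that morphisms in the coproduct $A'+B'$ cannot cross sides, which is what licenses the clean splitting of $\pi$ along $\ibij_l,\ibij_r$ and ensures $\theta$ factors as claimed.
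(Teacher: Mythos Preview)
Your proposal is correct and follows exactly the approach the paper takes: the paper's proof is the one-line sketch ``by a direct use of the point-wise characterization of pullbacks and \Cref{lem:carac_pb_bipb} on the naturality squares of $\seely$,'' and your argument is precisely a careful unpacking of that sketch. There is no substantive difference in method or content.
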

\begin{proof}
  By a direct use of the point-wise characterization of pullbacks and
  \Cref{lem:carac_pb_bipb} on the naturality squares of $\seely$.
\end{proof}
\noindent Similarly, we have the same kind of properties for $\seelyinv$:
\begin{proposition}
  The family of functors $\seelyinv_{A,B}$ for groupoids $A,B$ form a $2$\natural
  transformation
  \[
    \seelyinv \co \oc((-) +  (-)) \To \oc (-) \times \oc (-) \co \Gpd \times \Gpd
    \to \Gpd
    \zbox.
  \]
\end{proposition}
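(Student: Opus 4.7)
The plan is to mirror the argument used for $\seely$ in the preceding proposition, reducing everything to a direct pointwise verification. The key observation is that $\seelyinv_{A,B}$ acts on a family $(c_k)_{k\in K} \in \oc(A+B)$ by splitting $K$ canonically into $I \sqcup J$, where $I$ collects those $k \in K$ with $c_k$ in the $A$-summand of $A + B$ and $J$ collects those $k$ with $c_k$ in the $B$-summand; on morphisms it acts analogously, exploiting that any symmetry in $A+B$ restricts to a pair of symmetries within each summand since $A+B$ is a disjoint union of groupoids. This splitting is preserved by every functor of the form $F + G$ and by every natural transformation of the form $\alpha + \beta$, which is what will make the naturality squares commute.

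First I would check $1$-naturality: for functors $F \co A \to A'$ and $G \co B \to B'$, at an object $(c_k)_{k\in K} \in \oc(A+B)$ both legs of the naturality square produce the pair $((F c_i)_{i\in I}, (G c_j)_{j\in J})$, since $\oc(F+G)$ acts within each summand without altering the splitting $K = I \sqcup J$. The case of morphisms is analogous: a symmetry of $\oc(A+B)$ consists of a bijection of indices together with a family of morphisms each living in exactly one summand, and $\seelyinv$ commutes with the action of $\oc(F+G)$ on such data.

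Next, for $2$-naturality, I would take $2$-cells $\alpha \co F \To F'$ and $\beta \co G \To G'$ and verify at each $(c_k)_{k\in K}$ that the components $\seelyinv_{A',B'} (\oc(\alpha+\beta))_{(c_k)_{k\in K}}$ and $((\oc\alpha \times \oc\beta) \seelyinv_{A,B})_{(c_k)_{k\in K}}$ coincide. Using the explicit formula for the action of $\oc$ on $2$-cells from \Cref{sec:app-oc-pseudocomonad}, both sides reduce to the pair consisting of the family $(\alpha_{c_i})_{i \in I}$ indexed by $I$ and the family $(\beta_{c_j})_{j \in J}$ indexed by $J$.

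No real obstacle arises: the verification is routine and entirely parallel to that for $\seely$. A slightly more abstract alternative would be to exploit the fact that $\seely$ and $\seelyinv$ form an adjoint equivalence with negative unit and counit, and deduce $2$-naturality of $\seelyinv$ from that of $\seely$ via a standard mate-correspondence, but the direct pointwise computation above seems shorter in this setting.
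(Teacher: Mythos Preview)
Your proposal is correct and matches the paper's approach: the paper states this proposition immediately after the analogous one for $\seely$ under the heading ``Similarly, we have the same kind of properties for $\seelyinv$'' and gives no separate proof, relying on the same direct pointwise computation you outline.
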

\begin{proposition}
  \label{prop:seelyinv-bicart}
  The natural transformation $\seelyinv$ is bicartesian.
\end{proposition}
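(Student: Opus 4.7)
The plan is to proceed in direct analogy with the proof of \Cref{prop:seely-bicart}. For any functors $F : A \to A'$ and $G : B \to B'$, I would first write down the naturality square
\[
  \begin{tikzcd}[csbo=4.5em]
    \oc(A+B)
    \ar[r,"\seelyinv_{A,B}"]
    \ar[d,"\oc(F+G)"']
    &
    \oc A \times \oc B
    \ar[d,"\oc F \times \oc G"]
    \\
    \oc(A'+B')
    \ar[r,"\seelyinv_{A',B'}"']
    &
    \oc A' \times \oc B'
  \end{tikzcd}
\]
and verify commutation strictly (which holds because $F+G$ preserves the $A$/$B$-tagging of copy indices used by $\seelyinv$ to perform its splitting). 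Then I would establish two claims: that this square is a pullback, and that it satisfies the criterion of \Cref{lem:carac_pb_bipb} for being a bipullback.

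For the pullback property, I would argue pointwise. A cone over the cospan consists of a family $(c'_k)_{k\in K} \in \oc(A'+B')$ together with families $((a_i)_{i\in I}, (b_j)_{j\in J}) \in \oc A \times \oc B$ such that $\seelyinv_{A',B'}$ of $(c'_k)_{k\in K}$ equals $((F a_i)_{i\in I}, (G b_j)_{j\in J})$. In particular, the partition $K = I \sqcup J$ is determined by the $A'$/$B'$-tags of the $c'_k$, and one reconstructs a unique preimage $(d_k)_{k\in K} \in \oc(A+B)$ by setting $d_i = a_i$ for $i \in I$ and $d_j = b_j$ for $j \in J$. The argument for morphisms is analogous: a morphism in the cone determines uniquely a morphism in $\oc(A+B)$, since the bijection on indices is fixed by the data on the $\oc A \times \oc B$ side, and the component maps on elements are fixed in each tag-class.

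For the bipullback property, I would apply \Cref{lem:carac_pb_bipb}: any reindexing problem between $(c_k)_{k\in K} \in \oc(A+B)$ and $((a_i)_{i\in I}, (b_j)_{j\in J}) \in \oc A \times \oc B$ consists of an isomorphism in $\oc A' \times \oc B'$ between their images. This isomorphism decomposes into an $A'$-part and a $B'$-part, each with an underlying bijection of index sets and componentwise morphisms in $A'$, $B'$. Using that $\seelyinv$ just reads off tags, one can canonically split $K$ into its $A$- and $B$-components and exhibit lifts separately on each side, then recombine them into a solution of the reindexing problem in $\oc(A+B)$.

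The main step that requires care is purely bookkeeping: tracking how indices are partitioned by the tagging and how bijections of index sets decompose. Unlike for $\seely$, there is no auxiliary bijection $\ibij$ to manage, so the computation is in fact more direct; no essentially new idea beyond that of \Cref{prop:seely-bicart} is required. Alternatively, since $\seely$ and $\seelyinv$ form an equivalence, one could try to derive bicartesianness of $\seelyinv$'s naturality squares from those of $\seely$ by pre- and post-composing with the invertible unit/counit components, but the pointwise verification is short enough that the direct route is preferable.
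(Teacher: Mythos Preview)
The proposal is correct and takes essentially the same approach as the paper: the paper's proof of \Cref{prop:seely-bicart} consists of a direct pointwise verification of the pullback property together with an application of \Cref{lem:carac_pb_bipb}, and \Cref{prop:seelyinv-bicart} is stated with no proof beyond the implicit ``similarly''. Your write-up spells out exactly this argument for $\seelyinv$, and your observation that the absence of the auxiliary bijection $\ibij$ makes the bookkeeping even lighter is accurate.
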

\subsubsection{The Seely coherence $2$-cell}
While the Seely isomorphisms of $1$\categorical models of linear logic are
required to satisfy an equality, in our $2$\categorical setting we only have the
following $2$\cell
\[
  \begin{tikzcd}[column sep={between origins,8em}]
    \ocd A\times \ocd B
    \ar[r,equals]
    \ar[d,"\seely_{\oc A,\oc B}"{description}]
    \cphar[rddd,"\xTo{\seelycoh_{A,B}}"]
    &
    \ocd A\times \ocd B
    \ar[d,"\mu_A\times\mu_B"{description}]
    \\
    \oc (\oc A  +  \oc B)
    \ar[d,"\oc\coprodfact{\oc(\cpl)}{\oc(\cpr)}"{description}]
    &
    \oc A\times \oc B
    \ar[dd,"\seely_{A,B}"{description}]
    \\
    \ocd (A  +  B)
    \ar[d,"\mu_{A  +  B}"{description}]
    &
    \\
    \oc(A +  B)
    \ar[r,equals]
    &
    \oc(A +  B)
  \end{tikzcd}
\]

We first compute the action of the two vertical morphisms on objects of $\ocd A
\times \ocd B$. So let $a = ((a_{i,j})_{j\in J^A_i})_{i \in I^A} \in \ocd A$ and
$b = ((b_{i,j})_{j\in J^B_i})_{i \in I^B} \in \ocd B$. The mappings associated
with the left vertical morphism are
\[
  \begin{split}
    &(a,b)
    \\
    \mapsto
    &
    (\cpl((a_{i,j})_{j \in J^A_i}))_{\ibij_l(i) \in \ibij_l(I^A)}
    \cup
    (\cpr((b_{i,j})_{j \in J^B_i}))_{\ibij_r(i) \in \ibij_r(I^B)}
    \\
    \mapsto
    &
    ((\cpl(a_{i,j}))_{j \in J^A_i})_{\ibij_l(i) \in \ibij_l(I^A)}
    \cup
    ((\cpr(b_{i,j}))_{j \in J^B_i})_{\ibij_r(i) \in \ibij_r(I^B)}
    \\
    \mapsto
    &
    (\cpl(a_{i,j}))_{\prodfact{\ibij_l(i)}{j} \in \sum_{i\in\ibij_l(I^A)}{J^A_{i}}}
    \cup
    (\cpr(b_{i,j}))_{\prodfact{\ibij_r(i)}{j} \in \sum_{i\in\ibij_r(I^B)}{J^B_{i}}}
  \end{split}
\]
and the mappings associated with the right are
\[
  \begin{split}
    (a,b)
    &
    \mapsto
    ((a_{i,j})_{\prodfact{i}{j} \in \sum_{i \in I^A}{J^A_{i}}},
    (b_{i,j})_{\prodfact{i}{j} \in \sum_{i\in I^B}{J^B_{i}}})
    \\
    &
    \mapsto
    \begin{multlined}[t]
      (\cpl(a_{i,j}))_{\ibij_l(\prodfact{i}{j}) \in \ibij_l(\sum_{i\in I^A}{J^A_{i}})}
      \\
      \cup
      (\cpl(b_{i,j}))_{\ibij_r(\prodfact{i}{j}) \in \ibij_r(\sum_{i\in I^B}{J^B_{i}})}
    \end{multlined}
  \end{split}
\]
where, given $D \in \Gpd$ and $(d_i)_{i \in I}, (d'_j)_{j\in J} \in \oc D$ with
$I \cap J = \emptyset$, we write $(d_i)_{i \in I}\cup (d'_j)_{j\in J}$ for the
evident family indexed by $I \cup J$.
We thus define $(\seelycoh_{A,B})_{a,b}$ as the map $(\pi,(\id)_{k \in K})$
where $\pi$ is the bijection $K \to K'$ with
\begin{gather*}
  K = \isum_{i \in \ibij_l(I^A)}{J^A_{i}} \cup \isum_{i\in\ibij_r(I^B)}{J^B_{i}}
  \shortintertext{and}
  K' = \ibij_l(\isum_{i\in I^A}{J^A_{i}}) \cup \ibij_r(\isum_{i\in I^B}{J^B_{i}})
\end{gather*}
such that $\pi$ maps $\iprod{\ibij_l(i)}{j} \in K$ to
$\ibij_l(\iprod{i}{j}) \in K'$, and $\iprod{\ibij_r(i)}{j} \in K$ to
$\ibij_r(\iprod{i}{j}) \in K'$. The naturality of $\seelycoh_{A,B}$ with
respect to morphisms $(a,b) \to (a',b')$ of $\ocd A\times \ocd B$ can be readily
checked. So $\seelycoh_{A,B}$ is indeed a $2$\cell of $\Gpd$. 

We moreover verify
that
\begin{proposition}
  The family of $2$\cells $\seelycoh_{A,B}$ for $A,B \in \Gpd$ is natural with
  respect to functors $F\co A \to A'$ and $G\co B \to B'$ of $\Gpd$. In other
  words, $\seelycoh = (\seelycoh_{A,B})_{A,B \in \Gpd}$ is a modification.
\end{proposition}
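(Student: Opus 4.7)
The plan is to verify the modification condition directly by a pointwise calculation, leveraging the fact that the $2$-cell $\seelycoh_{A,B}$ is built from bijections of index sets alone, not from the groupoid elements themselves. Concretely, a modification between two natural transformations $\alpha, \beta \co \oc(-)\times \oc(-) \To \oc((-)+(-))$ with domain $\Gpd \times \Gpd$ is, in our setting, a family of $2$\cells indexed by pairs of groupoids, satisfying for each pair of functors $F \co A \to A'$ and $G \co B \to B'$ an equation obtained by pasting $\seelycoh_{A,B}$ (resp.\ $\seelycoh_{A',B'}$) with the relevant naturality squares of $\mu$, $\seely$, and the functoriality of $\oc$.

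The first step will be to write the modification equation as an equality of two $2$\cells between parallel composite functors $\ocd A \times \ocd B \to \oc(A'+B')$. Since both sides are $2$\cells in $\Gpd$, it suffices to check the equality on every object $(a,b) \in \ocd A \times \ocd B$, with $a = ((a_{i,j})_{j\in J^A_i})_{i \in I^A}$ and $b = ((b_{i,j})_{j\in J^B_i})_{i \in I^B}$. Here I would explicitly compute, on each side, the source and target of the $2$\cell component at $(a,b)$: both sides have underlying source and target families indexed by the same set $K$ (resp.\ $K'$) as in the definition of $\seelycoh_{A,B}$, but with each element $a_{i,j}$ replaced by $F(a_{i,j})$ and each $b_{i,j}$ replaced by $G(b_{i,j})$.

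The key observation, which I would emphasize and exploit, is that $\seelycoh_{A,B}$ at $(a,b)$ is the pair $(\pi, (\id)_{k\in K})$, where $\pi \co K \to K'$ depends only on the \emph{indexing sets} $I^A, (J^A_i)_{i\in I^A}, I^B, (J^B_i)_{i\in I^B}$, and not at all on the elements $a_{i,j}, b_{i,j}$. Since the functorial actions of $\oc$ and $\ocd$ on $F$ and $G$ preserve the indexing structure of families (only transporting the underlying elements), the bijection $\pi$ arising from $\seelycoh_{A',B'}$ at $(\ocd F \times \ocd G)(a,b)$ is \emph{the same} bijection, and the identity family is sent by the relevant functors to the identity family on the transported elements. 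Hence both sides of the modification equation unfold to $(\pi, (\id)_{k\in K})$ with components on the same elements $F(a_{i,j}), G(b_{i,j})$, yielding the required equality.

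The main (and only mildly delicate) obstacle will be bookkeeping the pasting diagram carefully: several naturality squares of $\mu$, $\seely$, and the functoriality isomorphism of $\oc$ intervene, and one must track precisely how they reindex the underlying sets. Once the source and target of each side are identified and both reduce to the canonical bijection $\pi$ with identities over transported elements, the equality is immediate. I would conclude with a brief remark that this argument is in fact an instance of a general principle: any family of $2$\cells in $\Gpd$ whose components are purely combinatorial bijections on indices (with identity actions on groupoid elements) is automatically natural with respect to arbitrary functors between the groupoids involved.
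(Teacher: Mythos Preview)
Your proposal is correct and follows exactly the approach the paper takes: the paper's entire proof is the sentence ``A direct point-wise computation of naturality.'' Your write-up simply spells out what that computation looks like, in particular the key observation that the bijection $\pi$ in $(\seelycoh_{A,B})_{(a,b)} = (\pi,(\id)_{k\in K})$ depends only on the indexing sets and not on the groupoid elements, which is precisely why naturality in $F$ and $G$ is immediate.
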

\begin{proof}
  A direct point-wise computation of naturality.
\end{proof}

\subsubsection{Properties of the evaluation span}
\noindent Given thin groupoids $\cA,\cB$, recall that we introduced a
span
\[
  \evm_{\cA,\cB} \co (\cA \cptto \cB) \with \cA \stto \cB
\]
defined by
\[
  \evm_{\cA,\cB}
  \qquad
  =
  \begin{tikzcd}[column sep={2em,between origins}]
    &&&\oc A \times B
    \ar[ld,"\catprodfactp{\pl,\pr,\pl}"']
    \ar[rrrddd,"\pr"]
    &&&
    \\
    &&
    \oc A \times B \times \oc A
    \ar[ld,"\eta_{\oc A \times B} \times \oc A"']
    &
    &
    \\
    &
    \oc (\oc A \times B) \times \oc A
    \ar[ld,"\seely_{\oc A \times B,A}"']
    \\
    \oc ((\oc A \times B) \with A)
    &&&&&&
    B
  \end{tikzcd}
  \zbox.
\]

We have that
\begin{proposition}
  \label{prop:ev-thin}
  We have $\evm_{\cA,\cB} \in \stsetb{\oc((\cA \cptto \cB) \with \cA) \linto \cB}$.
\end{proposition}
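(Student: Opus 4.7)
The plan is to apply \Cref{prop:carac_lin_thin} with a carefully chosen generating class $\SU \subseteq \T_{\oc((\cA \tto \cB) \with \cA)}$. A natural candidate is
\[
\SU = \{\oc(R + Q) \mid R \in \T_{\cA \tto \cB},\ Q \in \T_\cA\}.
\]
By chaining the definitions $\T_{\oc X} = (\oc \T_X)^{\pperp \pperp}$ and $\T_{X \with Y} = (\T_X + \T_Y)^{\pperp \pperp}$, together with monotonicity of biorthogonality (and the fact that $\oc$ preserves thin strategies, already used in \Cref{sec:app-oc-functor}), one checks $\SU^{\pperp \pperp} = \T_{\oc((\cA \tto \cB) \with \cA)}$; this is a routine biorthogonality exercise.

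The core step is then to compute $\evm_{\cA,\cB} @ \oc(R+Q)$ up to isomorphism of spans on $\cB$, for each $R,Q$ in the generating class. The left leg of $\evm_{\cA,\cB}$ is a three-step composite involving, in order, the diagonal $\langle \pl, \pr, \pl \rangle$, the natural transformation $\eta_{\oc A \times B} \times \id_{\oc A}$, and the Seely functor $\seely_{\oc A \times B, A}$. Meanwhile $\oc(R + Q)$ factors through $\oc R \times \oc Q$ via $\seelyinv$. I would decompose the composition pullback into three stacked pullbacks and simplify from the outside in, using \Cref{lem:rect-left-square-bipullback} and friends: the Seely stage collapses because $\seely$ is bicartesian (\Cref{prop:seely-bicart}), leaving a pullback over $\oc(\oc A \times B) \times \oc A$; the $\eta$ stage collapses because $\eta$ is a cartesian $2$\nbd-natural transformation, reducing the apex to $R \times \oc Q$ over $\oc A \times B \times \oc A$; and the final pullback along the diagonal identifies the $\oc A$\nbd-components of $R$ and $\oc Q$, producing exactly the pullback defining $R @ \oc Q$. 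The right legs match on the nose, so we obtain $\evm_{\cA,\cB} @ \oc(R+Q) \simeq R @ \oc Q$ as prestrategies on $\cB$.

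Finally, since $R \in \T_{\cA \tto \cB} = \T_{\oc \cA \lin \cB}$ and $\oc Q \in \T_{\oc \cA}$ (the latter because $Q \in \T_\cA$ and $\oc \T_\cA \subseteq (\oc \T_\cA)^{\pperp \pperp} = \T_{\oc \cA}$), \Cref{prop:carac_lin_thin} applied to $R$ with generating class $\T_{\oc \cA}$ yields $R @ \oc Q \in \T_\cB$. Hence $\evm_{\cA,\cB} @ \oc(R+Q) \in \T_\cB$ for all $R,Q$ in the generating class, and a second application of \Cref{prop:carac_lin_thin}---this time to $\evm_{\cA,\cB}$ with generating class $\SU$---gives $\evm_{\cA,\cB} \in \T_{\oc((\cA \tto \cB) \with \cA) \lin \cB}$, as desired.

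The main obstacle is the three-stage diagram chase for the isomorphism $\evm_{\cA,\cB} @ \oc(R+Q) \simeq R @ \oc Q$. Verifying that each pullback decomposes cleanly relies essentially on the cartesianness of both $\seely$ and $\eta$; the bookkeeping required to align the Seely identification $\oc(X + Y) \simeq \oc X \times \oc Y$ with the rest of the composite is what makes this the delicate part. Establishing that $\SU$ is a valid generating class is conceptually easier but also requires some care with the nested biorthogonal closures.
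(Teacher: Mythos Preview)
Your approach is essentially the paper's: it too invokes \Cref{prop:carac_lin} and \Cref{prop:carac_lin_thin}, the bicartesianness of $\seely$ (\Cref{prop:seely-bicart}) and of $\eta$, and the rectangle lemma \Cref{lem:rect-left-square-bipullback} to decompose the composition pullback. Your three-stage reduction to $R @ \oc Q$ is exactly the intended computation (compare the analogous rectangle in the proof of \Cref{prop:evaluncurry-isom-stuncurry}).

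One small gap: \Cref{prop:carac_lin_thin} has the hypothesis $T \in \U_{A \lin B}$, so before your final application you must first establish $\evm_{\cA,\cB} \in \U_{\oc((\cA \tto \cB) \with \cA) \lin \cB}$. This is why the paper also cites \Cref{prop:carac_lin}: the same pullback decomposition, read at the level of bipullbacks rather than discreteness, gives uniformity (condition \ref{prop:carac_lin:pres}), and condition \ref{prop:carac_lin:anti-unif} follows because the left leg of $\evm$ is a composite of bicartesian legs. This is not a different idea, just an extra pass through the same diagram chase; but as written your argument skips it.
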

\begin{proof}
  By an adequate use of \Cref{prop:carac_lin,prop:carac_lin_thin}, using
  \Cref{prop:seely-bicart}, the bicartesianness of $\eta$ and
  \Cref{lem:rect-left-square-bipullback} to get the required bipullbacks.
\end{proof}
\subsubsection{The currying operation}
Recall that, given thin groupoids $\cGamma,\cA,\cB$ and $S \in \Thin_{\oc}(\Gamma
\with A, B)$\todo{si temps, remplacer A->Gamma, B->A, C->B}, we defined
$\Lambda(S)$ as the span
\[
  \Lambda(S)
  \qeq
  \begin{tikzcd}[column sep={5em,between origins}]
    &
    \stsupp{S}
    \ar[ld,"\pl\circ\seelyinv_{\Gamma,A}\circ\stdisp S_{\oc(\Gamma +  A)}"']
    \ar[rd,"\catprodfact{\pr\circ\seelyinv_{\Gamma,A}\circ\stdisp S_{\oc(\Gamma + 
        A)}}{\stdisp S_{B}}"]
    &
    \\
    \oc \Gamma
    &&
    \oc A \times B
  \end{tikzcd}
  \zbox.
\]
\begin{proposition}
  \label{prop:stcurry-thin}
  Given thin groupoids $\cGamma,\cA,\cB$ and $S \in\Thin(\cGamma\with\cA,\cB)$, we
  have $\stcurry S \in\stsetb{\oc\cGamma\linto(\cA\cptto\cB)}$.
\end{proposition}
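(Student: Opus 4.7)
The approach is to reduce the claim via Propositions \ref{prop:carac_lin} and \ref{prop:carac_lin_thin} to the hypothesis that $S \in \stsetb{\oc(\Gamma + A) \lin B}$. The key observation is that $\Lambda(S)$ shares its underlying groupoid with $S$: its display map is simply $(\seelyinv_{\Gamma,A} \times \id_B) \circ \display^S$, up to associativity of products. Since $\seelyinv_{\Gamma,A}$ is bicartesian (Proposition \ref{prop:seelyinv-bicart}), every pullback and bipullback involving $\display^{\Lambda(S)}$ corresponds to one involving $\display^S$, and this correspondence preserves the bipullback property.

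First, I would establish the underlying uniformity $\Lambda(S) \in \U_{\oc\Gamma \lin (\oc A \parr B)}$ using Proposition \ref{prop:carac_lin} with $\SU = \U_{\oc\Gamma}$. This reduces the goal to two conditions: (i) the leg of $\Lambda(S)$ to $\oc\Gamma$ lies in $\U_{\oc\Gamma}^\perp$, which transports from the analogous anti-uniformity of $\display^S_{\oc(\Gamma+A)}$ across $\seelyinv$; and (ii) for every $R \in \U_{\oc\Gamma}$, $\Lambda(S) @ R \in \U_{\oc A \parr B}$. Using the decomposition $\oc A \parr B = (\oc A)^\perp \lin B$ and another application of Proposition \ref{prop:carac_lin}, condition (ii) further reduces to showing that $(\Lambda(S) @ R) @ W \in \U_B$ for all $W \in \U_{\oc A}^\perp$. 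A diagrammatic argument using Lemmas \ref{lem:pb-bpb-three-tiles} and \ref{lem:rect-left-square-bipullback}, combined with the bicartesian property of $\seely$ (Proposition \ref{prop:seely-bicart}), identifies this double application with $S @ Z$, where $Z$ is a prestrategy on $\oc(\Gamma + A)$ obtained by Seely-transporting the tensor of $R$ and $W$. This lies in $\U_B$ by the hypothesis on $S$.

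Second, I would establish the thinness condition $\Lambda(S) \in \stsetb{\oc\Gamma \lin (\oc A \parr B)}$ entirely analogously, using Proposition \ref{prop:carac_lin_thin} in place of Proposition \ref{prop:carac_lin}: for every $R \in \T_{\oc\Gamma}$ and $W \in \T_{\oc A}^{\pperp}$, the composite $(\Lambda(S) @ R) @ W$ is identified, by the same diagrammatic chase, with $S @ Z$ for a corresponding Seely-transported $Z$, which lies in $\T_B$ by the thinness of $S$. The main obstacle is the bookkeeping in the Seely transport: one must verify that the construction of $Z$ from $R$ and $W$ lands in the appropriate thin class on $\oc(\Gamma+A)$, which in turn requires carefully tracking how the Seely equivalence interacts with the various orthogonality classes and polarities. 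This works out thanks to the bicartesian property of both $\seely$ and $\seelyinv$, which ensures that the requisite pullbacks remain bipullbacks (respectively discrete) after transport.
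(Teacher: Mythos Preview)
Your proposal is correct and takes essentially the same approach as the paper: the paper's own proof is a one-sentence sketch that simply names \Cref{prop:carac_lin,prop:carac_lin_thin,prop:seelyinv-bicart,lem:rect-left-square-bipullback} as the relevant tools, and your expansion follows exactly that route. Your additional invocation of \Cref{prop:seely-bicart} and \Cref{lem:pb-bpb-three-tiles} is consistent with this, as these are the natural companions for organizing the diagram chase; your explicit acknowledgment that the Seely-transport bookkeeping is the crux is apt, and indeed what the paper's phrase ``more involved than the other instances'' gestures at.
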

\begin{proof}
  While more involved than the other instances, this still relies on
  \Cref{prop:carac_lin,prop:carac_lin_thin}, using
  \Cref{prop:seelyinv-bicart,lem:rect-left-square-bipullback} to get the
  required intermediate bipullbacks.
\end{proof}
The operation $\stcurry-$ can be extended to weak morphisms the expected way,
and it is compatible with the polarities, and we moreover have
\begin{proposition}
  Given thin groupoids $\cGamma,\cA,\cB$, the operation
  \[
    \stcurry- \co \Thinb(\cGamma\with \cA,\cB) \to \Thinb(\cGamma,\cA\cptto\cB)
  \]
  is functorial.
\end{proposition}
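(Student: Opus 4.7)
The plan is to unpack the definition of $\stcurry-$ on $2$\cells, verify that it yields positive morphisms with the expected structure, and then reduce functoriality to the fact that vertical composition in $\Thin$ is done componentwise. Note first that $\stcurry-$ preserves supports on the nose: the underlying groupoid of $\stcurry S$ is simply $\stsupp S$, and only the display maps are reorganized via the Seely equivalence $\seelyinv_{\cGamma,\cA}$. This makes the action on $2$\cells essentially transparent.

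Concretely, given a positive morphism $F = (\stsuppreally F, F^{\oc(\cGamma+\cA)}, F^B) \co S \To T$ in $\Thin(\cGamma\with\cA,\cB)$, I would define $\stcurry F$ as the triple whose underlying functor is $\stsuppreally F$ and whose two coherence natural transformations are obtained by whiskering $F^{\oc(\cGamma+\cA)}$ with $\pl\circ\seelyinv_{\cGamma,\cA}$ and with $\pr\circ\seelyinv_{\cGamma,\cA}$, then pairing the latter with $F^B$. The first step is then to check that $\stcurry F$ is a well-formed positive morphism in $\Thin(\cGamma,\cA\cptto\cB)$: positivity follows from $F^{\oc(\cGamma+\cA)}$ and $F^B$ being positive on $\cGamma\with\cA$ and $\cB$ respectively, together with the fact that $\seelyinv_{\cGamma,\cA}$ is a renaming (in the sense of \Cref{sec:renamings}), so preserves polarities pointwise.

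Preservation of identities is immediate: if $F = \id_S$ then $\stsuppreally F = \id_{\stsupp S}$ and $F^{\oc(\cGamma+\cA)}$, $F^B$ are identity natural transformations, hence whiskering by $\seelyinv_{\cGamma,\cA}$ and pairing still yields identities, so $\stcurry{\id_S} = \id_{\stcurry S}$. For composition, given $F \co S \To T$ and $G \co T \To U$ in $\Thin(\cGamma\with \cA,\cB)$, vertical composition in $\Thin$ amounts to composing the underlying functors and pasting the coherence natural transformations componentwise (as in \Cref{def:weakmor}). Since $\stcurry-$ acts as the identity on underlying functors and as a fixed whiskering/pairing on the coherence data, and since whiskering by a functor and pairing of natural transformations both commute with vertical pasting, we get $\stcurry{(G \circ F)} = \stcurry G \circ \stcurry F$.

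There is no hard step here: the proof is a bookkeeping exercise that simply tracks the Seely equivalence through the componentwise definition of $2$\cells in $\Thin$. The only point worth being careful about is that $\stcurry F$ is indeed positive — this uses the fact that $\seelyinv_{\cGamma,\cA}$ comes from a renaming so preserves the positive/negative decomposition on $\oc(\cGamma+\cA) \iso \oc\cGamma \times \oc \cA$, which is how positivity of the original $F^{\oc(\cGamma+\cA)}$ transfers to positivity of the two coherence cells of $\stcurry F$ over $\oc \cGamma$ and $\oc \cA \times \cB$.
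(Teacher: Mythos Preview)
Your proposal is correct; the paper states this proposition without proof, treating it as a routine verification, and your argument---that $\stcurry-$ is the identity on supports and acts on coherence cells by fixed whiskering with $\seelyinv_{\cGamma,\cA}$, so that functoriality reduces to whiskering commuting with vertical pasting---is precisely the intended one. One small imprecision: condition~(1) in the definition of renaming only guarantees preservation of \emph{positive} symmetries, so invoking that $\seelyinv_{\cGamma,\cA}$ is a renaming does not by itself show that the whiskered cells land in the correct polarity on the $\oc\cGamma$ side (which must be negative); however, $\seelyinv_{\cGamma,\cA}$ visibly preserves negative symmetries as well (it simply separates indices without reindexing), so this is easily checked directly.
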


\subsubsection{The uncurrying operation}

We can define conversely an uncurrying operation. Given $S
\in\Thinb(\cGamma,\cA\cptto\cB)$, we define $\stuncurry S$ as the span
\[
  \stuncurry S
  \qeq
  \begin{tikzcd}[column sep={5em,between origins}]
    &
    \stsupp S
    \ar[ld,"{\seely_{\Gamma,A} \circ \catprodfact{\stdisp[\oc \Gamma] S}{\stdisp[\oc A] S}}"']
    \ar[rd,"{\stdisp[B] S}"]
    &
    \\
    \oc (\Gamma\with A)
    &&
    B
  \end{tikzcd}
  \zbox.
\]
As before, using similar methods, we can verify that
\begin{proposition}
  \label{prop:stuncurry-thin}
  Given thin groupoids $\cGamma,\cA,\cB$ and $\sigma \in\Thinb(\cGamma,\cA\cptto\cB)$, we
  have $\stuncurry\sigma \in\stsetb{\oc(\cGamma\with\cA)\linto\cB}$.
\end{proposition}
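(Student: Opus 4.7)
The plan is to mirror the strategy used for \Cref{prop:stcurry-thin}, with the roles of $\seelyinv$ and $\seely$ exchanged, and in both cases to reduce the problem via \Cref{prop:carac_lin} and \Cref{prop:carac_lin_thin} to already established uniformity and thinness properties of $\sigma$.

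First I would establish that $\stuncurry\sigma \in \U_{\oc(\cGamma \with \cA) \linto \cB}$. By \Cref{prop:carac_lin}, it suffices to pick a generating class $\SU \subseteq \U_{\oc(\cGamma \with \cA)}$ and verify the two conditions. A natural choice is to take $\SU = \oc \U_{\cGamma \with \cA}$, whose biorthogonal closure is $\U_{\oc(\cGamma\with\cA)}$ by definition. Given $U = \oc V$ in this class with $V = (V_\Gamma + V_A)$ up to iso, the key observation is that the left leg of $\stuncurry\sigma$ factors through $\seely_{\Gamma,A}$, which is bicartesian by \Cref{prop:seely-bicart}. Composing with $\seely_{\Gamma,A}$ on the left and computing $\stuncurry\sigma @ \oc V$ via a suitable pullback rectangle, one recognizes (using \Cref{lem:rect-left-square-bipullback} and \Cref{lem:pb-bpb-three-tiles} to glue bipullbacks) that this application is isomorphic to an appropriate application of $\sigma$ to a prestrategy in $\U_{\oc \cGamma \otimes \oc \cA}$; the hypothesis $\sigma \in \U_{\oc \cGamma \lin (\oc \cA \parr \cB)}$ then yields the required membership in $\U_{\cB}$. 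Similarly, the left-leg condition $(\stuncurry\sigma, \stdisp{}_{\oc(\Gamma\with A)}) \in \U_{\oc(\cGamma \with \cA)}^{\perp}$ is obtained by using that composition with a bicartesian $\seely_{\Gamma,A}$ preserves orthogonality (another application of \Cref{lem:pb-bpb-three-tiles}), and invoking \Cref{prop:carac_lin}\ref{prop:carac_lin:anti-unif} on $\sigma$.

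Second, for thinness, I would apply \Cref{prop:carac_lin_thin} with $\SU = \oc \T_{\cGamma \with \cA}$, whose bi-thin-orthogonal closure is $\T_{\oc(\cGamma\with\cA)}$. The verification proceeds analogously: given $\oc V \in \oc \T_{\cGamma \with \cA}$, the composition $\stuncurry\sigma @ \oc V$ is related via the bicartesian $\seely_{\Gamma,A}$ to $\sigma @ \oc V'$ where $V'$ is the corresponding prestrategy on $\oc\cGamma \otimes \oc\cA$; since $\sigma \in \T_{\oc\cGamma \lin (\oc\cA \parr \cB)}$, \Cref{prop:carac_lin_thin} yields membership in $\T_{\cB}$.

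The main obstacle will be the diagrammatic bookkeeping needed to turn each pullback that arises through the Seely-composed legs into a \emph{bipullback} (\resp a pullback that becomes discrete after composition with elements of $\T_{\cB}^{\pperp}$), which requires carefully decomposing the relevant rectangles as per \Cref{lem:rect-left-square-bipullback} and \Cref{lem:pb-bpb-three-tiles}, exploiting bicartesianness of $\seely$ at every step. Once these rectangle decompositions are set up, the proof reduces algebraically to invoking the assumed properties of $\sigma$.
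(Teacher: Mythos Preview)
Your proposal is correct and follows essentially the same route as the paper, which simply refers back to the method of \Cref{prop:stcurry-thin}: apply \Cref{prop:carac_lin} and \Cref{prop:carac_lin_thin} using the bicartesianness of the Seely equivalence (\Cref{prop:seely-bicart}) together with the rectangle lemma \Cref{lem:rect-left-square-bipullback} to obtain the required bipullbacks. One small caution: not every $V \in \U_{\cGamma \with \cA}$ decomposes as $V_\Gamma + V_A$ up to iso (only the generating class does), but your argument does not actually need this---bicartesianness of $\seely_{\Gamma,A}$ yields the required pullback/bipullback rectangle for \emph{any} $V$, so the reduction to properties of $\sigma$ goes through regardless.
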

Moreover, we can extend this uncurrying operation to the weak morphisms the
expected way, and this operation is compatible with the polarities, and we
moreover have
\begin{proposition}
  Given thin groupoids $\cGamma,\cA,\cB$, the operation
  \[
    \stuncurry- \co \Thinb(\cGamma,\cA\cptto\cB) \to \Thinb(\cGamma\with\cA,\cB)
  \]
  is functorial.
\end{proposition}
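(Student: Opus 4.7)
The plan is to first make the definition of $\stuncurry$ on weak morphisms explicit, and then observe that it preserves identities and composition by essentially purely formal manipulations of natural transformations.

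First, I would extend $\stuncurry$ to $2$\nbd-cells as follows. Given a positive weak morphism $F \co S \to T$ in $\Thinb(\cGamma, \cA \cptto \cB) = \Thin(\oc\cGamma, \oc\cA \parr \cB)$, the underlying natural transformation $\phi$ decomposes as a triple $(F^{\oc\Gamma}, F^{\oc A}, F^B)$ into the three projections of the display target. Since $\stsupp{\stuncurry S} = \stsupp S$ and $\stsupp{\stuncurry T} = \stsupp T$, I set $\stsuppreally{\stuncurry F} \defeq \stsuppreally F$, keep $F^B$ as the right-leg component, and take the left-leg natural transformation on $\oc(\cGamma \with \cA)$ to be
\[
  \seely_{\Gamma,A} \ast \catprodfact{F^{\oc\Gamma}}{F^{\oc A}} \co
  \stdisp[\oc(\Gamma+A)]{\stuncurry S} \To \stdisp[\oc(\Gamma+A)]{\stuncurry T} \circ \stsuppreally F\zbox,
\]
which is exactly what is obtained by whiskering the pairing of the components of $\phi$ by the Seely renaming used in the definition of $\stuncurry S$.

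Second, I would check that $\stuncurry F$ is a positive morphism in $\Thin(\oc(\cGamma \with \cA), \cB)$. The $\cB$ side is positive because $F^B$ is. For the $\oc(\cGamma \with \cA)$ side, I would use that $\seely_{\Gamma,A}$ is a renaming (hence maps negative symmetries through to negative symmetries of $\oc(\cGamma \with \cA)$, inherited from the componentwise polarity of $\oc\cGamma \tensor \oc\cA$), and combine this with the fact that positivity of $F$ on $\oc\cGamma \lin (\oc\cA \parr \cB)$ provides exactly the polarity data needed on its components to guarantee negativity of the image through $\seely$ in the relevant sub-groupoid. This is the same type of polarity bookkeeping already performed for $\Cref{prop:stuncurry-thin}$.

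Finally, functoriality is immediate. The action of $\stuncurry$ on the underlying functor is the identity, and the component natural transformations are obtained by whiskering the components of the input morphism by fixed functors ($\seely_{\Gamma,A}$, the projections, and identities). For identities, each $\id_S^{\oc\Gamma}, \id_S^{\oc A}, \id_S^B$ is the identity natural transformation, and whiskering an identity by $\seely$ yields an identity, so $\stuncurry(\id_S) = \id_{\stuncurry S}$. For composition, given $F \co S \to T$ and $G \co T \to U$, the vertical composite $G \circ F$ in $\Thinb$ has underlying functor $\stsuppreally G \circ \stsuppreally F$ and each of its three decomposed components is the pasting of the corresponding components of $F$ and $G$; by the interchange law for natural transformations, whiskering by $\seely$ commutes with such pastings, so $\stuncurry(G \circ F) = \stuncurry G \circ \stuncurry F$. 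The main (modest) obstacle is the polarity verification in the second step; the remainder is routine.
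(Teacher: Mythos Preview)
Your proposal is correct and aligns with the paper's approach: the paper gives no explicit proof, merely stating that the operation extends to weak morphisms ``the expected way'' and ``is compatible with the polarities''. Your explicit description (keep the underlying functor, whisker the component natural transformations by $\seely_{\Gamma,A}$ on the left leg, leave the $B$ leg alone) and your functoriality argument via interchange are exactly what is intended.

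One small imprecision worth flagging: you justify the polarity check by saying that $\seely_{\Gamma,A}$ is a renaming and ``hence maps negative symmetries through to negative symmetries''. The definition of renaming in the paper only requires preservation of \emph{positive} symmetries (condition~(1)); preservation of negative symmetries is not part of that definition. The conclusion you need does hold, but it follows from a direct inspection of $\seely_{\Gamma,A}$: a pair of morphisms in $(\oc\Gamma)_- \times (\oc A)_-$ (arbitrary permutations, componentwise negative) is sent to a morphism in $\BFam((\Gamma+A)_-) = (\oc(\Gamma \with A))_-$. So the argument goes through, but the reason is specific to $\seely$ rather than a general property of renamings. This is the only place where your sketch is looser than it should be; the rest is fine.
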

\subsubsection{The adjoint equivalences}
We have a first adjoint equivalence between the currying and uncurrying
operations:
\begin{proposition}
  \label{prop:adj-curry-uncurry}
  There is an adjoint equivalence
  \[
    \begin{tikzcd}[sep={6em,between origins}]
      \Thinb(\cGamma,\cA \cptto \cB)
      \ar[rr,bend left,"\stuncurry-"]
      &
      \bot
      &
      \Thinb(\cGamma \with \cA,\cB)
      \ar[ll,bend left,"\stcurry-"]
    \end{tikzcd}
    \zbox.
  \]
\end{proposition}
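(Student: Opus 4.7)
The plan is to derive this adjoint equivalence from the Seely adjoint equivalence $\seely_{\Gamma,A} \dashv \seelyinv_{\Gamma,A}$ in $\Ren$ between $\oc\Gamma \tensor \oc A$ and $\oc(\Gamma + A)$, whose unit $\eta^{\seely} \co \id \To \seelyinv\circ\seely$ and counit $\epsilon^{\seely} \co \seely\circ\seelyinv \To \id$ are negative natural isomorphisms. The key observation is that $\stcurry{-}$ and $\stuncurry{-}$ act on spans by post-composing the left display map with $\seelyinv$ and $\seely$ respectively, while leaving the support groupoid, the right display, and the action on weak morphisms unchanged. Hence $\stuncurry\stcurry S$ and $\stcurry\stuncurry T$ share supports with $S$ and $T$, differing only by a round-trip through the Seely pair on the left leg.

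I would define the counit $\epsilon^\Lambda_S \co \stuncurry\stcurry S \To S$ to be the weak morphism with identity support, trivial right component, and left component the whiskering of $\epsilon^{\seely}$ with $\stdisp^S_L$. Since $\epsilon^{\seely}$ is negative in $\oc(\Gamma + A)$, which sits in contravariant position inside the thin groupoid $\oc(\Gamma+A) \lin B$, this is already a positive weak morphism of $\Thin$. The unit $\eta^\Lambda_T \co T \To \stcurry\stuncurry T$ is defined symmetrically from $\eta^{\seely}$; however, $\eta^{\seely}$ reindexes copy indices on $\oc A$, which sits in covariant position inside $\oc A \parr B$, and such reindexings are negative rather than positive in the strict sense of $\Thin$. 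I would handle this by first constructing $\eta^\Lambda_T$ as a weak morphism in the 2-category $\PreThin$ and then transporting it through the biequivalence of \Cref{prop:positivization} to obtain its canonical positive representative, in which the offending $\oc A$-reindexing is relocated to a reindexing of the support of $T$.

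Naturality of $\eta^\Lambda$ and $\epsilon^\Lambda$ follows from the naturality of $\eta^{\seely}$ and $\epsilon^{\seely}$ together with the fact that $\stcurry{-}$ and $\stuncurry{-}$ act trivially on supports; the \PCC reduces each verification to a routine check in $\Gpd$. The triangle identities then reduce, after whiskering away the identity supports and invoking positivization where needed, to those of the Seely adjoint equivalence in $\Ren$. The main obstacle is the polarity bookkeeping at the covariant $\oc A$-component of the unit: the Seely reindexing is not positive in the sense of $\Thin$, and the biequivalence of \Cref{prop:positivization} is precisely the tool needed to absorb such reindexings into the support groupoid, with its uniqueness clause ensuring that the positivized family remains natural and satisfies the triangle identities.
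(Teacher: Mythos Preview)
Your approach is the paper's: lift the Seely adjoint equivalence $\seely_{\Gamma,A} \dashv \seelyinv_{\Gamma,A}$ to the curry/uncurry pair by whiskering its unit $\Sigma_{\Gamma,A}$ and counit $\bar\Sigma_{\Gamma,A}$ with the display maps, keeping identities on supports, and derive the triangle identities from those of the Seely pair by vertical pasting of $2$-cells.

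The one deviation is your positivization step for the unit, which the paper does not perform and which is in fact unnecessary. Your worry rests on reading $A \cptto B = \oc A \parr B$ with $\oc A$ in covariant position, but that cannot be the intended polarized structure: under it the evaluation span $\evm_{A,B}$ (with support $\oc A \times B$) would fail \Cref{def:thin}, since any nontrivial reindexing in $\oc A$ paired with an identity on $B$ displays to a positive symmetry of $\oc((A \cptto B) \with A) \lin B$ without being an identity, contradicting \Cref{prop:ev-thin}. The arrow object is polarized as $\oc A \lin B$, so that $(A \cptto B)_+ = (\oc A)_- \times B_+$; then the $\oc A$-component of the Seely unit, a pure reindexing lying in $(\oc A)_-$, is already positive in $\oc\Gamma \lin (A \cptto B)$, and the paper's direct construction goes through without any appeal to \Cref{prop:positivization}.
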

\begin{proof}
  \newcommand\stuncurcur[1]{#1}%
  \newcommand\seelyu{\Sigma}%
  \newcommand\seelycu{\bar\Sigma}%
  The application of $\stuncurry-$ and $\stcurry-$ to spans essentially amounts
  to adequately postcompose the display maps of these spans by $\seely_{\Gamma,A}$
  and $\seelyinv_{\Gamma,A}$. The unit/counit pair witnessing the adjoint equivalence
  are then easily derived from a unit/counit pair $(\seelyu_{\Gamma,A},\seelycu_{\Gamma,A})$ witnessing the adjoint
  equivalence $\seely_{\Gamma,A} \dashv \seelyinv_{\Gamma,A}$. For example, given $S \in
  \Thinb(\cGamma,\cA \cptto \cB)$, the component of the unit of
  $\stuncurry-\dashv\stcurry-$ at $S$ is the weak morphism
  $(\id_S,\phi)$\simon{mauvaise notation ici} with
  \[
    \phi^{\oc \Gamma}
    =
    \begin{tikzcd}
      \stsupp{S}
      \ar[r,equals]
      \ar[dddd,"\stdisp{S}_{\oc \Gamma}"']
      \cphar[rdddd,"="]
      &
      \stsupp{S}
      \ar[d,"\catprodfact{\stdisp{S}_{\oc \Gamma}}{\stdisp{S}_{\oc A}}"{description}]
      \ar[r,equals]
      \cphar[rd,"="]
      &
      \stsupp{\stuncurcur{S}}
      \ar[d,"\catprodfact{\stdisp{S}_{\oc \Gamma}}{\stdisp{S}_{\oc A}}"{description}]
      \\
      &
      \oc \Gamma \times \oc A
      \ar[dd,equals]
      \ar[r,equals]
      \cphar[rdd,"\xTo{\seelyu_{\Gamma,A}}"]
      &
      \oc \Gamma \times \oc A
      \ar[d,"\seely_{\Gamma,A}"]
      \\
      &&
      \oc(\Gamma +  A)
      \ar[d,"\seelyinv_{\Gamma,A}"]
      \\
      &
      \oc \Gamma \times \oc A
      \ar[d,"\pl"]
      \ar[r,equals]
      \cphar[rd,"="]
      &
      \oc \Gamma \times \oc A
      \ar[d,"\pl"]
      \\
      \oc \Gamma
      \ar[r,equals]
      &
      \oc \Gamma
      \ar[r,equals]
      &
      \oc \Gamma
    \end{tikzcd}
  \]
  and
  \[
    \phi^{\oc A \times B}
    =
    \begin{tikzcd}
      \stsupp{S}
      \ar[r,equals]
      \ar[d,"\stdisp{S}_{\oc A\times B}"']
      \cphar[rd,"\xTo{\catprodfact{\phi^{\oc A}}{\phi^B}}"]
      &
      \stsupp{\stuncurcur{S}}
      \ar[d,"\stdisp{\stuncurcur{S}}_{\oc A\times B}"]
      \\
      \oc A \times B
      \ar[r,equals]
      &
      \oc A \times B
    \end{tikzcd}
  \]
  where
  \[
    \phi^{\oc A}
    =
    \begin{tikzcd}
      \stsupp{S}
      \ar[r,equals]
      \ar[dddd,"\stdisp{S}_{\oc A}"']
      \cphar[rdddd,"="]
      &
      \stsupp{S}
      \ar[d,"\catprodfact{\stdisp{S}_{\oc \Gamma}}{\stdisp{S}_{\oc A}}"{description}]
      \ar[r,equals]
      \cphar[rd,"="]
      &
      \stsupp{\stuncurcur{S}}
      \ar[d,"\catprodfact{\stdisp{S}_{\oc \Gamma}}{\stdisp{S}_{\oc A}}"{description}]
      \\
      &
      \oc \Gamma \times \oc A
      \ar[dd,equals]
      \ar[r,equals]
      \cphar[rdd,"\xTo{\seelyu_{\Gamma,A}}"]
      &
      \oc \Gamma \times \oc A
      \ar[d,"\seely_{\Gamma,A}"]
      \\
      &&
      \oc(\Gamma +  A)
      \ar[d,"\seelyinv_{\Gamma,A}"]
      \\
      &
      \oc \Gamma \times \oc A
      \ar[d,"\pr"]
      \ar[r,equals]
      \cphar[rd,"="]
      &
      \oc \Gamma \times \oc A
      \ar[d,"\pr"]
      \\
      \oc A
      \ar[r,equals]
      &
      \oc A
      \ar[r,equals]
      &
      \oc A
    \end{tikzcd}
  \]
  and
  \[
    \phi^{B}
    =
    \begin{tikzcd}
      \stsupp{S}
      \ar[r,equals]
      \ar[d,"\stdisp{S}_{B}"']
      \cphar[rd,"="]
      &
      \stsupp{\stuncurcur{S}}
      \ar[d,"\stdisp{\stuncurcur{S}}_{B}"]
      \\
      B
      \ar[r,equals]
      &
      B
    \end{tikzcd}
  \]
  where we write $\stdisp{S}_A$ and $\stdisp{S}_B$ for $\pl\circ \stdisp{S}_{\oc
    A\times B}$ and $\pr\circ \stdisp{S}_{\oc A\times B}$ respectively (and
  similarly for $\stdisp{\stuncurcur{S}}_A$ and $\stdisp{\stuncurcur{S}}_B$).
  The counit is defined similarly. The fact that this unit/counit pair satisfies
  the zigzag equations is a consequence of the fact that
  $(\seelyu_{\Gamma,A},\seelycu_{\Gamma,A})$ satisfies the same equations, by vertical
  pasting of $2$\cells.
\end{proof}
\simon{bon endroit pour introduire une sous-sous-section}%
\noindent In order to get another adjoint equivalence, we prove that the
uncurrying functor is isomorphic to the uncurrying-through-evaluation operation:
\begin{proposition}
  \label{prop:evaluncurry-isom-stuncurry}
  The functor $\evm \sthcb (- \with A) \co \Thinb(\cGamma,\cA \cptto \cB) \to\Thinb(\cGamma
  \with \cA,\cB)$ is isomorphic to the uncurrying functor $\stuncurry-$.
\end{proposition}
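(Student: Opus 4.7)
The plan is to directly unfold the composition $\evm_{A,B} \sthcb (S \with A)$ in the Kleisli bicategory $\Thinb$ for a strategy $S \in \Thinb(\cGamma, \cA \cptto \cB)$, and exhibit a canonical positive isomorphism of thin spans between the resulting span and $\stuncurry S$, together with its naturality.

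First I would unpack the pairing $S \with A$: by the cartesian product structure on $\Thinb$ (\Cref{prop:thinb-fp}), it is essentially $\tuple{S, \stbpr}$ whose support is $\stsupp S + A$, equipped with an evident display map to $\oc(\cGamma \with \cA) \times ((\cA \cptto \cB) \with \cA)$. Next, I would compute the Kleisli composition with $\evm_{A,B}$: by definition of $\sthcb$ and the pseudocomonad $\oc$, it is obtained by pulling back $\oc(S \with A)$ against $\evm_{A,B}$, after suitable post-composition with $\check\mu$. Using that $\eta$ is cartesian (so pullbacks against $\check\eta$-factors simplify by \Cref{lem:rect-left-square-bipullback}), that $\oc$ applied to a coproduct is mediated by the Seely equivalence $(\seely, \seelyinv)$, and repeatedly invoking \Cref{lem:pb-bpb-three-tiles} to decompose three-tile composites, the support of $\evm_{A,B} \sthcb (S \with A)$ should collapse canonically to $\stsupp S$ itself.

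Concretely, after these simplifications the left display map becomes $\seely_{\cGamma,\cA} \circ \tuple{\partial^S_{\oc\cGamma}, \partial^S_{\oc A}}$ and the right one becomes $\partial^S_B$, which is precisely $\stuncurry S$. The mediating isomorphism is supplied by the universal property of the intermediate pullbacks combined with the Seely coherence $2$-cell $\seelycoh$ and the modifications $\alpha,\beta,\gamma$ of $\oc$. Naturality in $S$ follows from \Cref{lem:pshccohe-nat-isom} together with \Cref{prop:seely-2nat}, plus the \PCC to recognize the resulting weak morphism as a horizontal composition.

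The main obstacle is the sheer volume of bookkeeping: the Kleisli composition nests several pullbacks against $\oc$ applied to a composite span, and the decompositions of \Cref{lem:rect-left-square-bipullback,lem:pb-bpb-three-tiles} must be applied in sequence to reduce the large diagram to the simple shape of $\stuncurry S$. A cleaner alternative is available through \Cref{prop:adj-curry-uncurry}: since $\stuncurry \dashv \stcurry$ is an adjoint equivalence, any functor that admits $\stcurry$ as a right pseudo-inverse up to natural isomorphism is automatically naturally isomorphic to $\stuncurry$. It then suffices to exhibit an isomorphism $\evm_{A,B} \sthcb (\stcurry(-) \with A) \cong (-)$, a $\beta$-rule that can be verified with a smaller diagram involving only the Seely coherence and the cartesian character of $\eta$, thereby avoiding most of the brute unfolding.
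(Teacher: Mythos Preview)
Your main approach matches the paper's: unfold the Kleisli composite, use the cartesianness of $\eta$ and of $\seely$ to reduce the support of $\evm_{A,B}\sthcb(S\with A)$ to $\stsupp S$, and then compare with $\stuncurry S$. One point needs sharpening, though: the left display map of the computed composite does \emph{not} reduce on the nose to $\seely_{\Gamma,A}\circ\tuple{\partial^S_{\oc\Gamma},\partial^S_{\oc A}}$. What the pullback computation actually yields is an intermediate span $\bar S$ whose left leg still carries the Kleisli $\mu_{\Gamma+A}$ and a factor of the shape $\oc(\partial^S_{\oc\Gamma}+\eta_A)\circ\seely_{\stsupp S,A}\circ(\eta_{\stsupp S}\times\oc A)$. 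The identification with $\stuncurry S$ is then a genuine positive $2$\cell built from the Seely coherence modification $\seelycoh_{\Gamma,A}$ together with the unit modifications $\oclu,\ocru$ (your $\alpha,\beta$); the associativity modification $\gamma$ plays no role. So the two steps ``collapse the support'' and ``supply the mediating iso'' are distinct, and the second is where the content lies---your write-up runs them together.

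Your proposed shortcut via \Cref{prop:adj-curry-uncurry} is logically sound: if $G$ is an equivalence and $F\circ G\cong\id\cong H\circ G$ then $F\cong H$. But it does not buy the simplification you advertise. Establishing the $\beta$-rule $\evm_{A,B}\sthcb(\stcurry T\with A)\cong T$ still requires unfolding exactly the same Kleisli composite against $\evm$; you have merely traded the postcomposition by $\seely$ that defines $\stuncurry$ for the precomposition by $\seelyinv$ that defines $\stcurry$, and the same $\seelycoh$ and unit isos reappear when you collapse the resulting diagram. There is no materially smaller diagram here.
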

\begin{proof}
  Let ${S} \in \Thinb(\cGamma,\cA \cptto \cB)$. We compute $\evm \sthcb ({S}
  \with A)$. It is the composition of the spans
  \[
    \begin{tikzcd}[column sep={between origins,2em}]
      &&&
      \oc (\stsupp{S} +  A)
      \ar[ld,"\oc(\stdisp{S}_{\oc \Gamma}  +  \eta_A)"']
      \ar[rrrddd,"\oc (\stdisp{S}_{\oc A\times B} +  A)"]
      &&&
      \\
      &&
      \oc(\oc \Gamma  +  \oc A)
      \ar[ld,"\oc\coprodfact{\oc(\cpl)}{\oc(\cpr)}"']
      &&&&
      &
      \\
      &
      \ocd(\Gamma +  A)
      \ar[ld,"\mu_{\Gamma +  A}"']
      &&&&
      \\
      \oc(\Gamma +  A)
      &&&&&&
      \oc((\oc A \times B)  +  A)
    \end{tikzcd}
  \]
  and
  \[
    \begin{tikzcd}[column sep={2em,between origins}]
      &&&\oc A \times B
      \ar[ld,"\catprodfactp{\pl,\pr,\pl}"']
      \ar[rrrddd,"\pr"]
      &&&
      \\
      &&
      \oc A \times B \times \oc A
      \ar[ld,"\eta_{\oc A \times B} \times \oc A"']
      &
      &
      \\
      &
      \oc (\oc A \times B) \times \oc A
      \ar[ld,"\seely_{\oc A \times B,A}"']
      \\
      \oc ((\oc A \times B)  +  A)
      &&&&&&
      B
    \end{tikzcd}
    \zbox.
  \]
  We compute the inner pullback of this composition as the rectangle of
  pullbacks shown in \Cref{fig:inner-pbs}.
  \begin{figure*}[h!]
    \centering
    \[
      \begin{tikzcd}[column sep={between origins,11em}]
        \oc (\stsupp{S} +  A)
        \ar[d,"\oc (\stdisp{S}_{\oc A\times B} +  A)"']
        &
        \oc \stsupp{S}\times \oc A
        \ar[d,"\oc(\stdisp{S}_{\oc A\times B}) \times \oc A"{description}]
        \ar[l,"\seely_{\stsupp{S},A}"']
        \cphar[ld,"\dlcorner",near start]
        &
        \stsupp{S}\times \oc A
        \ar[d,"\stdisp{S}_{\oc A\times B} \times \oc A"{description}]
        \ar[l,"\eta_{\stsupp{S}} \times \oc A"']
        \cphar[ld,"\dlcorner",near start]
        &
        \stsupp{S}
        \ar[d,"\stdisp{S}_{\oc A\times B}"]
        \ar[l,"\catprodfact{\stsupp{S}}{\stdisp{S}_{\oc A}}"']
        \cphar[ld,"\dlcorner",near start]
        \\
        \oc((\oc A \times B)  +  A)
        &
        \oc (\oc A \times B) \times \oc A
        \ar[l,"\seely_{\oc A \times B,A}"]
        &
        \oc A \times B \times \oc A
        \ar[l,"\eta_{\oc A \times B} \times \oc A"]
        &
        \oc A \times B
        \ar[l,"\catprodfactp{\pl,\pr,\pl}"]
      \end{tikzcd}
      \zbox.
    \]
    \caption{The inner rectangle of the composition of the two spans}
    \label{fig:inner-pbs}
  \end{figure*}
  Thus, up to a canonical isomorphism of pullbacks, $\evm \sthcb ({S} \with A)$
  is ${\bar{S}} \in \Thinb(\cGamma \with \cA,\cB)$ with $S$ as the support of
  ${\bar{S}}$ and
  \begin{gather*}
    \stdisp{\bar{S}}_{\oc(\Gamma +  A)}
    =
    \\
    \stsupp{S}
    \xto{\prodfact{\stsupp{S}}{\stdisp{S}_{\oc A}}}
    \stsupp{S}\times \oc A
    \xto{\eta_{\stsupp{S}} \times \oc A}
    \oc \stsupp{S} \times \oc A
    \xto{\seely_{\stsupp{S},A}}
    \oc (\stsupp{S}  +  A)
    \\
    \xto{\oc(\stdisp{S}_{\oc \Gamma}  +  \eta_A)}
    \oc(\oc \Gamma  +  \oc A)
    \xto{\oc\coprodfact{\oc(\cpl)}{\oc(\cpr)}}
    \ocd(\Gamma  +  A)
    \xto{\mu_{\Gamma +  A}}
    \oc(\Gamma  +  A)
  \end{gather*}
  and
  \[
    \stdisp{\bar{S}}_{B}
    \qqeq
    \stsupp{S}
    \xto{\stdisp{S}_{\oc A\times B}}
    \oc A \times B
    \xto{\pr}
    B
    \zbox.
  \]
  The operation $S \mapsto \bar S$ can be shown to extend naturally to weak
  morphisms, so that we obtain a functor $\bar{(-)} \co \Thinb(\cGamma,\cA \cptto
  \cB) \to \Thinb(\cGamma \with \cA,\cB)$ which is naturally isomorphic to $\evm
  \sthcb (- \with A)$. So we are left to show that $\bar {(-)}\cong
  \stuncurry-$. For this purpose, for ${S} \in \Thinb(\cGamma,\cA \cptto \cB)$, we
  define an isomorphism $\theta_{S} = (\stsuppreally
  \theta_{S},\theta^{\oc(\Gamma+A)}_{S},\theta_{S}^B) \co \bar{S} \To \stuncurry{S}$.
  We take $\stsuppreally \theta_{S} = \unit{\stsupp{S}}$ and $\theta_{S}^B =
  \unit{\stdisp{S}_B}$,
  and define $\theta_{S}^{\oc(\Gamma + A)}$ as the (vertically expressed) $2$\cell of
  \Cref{fig:closure-phi-def}.
  \begin{figure*}
    \centering
    \[
      \begin{tikzcd}[column sep={between origins,8em}]
        \stsupp{S}
        \ar[r,"\gpdprodfact{\stdisp{S}_{\oc \Gamma}}{\stdisp{S}_{\oc A}}"]
        \ar[d,equals]
        \cphar[rrd,"="]
        &
        \oc \Gamma \times \oc A
        \ar[r,"\eta_{\oc \Gamma} \times \oc \eta_{A}"]
        &
        \ocd \Gamma \times \ocd A 
        \ar[r,"\seely_{\oc \Gamma,\oc A}"]
        \ar[d,equals]
        \cphar[rrrd,"\Downarrow \seelycoh_{\Gamma,A}"]
        &
        \oc (\oc \Gamma  +  \oc A)
        \ar[r,"\oc\coprodfact{\oc(\cpl)}{\oc(\cpr)}"]
        &
        \ocd (\Gamma  +  A)
        \ar[r,"\mu_{\Gamma +  A}"]
        &
        \oc(\Gamma  +  A)
        \ar[d,equals]
        \\
        \stsupp{S}
        \ar[r,"\gpdprodfact{\stdisp{S}_{\oc \Gamma}}{\stdisp{S}_{\oc A}}"{description}]
        \ar[d,equals]
        \cphar[rd,"="]
        &
        \oc \Gamma \times \oc A
        \ar[r,"\eta_{\oc \Gamma} \times \oc \eta_{A}"{description}]
        \ar[d,equals]
        \cphar[rrd,"\Downarrow\oclu_\Gamma \times \ocru_A"]
        &
        \ocd \Gamma \times \ocd A 
        \ar[r,"\mu_\Gamma \times \mu_A"{description}]
        &
        \oc \Gamma\times \oc A
        \ar[rr,"\seely_{\Gamma,A}"{description}]
        \ar[d,equals]
        \cphar[rrd,"="]
        &
        &
        \oc(\Gamma  +  A)
        \ar[d,equals]
        \\
        \stsupp{S}
        \ar[r,"\gpdprodfact{\stdisp{S}_{\oc \Gamma}}{\stdisp{S}_{\oc A}}"']
        &
        \oc \Gamma \times \oc A
        \ar[rr,equals]
        &
        &
        \oc \Gamma\times \oc A
        \ar[rr,"\seely_{\Gamma,A}"']
        &
        &
        \oc(\Gamma  +  A)
      \end{tikzcd}
    \]
    \caption{The $\theta_{S}^{\oc(\Gamma +  A)}$ $2$-cell}
    \label{fig:closure-phi-def}
    \scriptsize Recall the definition of $\oclu$ and $\ocru$ from \Cref{fig:unitlaw}.
  \end{figure*}
  We directly observe that $\theta_{S}^{\oc (\Gamma + A)}$ and $\theta_{S}^B$ have
  the adequate polarities, so that $\theta_{S} \in \Thinb(\cGamma \with \cA,\cB)$
  and it is an isomorphism. The naturality of $\theta_{S}$ \wrt $S$ can be
  checked diagrammatically, by pasting. Thus, $\theta$ defines an isomorphism
  $\bar {(-)}\cong \stuncurry-$, so that we have
  \[
    \evm \sthcb ((-) \with A) 
    \xTo{\cong}
    \bar {(-)}
    \xTo{\theta}
    \stuncurry-
    \zbox.
    \tag*{\qedhere}
  \]
\end{proof}
\noindent We can now conclude the proof of \Cref{prop:adj-closure}:
\begin{proof}[Proof of \Cref{prop:adj-closure}] %
  By \Cref{prop:adj-curry-uncurry}, we have an adjoint equivalence between the
  currying and uncurrying operations. By \Cref{prop:evaluncurry-isom-stuncurry},
  we can replace the uncurrying operation by $\evm \sthcb ((-) \with A)$: by
  adjusting the unit and counit the expected way, we keep the adjoint
  equivalence.
\end{proof}


\end{document}